\documentclass[a4paper,fleqn]{cas-sc}

\usepackage[authoryear,longnamesfirst]{natbib}

\def\tsc#1{\csdef{#1}{\textsc{\lowercase{#1}}\xspace}}
\tsc{WGM}
\tsc{QE}
\tsc{EP}
\tsc{PMS}
\tsc{BEC}
\tsc{DE}



\usepackage{graphicx}
\usepackage{caption}
\usepackage{subcaption}
\usepackage{xcolor}
\usepackage{bm}

\makeatletter
\def\namedlabel#1#2{\begingroup
   \def\@currentlabel{#2}%
   \phantomsection\label{#1}\endgroup
}
\makeatother

\usepackage{soul}
\usepackage{makecell}
\usepackage{tabularx}
\usepackage{mathtools,amsfonts,amsmath}

\usepackage{booktabs}

\definecolor{ao(english)}{rgb}{0.0, 0.5, 0.0}
\definecolor{americanrose}{rgb}{1.0, 0.01, 0.24}
\definecolor{cerisepink}{rgb}{0.93, 0.23, 0.51}
\definecolor{darkorchid}{rgb}{0.6, 0.2, 0.8}
\definecolor{applegreen}{rgb}{0.55, 0.71, 0.0}
\definecolor{brightpink}{rgb}{1.0, 0.0, 0.5}
\definecolor{azure(colorwheel)}{rgb}{0.0, 0.5, 1.0}
\definecolor{blue-violet}{rgb}{0.54, 0.17, 0.89}
\definecolor{deepmagenta}{rgb}{0.8, 0.0, 0.8}
\definecolor{fashionfuchsia}{rgb}{0.96, 0.0, 0.63}
\definecolor{armygreen}{rgb}{0.29, 0.33, 0.13}
\definecolor{dogwoodrose}{rgb}{0.84, 0.09, 0.41}
\definecolor{bole}{rgb}{0.47, 0.27, 0.23}
\definecolor{coolgrey}{rgb}{0.55, 0.57, 0.67}

\usepackage{makecell}
\usepackage{tabularx}

\usepackage{amsthm}
\usepackage{amsmath}
\DeclareMathOperator*{\argmax}{arg\,max}
\DeclareMathOperator*{\argmin}{arg\,min}



\newtheorem{condition}{Condition}
\newtheorem{assumption}{Assumption}
\newtheorem{definition}{Definition}
\newtheorem{remark}{Remark}
\newtheorem{proposition}{Proposition}
\newtheorem{corollary}{Corollary}
\newtheorem{lemma}{Lemma}
\newtheorem{theorem}{Theorem}









\ExplSyntaxOn
\cs_gset:Npn \__first_footerline:
  { \group_begin: \small \sffamily \__short_authors: \group_end: }
\ExplSyntaxOff

\begin{document}
\let\WriteBookmarks\relax
\def\floatpagepagefraction{1}
\def\textpagefraction{.001}

\shorttitle{Dynamic Adverse Selection with Off-Menu Actions}

\shortauthors{Tao Zhang and Quanyan Zhu}

\title [mode = title]{A Dynamic Adverse Selection Multiagent Model with Off-Menu Actions}                      
\tnotemark[1,2]

\author{Tao Zhang}
\ead{tz636@nyu.edu}
\cormark[1]

\author{Quanyan Zhu}
\ead{qz494@nyu.edu}

\cortext[cor1]{Corresponding author}

\affiliation{organization={Electrical and Computer Engineering, New York University},
            addressline={}, 
            city={Brooklym},
            postcode={11201}, 
            state={NY},
            country={US}}

\begin{abstract}
This paper addresses a notable gap in dynamic mechanism design literature by proposing a framework that accounts for agents' strategic, dynamic participation. It augments the classic principal-multiagent model by integrating agents' coupled decisions on participation and action. The principal, grappling with adverse selection, designs a mechanism with a task policy profile, a coupling policy profile, and an off-switch function profile. We introduce the payoff-flow conservation principle to ensure dynamic incentive compatibility and present a unique process, termed persistence transformation, to derive a closed-form off-switch function. This guarantees sufficient incentive compatibility for agents' coupled decisions, aligning them with principal's objectives. Additionally, our work extends the traditional envelope theorem by introducing a necessary condition for incentive compatibility that leverages principal-desired actions' coupled optimality. This method allows explicit formulation of both the coupling and off-switch functions. We also provide envelope-like conditions solely on task policies, facilitating the application of the first-order approach.
\end{abstract}




\begin{keywords}
Dynamic games \sep Dynamic mechanism design \sep Adverse selection
\end{keywords}

\maketitle


\section{Introduction}

Distributed dynamic decision-making is prevalent in today's interconnected socio-technical systems, such as communication networks (\citet{jiang2021opinion,du2022interference}), transportation and energy systems (\citet{levering2022framework,zhang2018dynamic,aid2022optimal}), cyber-security systems (\citet{etesami2019dynamic,chen2021dynamic}), and electronic commerce platforms and marketplaces (\citet{shirata2020evolution,chenavaz2021dynamic}).
Interactions among people and technological infrastructures within these distributed dynamic systems yield significant economic and social outcomes.
Typically modeled by dynamic games, these interactions often involve self-interested entities possessing time-evolving local perceptions and choosing individual actions over time.
In the inverse problems of dynamic games, a central authority, or principal, aims to achieve optimal system performance (e.g., maximized social welfare or revenue) by designing mechanisms (e.g., game rules or incentive policies) that effectively coordinate self-interested agents, incentivizing them to choose actions that result in the desired system performance.
Such incentive design typically intervenes in the utility functions by determining the payoff-relevant consequences of agents' actions, either through monetary incentives or by designing utility functions as a general paradigm.
Information asymmetries between the principal and the agents, which arise due to limited communication capabilities or individual's local informational advantages/disadvantages, present challenges in designing incentive mechanisms, leading to inefficient outcomes.
Generally, there are two forms of information asymmetries: \textit{adverse selection} and \textit{moral hazard} (\citet{grossman1992analysis,einav2013selection}).
Adverse selection occurs when agents possess private payoff-relevant information that the principal is unaware of, while their realized actions are commonly observable. 
On the other hand, moral hazard arises when agents' actions are unobserved by the principal, but their states are common information.

%
In dynamic mechanisms with adverse selection, agents are typically assumed to be far-sighted, seeking to optimize their expected cumulative utilities by dynamically choosing optimal single-period actions and adaptively planning future plays as their private information evolves.
The central problem addressed by classic dynamic mechanism design literature is incentivizing agents to truthfully report their sequentially-arriving private states, which is ensured by imposing a sequence of interim incentive compatibility constraints over time (\citet{bergemann2010dynamic,kakade2013optimal,pavan2014dynamic,zhang2022incentive}).
Surprisingly, dynamic voluntary participation constraints have received less attention, as most works focus on ex-ante or initial-period interim participation constraints, assuming agents cannot quit after participating at the beginning.
If agents accept the mechanism but can terminate participation later, then participation constraints must be placed in all periods.
However, the literature on dynamic participation decisions (\citet{bergemann2010dynamic,kakade2013optimal,bergemann2020scope,krahmer2015optimal,heumann2020information,chang2021optimal,ashlagi2023sequential,ashlagi2016sequential}) mainly considers a myopic setting, with agents not strategically planning their future participation decisions.
That is, when agents decide whether to participate in each period, they look into the entire future time horizon with planned action strategies (e.g., reporting strategies), assuming they will not quit in the future.

Voluntary participation is typically secured by imposing either interim or ex-post individual rationality (IR) constraints.
Most dynamic mechanism models typically place one-shot interim IR constraints in the initial round (\citet{pavan2017dynamic,bergemann2019dynamic}) after each agent observes his initial private information, which requires the expected payoff from current and all future rounds to be non-negative.
In mechanisms with periodic interim IR constraints (\citet{mirrokni2016dynamic}), it requires that at the beginning of each round, every agent's expected payoff from current and all future rounds is non-negative given the agent's local private information.
On the contrary, most research on ex-post voluntary participation focuses on periodic ex-post IR constraints (\citet{bergemann2010dynamic,cavallo2012optimal,kakade2013optimal,ashlagi2016sequential,zhang2022incentive}) that are imposed at the end of each round or one-shot ex-post IR constraints (\citet{krahmer2015optimal,bergemann2020scope,heumann2020information,ashlagi2023sequential}) that are placed at the end of the mechanism.
In the case of periodic ex-post IR, every agent's expected payoff from all future rounds given all agents' contemporaneous private information (and history) is non-negative (\citet{kakade2013optimal}), while in mechanisms with one-shot ex-post IR, every agent's realized payoff is non-negative (\citet{ashlagi2023sequential}).
A stronger form of periodic ex-post IR is linked to the concept of (per-round) limited liability, as discussed in works such as \citet{krahmer2015optimal,ashlagi2016sequential,braverman2021prior}.
Limited liability mandates that the single-stage payoff at each round must be non-negative, which guarantees that the payoff received by every agent from their participation in the mechanism up until the end of each round is non-negative. 

In this paper, we study a general dynamic mechanism design problem with adverse selection, in the context of task delegations (\citet{alonso2008optimal,chen2022dynamic,ambrus2017delegation}), for a finite group of risk-neutral agents over a finite horizon.
Our model integrates time-varying private information with Markov dynamics and strategic participation to offer novel insights into the structure of optimal mechanisms when the agents possess the dynamic autonomy to disengage during an ongoing long-term game.
We move beyond the classic myopic participation decision-making to consider a far-sighted setting where each agent's current-period decisions take into account planned future participation decisions and regular actions-taking.
We refer to such strategic voluntary participation decisions as \textit{off-menu actions} $\{\mathtt{OM}_{i,t}\in\{0,1\}\}$ (i.e., participate if $\mathtt{OM}_{i,t}=0$; or leave if $\mathtt{OM}_{i,t}=1$), explored as the counterpart to the agents' \textit{on-menu actions}, i.e., the regular actions that are listed in the menu of available actions. 
The primary objective is to examine the conditions under which a principal can successfully design a dynamic game with adverse selection, ensuring that agents' equilibrium behaviors—encompassing both off-menu and regular actions—align with the principal's predictions and intentions.

Our study of voluntary participation presents fundamental differences from the mainstream literature.
Specifically, we incorporate a strategic context by considering agents' periodic interim decisions to participate as off-menu actions, in addition to the regular actions (i.e., reporting the private information and then receiving an allocation in classic mechanisms), that are adaptively chosen and planned over time.
Additionally, our model not only studies the dynamic participation guarantee (i.e., \textit{dynamic individual rationality}) but also explores scenarios where the principal seeks to incentivize agents to quit at specific periods.
We refer to the latter as the \textit{switchability} property of the mechanism.
Therefore, the notions of incentive compatibility and individual rationality in this setting involve the agents' voluntary departures, which allows for a more realistic and dynamic representation of the system.
These settings also relate our work to mechanism design with population dynamics (\citet{parkes2007online,garrett2016intertemporal,gershkov2014dynamic}) and the optimal stopping problems (\citet{kovac2013optimal,kruse2015optimal,kruse2018inverse,gensbittel2019zero}).
To the best of the authors' knowledge, the design of mechanisms with adverse selection for classic dynamic multiagent systems incorporating far-sighted participation is new to the literature.

Strategic engagement through off-menu actions enables agents to address uncertainty more effectively and enhance risk management.
This is achieved by factoring in the option value of deferring decisions and the prospect of making improved choices in the future.
Nevertheless, the presence of time-evolving private information held by agents integrated with the autonomy of choosing off-menu actions presents added complexities for the principal in crafting a dynamic, incentive-compatible mechanism.

In this paper, we extends the Myersonian approach (\citet{pavan2014dynamic,pavan2017dynamic}) for classic dynamic mechanism design by integrating time-varying adverse selection and strategic participation, and explore structural results in the context of continuous states and actions.
Our dynamic delegation mechanism, denoted as $\Theta = <\sigma, \rho, \phi>$, encompasses a task policy profile $\sigma = (\sigma_{i,t})$, a coupling policy profile $\rho = (\rho_{i,t})$, and an off-switch function profile $\phi = (\phi_{i,t})$.
The mechanism $\Theta$ directly intervenes in the base game model $\mathcal{G}$, resulting in a modified game $\mathcal{G}^{\Theta}$. 
Within the broader framework of optimal allocation problems, the task policy and coupling policy serve purposes analogous to the allocation rule and transfer rule, respectively. Each task policy $\sigma_{i,t}$ defines an action menu, denoted by $A_{i,t}[\sigma]\subseteq A_{i,t}$, where $A_{i,t}$ is the all possible actions for agent $i$ in period $t$, contingent upon the agent's state space $S_{i,t}$.
The action menu $A_{i,t}[\sigma]\{a_{i,t}=\sigma_{i,t}(s_{i,t}, h_{t}): \forall s_{i,t}\in S_{i,t}\}$, where $h_{t}$ is the public history, represents the set of available actions for every agent $i$ in period $t$.
Hence, If an agent chooses an action depending on his true state $s_{i,t}$, then it has the same effect as if the agent truthfully reports his true state and then receives an allocation $\left(a_{i,t}=\sigma_{i,t}(s_{i,t},h_{t})\right)$ under a classic mechanism (see Remark \ref{remark:equivalence_truthful_report}).
To address the agents' off-menu action, the mechanism incorporates the off-switch function, which takes the public history of actions as inputs and remains independent of the contemporaneous states and actions. At the beginning of each period $t$, agent $i$ may choose to discontinue participation. In such instances, the off-switch function $\phi_{i,t}$ realizes a compensation or penalty for the agent based on the history of actions taken by all agents.
Therefore, the future off-switch functions induce a persistent effect of the agents' current decisions on future payoffs.

Each agent $i$ uses a pure off-menu strategy $\tau_{i,t}$ to choose $\mathtt{OM}_{i,t}\in\{0,1\}$ and a pure policy $\pi_{i,t}$ to choose an action $a_{i,t} = \pi_{i,t}(s_{i,t}, h_{t})$ from the menu $A_{i,t}[\sigma]$ in every period $t$. 
The principal endeavors to develop a dynamic mechanism $\Theta$ aimed at optimizing her expected objective subject to the \textit{dynamic obedient incentive compatibility} (DOIC) under \textit{perfect Bayesian equilibria} (PBE), which encompasses the obedient incentive compatibility (IC) for the off-menu actions (OAIC) and the regular actions (RAIC).
Our concept of DOIC builds upon the foundations of classic dynamic (interim) incentive compatibility by leveraging the agents' coupled obedience.
In particular, an agent is considered obedient if he possesses the incentive to engage in both desired off-menu and regular actions as determined by the principal.
That is, under a DOIC mechanism agents' $\tau=(\tau_{i,t})$ coincides with the principal's desired $\tau^{d}=(\tau^{d}_{i,t})$ and each agent $i$'s $\pi_{i,t}(s_{i,t}, h_{t}) = \sigma_{i,t}(s_{i,t}, h_{t})$.

The off-menu and the regular actions exhibit fundamental differences, preventing a unified characterization through traditional methodologies for classic mechanism design, such as the first-order approach (\citet{kapivcka2013efficient,pavan2014dynamic,pavan2017dynamic}).
Given that agents initiate off-menu actions prior to regular actions, it is not feasible to leverage the \textit{inverse Stackelberg interaction} (\citet{ho1981information}) to design each off-switch function by making it as a function of the regular actions.
Moreover, the nature of off-menu actions as discrete choices, juxtaposed with the continuous nature of regular actions, precludes the utilization of the envelope theorem (\citet{milgrom2002envelope,pavan2014dynamic}).
This prevents us from characterizing the DOIC by constructing $\rho$ and $\phi$ as functions of $\sigma$ based on the envelope theorem.
Therefore, successfully tackling the integration of dynamic adverse selection and strategic participation demands a more intricate interplay amongst $\sigma$, $\rho$, and $\phi$, imposing further hurdles in the characterization of the DOIC.

Our model has broad applications in the design of dynamic mechanisms augmented with ``cancel-anytime" policies or deadline design and is relevant in various contexts in  economics and operations research, such as sequential screening (\citet{pinar2017robust,li2017discriminatory}), selling options (\citet{board2007selling}), bandit auctions (\citet{pavan2014dynamic,kakade2013optimal}), repeated sales (\citet{grubb2009selling,bergemann2015dynamic}), optimal delegations (\citet{krahmer2016optimal,chen2022dynamic}), as well as utility and incentive design for multiagent systems (\citet{parkes2015economic,ratliff2019perspective}) in control, engineering, and AI communities.

\subsection{Summary of Main Results}



The core element throughout our analysis lies in the application of a set of auxiliary \textit{carrier functions}, $\{g_{i,t}\}$.
Each $g_{i,t}$ is initially introduced in Section \ref{sec:up_persistence_transforms} as an implicit function of $\sigma$ alongside agent $i$'s contemporaneous state and action, devoid of association with any specific formulations and independently functioning from $\rho$ and $\phi$.
Theorem \ref{thm:payoff_flow_conservation_sufficient} articulates the principle of \textit{payoff-flow conservation} when the principal wants every agent to participate until the game $\mathcal{G}^{\Theta}$ terminates\textemdash the mechanism is individually rational (IR).
This principle delineates a sufficient condition for the RAIC under the assumption that the corresponding OAIC is assured.
We refer to the DOIC mechanism with individual rationality as IR-DOIC.
The payoff-flow conservation reveals an interconnection among the base game model $\mathcal{G}$, the mechanism $\Theta=<\sigma, \rho, \phi>$, and the carrier functions $g$.
Within these established relationships, the OAIC mechanism becomes RAIC, and consequently, demonstrates DOIC.

The structure of the off-switch functions is critical in ensuring the OAIC of a RAIC mechanism.
In a game $\mathcal{G}^{\Theta}$, each agent $i$' freedom of choosing off-menu actions from $\{0,1\}$ leads to a partitioning of his period-$t$ state space $S_{i,t}$.
This partitioning divides each $S_{i,t}$ into two distinct areas: an \textit{on region} (ONR) and an \textit{off region} (OFR), which do not need to be continuous intervals.
Each agent $i$ takes $\mathtt{OM}_{i,t}=1$ (resp. $\mathtt{OM}_{i,t}=0$) if his state $s_{i,t}\in S^{\mathtt{off}}_{i,t}$ (resp. $s_{i,t}\not\in S^{\mathtt{off}}_{i,t}$).
The premise of OAIC dictates that the actual OFR (of each agent $i$ in each period $t$) aligns with the principal-desired OFR, denoted by $S^{\mathtt{off}}_{i,t}\subseteq S_{i,t}$.
We consider a specific format of the off-switch function known as the \textit{region-cutting off-switch} or cutoff-switch, expressed as $\{\phi_{i,t}(\cdot|c_{i,t})\}$ in Definition \ref{def:cutoff_switch_function}, in which the boundary profile, $c_{i,t}$, demarcates the boundaries of $S^{\mathtt{off}}_{i,t}$.
The cutoff-switch functions enable the principal to directly leverage her desired OM strategy profile $\tau^{d}$ into the construction of the mechanism (i.e., the off-switch function).
The principal strives to design each $\phi_{i,t}(\cdot|c_{i,t})$, alongside $\sigma$ and $\rho$, in a way that provides every agent $i$ with an incentive to choose corresponding off-menu actions, leading to an alignment between $S^{\mathtt{off}}_{i,t}$ and the actual OFR for every agent in each period.

\begin{equation}\label{eq:informal_on_rent}
    \begin{aligned}
        \textup{on-rent} = G_{i,t}(a_{i,t}|s_{i,t}, \phi_{i,k}) - \phi_{i,t}(h_{t}|c_{i,t}).
    \end{aligned}
\end{equation}

The notion of \textit{on-rent} (informally defined by (\ref{eq:informal_on_rent})), intrinsically capturing the economic rent (\citet{lado1997competition}) incurred by each agent $i$'s discretion to opt for off-menu actions in every period $t$, is defined as the discrepancy between the prospective payoff-to-go (informally, $G_{i,t}(\cdot, \phi_{i,k})$ in (\ref{eq:informal_on_rent})) if agent $i$ postpones his decision $\mathtt{OM}_{i,k}=1$ to a later period $k>t$, and the value delineated by $\phi_{i,t}(h_{t}|c_{i,t})$, realized if agent $i$ takes $\mathtt{OM}_{i,t}=1$ in period $t$.
Subsequently, strictly positive and strictly negative on-rents respectively encapsulate the strict incentives to take $\mathtt{OM}_{i,t}=0$ and $\mathtt{OM}_{i,t}=1$.
In the event of an agent's on-rent equating to zero, the agent exhibits indifference between options $\mathtt{OM}_{i,t}=0$ and $\mathtt{OM}_{i,t}=1$.
In scenarios of zero on-rent, we presuppose a tie-breaking rule favoring the principal.
Therefore, the principal necessitates each $s_{i,t}\in S^{\mathtt{off}}_{i,t}$ to engender a weakly negative on-rent for an obedient agent $i$ adhering to the regular actions.
Diverging from the myopic participation decision-making, our on-rent in each non-terminal period $t$ relies not solely on the contemporaneous-period $\phi_{i,t}$, but also potentially future $\phi_{i,k}$ for $k>t$ contained in $G_{i,t}(\cdot, \phi_{i,k})$.
This dependence precludes a direct characterization of the on-rent for the design of $\phi_{i,t}$ in terms of $\sigma$ and $\rho$ incorporated in $G_{i,t}(\cdot, \phi_{i,k})$ for $k>t$.
%

\begin{figure*}
  \centering
    \includegraphics[width=0.7\textwidth]{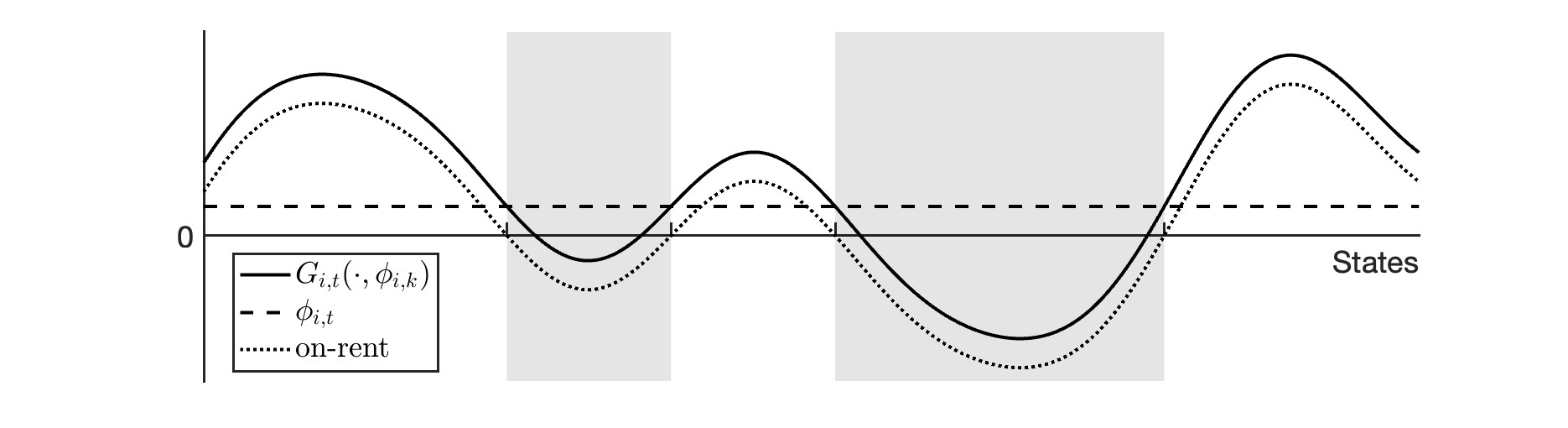}
    \caption{The principal's desired OFR $S^{\mathtt{off}}_{i,t}$ for agent $i$ in period $t$ is composed of two continuous intervals (grey regions). Any state in $S^{\mathtt{off}}_{i,t}$ induces a weakly negative on-rent.
     }
    \label{fig:intro_on_rent}
\end{figure*}

We introduce a novel procedure dubbed the \textit{up-persistence transformation}. 
This process essentially involves two key steps: state projection and subsequent state transition, specifically designed to manipulate the persistence of agents' past decisions.
The term persistence refers to the fact that agents' actions exert a stochastic influence on future states.
In the up-persistence transformation, a specific protocol is followed based on the realized state.
If a state is realized within any \textit{continuous (sub-)interval} of $S^{\mathtt{off}}_{i,t}$, it is projected to a predetermined state existing within the same interval.
The selection of this state is determined by applying a specific criterion based on the carrier functions.
If a state is realized outside of $S^{\mathtt{off}}_{i,t}$, it is left unchanged\textemdash no projection operation is performed.
Following the state projection phase, the state\textemdash whether it is the projected state or the original one\textemdash undergoes a transition based on the state dynamics provided by the base game model $\mathcal{G}$.

Leveraging the up-persistence transformation, we devise each cutoff-switch function as an implicit (given the carrier functions) closed-form expression of the task policy profile $\sigma$.
As delineated in Theorem \ref{thm:SPIR_conditions}, we articulate a sufficient condition that ensures the mechanism, integrated with the implicitly defined closed-form cutoff-switch functions, preserves IR-DOIC.
This condition, contingent solely on $\sigma$ (directly and indirectly through the carrier functions) in the context of the base game model, imposes a structural constraint encapsulated by the concept of \textit{essential regions} (Definition \ref{def:essential_region}).
The corresponding cutoff-switch function profile effectively induces the actual OFRs in which each state invariably results in an on-rent of zero for every agent across all periods, thereby rendering the agents indifferent between continued participation or withdrawal.

Due to the presupposed tie-breaking rule in scenarios of zero on-rents, the principal possesses the capacity to covert the IR-DOIC mechanism into a $S^{\mathtt{off}}$-switchable RAIC mechanism, denoted as the $S^{\mathtt{off}}$-DOIC mechanism, in which each $S^{\mathtt{off}}_{i,t}$ is an indifference region.
In such a $S^{\mathtt{off}}$-DOIC mechanism, the value of $G_{i,t}(a_{i,t}|s_{i,t}, \phi_{i,k})$ equals the constant value dictated by $\phi_{i,t}(h_{t}|c_{i,t})$ for all instances of $s_{i,t}$ in $S^{\mathtt{off}}_{i,t}$.
When each $S^{\mathtt{off}}_{i,t}$ is not an indifference region, the $S^{\mathtt{off}}$-DOIC mechanism necessitates a weakly negative on-rent for every instance of $s_{i,t}$ encapsulated within each $S^{\mathtt{off}}_{i,t}$.
With reference to Fig. \ref{fig:intro_on_rent}, the design of cutoff-switch functions subsequently requires the determination of a fixed line (i.e., the constant value given by $\phi_{i,t}(h_{t}|c_{i,t})$ for each history $h_{t}$) to cut the curve $G_{i,t}(\cdot, \phi_{i,k})$.
This requires that $G_{i,t}(\cdot, \phi_{i,k})$ resides below (or above) the fixed line, including the line itself, for states within the boundaries of $S^{\mathtt{off}}_{i,t}$ (or outside of $S^{\mathtt{off}}_{i,t}$).
The integration of $S^{\mathtt{off}}$-switchability and RAIC introduces heightened complexity to the mechanism design.
This complexity stems from the interplay between the principal's selection of $S^{\mathtt{off}}$, the mechanism, and the base game model.
Theorem \ref{thm:MAO_switchability_conjecture_gen} extends the results of Theorem \ref{thm:SPIR_conditions} to and establishes a sufficient conditions that guarantees the mechanism satisfies general $S^{\mathtt{off}}$-DOIC, incorporating an implicit closed-form formulation for each cutoff-switch functions in terms of the task policy profile. 

%
A necessary condition for $S^{\mathtt{off}}$-DOIC akin to the envelope-like condition, as specified in Proposition \ref{prop:necessary_DOIC_sigma_tau}, is explicitly imposed on the task policy profile $\sigma$ and the principal's preferred off-menu strategy profile $\tau^{d}$ given $S^{\mathtt{off}}$.
This condition is independent of the coupling policy profile $\rho$ and the off-switch function profile $\phi$.
It provides insight into the explicit expression of each carrier function, and we derive closed-form expressions for $\rho$ and $\phi$ as functions of $\sigma$.
Should $S^{\mathtt{off}}$ results in zero on-rents, Theorem \ref{thm:sufficient_condition_without_MSO} presents a sufficient condition for $S^{\mathtt{off}}$-DOIC, termed \textit{constrained monotone}.
If the task policy profile $\sigma$ fulfills this constrained monotone condition, then by utilizing the derived closed-form expressions of $\rho$ and $\phi$ in terms of $\sigma$, we can ascertain a mechanism for $S^{\mathtt{off}}$-DOIC.

%
The necessary condition obtained in Proposition \ref{prop:necessary_DOIC_sigma_tau} depends on the principal's desired off-menu strategy profile $\tau^{d}$.
However, $\tau^{d}$ is not a direct decision variable of the principal.
Rather, it needs to be induced by the mechanism $\Theta$ such that the game $\mathcal{G}^{\Theta}$ admits a PBE in which $\tau^{d}$ is the off-menu strategy profile.
This prevents a direct application of the first-order approach (\citet{kapivcka2013efficient,pavan2014dynamic,pavan2017dynamic}) to the design of $\Theta$ by selecting a $\sigma$ from a relaxed optimization problem.
In Section \ref{sec:discussion_1st_order_approach}, we show a stringent condition known as \textit{maximum-sensitive obedience}, which facilitates a first-order necessary condition in a dynamic envelop formulation for $S^{\mathtt{off}}$-DOIC that depends only on $\sigma$.
With maximum-sensitive obedience, the constrained monotone condition of $\sigma$ is both necessary and sufficient for $S^{\mathtt{off}}$-DOIC with zero on-rent.

The remainder of the paper is organized as follows.
Section \ref{sec:Preliminaries_and_Setup} lays the groundwork by introducing the preliminaries and the basic setups.
In Section \ref{sec:problem_formulation}, we delve into the details of our dynamic delegation mechanism, highlighting the specific class of off-switch functions known as the region-cutting off-switch or cutoff-switch functions.
Section \ref{sec:up_persistence_transforms} is dedicated to the detailed definition of the up-persistence transformation.
Our investigation into IR-DOIC takes center stage in Section \ref{sec:DOIC_D-IROD}, while Section \ref{sec:switchability_D-SOD} elucidates the characteristics of the general $S^{\mathtt{off}}$-DOIC.
Section \ref{sec:switchability_D-SOD} continues the exploration of $S^{\mathtt{off}}$-DOIC, further delineating its characteristics and providing closed-form formulations for each carrier function.
Lastly, in Section \ref{sec:discussion_1st_order_approach}, we delve into a discussion about the first-order approach, establishing conditions for the existence of a first-order necessary condition, and conclude the paper.
Proofs of the results are presented in the online supplementary document.


\section{Preliminaries and Setup}\label{sec:Preliminaries_and_Setup}

We first provide some conventions.
For any measurable set $Y$, $\Delta(Y)$ denotes the set of probability measures over $Y$.
Any function defined on a measurable set is assumed to be measurable.
Given a discrete time set $\mathbb{T}\equiv(1,2,\dots,T)$, the truncated time interval is denoted by $\mathbb{T}_{t,t+k} = (t,t+1,\dots, t+k)$, for $0\leq t \leq t+k \leq T$.
Let $y_{i,t}\in Y_{i,t}$ denote any term (e.g., random variable, parameter, or strategy) for agent $i$ in period $t$, where $Y_{i,t}$ is a compact set of $y_{i,t}$. Then, we write $y^{t}_{i}\equiv(y_{i,s})_{s\in \mathbb{T}_{0,t}}$, $y_{t} \equiv (y_{i,t})_{i\in I}$,  $y\equiv (y_{t})_{t\in\mathbb{T}}=(y_{i,t})_{i\in\mathcal{N},t\in\mathbb{T}}$, and $y^{t,L}_{i}\equiv(y_{i,k})^{L}_{k=t}$ for $L\in \mathbb{T}_{t,T}$.
We use $-i$ to denote the agents other than agent $i$ such that, for example, $y_{-i,t} \equiv \{y_{j,t}\}_{j\in \mathcal{N}\backslash\{i\}}$.
For the sets of profiles, we denote $Y_{t}\equiv \{Y_{i,t}\}_{i\in\mathcal{N}}$ and $Y\equiv  \{Y_{t}\}_{t\in\mathbb{T}}$.
For simplicity, we may omit the explicit listing of the index sets in the definitions of vectors and collections; e.g., $y\equiv(y_{i,t})=(y_{i,t})_{i\in\mathcal{N}, t\in\mathbb{T}}$, $Y\equiv\{Y_{i,t}\}=\{Y_{i,t}\}_{i\in\mathcal{N}, t\in\mathbb{T}}$ represent a vector and a collection, respectively, with entries indexed by $i$ and $t$, for all $i\in\mathcal{N}$, $t\in\mathbb{T}$.
A list main notations is provided in Table \ref{table:notations} in the online supplementary document.

\subsection{Base Game Model}

The \textit{base game} model is defined by a tuple $\mathcal{G}\equiv<\mathcal{E}, \mathcal{P}, \mathcal{Z}>$, where $ \mathcal{E}$ is the \textit{event model}, $\mathcal{P}$ is the \textit{state-dynamic model}, and $\mathcal{Z}$ is the \textit{reward model}.

\subsubsection{Event Model}
The event model $\mathcal{E}\equiv<\mathcal{N}, \mathbb{T}, S, A>$ is composed of four parts.
$\mathcal{N}\equiv[n]$ with $1\leq n < \infty$ is a finite set of agents.
$\mathbb{T}=(1,2,\dots, T)$ with $T< \infty$ is a finite set of discrete time indexes.
$S\equiv\{S_{i,t}\}$ is a joint set of \textit{states}, where each $S_{i,t}\equiv[\underline{s}_{i,t}, \bar{s}_{i,t}]$ is an ordered, continuous and compact set of states for agent $i$ in period $t$.
$A=\{A_{i,t}\}$ is a joint set of \textit{actions}, where each $A_{i,t}\equiv [\underline{a}_{i,t}, \bar{a}_{i,t}]$ is a continuous and compact set of actions for agent $i$ in period $t$. Each action $a_{i,t}\in A_{i,t}$ represents a \textit{task} for agent $i$ in $t$.
Each agent $i$ privately observes a state $s_{i,t}\in S_{i,t}$ at the beginning of each period $t$ and takes an action $a_{i,t}\in A_{i,t}$.
We consider that a state $s_{i,t}$, for all $i\in \mathcal{N}$ and $t\in\mathbb{T}$, is generated by Nature (e.g., \textit{Harsanyi's type}, \citet{harsanyi1967games}), which summarizes the payoff-relevant information of each agent $i$ including the agent's taste or preference about the decision-making in period $t$, his belief over other agents' information and their beliefs about him.

\subsubsection{State-Dynamic Model}
The state-dynamic model 
$\mathcal{P}\equiv<\{\kappa_{i,t},F_{i,t}\}_{i\in \mathcal{N}, t\in \mathbb{T}}, \Omega, W>$ comprises four elements.
We define the \textit{history} of period $t$ as a sequence of realized actions up to period $t-1$; i.e., $h_{t}\equiv(a_{k})_{k=1}^{t-1}\in H_{t}\equiv \prod_{k=1}^{t-1}A_{k}$ with $H_{1}=\emptyset$.
If agent $i$'s state in period $t$ is $s_{i,t}$, the history is $h_{t}$, and each agent $i$'s period-$t+1$ shock is $\omega_{i,t+1}$, then the agent's period-$t+1$ state is given by
\begin{equation}\label{eq:recommendation_dynamics}
    s_{i,t+1} = \kappa_{i,t+1}(s_{i,t}, h_{t+1}, \omega_{i,t+1}),
\end{equation}
with $s_{i,0} = \kappa_{i,0}(\omega_{i,0})$.
%
%
We refer to $\kappa_{i,t}: S_{i,t-1}\times H_{t}\times \Omega \mapsto S_{i,t}$ as the \textit{state-dynamic function} of agent $i$ in period $t$.
We assume that agents' contemporaneous states are independent of each other, which implies that the random shocks are idiosyncratic.
This assumption rules out aggregate random shocks that are common to all agents.
$F_{i,t}(s_{i,t-1}, h_{t})\in \Delta(S_{i,t})$ is the probability distribution of agent $i$'s period-$t$ state corresponding to the state-dynamic function $\kappa_{i,t}$. 
%
%
If the shock $\Tilde{\omega}_{i,t}$ is distributed according to $W$, then the state $\tilde{s}_{i,t}$ is distributed according to $F_{i,t}(s_{i,t-1}, h_{t})$.
With abuse of notation, we denote the cumulative distribution function (CDF) of $F_{i,t}(s_{i,t-1}, h_{t})$ by $F_{i,t}(\cdot|s_{i,t},h_{t})\in[0,1]$.
That is, $F_{i,t}(s_{i,t}|s_{i,t-1}, h_{t}) = \mathbb{P}\big( \kappa_{i,t}(s_{i,t-1}, h_{t}, \tilde{\omega}_{i,t}) \leq s_{i,t} \big)$.
The following assumption presupposes that the state dynamics have full support.
\begin{assumption}\label{assp:full_support}
For all $s_{i,t}\in S_{i,t}$, $h_{t}\in  H_{t}$, $i\in\mathcal{N}$, $t\in\mathbb{T}\backslash\{T\}$, $F_{i,t+1}(\cdot|s_{i,t}, h_{t})$ is strictly increasing on $S_{i,t+1}$.
\end{assumption}

%
Let $\mathcal{N}_{t}\subseteq \mathcal{N}$ denote the set of agents who participate (i.e., $\mathtt{OM}_{i,t}=0$ for all $i\in\mathcal{N}_{t}$ and $\mathtt{OM}_{j,t}=1$ for all $j\not\in \in\mathcal{N}_{t}$) in period $t$.
Let $a^{\dagger}_{t}=\{a_{j,t}\}_{j\in \mathcal{N}_{t}}$ denote the action profile taken by agents in $\mathcal{N}_{t}$.
To cope with each agent's possible earlier termination of his participation, we assume that $\{\kappa_{i,t+1}, F_{i,t+1}\}$ is well-defined for every subset $\mathcal{N}_{t}$ of $\mathcal{N}$.
That is, we have $\kappa_{i,t+1}(s_{i,t}, h_{t}, a^{\dagger}_{t}, \omega_{i,t})$ and $F_{i,t+1}(\cdot|s_{i,t}, h_{t}, a^{\dagger}_{t})$ for all $\mathcal{N}_{t}\subseteq \mathcal{N}$, $i\in\mathcal{N}_{t}$, $t\in\mathbb{T}\backslash\{T\}$.
We take the state-dynamic model $\mathcal{P}$ as primitive.

\subsubsection{Reward Model}
In the reward model $\mathcal{Z}\equiv\{u_{i,t}\}_{i\in \mathcal{N}, t\in \mathbb{T}}$, each $u_{i,t}:S_{i,t}\times A_{t}\mapsto \mathbb{R}$ is each agent $i$'s (intrinsic) reward function.
When agent $i$'s state is $s_{i,t}$, his action is $a_{i,t}$, and other agents' actions are $a_{-i,t}=(a_{j,t})_{j\neq i}$, the agent receives a single-period reward $u_{i,t}(s_{i,t}, a_{i,t}, a_{-i,t})$.

\subsubsection{Strategies}

In this work, the agents are \textit{ex-interim} decision makers. That is, each agent $i$ takes an action after observing $(s_{i,t}, h_{t})$ for every $t\in \mathbb{T}$, $s_{i,t}\in S_{i,t}$, $h_{t}\in H_{t}$, given the period-$t$ prior beliefs about other agents' state based on $\mathcal{P}$. 
Define a collection of pure-strategy \textit{non-anticipating strategy profiles} available to the agents,
\[
    \begin{aligned}
    \Pi\equiv\Big\{\pi=(\pi_{i,t}) \Big| \pi_{i,t}: S_{i,t}\times H_{t} \mapsto  A_{i,t}, \forall i\in \mathcal{N}\Big\}.
    \end{aligned}
\]
Agents play $\mathcal{G}$ according to a pure-strategy \textit{equilibrium} $\Pi[\mathcal{G}]\subseteq \Pi$, which is a set of policy profiles satisfying a certain equilibrium criterion.
An \textit{outcome} of $\mathcal{G}$, $\gamma\in \Delta(S\times A)$, is a probability measure of states and actions over time which describes the ex-ante stochasticity of the agents' collective responses and the states.
%
%
Given a game $\mathcal{G}$, let $O(\cdot; \mathcal{G}): \Pi[\mathcal{G}]\mapsto \Delta(S\times A)$ denote the correspondence from a policy profile $\pi\in \Pi[\mathcal{G}]$ to an outcome $\gamma$; i.e., $\gamma = O(\pi; \mathcal{G})$.
%


\subsubsection{Information Asymmetry}
In this work, we assume that each agent $i$'s period-$t$ state $s_{i,t}$ is privately observed by the agent; once all the agents have made decisions in a period, their decisions and private states become publicly observable. Hence, the history $h_{t}=(a_{k})_{k=1}^{t-1}$ is public information at the beginning of each period $t$, and each agent does not need to form beliefs about other agents' past actions.

%

\subsection{Mechanism Design without Participation Constraints}

The principal's objective, denoted by $Q(\gamma; \mathcal{G})\in \mathbb{R}$, depends on the outcome $\gamma$ of the game $\mathcal{G}$. We refer to $g^{d}=\max_{\gamma}Q(\gamma; \mathcal{G})$ as the \textit{goal} of the principal.
The principal may aim to achieve \textit{optimal social welfare} (e.g., optimal multiagent system operations) when $Q(\cdot;\mathcal{G})$ is the expected sum of all agents' payoffs; she may want to obtain \textit{optimal revenue} when $Q(\cdot;\mathcal{G})$ is the optimized expected benefit (e.g., payments) transferred from the agents' payoffs.

Let $\gamma^{d}$ denote the principal's desired outcome; i.e., $\gamma^{d}\in\arg\max_{\gamma} Q(\gamma;\mathcal{G})$ or equivalently $g^{d} = Q(\gamma^{d};\mathcal{G})$.
However, the principal has no ability to directly change any outcomes from the agents' arbitrary behaviors to $\gamma^{d}$ or force the agents to take certain actions that lead to $\gamma^{d}$.
Instead, the principal aims to design a \textit{mechanism} $\Theta$ that directly intervenes in the base game $\mathcal{G}$ such that her goal can be achieved.
The direct intervention leads to a new game, denoted by $\mathcal{G}^{\Theta}$.
%

The principal is \textit{ex-ante} decision maker and is assumed to have the commitment power.
That is, the principal needs to design a mechanism $\Theta$ and commit to it before the agents start to play the game $\mathcal{G}^{\Theta}$. 
After $\mathcal{G}^{\Theta}$ has started, the principal cannot change the mechanism $\Theta$.
On the other hand, each agent $i$ dynamically determines policy $\pi_{i,t}$ and adaptively plans future policies $\pi^{t,T}_{i}=(\pi_{i,k})^{T}_{k=t}$ to choose an optimal action for each period after observing his private state that arrives over multiple periods.

Informally, let the correspondence $D_{i,t}(s_{i,t},h_{t}|\pi_{-(i,t)},\mathcal{G}^{\Theta})\subseteq A^{\Theta}_{i,t}$ denote the set of each agent $i$'s period-$t$ best responses to other agents' strategies $\pi^{t,T}_{-i}=(\pi_{-i,k})_{k=t}^{T}$ and his planned future strategies $\pi^{t+1,T}_{i}=(\pi_{i,k})_{k=t+1}^{T}$, when the agent's private state is $s_{i,t}$ and the public history is $h_{t}$.
Note that given $h_{t}$, $D_{i,t}(\cdot)$ is independent of $\pi^{t-1}$. Here, we denote $\pi_{-(i,t)}=(\pi^{1,t-1}, \pi^{t+1,T}_{i}, \pi^{t, T}_{-i})$ to simplify the notation.
For the dynamic game $\mathcal{G}^{\Theta}$ induced by the mechanism $\Theta$, we consider \textit{perfect Bayesian equilibrium} (PBE) as the equilibrium solution concept.

\begin{definition}[PBE]\label{def:PBE_0}
    A \textup{PBE} of $\mathcal{G}^{\Theta}$ is a pair $<\xi, \pi>$ consisting of a \textup{belief profile} $\xi=(\xi_{i,t})$, where $\xi_{i,t}:H_{t} \mapsto \Delta(S_{-i,t})$, and a \textup{policy profile} $\pi=(\pi_{i,k}, \pi_{-i,k})_{k=1}^{T}$ such that \textit{(i)} each belief $\xi_{i,t}(\cdot|h_{t})$ updates according to the Bayes' law and \textit{(ii)} for all $i\in\mathcal{N}$, $t\in\mathbb{T}$, $s_{i,t}\in S_{i,t}$, $h_{t}\in H_{t}$,
    %
    \begin{equation}\label{eq:PBE_0}
    \begin{aligned}
    \pi_{i,t}(s_{i,t}, h_{t}) \in D_{i,t}(s_{i,t},h_{t}|\pi_{-(i,t)},\mathcal{G}^{\Theta}).
    \end{aligned}
\end{equation}
Let $\Pi^{P}[\mathcal{G}^{\Theta}]$ denote the set of all possible PBE of the game $\mathcal{G}^{\Theta}$.
\hfill$\triangle$
\end{definition}

Due to the subgame perfection of PBE, the condition (\ref{eq:PBE_0}) satisfies the \textit{one-shot deviation principle} (\citet{hendon1996one}).
In particular, $\pi$ is a PBE if and only if $\pi_{i,t}$ is a best response to $\pi_{-(i,t)}=(\pi^{t-1}_{i},\pi^{T}_{i,t+1},\pi_{-i})$ for every $i\in\mathcal{N}$, $t\in\mathbb{T}$, $s_{i,t}\in S_{i,t}$, $h_{t}\in H_{t}$ when agent $i$ plans $\pi^{t+1, T}_{i}$.

The principal faces adverse selection in her interaction with the agents because $s_{i,t}$ is unobserved by the principal at the beginning of each period $t$ for every $i\in\mathcal{N}$. 
The principal as an ex-ante decision-maker has an objective $Q(\gamma;\mathcal{G}^{\Theta})$, where $\gamma$ is the outcome due to the agents' interim decision-making processes $\{D_{i,t}(\cdot|\pi_{-(i,t)},\mathcal{G}^{\Theta})\}$ that are unknown by the principal due to the unobservability of the private states that arrive over time. 
Hence, the principal's mechanism $\Theta$ needs to provide enough incentives for the agents such that the principal's desired equilibrium $\pi^{d}$ coincides with one of the actual equilibria of the game $\mathcal{G}^{\Theta}$.
That is, the principal's mechanism design problem can be described as follows:
\begin{equation}
    \begin{aligned}
        \max\limits_{\Theta} Q\Big(O(\pi^{d}; \mathcal{G}^{\Theta} ), \mathcal{G}^{\Theta}\Big), \textup{ s.t., } \pi^{d}\in \Pi^{P}[\mathcal{G}^{\Theta}].
    \end{aligned}
\end{equation}
For every $\Theta$ such that $\pi^{d}\in \Pi^{P}[\mathcal{G}^{\Theta}]$, we say that the mechanism $\Theta$ is \textit{incentive compatible} in PBE $\pi^{d}$.
The notion of incentive compatibility captures the situation when the mechanism provides incentives for the agents to take actions that coincide with the principal's prediction and intention.

\section{Problem Formulation}\label{sec:problem_formulation}

\begin{figure*}
  \centering
    \includegraphics[width=0.7\textwidth]{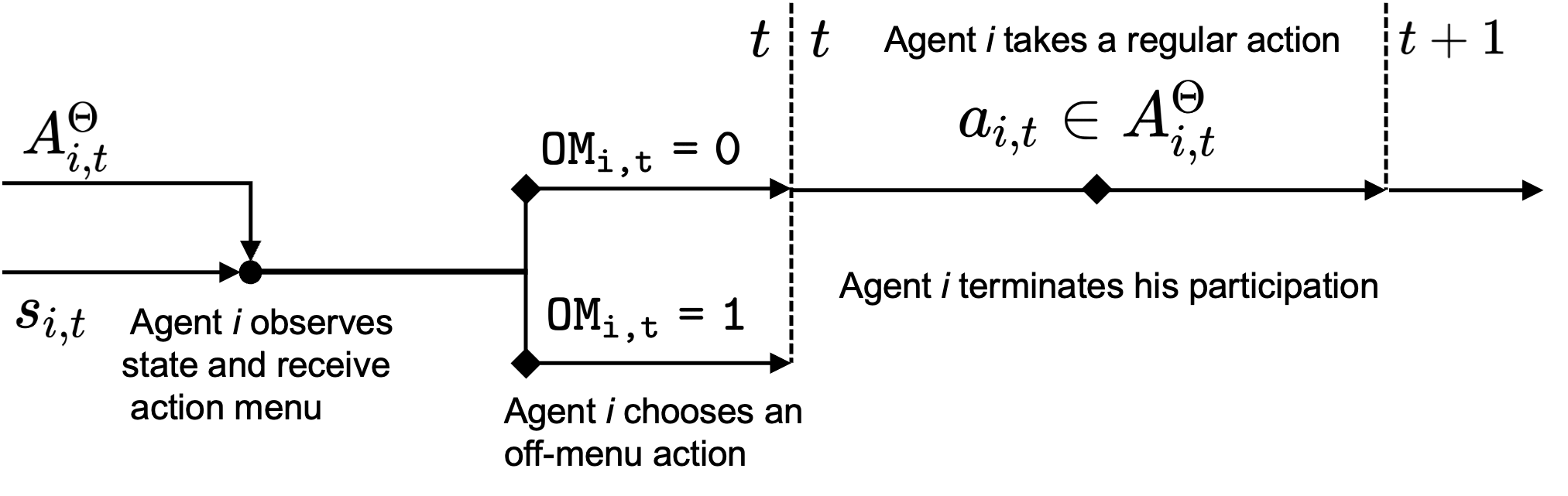}
    \caption{Dynamic delegation with off-menu actions. At the beginning of each period, the principal generates an action menu $A^{\Theta}_{i,t}$ for each agent $i$. Agent $i$ with private state $s_{i,t}$ first chooses $\mathtt{OM}_{i,t}\in\{0,1\}$. If $\mathtt{OM}_{i,t}=1$, then agent $i$ terminates his participation; otherwise, he takes a regular action from $A^{\Theta}_{i,t}$, and continues to the next period.
     }
    \label{fig:delegation_timing_off_menu}
\end{figure*}

In this work, we consider a scenario where each agent $i$ decides whether to participate at the beginning of each period $t$ after observing $s_{i,t}$ and then chooses an action from $A^{\Theta}_{i,t}\subseteq A_{i,t}$ if he has decided to participate.
We refer to each agent $i$'s decision of participation at each period $t$ as the \textit{off-menu action}, denoted by $\mathtt{OM}_{i,t}\in\{0,1\}$, for all $i\in\mathcal{N}$, $t\in\mathbb{T}$.
With reference to Fig. \ref{fig:delegation_timing_off_menu}, if agent $i$ takes $\mathtt{OM}_{i,t}=1$, then he withdraws from participation; otherwise, he participates in period $t$ by taking $\mathtt{OM}_{i,t}=0$ and then chooses a regular action $a_{i,t}$ from $A^{\Theta}_{i,t}$, following which the game progresses to the next period.

\subsection{Dynamic Delegation Mechanism}

The principal's \textit{dynamic delegation mechanism} $\Theta$ is defined by a tuple $\Theta\equiv<\sigma, \rho, \phi>$,
where $\sigma=(\sigma_{i,t})$ is a profile of \textit{task allocation policies} (task policies), $\rho=(\rho_{i,t})$ is a profile of \textit{coupling policies}, and $\phi=(\phi_{i,t})$ is a profile of \textit{off-switch functions}.

\subsubsection{Task Policy}

The principal uses each task policy $\sigma_{i,t}:S_{i,t}\times H_{t}\mapsto A_{i,t}$ determines an \textit{action} \textit{menu} $A^{\Theta}_{i,t}=A_{i,t}[\sigma_{i,t}, h_{t}]$ $\subseteq A_{i,t}$ defined by
\[
A_{i,t}[\sigma_{i,t},h_{t}] \equiv \Big\{\sigma_{i,t}(s_{i,t}, h_{t}): s_{i,t}\in S_{i,t}\Big\}\subseteq A_{i,t},
\]
for all $i\in\mathcal{N}$, $t\in\mathbb{T}$, $h_{t}\in H_{t}$.
We write $A_{i,t}[\sigma]$ for simplicity by omitting $h_{t}$ and the indexes $(i,t)$ of $\sigma_{i,t}$.

\subsubsection{Coupling Policy}
%
Each coupling policy $\rho_{i,t}:A_{t}[\sigma]\times H_{t}\mapsto \mathbb{R}$ specifies a \textit{coupling value} $m_{i,t}=\rho_{i,t}(a_{t}, h_{t})\in \mathbb{R}$ for all $i\in\mathcal{N}$, $t\in\mathbb{T}$, $a_{t}\in A_{t}[\sigma]=\prod_{i\in\mathcal{N}} A_{i,t}[\sigma_{i,t}; h_{t}]$, $h_{t}\in H_{t}$, such that each agent $i$ receives a single-period utility 
\[
z_{i,t}(a_{t}, s_{i,t}, m_{i,t})\equiv u_{i,t}(s_{i,t}, a_{t}) + m_{i,t}.
\]
The coupling value could be a payment (i.e., $m_{i,t}\in \mathbb{R}_{-}$), a compensation (i.e., $m_{i,t}\in \mathbb{R}_{+}$), or a reward signal.
With abuse of notation, we write $z_{i,t}(a_{t}, s_{i,t}) = z_{i,t}\left(a_{t}, s_{i,t}, \rho_{i,t}(a_{t}, h_{t})\right)$ when $m_{i,t}$ is replaced by $\rho_{i,t}(a_{t}, h_{t})$.

\begin{remark}\label{remark:equivalence_truthful_report}
    In $\Theta$, the principal proposes an action menu for each agent and lets the agent freely choose one action from the menu. Agent $i$ with $s_{i,t}$ would choose an action $a_{i,t}=\sigma_{i,t}(s_{i,t}, h_{t})$ (or $\hat{a}_{i,t}=\sigma_{i,t}(\hat{s}_{i,t}, h_{t})$) depending on the true state $s_{i,t}$ (or any arbitrary state $\hat{s}_{i,t}$), which has the same effect as if the agent first reports his state $s_{i,t}$ truthfully (or misreport a state $\hat{s}_{i,t}$) and then the principal allocates an action using $\sigma_{i,t}$.
    For the coupling policy, given $\sigma_{i,t}$, we have $\overline{\rho}_{i,t}(\hat{s}_{t}, h_{t}) = \rho_{i,t}\left(\sigma_{t}(\hat{s}_{t}, h_{t}), h_{t}\right)$ which depends on the ``reports" $\hat{s}_{t}=(\hat{s}_{i,t})_{i\in\mathcal{N}}$.
    \hfill $\triangle$
\end{remark}

\subsubsection{Off-Switch Function}
%
Each off-switch (function) $\phi_{i,t}(\cdot|c_{i,t}):H_{t}\mapsto \mathbb{R}$ is parameterized by $c_{i,t}$.
The off-switch realizes an off-switch value, $o_{i,t}=\phi_{i,t}(h_{t}|c_{i,t})$, contingent on any history $h_{t}\in H_{t}$, to agent $i$ if and only if agent $i$ takes $\mathtt{OM}_{i,t}=1$.
Let $\hat{z}_{i,t}(\mathtt{OM}_{i,t}, a_{t}, s_{i,t})$ denote single-period utility involving the OM actions.
That is,
\[
\hat{z}_{i,t}(\mathtt{OM}_{i,t}, a_{t}, s_{i,t})=
\begin{cases}
\phi_{i,t}(h_{t}|c_{i,t}),&\textup{ if } \mathtt{OM}_{i,t}=1,\\
z_{i,t}(a_{t}, s_{i,t}),& \textup{ if } \mathtt{OM}_{i,t}=0.
\end{cases}
\]
In every period $t$, the off-switch value $o_{i,t}$ is a posted value that is independent of agent $i$'s current states or any regular actions, and it is only determined by $\phi_{i,t}$ after the history $h_{t}$ is realized.
Here, the principal's mechanism $\Theta$ directly intervenes in the event model via $\sigma$ and the reward model by $\rho$ and $\phi$.


\subsection{Agents' Dynamic Decision-Making}

Before using the pure strategy $\pi$ to choose regular actions from the action menu, the agents use a pure OM strategy profile $\tau=(\tau_{i,t})$ to choose OM actions in each period.
As they plan their future regular actions using $\pi^{t,T}$, agents also strategize future OM actions and incorporate this plan into their decision-making for the current period.
Since each agent $i$ can take $\mathtt{OM}_{i,t}=1$, up to one time in some period $t$, we consider that each $\tau_{i,t}$ determines $\mathtt{OM}_{i,L^{i} }=1$ for any $L^{i}\in \mathbb{T}_{t,T+1}\equiv \mathbb{T}_{t,T}\cup\{T+1\}$.
That is, $\mathtt{OM}_{i,L^{i} } = 1 \textup{ if and only if } \tau_{i,t}(s_{i,t},h_{t}) = L^{i}\in \mathbb{T}_{t,T+1}$.
Hence,
\begin{equation}\label{eq:agent_OM_strategy_plan}
    \begin{aligned}
        \tau_{i,t}(s_{i,t},h_{t}) = L^{i} \Longrightarrow \left(\mathtt{OM}_{i,t}=\mathbf{1}_{\{L^{i}=t\}}, \mathtt{OM}_{i,t+1}=\mathbf{1}_{\{L^{i}=t+1\}},\dots, \mathtt{OM}_{i,T+1}=\mathbf{1}_{\{L^{i}=T+1\}}\right),
    \end{aligned}
\end{equation}
where $\mathbf{1}_{\{\cdot\}}$ is the indicator function.
That is, when $\tau_{i,t}(s_{i,t},h_{t}) = L^{i}$, agent $i$ \textit{(i)} takes $\mathtt{OM}_{i,t}=\mathbf{1}_{\{L^{i}=t\}}$ and \textit{(ii)} plans $\left(\mathtt{OM}_{i,k}=\mathbf{1}_{\{k=L^{i}\}}\right)_{k\in\mathbb{T}_{t+1, T+1}}$.
Since the executions of $\tau$ lead to possible decreases in population over time, we let $\mathcal{N}_{t}\subseteq \mathcal{N}$ to denote the set of agents who have taken $\{\mathtt{OM}_{i,t-1}=0\}_{i\in \mathcal{N}_{t}}$ in period $t-1$. 
For notational compactness, we still use $a_{t}$ and $h_{t}$ to denote the joint actions and history without indicating the population change.
Similar to the regular actions, we assume that the principal discloses who has taken $\mathtt{OM}_{i,t}=1$ at the end of each period $t$.
Therefore, each agent knows $\mathcal{N}_{t}$ (i.e., who remains in the game at the beginning of each period) and does not have to form beliefs about it.
%

\subsubsection{Expected Future History}

For ease of exposition, consider when there are only two agents, $\mathcal{N}_{t}=\{i,j\}$, in period $t$.
Agent $i$'s expected payoff-to-go depends on agent $j$'s $\tau_{j,t}$ through the \textit{expected future history}.
In particular, suppose that agent $j$ with state $s_{j,t}$ has chosen $L^{j}=\tau_{j,t}(s_{j,t}, h_{t})$ (i.e., $\mathtt{OM}_{j, L^{j}}=1$). Then, agent $i$'s expected future history up to $L^{i}=\tau_{i,t}(s_{i,t}, h_{t})$ can be expressed as
\begin{equation}\label{eq:expected_future_history}
\tilde{h}_{t}[L^{i};L^{j}]=\begin{cases}
\left(h_{t}, \tilde{h}^{t+1,L^{j}}, \tilde{h}^{L^{j}+1, L^{i}}\right), & \textup{ if } L^{j}<L^{i},\\
\left(h_{t}, \tilde{h}^{t+1,L^{i}}\right), & \textup{ if } L^{j} \geq L^{i}.
\end{cases}
\end{equation}
%
That is, when agent $i$ looks into the future (to make plans), he needs to anticipate agent $j$'s future participation to obtain $\tilde{h}_{t}[L^{i};L^{j}]$, which has a direct influence on agent $i$'s expected future payoffs.
%
%
Let $L^{-i}=\{L^{j}\}_{j\neq i} = \tau_{-i,t}(s_{-i,t}, h_{t})$ denote the profile generated by other agents' OM strategy profile $\tau_{-i,t}$.
By $\tilde{h}_{t}[L^{i};L^{-i}]$, we generalize agent $i$'s expected future history $\tilde{h}_{t}[L^{i};L^{j}]$ to the game in period $t$ with $|\mathcal{N}_{t}|\geq 2$, in which the components of  $\tilde{h}_{t}[L^{i};L^{-i}]$ from period $L^{l}+1$ to $L^{i}$ (suppose $L^{l}\leq L^{i}$) excludes agent $l$'s participation (according to agent $l$'s plan).
To simplify the notation, we use $\tilde{h}_{k}$ for $k>t$ to denote agent $i$'s expected future history without marking others' OM decisions (i.e., future population dynamics).

\subsubsection{Conjectures}

However, agent $i$ does not observe others' private states at the beginning of each period. 
Instead, given other agents' policies $\tau_{-i,t}$, agent $i$ forms a \textit{conjecture} $x_{i,t}(\cdot|h_{t})\in \Delta\Big(\big(\mathbb{T}_{t,T}\big)^{|\mathcal{N}|-1}\Big)$ over others' choices of $L^{-i}\in \Big(\mathbb{T}_{t,T}\Big)^{|\mathcal{N}|-1}$.
That is, 
\begin{equation}\label{eq:x_coincides_tau}
\begin{aligned}
     x_{i,t}(L^{-i}|h_{t})\equiv\mathbb{P}\big( \tau_{-i,t}(s_{-i, t}, h_{t}) = L^{-i}\big|h_{t}\big)=\mathbb{E}^{F_{-i,t}}\Big[\mathbf{1}_{\{\tau_{-i,t}(\tilde{s}_{-i,t}, h_{t}) = L^{-i}\}}\Big|h_{t}\Big].
\end{aligned}
\end{equation}

According to the Ionescu Tulcea theorem (\citet{hernandez2012discrete}), the state-dynamic model $\mathcal{P}$, the agents' strategy profiles $\pi$ and $\tau$, and the principal's task policy profile $\sigma$ uniquely define an outcome of the game $\mathcal{G}^{\Theta}$, denoted by $\gamma^{\sigma}_{\pi,\tau}=O(\pi,\tau; \mathcal{G}^{\Theta})\in \Delta(S\times A)$.
We denote the expectation operator corresponding to $\gamma^{\sigma}_{\pi,\tau}$ by $\mathbb{E}^{\sigma}_{\pi,\tau}[\cdot]$.
In addition, given any history $h_{t}\in H_{t}$, state $s_{i,t}\in S_{i,t}$, and agent $i$'s conjecture $x_{i,t}$ (which coincides with $\tau_{-i,t}$ by satisfying (\ref{eq:x_coincides_tau})), the Ionescu Tulcea theorem also pins down a unique \textit{continuing outcome} perceived by agent $i$, denoted by $\gamma^{\sigma}_{\pi}[s_{i,t}, h_{t}, x_{i,t}] \in \Delta\left( S_{-i,t}\times A_{-i,t}\times \big(\prod^{T}_{k=t+1} S_{k}\times A_{k}\big) \right)$, according to $\{\{\kappa^{t,T}_{i}, F^{t,T}_{i}\}_{i\in \mathcal{N}}, \Omega, W\}$, $\pi^{t, T}$, and $\sigma^{t,T}$. 
Note that the continuing outcome $\gamma^{\sigma}_{\pi}[s_{i,t}, h_{t}, x_{i,t}]$ is defined by assuming that agent $i$ will remain participating in the game until the end.
Hence, $\gamma^{\sigma}_{\pi}[s_{i,t}, h_{t}, x_{i,t}]$ is independent of $\tau_{i,t}$.
We denote the corresponding expectation operator by $\mathbb{E}^{\sigma}_{\pi}[\cdot|s_{i,t}, h_{t},x_{i,t}]$.

\subsubsection{Perfect Bayesian Equilibrium}

For all $i\in\mathcal{N}$, $t\in\mathbb{T}$, $s_{i,t}\in S_{i,t}$, $h_{t}\in H_{t}$, and $x_{i,t}$, define the \textit{prospect function} up to period $L\in \mathbb{T}_{t,T}$ by 
\begin{equation}\label{eq:offswitch_payoff_to_go}
    \begin{aligned}
&G_{i,t}(a_{i,t}|s_{i,t}, h_{t}, L, x_{i,t})\equiv \mathbb{E}^{\sigma}_{\pi}\left[\sum\limits_{k=t}^{L} z_{i,k}\big(\tilde{s}_{i,k}, \tilde{a}_{i,k}\big) +\phi_{i,L+1}(\tilde{h}_{L+1}|c_{i,L+1}) \Big| s_{i,t}, h_{t}, x_{i,t} \right],
\end{aligned}
\end{equation}
with $\phi_{i,T+1}(\cdot) = 0$.
Then, each agent $i$'s period-$t$ interim expected payoff-to-go is defined by
\begin{equation}\label{eq:to_go_off_menu_switch}
    \begin{aligned}
    &\Lambda_{i,t}\left(\mathtt{OM}_{i,t}, a_{i,t}|s_{i,t}, h_{t}, x_{i,t}\right)\equiv \phi_{i,t}(h_{t}|c_{i,t}) \mathbf{1}_{\{ \mathtt{OM}_{i,t} =1 \}}+ \max\limits_{L\in\mathbb{T}_{t,T}} G_{i,t}(a_{i,t}|s_{i,t}, h_{t}, L, x_{i,t}) \mathbf{1}_{\{ \mathtt{OM}_{i,t} =0 \}}.
    \end{aligned}
\end{equation}
We omit the action $a_{i,t}$ in $G_{i,t}$ and $\Lambda_{i,t}$ when $a_{i,t}$ is obedient. 
Hence, the set of the best responses of agent $i$ in period $t$ is given by the correspondence
\[
\begin{aligned}
    D_{i,t}\left(s_{i,t},h_{t}|x_{i,t}, \pi_{-(i,t)}, \mathcal{G}^{\Theta}\right)=\argmax_{\mathtt{OM}_{i,t}\in\{0,1\}, a_{i,t} \in A_{i,t}[\sigma] } \Lambda_{i,t}(\mathtt{OM}_{i,t}, a_{i,t}|s_{i,t}, h_{t}, x_{i,t}).
\end{aligned}
\]

\begin{definition}[PBE]\label{def:PBE_1}
    A \textup{PBE} of $\mathcal{G}^{\Theta}$ is a pair $<\xi, \tau,\pi>$ consisting of a \textup{posterior belief profile} $\xi=(\xi_{i,t})$, where $\xi_{i,t}:H_{t} \mapsto \Delta(S_{-i,t})$, and \textup{policy profiles} $\tau=(\tau_{i,t},\tau_{-i,t})_{t=1}^{T}$ and $\pi=(\pi_{i,t}, \pi_{-i,t})_{t=1}^{T}$ such that \textit{(i)} each belief $\xi_{i,t}(\cdot|h_{t})$ updates according to the Bayes' law and \textit{(ii)} for all $i\in\mathcal{N}$, $t\in\mathbb{T}$, $s_{i,t}\in S_{i,t}$, $h_{t}\in H_{t}$,
    %
    \begin{equation}\label{eq:PBE_1}
    \begin{aligned}
    \left<\tau_{i,t}(s_{i,t}, h_{t}),\pi_{i,t}(s_{i,t}, h_{t})\right> \in D_{i,t}(s_{i,t},h_{t}|x_{i,t}, \pi_{-(i,t)}, \mathcal{G}^{\Theta}),
    \end{aligned}
\end{equation}
where $x_{i,t}$ satisfies (\ref{eq:x_coincides_tau}) given $\tau$.
Let $\Pi^{P}[\mathcal{G}^{\Theta}]$ denote the set of all possible PBE of the game $\mathcal{G}^{\Theta}$.
\hfill$\triangle$
\end{definition}

\begin{definition}[DIC]
    A mechanism $\Theta=\left<\sigma, \rho, \phi\right>$ is \textit{dynamic incentive compatible (DIC)} in $\left<\tau, \pi\right>$ if $\left<\tau, \pi\right>$ is a PBE of the game $\mathcal{G}^{\Theta}$.
\hfill$\triangle$
\end{definition}

\begin{assumption}\label{assp:bounded_prospect}
    For any available mechanism $\Theta=\left<\sigma, \rho, \phi\right>$ and $\pi$, the prospect function is bounded; i.e., for all $i\in\mathcal{N}$, $a_{i,t}\in A_{i,t}[\sigma]$, $s_{i,t}\in S_{i,t}$, $h_{t}\in H_{t}$, $x_{i,t}$, $\max\limits_{L}|G_{i,t}(a_{i,t}|s_{i,t}, h_{t}, L, x_{i,t})|<\infty$.
\end{assumption}

Assumptions \ref{assp:full_support} and \ref{assp:bounded_prospect} are the standing assumptions that will be held constant throughout the rest of the paper.
The timing of the principal's decision-making and the agents' decision-making processes in the game $\mathcal{G}^{\Theta}$ are as follows.
\begin{itemize}
    \item At the ex-ante stage, the principal designs and commits to a mechanism $\Theta\left<\sigma, \rho, \phi\right>$ and then publicly discloses it. After the agents decide to participate, the game $\mathcal{G}^{\Theta}$ starts.
    \item Given the (public) history $h_{t}$, the principal's $\sigma_{i,t}$ generates an action menu $A_{i,t}[\sigma]$ to each agent $i$ at the beginning of each period $t$.
    \item After observing his private state $s_{i,t}\in S_{i,t}$, agent $i$ takes $\mathtt{OM}_{i,t}\in\{0,1\}$ according to $\tau_{i,t}$. If $\mathtt{OM}_{i,t}=1$, then agent $i$ terminates his participation and leaves the game. If $\mathtt{OM}_{i,t}=0$, then agent $i$ takes an action $a_{i,t}\in A_{i,t}[\sigma]$ using $\pi_{i,t}$.
    \item After all agents have taken actions $\mathtt{OM}_{t}$ and $a_{t}$, the principal uses the off-switch function $\phi$ to assign a off-switch value $o_{j,t}$ for agent $i$ who has taken $\mathtt{OM}_{i,t}=1$ and uses the coupling policy $\rho$ to specify a coupling value $m_{j,t}$ for agent $j$ who has taken $\mathtt{OM}_{j,t}=0$.
    \item Each agent $i$ receives an immediate utility $\hat{z}_{i,t}(\mathtt{OM}_{i,t}, a_{t}, $ $s_{i,t})$.
    \item At the end of period $t$, the states $s_{t}$, $\mathtt{OM}_{t}$, and $a_{t}$ become public information and the history $h_{t}$ is updated to $h_{t+1}$. Agents' states are then transitioned according to the state-dynamic model.
\end{itemize}

\subsection{Dynamic Obedient Incentive Compatibility}

In this section, we define the notion of dynamic obedient incentive compatibility.

\begin{definition}[Dynamic Obedience]\label{def:obedience}
Fix a base game $\mathcal{G}$ and a mechanism $\Theta=\left<\sigma, \rho, \phi\right>$.
We say that a PBE $\left<\xi,\tau^{d}, \pi^{d}\right>$ of the game $\mathcal{G}^{\Theta}$ is \textup{dynamically obedient (OPBE)} if $\pi^{d}_{i,t}(s_{i,t}, h_{t}) = \sigma_{i,t}(s_{i,t}, h_{t})$ for all $i\in \mathcal{N}$, $t\in \mathbb{T}$, $s_{i,t}\in S_{i,t}$, $h_{t}\in H_{t}$. 
%
\hfill$\triangle$
\end{definition}

In pure-strategy OPBE, the public history provides complete information on agents' past true states.
As such, there is no requirement for any agent to construct and update posterior beliefs concerning the past private states of others.
Consequently, an agent's beliefs about the contemporaneous states of others, at the beginning of any period, are aligned with the period's prior distributions as dictated by the dynamic models of other agents' states.
Given these conditions, we make a simplification by omitting the posterior belief $\xi$ in OPBE.
The incentive compatibility for dynamic obedience (DOIC) requires the optimality of $\left<\tau^{d}_{i,t}(s_{i,t}, h_{t}), \sigma_{i,t}(s_{i,t}, h_{t})\right>$ at every period for each agent $i$ with state $s_{i,t}$ and history $h_{t}$.
That is, for all $i\in\mathcal{N}$, $t\in\mathbb{T}$, $s_{i,t}\in S_{i,t}$, $h_{t}\in H_{t}$, 
\begin{equation}\tag{$\mathtt{DOIC}$}\label{eq:def_DOIC}
    \begin{aligned}
        \left<\tau^{d}_{i,t}, \sigma_{i,t}\right>\in D_{i,t}(s_{i,t},h_{t}|x_{i,t}, \sigma_{-(i,t)}, \mathcal{G}^{\Theta}).
    \end{aligned}
\end{equation}
%
The constraint (\ref{eq:def_DOIC}) ensures that obedience is one of each agent $i$'s best response to other agents' obedience for every possible $s_{i,t}\in S_{i,t}$ and $h_{t}\in H_{t}$ in every period $t\in\mathbb{T}$.

\begin{proposition}\label{prop:dynamic_revelation_principle}
    Fix any base game model $\mathcal{G}$. 
    For any mechanism $\Theta'=\left<\sigma',\rho', \phi'\right>$ that is DIC in $\left<\tau', \pi'\right>$, there exists a DOIC mechanism $\Theta=\left<\sigma, \rho, \phi\right>$ that induces an OPBE $\left<\tau,\sigma\right>$ such that 
    \[
    Q(\gamma^{\sigma'}_{\pi',\tau'};\mathcal{G}^{\Theta'}) = Q(\gamma^{\sigma}_{\sigma,\tau};\mathcal{G}^{\Theta}).
    \]
\end{proposition}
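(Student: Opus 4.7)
The plan is to invoke a dynamic revelation-principle construction adapted to the off-menu setting. Given the DIC mechanism $\Theta' = \langle \sigma', \rho', \phi' \rangle$ and its inducing PBE $\langle \tau', \pi' \rangle$, I will construct a ``shadow'' mechanism $\Theta = \langle \sigma, \rho, \phi \rangle$ that replays the equilibrium play of $\mathcal{G}^{\Theta'}$, so that obedience (i.e., $\pi_{i,t}(s_{i,t},h_{t}) = \sigma_{i,t}(s_{i,t},h_{t})$) paired with $\tau \equiv \tau'$ forms an OPBE whose induced outcome coincides with $\gamma^{\sigma'}_{\pi',\tau'}$.

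First, I would set $\sigma_{i,t}(s_{i,t},h_{t}) \equiv \pi'_{i,t}(s_{i,t},h_{t})$, so that the action menu in $\Theta$ is $A_{i,t}[\sigma] = \{\pi'_{i,t}(s,h_{t}) : s \in S_{i,t}\} \subseteq A_{i,t}[\sigma']$, and define $\rho_{i,t}(a_{t},h_{t}) \equiv \rho'_{i,t}(a_{t},h_{t})$ and $\phi_{i,t}(h_{t}|c_{i,t}) \equiv \phi'_{i,t}(h_{t}|c'_{i,t})$. Under obedience in $\Theta$ with OM profile $\tau \equiv \tau'$, every realized action, coupling value, off-switch value, and public history arising along the equilibrium path coincides with its counterpart in $\mathcal{G}^{\Theta'}$ under $\langle \tau', \pi' \rangle$. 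By the Ionescu Tulcea theorem, this pathwise identity yields $\gamma^{\sigma}_{\sigma,\tau} = \gamma^{\sigma'}_{\pi',\tau'}$, and hence equality of the principal's objectives.

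Next, I would verify DOIC via the one-shot deviation principle. Fix any $i$, $t$, $s_{i,t}$, $h_{t}$, and a candidate deviation $\langle \tilde{\tau}_{i,t}, \tilde{a}_{i,t}\rangle$ available in $\Theta$. By construction, $\tilde{a}_{i,t} = \sigma'_{i,t}(\hat{s}_{i,t},h_{t})$ for some $\hat{s}_{i,t}$; pairing this ``effective report'' (cf.\ Remark~\ref{remark:equivalence_truthful_report}) with the continuation plan of playing $\pi'$ thereafter defines a mirror deviation in $\mathcal{G}^{\Theta'}$. Since $\rho \equiv \rho'$ and $\phi \equiv \phi'$, the single-period utilities and state-transition kernels coincide period by period, so agent $i$'s continuation payoff under the deviation in $\Theta$ equals that under its mirror in $\Theta'$; the analogous mirroring handles any $\tilde{\tau}_{i,t}$. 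Because $\langle \tau', \pi'\rangle$ is a PBE of $\mathcal{G}^{\Theta'}$, no mirror deviation is profitable, and therefore neither is the original, establishing (\ref{eq:def_DOIC}) at every $(s_{i,t},h_{t})$.

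The main obstacle will be handling off-path histories in $\mathcal{G}^{\Theta}$ that may never arise in $\mathcal{G}^{\Theta'}$ under $\pi'$ (since $A_{i,t}[\sigma]$ can be a strict subset of $A_{i,t}[\sigma']$). I would extend $\sigma$, $\rho$, and $\phi$ at such histories by a measurable selection of a continuation of $\pi'$, preserving the mirror correspondence subgame by subgame. Under obedience, the public history perfectly reveals past true states, so the belief profile $\xi$ reduces to the prior state distributions supplied by $\mathcal{P}$, rendering the Bayesian-consistency requirement of PBE immediate at every $h_{t}$. Subgame perfection of the resulting $\langle \xi, \tau, \sigma\rangle$ then follows from the one-shot deviation argument above applied at each $h_{t}$.
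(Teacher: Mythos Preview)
Your proposal is correct and follows essentially the same route as the paper: set $\sigma_{i,t}\equiv\pi'_{i,t}$, $\rho\equiv\rho'$, $\phi\equiv\phi'$, and $\tau\equiv\tau'$; outcome equivalence gives the equality of objectives, and the PBE property of $\langle\tau',\pi'\rangle$ in $\mathcal{G}^{\Theta'}$ transfers to $\langle\tau,\sigma\rangle$ in $\mathcal{G}^{\Theta}$ because every one-shot deviation available in $\Theta$ (with menu $A_{i,t}[\sigma]\subseteq A_{i,t}[\sigma']$) has an identical mirror in $\Theta'$ with the same continuation payoff. One small remark: the off-path obstacle you raise is not really present---precisely \emph{because} $A_{i,t}[\sigma]\subseteq A_{i,t}[\sigma']$, every history reachable in $\mathcal{G}^{\Theta}$ is a valid history in $\mathcal{G}^{\Theta'}$, and subgame perfection of $\langle\tau',\pi'\rangle$ already handles all such histories without any additional measurable-selection extension.
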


Proposition \ref{prop:dynamic_revelation_principle} shows a dynamic revelation principle for the design of $\Theta$, which allows us to restrict attention to DOIC mechanisms.
The principal's DOIC mechanism design problem is defined by the following mathematical programming with equilibrium constraints:
\begin{equation}\label{eq:principal_DOD_Obj}
    \max\limits_{\Theta=<\sigma, \rho, \phi>} Q\big(\gamma^{\Theta}_{\sigma,\tau^{d}}; \mathcal{G}^{\Theta } \big), \text{ s.t., (\ref{eq:def_DOIC})}. 
\end{equation}
Here, $\tau^{d}$ is one of the anticipated collective responses of the agents, which is payoff relevant to the principal.
Nonetheless, it should be noted that $\tau^{d}$ does not serve as a decision variable within the principal's mechanism design problem.
Rather, it is an outcome that arises from the implementation of the mechanism $\Theta$.
In Section \ref{sec:region_cutting_off_switch}, we will present a class of off-switch functions.
These functions enable the principal to effectively incorporate her desired $\tau^{d}$ into the formulation of the off-switch functions, thereby granting her greater control and influence over the system.

\subsection{Applications}

Our model presents a versatile framework with potential applicability across a broad range of fields that incorporate the ``cancel-anytime" policy. The following are two illustrative examples:

\subsubsection*{Dynamic Subscription Model}
A firm provides services on a monthly subscription basis (e.g., a streaming service provider such as Netflix) in which the actions refer to the services or products offered to the subscribers.
Every subscriber $i$ has a local state $s_{i,t}$ in each month $t$ that represents the subscriber's expected value derived from the service in period $t$.
For example, in the context of streaming services, the state could affect the activity degree of the subscriber (e.g., heavy, moderate, or light viewer).
The state of the subscribers can change from period to period following a Markovian dynamic in an endogenous and exogenous manner.
For example, a heavy viewer (with higher state $s_{i,t}$) may become a moderate viewer (with a decreased state $s_{i,t+1}$) according to $\kappa_{i,t+1}(s_{i,t}, h_{t+1}, \omega_{i,t+1})$, perhaps due to the changes in their free time (captured by the shock $\omega_{i,t+1}$) or the interest in current content offerings (the new action menu).
Also, if a subscriber has been satisfied with the service offered in the past, they are likely to continue their subscription and may even upgrade to a higher plan. On the other hand, if they have been disappointed with the service, they might downgrade their plan or cancel their subscription.
This is due to endogenous factors.
The firm can observe what and how much service a subscriber consumes in each period (i.e., $a_{i,t}$) but the firm does not know the state of the subscriber.
Subscribers with certain states (e.g., high demands) could lead to disproportionate costs for, e.g., content licensing, data bandwidth, and infrastructure, producing expenses, and customer service.
If the revenue gained from these subscribers does not offset the higher costs, the firm may prefer these customers to unsubscribe.

\subsubsection*{Cyber Insurance with Changing Risk Preferences}

A cyber insurance provider sells seasonal insurance policies to businesses.
Each business $i$ has a risk profile that is represented by its state $s_{i,t}$ that affects its perceived need for cyber insurance in a given period $t$.
The risk profile can be influenced by a variety of factors, such as the size of the business, its industry, the existing cybersecurity measures adopted, and its past experience with cyber threats.
The risk profile evolves over multiple periods.
For example, a small business $i$ (with a smaller state $s_{i,t}$) may grow and become a medium-sized business (with increased state $s_{i,t+1}$) according to $\kappa_{i,t+1}(s_{i,t}, h_{t+1}, \omega_{i,t+1})$.
The growth could be due to factors such as an increase in customers, expansion to new markets, or the launch of a successful new product. The evolution of risk profile due to such growths could be interpreted as the exogenous factors, captured by the shock $\omega_{i,t+1}$.
The risk profile could also change over time due to endogenous factors.
Businesses often learn from their past experiences. If they have experienced a cyber incident in the past and their insurance policy helped mitigate the financial impact, they may update their risk profile to acknowledge the value of maintaining an insurance policy, leading to higher perceived risk and an increased willingness to pay for insurance.
Furthermore, the terms and conditions of past policies might influence the business's risk management practices. For example, insurance companies often offer reduced premiums or other incentives to businesses that implement certain cybersecurity measures, such as regular security audits or employee training programs. Over time, these practices can reduce the business's vulnerability to cyber threats, effectively changing its risk profile.
The insurance provider aims to design dynamic cyber insurance policies with the goal of revenue maximization.
However, there may be businesses with a high risk of cyber-attacks or data breaches, due to their industry, poor cybersecurity measures, or frequent past incidents. If these high-risk businesses are not willing to pay premium rates that fully reflect their risk level, it might be more profitable for the insurance provider to lose these customers rather than risk having to pay out expensive claims.

\subsection{Region-Cutting Off-Switch}\label{sec:region_cutting_off_switch}

\begin{figure}
  \centering
    \includegraphics[width=0.7\linewidth]{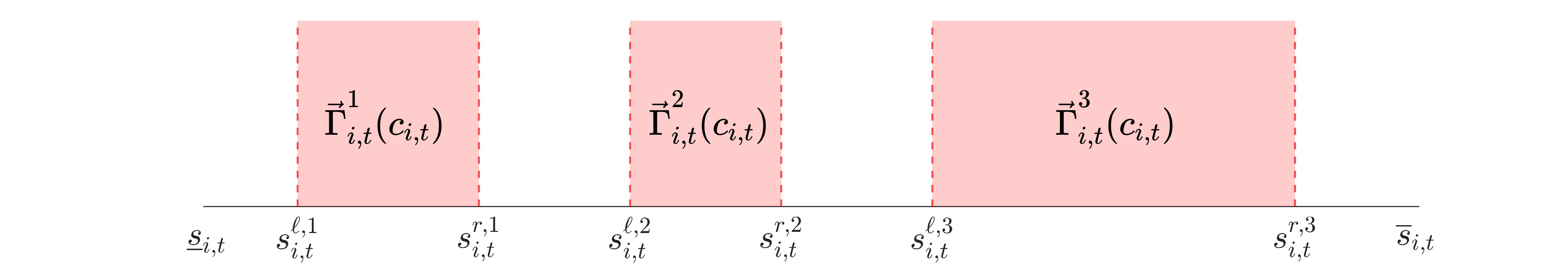}
    \caption{The cutoff-switch function $\phi_{i,t}(\cdot|c_{i,t})$ partitions agent $i$'s period-$t$ state space $S_{i,t}=[\underline{s}_{i,t}, \bar{s}_{i,t}]$ into multiple regions, in which the red-colored regions are sub-OFRs that constitute $S^{\mathtt{off}}_{i,t}$.} \label{fig:cutoff_def}
\end{figure}

In this section, we introduce a class of off-switch functions, known as the \textit{region-cutting off-switch} functions.
For the rest of this section, we assume all agents are obedient in taking the regular actions, all other agents (except agent $i$) are obedient in taking the OM actions, and each $x_{i,t}$ is agent $i$'s conjecture about others' obedient OM actions; unless otherwise stated.

\subsubsection{Off and On Regions}

First, we define the notion of \textit{on-rent} as one \textit{economic rent} (\citet{schoemaker1990strategy}) for each agent due to the agent's autonomy of strategic participation.

\begin{definition}[On-Rent]
    Given an action $a_{i,t}\in A_{i,t}[\sigma]$, the (period-$t$) \textit{on-rent} for each agent $i$ is defined as the difference in the agent's expected payoff-to-go between the selection of $\mathtt{OM}_{i,t}=0$ and $\mathtt{OM}_{i,t}=1$.
    %
    %
    That is, for all $i\in\mathcal{N}$, $t\in\mathbb{T}$, $a_{i,t}\in A_{i,t}[\sigma]$, $s_{i,t}\in S_{i,t}$, $h_{t}\in H_{t}$, $x_{i,t}$,
\begin{equation}\tag{\texttt{OnR}}\label{eq:function_z}
    \begin{aligned}
    Z_{i,t}(a_{i,t}|s_{i,t}, h_{t}, x_{i,t}; \Theta)&\equiv 
    \max_{L\in \mathbb{T}_{t,T}} G_{i,t}(a_{i,t}|s_{i,t}, h_{t}, L, x_{i,t})- \phi_{i,t}(h_{t}|c_{i,t}).
    \end{aligned}
\end{equation}
For simplicity, we write $Z_{i,t}(s_{i,t}, h_{t},x_{i,t}; \Theta)=Z_{i,t}(a_{i,t}|s_{i,t},$  $h_{t}, x_{i,t}; \Theta)$ when the regular action $a_{i,t}=\sigma_{i,t}(s_{i,t}, h_{t})$ is obedient.
    \hfill $\triangle$
\end{definition}

Then, each agent $i$'s expected payoff to-go satisfies
\begin{equation}\label{eq:payofftogo_onrent}
    \Lambda_{i,t}(\mathtt{OM}_{i,t}|s_{i,t}, h_{t}, x_{i,t})=\begin{cases}
    \phi_{i,t}(h_{t}|c_{i,t}),& \textup{ if } Z_{i,t}(s_{i,t}, h_{t},x_{i,t}; \Theta)\leq 0, \\ 
    \max\limits_{L\in \mathbb{T}_{t,T}} G_{i,t}(a_{i,t}|s_{i,t}, h_{t}, L, x_{i,t}), & \textup{ if } Z_{i,t}(s_{i,t}, h_{t},x_{i,t}; \Theta)\geq 0.
\end{cases}
\end{equation}

When the agents are obedient in taking the regular actions, the \textit{strict off region} (strict OFR) in period $t$ is the set of the true states for each agent $i$ in which the period-$t$ on-rent is strictly negative.
That is, given a conjecture $x_{i,t}$, for all $i\in\mathcal{N}$, $t\in\mathbb{T}$, $h_{t}\in H_{t}$,
\begin{equation}
    \begin{aligned}
        &\mathtt{S}^{\dagger}_{i,t}(h_{t}, x_{i,t})\equiv \left\{s_{i,t}\in S_{i,t}: Z_{i,t}(s_{i,t}, h_{t}, x_{i,t}; \Theta)< 0 \right\}.
    \end{aligned}
\end{equation}
When the agents are obedient in taking the regular actions, the \textit{indifference region}, denoted by $\mathtt{Id}^{\dagger}_{i,t}(h_{t}, x_{i,t})\subseteq S_{i,t}$, is defined as the set of true states for each agent $i$ with which the agent's period-$t$ on-rent is zero. 
That is, given a conjecture $x_{i,t}$, for all $i\in \mathcal{N}$, $t\in\mathbb{T}$, $h_{t}\in H_{t}$,
\begin{equation}\label{eq:indifference_region_def}
    \begin{aligned}
        &\mathtt{Id}^{\dagger}_{i,t}(h_{t},x_{i,t})\equiv \left\{s_{i,t}\in S_{i,t}:  Z_{i,t}(a_{i,t}|s_{i,t}, h_{t}, x_{i,t}; \Theta)= 0 \right\}.
    \end{aligned}
\end{equation}
When agent $i$ is obedient in taking the regular actions, if the agent's true state $s_{i,t}\in \mathtt{Id}^{\dagger}_{i,t}(h_{t},x_{i,t})$, then he is indifferent between taking $\mathtt{OM}_{i,t}=0$ and $\mathtt{OM}_{i,t}=1$.

Given the strict OFR $\mathtt{S}^{\dagger}_{i,t}(h_{t},x_{i,t})$ and any sub-region $\mathtt{Id}_{i,t}(h_{t},x_{i,t})\subseteq\mathtt{Id}^{\dagger}_{i,t}(h_{t}, x_{i,t})$ (with the same conjecture $x_{i,t}$), we define the \textit{(weak) OFR} by for all $i\in\mathcal{N}$, $t\in\mathbb{T}$, $h_{t}\in H_{t}$,
\begin{equation}\tag{\textup{OFR}}\label{eq:def_irrational_region}
    \begin{aligned}
    &\mathtt{S}_{i,t}(h_{t}, x_{i,t})\equiv \mathtt{S}^{\dagger}_{i,t}(h_{t},x_{i,t}) \cup \mathtt{Id}_{i,t}(h_{t},x_{i,t}).
    \end{aligned}
\end{equation}
We assume that when the agents are indifferent between $\mathtt{OM}_{i,t}=0$ and $\mathtt{OM}_{i,t}=1$, the tie-breaking is in the principal's favor.
In other words, it is the principal's decision whether agents with the true state within $\mathtt{Id}^{\dagger}_{i,t}(c_{i,t},x_{i,t})$ should take $\mathtt{OM}_{i,t}=0$ or $\mathtt{OM}_{i,t}=1$.
Here, we define the (weak) OFR $\mathtt{S}_{i,t}(h_{t}, x_{i,t})$ by (\ref{eq:def_irrational_region}), in which each agent $i$ with $s_{i,t}$ has the strict incentive to take $\mathtt{OM}_{i,t}=1$ when $s_{i,t}\in\mathtt{S}^{\dagger}_{i,t}(h_{t}, x_{i,t})$ or he is indifferent between $\mathtt{OM}_{i,t}=0$ and $\mathtt{OM}_{i,t}=1$ but the principal's tie-breaking rule makes him to take $\mathtt{OM}_{i,t}=1$ when $s_{i,t}\in\mathtt{Id}_{i,t}(h_{t},x_{i,t})$.
In addition, agent $i$ with a state in $\mathtt{Id}^{\dagger}_{i,t}(h_{t},x_{i,t})\backslash \mathtt{Id}_{i,t}(h_{t},x_{i,t})$ takes $\mathtt{OM}_{i,t}=0$ by the principal's tie-breaking rule.
Thus, for all states in the OFR, each agent $i$ has the incentive to take $\mathtt{OM}_{i,t}=1$ for all $i\in\mathcal{N}$, $t\in\mathbb{T}$.

\subsubsection{Cutoff-Switch Functions}

We formally define the region-cutting off-switch (cutoff-switch) function as follows.
\begin{definition}[Cutoff-Switch Function]\label{def:cutoff_switch_function}
    Let $\tau^{d}=(\tau^{d}_{i,t})$ be the principal's desired OM strategy profile, and let $x^{d}=(x^{d}_{i,t})$ be conjecture profile corresponding to $\tau^{d}$ (i.e., satisfying (\ref{eq:x_coincides_tau})).
    An off-switch function $\phi_{i,t}(\cdot|c_{i,t})$ is a \textit{region-cutting off-switch} \textit{(cutoff-switch)} function if $c_{i,t}$ is an ordered finite sequence of period-$t$ states, i.e., $c_{i,t} = \big(s^{\ell,1}_{i,t}, s^{r,1}_{i,t}, s^{\ell,2}_{i,t},\cdots,s^{\ell,B}_{i,t} , s^{r,B}_{i,t}\big)$ with $\underline{s}_{i,t}\leq s^{\ell,1}_{i,t}\leq s^{r,1}_{i,t}\leq \dots \leq s^{\ell,B}_{i,t} \leq  s^{r,B}_{i,t} \leq \bar{s}_{i,t}$ and $1\leq B<\infty$, such that
\begin{itemize}
    \item[(i)] $S^{\mathtt{off}}_{i,t} \equiv\bigcup_{b\in[B]}\vec{\Gamma}^{b}_{i,t}(c_{i,t})\subseteq S_{i,t}$ is the principal's desired OFR for agent $i$ in period $t$, where each $\vec{\Gamma}^{b}_{i,t}(c_{i,t})\equiv [s^{\ell,b}_{i, t}, s^{r,b}_{i,t}]$, for all $b\in[B]$, and
    \item[(ii)] $Z_{i,t}(s'_{i,t}, h_{t},x^{d}_{i,t};\Theta)=0$, for all $s'_{i,t}\in c_{i,t}$ and $h_{t}\in H_{t}$.
\end{itemize}
Let $\vec{\Gamma}_{i,t}(c_{i,t})\equiv\{\vec{\Gamma}^{b}_{i,t}(c_{i,t})\}_{b\in[B]}$.
We refer to such $c$ as a \textit{boundary profile}.
\hfill $\triangle$
\end{definition}

Fig. \ref{fig:cutoff_def} shows how a cutoff-switch function specifies the $S^{\mathtt{off}}_{i,t}$ of period $t$ for agent $i$, where $\vec{\Gamma}^{b}_{i,t}(c_{i,t})$ denotes the $b$-th sub-off region (sub-OFR), which is a continuous interval $[s^{\ell,b}_{i,t}, s^{r,b}_{i,t}]$.
Let $S^{\mathtt{off}}=\{S^{\mathtt{off}}_{i,t}\}$ be the collection of the principal's desired OFR (OFR${}^{d}$) for the agents.
The principal's desired OM strategy profile $\tau^{d}$ coincides with $S^{\mathtt{off}}$ as follows: for all $i\in\mathcal{N}$, $t\in\mathbb{T}$, $h_{i,t}\in H_{t}$,
\begin{equation}\label{eq:def_switchability_tau}
        \begin{cases}
            \tau^{d}_{i,t}(s_{i,t}, h_{t}) = t, & \textup{ if } s_{i,t}\in S^{\mathtt{off}}_{i,t},\\
            \tau^{d}_{i,t}(s_{i,t}, h_{t})>t,& \textup{ if } s_{i,t}\in S_{i,t}\backslash S^{\mathtt{off}}_{i,t}.
        \end{cases}
\end{equation}
%
%
Suppose that the boundary profile $c_{i,t}$ partitions the state $S_{i,t}$ into $0<B'<\infty$ regions.
Let $\Psi_{i,t}(c_{i,t})\equiv\left\{\Psi^{e}_{i,t}(c_{i,t})\right\}_{e=1}^{B'-B}$ denote the principal-desired \textit{on region} (ONR${}^{d}$) of agent $i$ in period $t$ with each $\Psi^{e}_{i,t}(c_{i,t})\in \Psi_{i,t}(c_{i,t})$ is the $e$-th ONR${}^{d}$ such that for all $i\in\mathcal{N}$, $t\in\mathbb{T}$, we have $\left(\bigcup_{e=1}^{B'-B}\Psi^{e}_{i,t}(c_{i,t})\right)\bigcup S^{\mathtt{off}}_{i,t}=S_{i,t}$.

Fix $\sigma$ and $\rho$, any off-switch function $\phi_{i,t}(\cdot|c_{i,t})$ (not necessarily cutoff-switch) induces an OFR for each agent $i$ in period $t$ (if it exists).
The structure of the cutoff-switch function directly depends on $S^{\mathtt{off}}$ through the parameter $c_{i,t}$. 
The principal aims to find a mechanism such that OFR${}^{d}$ coincides with the actual OFR of the game $\mathcal{G}^{\Theta}$.
That is, the game $\mathcal{G}^{\Theta}$ should induces a conjecture $x_{i,t}$ for each agent $i$, such that for all $i\in\mathcal{N}$, $t\in\mathbb{T}$, $h_{t}\in H_{t}$,
\begin{equation}\label{eq:cutoff_switch_def}
    \begin{aligned}
    \mathtt{S}_{i,t}\left(h_{t},x_{i,t}\right)&=S^{\mathtt{off}}_{i,t},
    \end{aligned}
\end{equation}
where the OM strategy profile $\tau$ corresponding to $x=(x_{i,t})$ satisfies (\ref{eq:def_switchability_tau}) given $S^{\mathtt{off}}$.
The principal aims to incentivize each agent $i$ with state $s_{i,t}\in S^{\mathtt{off}}_{i,t}$ (resp. $s_{i,t}\not\in S^{\mathtt{off}}_{i,t}$) to take $\mathtt{OM}_{i,t}=1$ (resp. $\mathtt{OM}_{i,t}=0$) in every period $t$.
Given the principal's desired $S^{\mathtt{off}}$ (determined by the boundary profile $c$), it is the design of the mechanism $\Theta=\left<\sigma, \rho, \phi\right>$ that needs to guarantee (\ref{eq:cutoff_switch_def}).

The following corollary is straightforward.

\begin{corollary}\label{corollary:SPIR_cutoff}
Suppose that the agents are obedient in taking the regular actions (i.e., the mechanism is RAIC).
Then, the delegation mechanism $\Theta=\left<\sigma, \rho, \phi\right>$ is DOIC if there exists a cutoff-switch profile $\phi(\cdot|c)$ such that \textit{(i)} the game $\mathcal{G}^{\Theta}$ induces an OFR $\mathtt{S}_{i,t}\left(h_{t},x_{i,t}\right)\subseteq \mathtt{Id}^{\dagger}_{i,t}(h_{t}, x_{i,t})$ is an indifference region and \textit{(ii)} (\ref{eq:cutoff_switch_def}) holds for all $i\in\mathcal{N}$, $t\in\mathbb{T}$, $h_{t}\in H_{t}$, in which each $S^{\mathtt{off}}_{i,t}$ is determined by $c_{i,t}$.
\end{corollary}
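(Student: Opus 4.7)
The plan is to show that under the two stated conditions, the principal's desired off-menu strategy profile $\tau^{d}$ together with the obedient regular policies forms a PBE of $\mathcal{G}^{\Theta}$. Since RAIC is assumed by hypothesis, each agent's regular best response is already $\sigma_{i,t}(s_{i,t},h_{t})$ whenever $\mathtt{OM}_{i,t}=0$, so the only thing to verify is OAIC: that for every $(i,t,s_{i,t},h_{t})$, the OM action prescribed by $\tau^{d}_{i,t}$ as in (\ref{eq:def_switchability_tau}) is a best response, given other agents' obedience in both on-menu and off-menu actions and the induced conjecture $x_{i,t}$ satisfying (\ref{eq:x_coincides_tau}). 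By the one-shot deviation principle for PBE, it suffices to compare the period-$t$ payoff-to-go under $\mathtt{OM}_{i,t}=0$ against that under $\mathtt{OM}_{i,t}=1$, which by (\ref{eq:payofftogo_onrent}) reduces to evaluating the sign of the on-rent $Z_{i,t}(s_{i,t},h_{t},x_{i,t};\Theta)$.

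My first step would be to unpack Condition (i): the actual OFR $\mathtt{S}_{i,t}(h_{t},x_{i,t})=\mathtt{S}^{\dagger}_{i,t}(h_{t},x_{i,t})\cup\mathtt{Id}_{i,t}(h_{t},x_{i,t})$ is contained in the indifference region $\mathtt{Id}^{\dagger}_{i,t}(h_{t},x_{i,t})$. Because $\mathtt{S}^{\dagger}_{i,t}$ consists of strictly-negative-on-rent states and $\mathtt{Id}^{\dagger}_{i,t}$ consists of zero-on-rent states, this inclusion forces $\mathtt{S}^{\dagger}_{i,t}(h_{t},x_{i,t})=\emptyset$, so every state in the actual OFR satisfies $Z_{i,t}=0$. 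Combining this with Condition (ii), which gives $\mathtt{S}_{i,t}(h_{t},x_{i,t})=S^{\mathtt{off}}_{i,t}$, I obtain the structural fact $Z_{i,t}(s_{i,t},h_{t},x_{i,t};\Theta)=0$ for every $s_{i,t}\in S^{\mathtt{off}}_{i,t}$, while $Z_{i,t}(s_{i,t},h_{t},x_{i,t};\Theta)\geq 0$ for every $s_{i,t}\in S_{i,t}\setminus S^{\mathtt{off}}_{i,t}$ with any equality case lying in $\mathtt{Id}^{\dagger}_{i,t}\setminus\mathtt{Id}_{i,t}$.

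I would then split the verification of $\tau^{d}_{i,t}$ into two cases dictated by (\ref{eq:def_switchability_tau}). If $s_{i,t}\in S^{\mathtt{off}}_{i,t}$, then $\tau^{d}_{i,t}(s_{i,t},h_{t})=t$, so obedience requires $\mathtt{OM}_{i,t}=1$; this is supported because the on-rent vanishes there and the principal-favoring tie-breaking convention resolves the indifference toward $\mathtt{OM}_{i,t}=1$, since $S^{\mathtt{off}}_{i,t}$ is precisely the sub-region of $\mathtt{Id}^{\dagger}_{i,t}$ that the principal has assigned to $\mathtt{Id}_{i,t}$. If $s_{i,t}\notin S^{\mathtt{off}}_{i,t}$, then $\tau^{d}_{i,t}(s_{i,t},h_{t})>t$, so obedience requires $\mathtt{OM}_{i,t}=0$; this follows because either $Z_{i,t}>0$ (strict preference for participating) or $s_{i,t}\in\mathtt{Id}^{\dagger}_{i,t}\setminus\mathtt{Id}_{i,t}$ (indifference resolved by the tie-breaking rule toward $\mathtt{OM}_{i,t}=0$). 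Note that the eventual planned stopping period $\tau^{d}_{i,t}>t$ is absorbed into the maximization over $L$ in $G_{i,t}$, so one-shot optimality at $t$ is all that is required.

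Combining the above, both OAIC and RAIC hold at every $(i,t,s_{i,t},h_{t})$, hence $\langle\tau^{d},\sigma\rangle$ together with the Bayesian-consistent belief updates forms an OPBE, so $\Theta$ is DOIC. The main subtlety, and the only place the argument is not entirely mechanical, is the correct bookkeeping of the three regions $\mathtt{S}^{\dagger}_{i,t}$, $\mathtt{Id}_{i,t}$, and $\mathtt{Id}^{\dagger}_{i,t}\setminus\mathtt{Id}_{i,t}$, and the consistent use of the tie-breaking rule in each of them to align the agent's OM best response with $\tau^{d}$ prescribed by the cutoff-switch parameter $c$; everything else is a direct application of the one-shot deviation principle to (\ref{eq:payofftogo_onrent}).
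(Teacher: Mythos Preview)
Your proposal is correct and follows essentially the same approach as the paper's proof: both argue that condition (i) forces the strict off-region to be empty so that $Z_{i,t}\geq 0$ on all of $S_{i,t}$, with equality exactly on $S^{\mathtt{off}}_{i,t}$, and then invoke the principal-favoring tie-breaking rule to conclude OAIC. The paper compresses this into two lines and defers the explicit tie-breaking discussion to the text following the corollary, whereas you spell out the two-case analysis for $s_{i,t}\in S^{\mathtt{off}}_{i,t}$ versus $s_{i,t}\notin S^{\mathtt{off}}_{i,t}$ and the role of $\mathtt{Id}^{\dagger}_{i,t}\setminus\mathtt{Id}_{i,t}$ more carefully; this extra bookkeeping is accurate but not substantively different.
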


\proof{Proof.}
The proof of Corollary \ref{corollary:SPIR_cutoff} is straightforward. If $S^{\mathtt{off}}_{i,t}=\mathtt{Id}_{i,t}(c_{i,t})$, for all $i\in\mathcal{N}$, $t\in\mathbb{T}$, then $Z_{i,t}(s_{i,t}, h_{t},x_{i,t}; \Theta)= 0$ for all $s_{i,t}\in S^{\mathtt{off}}_{i,t}$. 
%
From (\ref{eq:def_irrational_region}), we conclude that $Z_{i,t}(s_{i,t}, h_{t},x_{i,t};\Theta)\geq 0$ for all $s_{i,t}\in S_{i,t}$. Hence, the RAIC mechanism is OAIC.
\hfill $\square$
\endproof
Corollary \ref{corollary:SPIR_cutoff} states that a mechanism with cutoff-switch functions in which agents are obedient in taking the regular actions is also obedient in taking the OM actions if the induced OFR is an indifference region.
Since an empty boundary profile $c_{i,t}$ is infeasible for each cutoff-switch $\phi_{i,t}$, any possible choice of $S^{\mathtt{off}}$ is non-empty (at least one state). 
Based on the tie-breaking rule, the principal can incentivize each agent $i$ to take $\mathtt{OM}_{i,t}=0$ for all $i\in\mathcal{N}$, $t\in\mathbb{T}$.

\section{Persistence Transformation}\label{sec:up_persistence_transforms}

\begin{figure}
  \centering
    \includegraphics[width=0.7\linewidth]{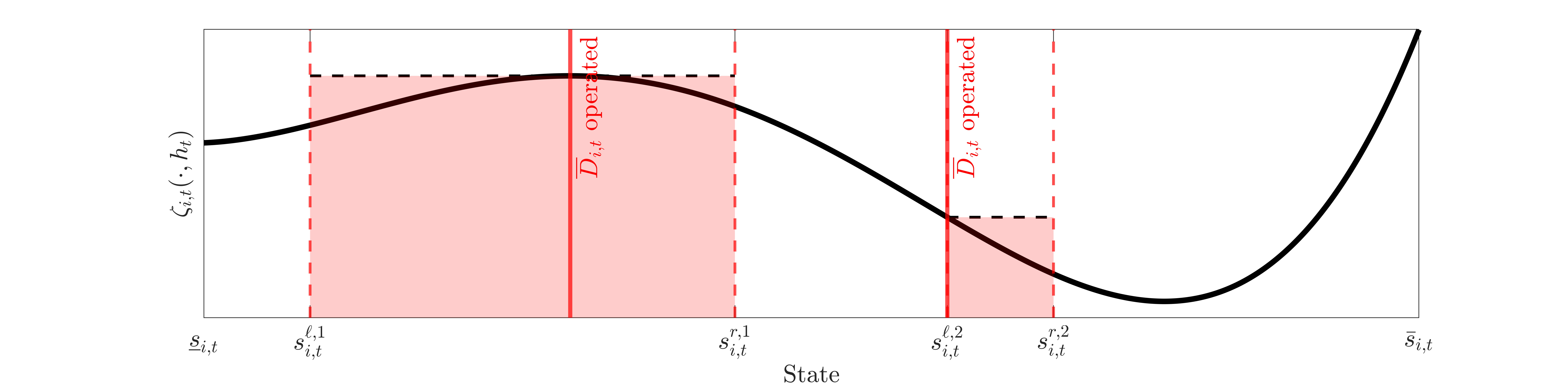}
    \caption{Example of $\overline{d}_{i,t}$, in which $S^{\mathtt{off}}_{i,t}$ is composed of two sub-OFR${}^{d}$s, $\vec{\Gamma}^{1}_{i,t}(c_{i,t})=[s^{\ell,1}_{i,t}, s^{r,1}_{i,t}]$ and $\vec{\Gamma}^{2}_{i,t}(c_{i,t})=[s^{\ell,2}_{i,t}, s^{r,2}_{i,t}]$.} \label{fig:example_up_op}
\end{figure}
%

%
This section introduces an auxiliary procedure known as the persistence transformation, a key step in the construction of cutoff-switch functions in Sections \ref{sec:DOIC_D-IROD} and \ref{sec:switchability_D-SOD}. 
This process integrates multiple \textit{reflecting barriers} (\citet{kruse2015optimal,kruse2018inverse}) into the state-dynamic functions, transforming the original state evolution into a constrained one. The concept of \textit{persistence} (\citet{jarque2010repeated,bohren2018using}) refers to the phenomenon where agents' past actions stochastically influence the distributions of future states. This persistence transformation fundamentally relies on a series of implicit functions associated with the task policy profile, which are known as the carrier functions.

\subsection{Carrier Functions}

Given an obedient conjecture $x_{i,t}$, we define a set of auxiliary functions termed \textit{carrier functions}, $g_{i,t}(\cdot,x_{i,t}|\theta_{i,t}):A_{i,t}[\sigma]\times S_{i,t} \times H_{t} \times \mathbb{T}_{t,T}\mapsto \mathbb{R}$, which is parameterized by $\theta_{i,t}$ and depends on the base game $\mathcal{G}$ and the task policy profile $\sigma$, for all $i\in\mathcal{N}$, $t\in\mathbb{T}$.
Each carrier function is an implicit term that \textit{carries} the task policy profile $\sigma$.
%
%
In the following parts of this section, as well as in Sections \ref{sec:DOIC_D-IROD} and \ref{sec:switchability_D-SOD}, we draw upon the existence of carrier functions as inherent aspects of the task policies. The detailed, closed-form formulation of each carrier function will be further explored in Section \ref{sec:formulation_of_carrier_functions}.

The \textit{maximum carrier} (MC) function is  then defined by, for all $i\in\mathcal{N}$, $t\in\mathbb{T}$,
\begin{equation}\label{eq:def_maximum_carrier}
    \begin{aligned}
        &\mathtt{Mg}_{i,t}(a_{i,t}, s_{i,t}, h_{t}, x_{i,t}|\theta_{i,t})\equiv \max\limits_{L\in\mathbb{T}_{t,T}}g_{i,t}(a_{i,t}, s_{i,t}, h_{t}, L, x_{i,t}|\theta_{i,t}),
    \end{aligned}
\end{equation}
with $\mathtt{Mg}_{i,t}(s_{i,t}, h_{t}, x_{i,t}|\theta_{i,t})=\mathtt{Mg}_{i,t}(a_{i,t}, s_{i,t}, h_{t}, x_{i,t}|\theta_{i,t})$ when $a_{i,t}$ is obedient.
Unless otherwise stated, we will omit the parameter $\theta_{i,t}$ in the carrier and the MC functions for simplicity.

Define the \textit{intertemporal marginal maximum carrier} (marginal carrier), for all $i\in\mathcal{N}$, $s_{i,t}\in S_{i,t}$, $h_{t}\in H_{t}$,
\begin{equation}\label{eq:def_marginal_carrier}
    \begin{aligned}
    &\zeta_{i,t}(s_{i,t}, h_{t}, x_{i,t})\equiv \mathtt{Mg}_{i,t}(s_{i,t}, h_{t},x_{i,t})- \mathbb{E}^{\sigma}\Big[\mathtt{Mg}_{i,t+1}(\tilde{s}_{i,t+1}, \tilde{h}_{t+1})\Big|s_{i,t}, h_{t},x_{i,t}\Big].
    \end{aligned}
\end{equation}
That is, the marginal carrier in period $t$ is the difference between the current period MC and the expected next-period MC.
From the definition of the carrier functions, the marginal carrier $\zeta_{i,t}(s_{i,t}, h_{t})$ depends only on the task policy.
%

\subsection{Up-Persistence Transformation}

Given a profile $c$, we define for all $i\in \mathcal{N}$, $b\in[B]$, 
\begin{equation}\label{eq:operator_d_bar}
    \begin{aligned}
    &\overline{d}_{i,t}(b,h_{t},x_{i,t}|c_{i,t}) \equiv  \sup\left\{ s'_{i,t}\in\argmax\limits_{s''_{i,t} \in \vec{\Gamma}^{b}_{i,t}(c_{i,t}) } \zeta_{i,t}(s''_{i,t}, h_{t}, x_{i,t})  \right\}.
    \end{aligned}
\end{equation}
With reference to Fig. \ref{fig:example_up_op}, $\overline{d}_{i,t}(b,h_{t}, x_{i,t}|$ $c_{i,t})$ (hereafter, $\overline{d}_{i,t}(b)$; unless otherwise stated) projects any state in the $b$th sub-OFR $\vec{\Gamma}^{b}_{i,t}(x_{i,t},c_{i,t})$ to the state in $\vec{\Gamma}^{b}_{i,t}(x_{i,t},c_{i,t})$ that maximizes the marginal carrier $\zeta_{i,t}(s_{i,t}, h_{t},x_{i,t})$ for any $h_{t}\in H_{t}$.
When there are multiple such maximizing states, we choose the largest one.

\begin{figure}
  \centering
    \includegraphics[width=0.7\linewidth]{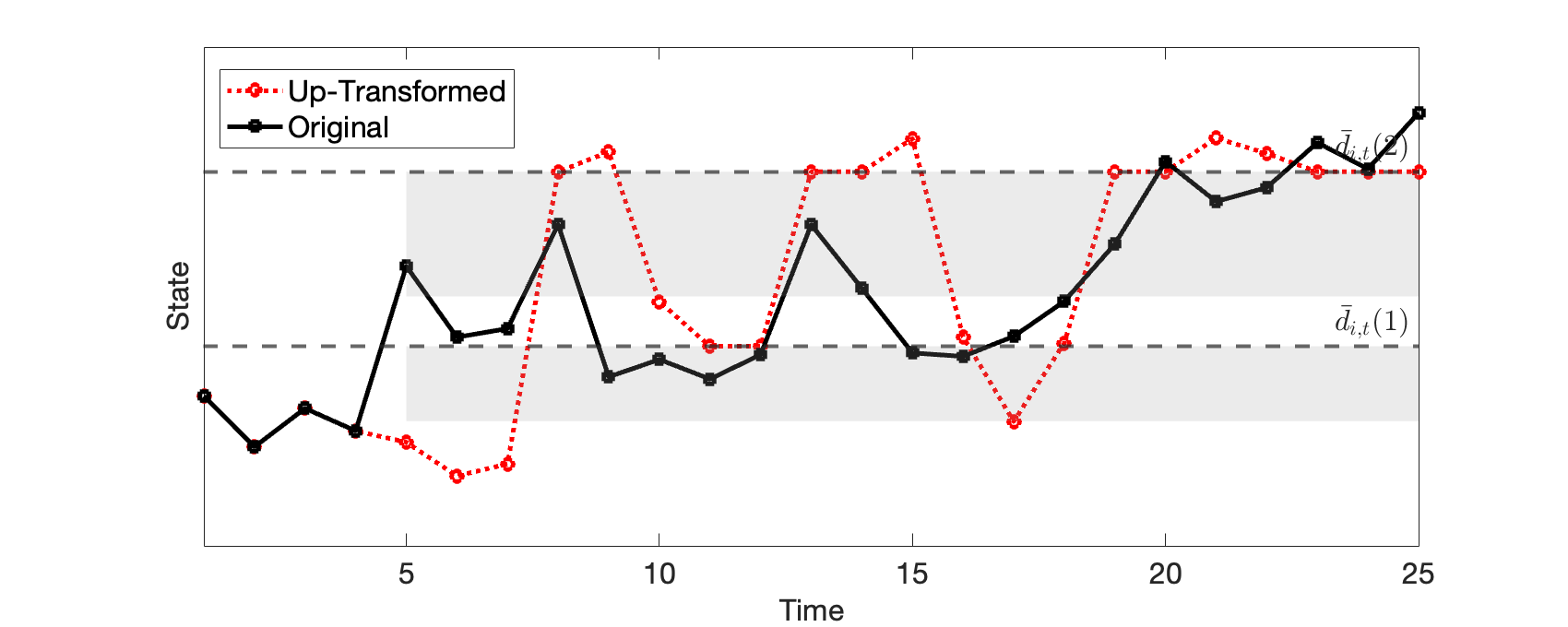}
    \caption{Example of \texttt{utPT} with OFR${}^{d}$ consisting of two continuous intervals (two grey regions).
    The \texttt{utPT} is performed starting from period $5$. The \texttt{utPT}ed state dynamics (red dotted line) never realizes a state within the OFR${}^{d}$ except at $\overline{d}_{i,t}(1)$ or $\overline{d}_{i,t}(2)$. The \texttt{utPT}ed state dynamic shares the same state transition as the original state dynamic (black solid line) when the states are outside of the OFR${}^{d}$.} 
    \label{fig:example_UT_transformed_process}
\end{figure}

Given $\overline{d}_{i,t}$ in (\ref{eq:operator_d_bar}), define the \textit{up transform} (\texttt{uT}) operator for agent $i$ in period $t$ as follows: for all $i\in\mathcal{N}$, $s_{i,t}\in S_{i,t}$, $b\in[B]$,
\begin{equation}\tag{$\mathtt{uT}$}\label{eq:def_up_transform}
    \begin{aligned}
    &\overline{D}_{i,t}\circ s_{i,t}\equiv 
    \begin{cases}
    \overline{d}_{i,t}(b),& \text{ if } s_{i,t}\in \vec{\Gamma}^{b}_{i,t}(x_{i,t},c_{i,t}),\\
    s_{i,t}, &\text{ if } s_{i,t}\not\in S^{\mathtt{off}}_{i,t}.
    \end{cases}
    \end{aligned}
\end{equation}

The \texttt{uT} operator turns each $\vec{\Gamma}^{b}_{i,t}(c_{i,t})$ as a \textit{reflecting interval} in which the corresponding boundaries $s^{b,\ell}_{i,t}$ and $s^{b,r}_{i,t}$ serve a similar purpose as the reflecting barriers (\citet{hamadene2007starting,kruse2015optimal,kruse2018inverse}) that constrains the state dynamics.
Specifically, whenever the state dynamics pass through any boundary to enter $\vec{\Gamma}^{b}_{i,t}(c_{i,t})$, the \texttt{uT} operator projects the state to $\overline{d}_{i,t}(b)$; otherwise, the state remains unchanged.
Given the \texttt{uT} operator, we define the \textit{up-persistence transform} (\texttt{upPT}) operator $\overline{D}^{t,t+1}_{i}$ for each agent $i$ in each period $t$ as follows, for all $s_{i,t}\in S_{i,t}$, $a^{t}\in A^{t}$, $h_{t+1}=(h_{t}, a_{t})\in H_{t}$,
\begin{equation}\label{eq:uppt_marginal}
    \begin{aligned}
    \overline{D}^{t,t+1}_{i}\left[s_{i,t}, h_{t+1}, \omega_{i,t+1}\right]\equiv \overline{D}_{i,t+1}\circ \kappa_{i,t+1}\left(s_{i,t}, h_{t+1}, \omega_{i,t+1}\right),
    \end{aligned}
\end{equation}
with $\overline{D}^{t,t}_{i}[s_{i,t}, \cdot]=s_{i,t}$, for some $\omega_{i,t+1}\in  \Omega$ with $W(\omega_{i,t+1})>0$. 
Again, we omit the actions when agents are obedient; i.e., $\overline{D}^{t,t+1}_{i}[s_{i,t},\omega_{i,t+1}] =  \overline{D}^{t,t+1}_{i}[s_{i,t}, h_{t+1},$  $ \omega_{i,t+1}]$; unless otherwise stated.
Given $\{\overline{D}^{t,t+1}_{i}\}_{t\in \mathbb{T}\backslash\{T\}}$, we can construct $\overline{D}^{t,k}_{i}$, for all $k\in \mathbb{T}_{t+1, T}$, recursively as follows:
\begin{equation}\label{eq:uppt_sequence}
    \begin{aligned}
    \overline{D}^{t,k}_{i}\big[s_{i,t}, \omega^{k}_{i,t+1} \big]\equiv \overline{D}^{k-1, k}_{i}\left[\overline{D}^{t,k-1}_{i}\big[s_{i,t}, \omega^{k-1}_{i,t+1}\big],\omega_{i,k-1}\right],
    \end{aligned}
\end{equation}
where $\omega^{k}_{i,t+1}=(\omega_{i,l})_{l=t+1}^{k}$.
Here, the obedient actions from period $t+1$ to $k$ are those corresponding to the \texttt{upPT}ed states.
Fig. \ref{fig:example_UT_transformed_process} shows an example of the operation performed by $\overline{D}^{t,k}_{i}[\cdot]$.

Define the set of all possible \texttt{upPT}ed states realizations in period $\tau\in\mathbb{T}_{t+1,T}$ when period-$t$ state is $s_{i,t}$ as follows:
\begin{equation}\label{eq:set_uppt}
    \overline{S}_{i,k}[s_{i,t}] \equiv\left\{
    \begin{aligned}
        & s_{i,k}\in S_{i,k}: \exists \omega^{k}_{i,t+1}\in \Omega^{k-t}, \textup{with } W(\omega_{i,l})>0 \textup{ for all } \omega_{i,l}\in \omega^{k}_{i,t+1} , \textup{ s.t. }  \overline{D}^{t,k}_{i}[s_{i,t}; \omega^{k}_{i,t+1} ] = s_{i,k}.
    \end{aligned}
    \right\}
\end{equation}
and $\overline{S}_{i,t}[s_{i,t}] =\{\overline{D}_{i,t}\circ s_{i,t}:s_{i,t}\in S_{i,t}\}$.

\begin{lemma}\label{lemma:upPTed_states_sets}
Under Assumption \ref{assp:full_support}, $\overline{S}_{i,k}[s_{i,t}]=\left\{ \overline{D}_{i,k}\circ s_{i,k}: s_{i,k}\in S_{i,k}\right\}$ for all $i\in\mathcal{N}$, $t\in\mathbb{T}$, $k\in\mathbb{T}_{t+1,T}$, $s_{i,t}\in S_{i,t}$.
\end{lemma}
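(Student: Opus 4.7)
The plan is to prove the two set inclusions separately. The forward direction $\overline{S}_{i,k}[s_{i,t}]\subseteq\{\overline{D}_{i,k}\circ s_{i,k}:s_{i,k}\in S_{i,k}\}$ is essentially a restatement of the recursive construction of the up-persistence transform, while the reverse inclusion is where Assumption \ref{assp:full_support} does the real work.

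For the $\subseteq$ direction, unfold the recursion (\ref{eq:uppt_sequence}) one period: for any $s'_{i,k}\in\overline{S}_{i,k}[s_{i,t}]$ there exists a shock path $\omega^{k}_{i,t+1}$ with $W(\omega_{i,l})>0$ for each $l$ such that $s'_{i,k}=\overline{D}^{t,k}_{i}[s_{i,t},\omega^{k}_{i,t+1}]$. Applying (\ref{eq:uppt_marginal}) at the last step yields $s'_{i,k}=\overline{D}_{i,k}\circ \kappa_{i,k}\bigl(\overline{D}^{t,k-1}_{i}[s_{i,t},\omega^{k-1}_{i,t+1}],h_{k},\omega_{i,k}\bigr)$. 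Since the argument inside $\overline{D}_{i,k}$ is a well-defined element of $S_{i,k}$, the inclusion follows.

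For the $\supseteq$ direction, fix $s_{i,k}\in S_{i,k}$; the task is to exhibit a shock sequence $\omega^{k}_{i,t+1}$, each coordinate of positive mass under $W$, such that $\overline{D}^{t,k}_{i}[s_{i,t},\omega^{k}_{i,t+1}]=\overline{D}_{i,k}\circ s_{i,k}$. I would proceed inductively on the periods $t+1,\dots,k-1$: at each intermediate $l$, select any $\omega_{i,l}$ with $W(\omega_{i,l})>0$; this, together with the obedient action that $\sigma$ assigns at each transformed state, pins down the public history $h_{k}$. Let $s^{\star}_{i,k-1}$ denote the resulting uppPT-ed state at period $k-1$. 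Now invoke Assumption \ref{assp:full_support}: since $F_{i,k}(\cdot|s^{\star}_{i,k-1},h_{k})$ is strictly increasing on $S_{i,k}$, the target $s_{i,k}$ belongs to the support of the corresponding transition kernel, so there exists $\omega_{i,k}$ with $W(\omega_{i,k})>0$ such that $\kappa_{i,k}(s^{\star}_{i,k-1},h_{k},\omega_{i,k})=s_{i,k}$. A final application of $\overline{D}_{i,k}$ produces $\overline{D}_{i,k}\circ s_{i,k}$, as required. The base case $k=t+1$ is immediate since $\overline{D}^{t,t+1}_{i}[s_{i,t},\omega_{i,t+1}]=\overline{D}_{i,t+1}\circ\kappa_{i,t+1}(s_{i,t},h_{t+1},\omega_{i,t+1})$, and full support of $F_{i,t+1}(\cdot|s_{i,t},h_{t+1})$ exhausts $S_{i,t+1}$.

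The main obstacle I anticipate is not a conceptual one but a bookkeeping issue: the history $h_{k}$ is endogenously assembled from the intermediate uppPT-ed states and the obedient actions that $\sigma$ evaluates at them, so when Assumption \ref{assp:full_support} is invoked at period $k$ the conditioning history must be recognized as the one generated consistently by the previously chosen shocks and the obedient $\sigma$. Once this is set up carefully, the induction on the length $k-t$ of the transformation window is routine, and no measurability subtlety arises provided $W(\omega)>0$ is read as membership in the support of $W$, so that the statement about the kernel reaching a specific $s_{i,k}$ is not merely an event of probability zero.
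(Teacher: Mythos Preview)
The paper does not provide an explicit proof of Lemma~\ref{lemma:upPTed_states_sets}; it is stated in Section~\ref{sec:up_persistence_transforms} and immediately followed only by the one-line interpretation that the reachable \texttt{upPT}ed set is independent of $s_{i,t}$ and $h_{t}$. Your two-inclusion argument is correct and is precisely the natural proof the paper presumably has in mind: the $\subseteq$ direction is immediate from unwinding the recursion (\ref{eq:uppt_sequence})--(\ref{eq:uppt_marginal}), and the $\supseteq$ direction is exactly where Assumption~\ref{assp:full_support} enters, guaranteeing that from any intermediate \texttt{upPT}ed state one can hit an arbitrary $s_{i,k}\in S_{i,k}$ before applying $\overline{D}_{i,k}$. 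The bookkeeping caveat you raise about the endogenous history and the reading of $W(\omega)>0$ as support membership are both well taken and consistent with how the paper uses these objects.
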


Lemma \ref{lemma:upPTed_states_sets} implies that under Assumption \ref{assp:full_support} (full support), the set of all possible states in each period $k\in\mathbb{T}_{t+1,T}$ that is \texttt{upPT}ed by period-$t$ state $s_{i,t}$ is independent of $s_{i,t}$ and the history $h_{t}$.

Let $\widetilde{\mathtt{us}}_{i,t}$ represent the \texttt{upPT}ed random variable of period-$t$ state on $S_{i,t}$ with $\overline{F}_{i,t}$ as probability distribution such that $\widetilde{\mathtt{us}}_{i,t+1}\sim \overline{F}_{i,t+1}(s_{i,t}, h_{t+1})$ if $\tilde{s}_{i,t+1}\sim F_{i,t+1}(s_{i,t}, h_{t+1})$.
%
%
Additionally, let $K(\cdot|h_{t}):S_{i,t} \mapsto \mathbb{R}$ be any measurable function of $s_{i,t}\in S_{i,t}$ that is integrable, which may depend on history $h_{t}\in H_{t}$ for each $i\in\mathcal{N}$, $t\in\mathbb{T}$.
Given the \texttt{upPT} operators, we define the \texttt{upPT} \textit{process} for each $k\in\mathbb{T}_{t+1, T}$ as follows:
\begin{equation}\label{eq:upPT_process_def}
    \begin{aligned}
\overline{\mathtt{DE}}^{t,L}\left[\sum\limits_{l=t+1}^{L} K_{i,k}(\widetilde{\mathtt{us}}_{i,l} |\tilde{h}_{k})\Big| s_{i,t}, h_{t},x_{i,t} \right] \equiv \mathbb{E}^{\sigma}\left[\sum\limits_{k=t+1}^{L}K_{i,k}\Big( \overline{D}^{t,k}_{i}\big[s_{i,t}; \tilde{\omega}^{k}_{i,t+1} \big]\Big|\tilde{h}_{k}\Big)  \Big|s_{i,t}, h_{t},x_{i,t} \right],
\end{aligned}
\end{equation}
where the expected history $\Tilde{h}_{k}=\{h_{t}, \Tilde{h}^{k}_{t+1}\}$ is composed of the realized history $h_{t}$ and the random variable $\Tilde{h}^{k}_{t+1}$ consisting expected actions corresponding to the \texttt{upPT}ed states from period $t+1$ to $k$ for $k\in\mathbb{T}_{t+1, L}$.
%
%

\subsection{Essential Region}\label{sec:essential_partition}

Define for all $i\in\mathcal{N}$, $t\in \mathbb{T}\backslash\{T\}$, $s_{i,t+1}\in S_{i,t+1}$, $h_{t+1}=(h_{t}, a_{t})\in H_{t+1}$,
\[
\begin{aligned}
&U_{i,t+1}(s_{i,t+1},s_{i,t}|h_{t+1}, x_{i,t})\equiv\mathtt{Mg}_{i,t+1}(s_{i,t+1}, h_{t+1},x_{i,t})-\mathbb{E}^{\sigma}\left[\mathtt{Mg}_{i,t+1}(\tilde{s}_{i,t+1}, \tilde{h}_{t+1}))\Big|s_{i,t}, h_{t},x_{i,t}\right].
\end{aligned}
\]
Here, $U_{i,t+1}(s_{i,t+1},s_{i,t}|h_{t+1})$ is the deviation of a sampled MC given period-$t+1$ $(s_{i,t+1}, h_{t+1})$ from its expectation given period-$t$ $(s_{i,t}, h_{t})$, in which $h_{t+1}$ contains $h_{t}$.
Similarly, each $U_{i,k}(\mathtt{us}_{i,k}, \mathtt{us}_{i,k-1}|h_{k})$ captures the deviation from a sampled MC given the period-$t+1$ \texttt{upPT}ed state $\mathtt{us}_{i,t+1}$ (and $h_{t+1}$) and its \textit{non-}\texttt{upPT}ed expectation given the period-$t$ $\mathtt{us}_{i,t}$ (and $h_{t}$). Since the agents are assumed to be obedient, each $a_{i,t}=\sigma_{i,t}(\mathtt{us}_{i,t},h_{t})$ in $h_{t+1}=(h_{t},a_{t})$ for all $i\in\mathcal{N}_{t}$. 

Let, for all $i\in\mathcal{N}$, $t\in\mathbb{T}$, $s_{i,t}\in S_{i,t}$, $h_{t}\in H_{t}$,
\begin{equation}\label{eq:uppted_delta_function}
    \begin{aligned}
    \overline{\delta}_{i,t}( s_{i,t}, h_{t}, x_{i,t})\equiv \overline{\mathtt{DE}}^{t,T}\left[\sum\limits_{k=t+1}^{T} U_{i,k}(\widetilde{\mathtt{us}}_{i,k}, \widetilde{\mathtt{us}}_{i,k-1}|\tilde{h}_{k})\Big| s_{i,t}, h_{t}, x_{i,t} \right]\Big|_{\widetilde{\mathtt{us}}_{i,t} = s_{i,t}}.
    \end{aligned}
\end{equation}
If we replace the \texttt{upPT} process in (\ref{eq:uppted_delta_function}) by the standard expectation operator $\mathbb{E}^{\sigma}[\cdot|s_{i,t}, h_{t}, x_{i,t}]$ (i.e., each $\widetilde{\mathtt{us}}_{i,k}=\tilde{s}_{i,k}$ for all $k\in\mathbb{T}_{t+1,T}$), then $\overline{\delta}_{i,t}( s_{i,t}, h_{t}, x_{i,t})=0$ for all $s_{i,t}\in S_{i,t}$, $h_{t}\in H_{t}$.

Denote
\begin{align*}
    \overline{\Xi}_{i,t}(s_{i,t}, s'_{i,t}| h_{t}, x_{i,t}) \equiv \overline{\delta}_{i,t}( s_{i,t}, h_{t},\chi_{i,t})- \overline{\delta}_{i,t}(s'_{i,t}, h_{t}, x_{i,t}),&\\
    \Upsilon_{i,t}(s_{i,t},s'_{i,t} |h_{t}, x_{i,t}) \equiv \mathtt{Mg}_{i,t}(s_{i,t}, h_{t},x_{i,t})- \mathtt{Mg}_{i,t}(s'_{i,t}, h_{t},x_{i,t}),&
\end{align*}
%
for all $i\in\mathcal{N}$, $s_{i,t}, s'_{i,t}\in S_{i,t}$, $h_{t}\in H_{t}$, where $\overline{\Xi}_{i,t}$ depends on the up-persistence transformation while $\Upsilon_{i,t}(s_{i,t},s'_{i,t} |h_{t})$ does not.
For any continuous interval $S'_{i,t}=[\underline{s}'_{i,t}, \overline{s}'_{i,t}]\subseteq S_{i,t}$, let 
\begin{equation}\label{eq:ordered_sequence_boundary}
    r_{i,t}\equiv \big(s^{[1]}_{i,t} = \underline{s}'_{i,t} , s^{[2]}_{i,t},\cdots s^{[B']}_{i,t}, s^{[B'+1]}_{i,t} = \overline{s}'_{i,t}\big),
\end{equation}
denote a boundary profile that partitions $S'_{i,t}$ into $B'$ intervals, for all $i\in\mathcal{N}$, $t\in\mathbb{T}$, $B'\geq 2$.
In particular, $[s^{[k]}_{i,t}, s^{[k+1]}_{i,t})\subseteq S'_{i,t}$ is the $k$-th partitioned interval of $S'_{i,t}$ with $[s^{[B']}_{i,t}, s^{[B'+1]}_{i,t}]$, and $S'_{i,t}=\left(\bigcup_{k=1}^{B'-1}[s^{k}_{i,t}, s^{[k+1]}_{i,t})\right)\bigcup\big[s^{[B']}_{i,t},$ $ s^{[B'+1]}_{i,t}\big]$.

Given a boundary profile $r_{i,t}$, a \textit{partition} of any $S'_{i,t}\subseteq S_{i,t}$ is a finite collection of $1<B'<\infty$ continuous intervals, denoted by $\Pi_{i,t}[r_{i,t}]=\big\{ \Pi^{k}_{i,t}[r_{i,t}]\big\}_{k=1}^{B'}$ where each  $\Pi^{k}_{i,t}[r_{i,t}]=[s^{[k]}_{i,t}, s^{[k+1]}_{i,t})$ for $k<B'$ and $\Pi^{B'}_{i,t}[r_{i,t}]=[s^{[B']}_{i,t}, s^{[B'+1]}_{i,t}]$, such that $S'_{i,t}=\bigcup_{k=1}^{B'} \Pi^{k}_{i,t}[r_{i,t}]$.

\begin{definition}[Essential Region]\label{def:essential_region}
    Fix a base game $\mathcal{G}$ and a task policy profile $\sigma$.
    Suppose that the agents are obedient in taking regular actions.
    Let $r_{i,t}$ be a boundary profile that partitions $S'_{i,t}\subseteq S_{i,t}$ into $\Pi_{i,t}[r_{i,t}]=\{\vec{\Pi}_{i,t}[r_{i,t}], \hat{\Pi}_{i,t}[r_{i,t}]\}$, which are decomposed into two collections where $\vec{\Pi}_{i,t}[r_{i,t}]=\{\vec{\Pi}^{b}_{i,t}[r_{i,t}]\}_{b\in [B]}$ with $1\leq B <B'$ and $\hat{\Pi}_{i,t}[r_{i,t}]=\{\hat{\Pi}^{e}_{i,t}[r_{i,t}]\}_{e\in [B'-B]}$.
    $\vec{\Pi}_{i,t}[r_{i,t}]$ is a collection of \textit{essential intervals} of $S'_{i,t}$ if there exists a point (not necessarily unique) $\vec{s}_{i,t}\in\vec{\Pi}^{b}_{i,t}[r_{i,t}]$ for every $b\in[B]$ such that
    \begin{itemize}
    \item[\textup{(i)}] for all $s'_{i,t}\in \vec{\Pi}^{b}_{i,t}[r_{i,t}]$, $\Upsilon_{i,t}(\vec{s}_{i,t},s'_{i,t} |h_{t},x_{i,t})+\overline{\Xi}_{i,t}(\vec{s}_{i,t}, s'_{i,t}|h_{t},x_{i,t})\geq0$, 
    \item[\textup{(ii)}] for all $s_{i,t}\not\in \bigcup_{b\in[B]}\vec{\Pi}^{b}_{i,t}[r_{i,t}]$, $\Upsilon_{i,t}(\vec{s}_{i,t},s_{i,t} |h_{t}, x_{i,t}) + \overline{\Xi}_{i,t}(\vec{s}_{i,t}, s_{i,t}| h_{t},x_{i,t}) \leq 0$.
\end{itemize}
    We refer to the union of the essential intervals, $\cup_{b\in[B]}\vec{\Pi}^{b}_{i,t}(c_{i,t})$, as \textit{essential region} of $S'_{i,t}$ and the point $\vec{s}_{i,t}$ as the \textit{essential point} of the $b$-th \textit{essential interval} $\vec{\Pi}^{b}_{i,t}[r_{i,t}]$, for all $b\in[B]$.
    \hfill $\triangle$
\end{definition}

Given any partitions of the state spaces, the notion of essential partition places a set of conditions for the base game model $\mathcal{G}$ and the task policy profile $\sigma$ (thus the principal's desired action menus and its time evolution).
From Definition \ref{def:essential_region}, an essential partition depends on the action menu (i.e. $\sigma$) and the base game model $\mathcal{G}$.

\subsection{Characterizing the Essential Region}

Next, we characterize the essential region by showing a set of exemplary conditions of the state-dynamic models and the reward model, under which we can guarantee the existence of the essential partition.
We fix a conjecture profile $x=(x_{i,t})$ in this section.
Let $Y_{i,t}(h_{t},x_{i,t})\subset \mathbb{R}$ denote the codomain of $\zeta_{i,t}(\cdot, h_{t}, x_{i,t})$ given any history $h_{t}\in H_{t}$.
Let $y_{i,t}$ denote a typical value in $Y_{i,t}(h_{t},x_{i,t})$.
Define a subset of states
\begin{equation}
    \begin{aligned}
    \vec{S}_{i,t}[y_{i,t}] \equiv \left\{s_{i,t}\in S_{i,t}: \zeta_{i,t}(s_{i,t},h_{t},x_{i,t}) \leq y_{i,t}\right\}.
    \end{aligned}
\end{equation}
We refer to $\vec{S}_{i,t}[y_{i,t}]$ (which is not necessarily continuous) as the $y_{i,t}$-cut region, in which each state leads to a value of the marginal carrier that is no larger than a value $y_{i,t}\in Y_{i,t}(h_{t}, x_{i,t})$.
Then, for any $s_{i,t}\in \vec{S}_{i,t}[y_{i,t}]$, $s'_{i,t}\in S_{i,t}\backslash \vec{S}_{i,t}[y_{i,t}]$, $h_{t}\in H_{t}$, $i\in \mathcal{N}$, $t\in\mathcal{N}$, we have
\begin{equation}\label{eq:condition_RE-CR}
    \zeta_{i,t}(s_{i,t},h_{t},x_{i,t}) \leq \zeta_{i,t}(s'_{i,t},h_{t},x_{i,t}).
\end{equation}
%
%
A special case is when $\zeta_{i,t}(s_{i,t},h_{t},x_{i,t})$ is monotone in $s_{i,t}$.
Suppose $\zeta_{i,t}(s_{i,t},h_{t},x_{i,t})$ is strictly monotone.
Then, every $y_{i,t}\in Y_{i,t}(h_{t},x_{i,t})$ has a unique corresponding state $s'_{i,t}\in S_{i,t}$ such that $\zeta_{i,t}(s'_{i,t},h_{t},x_{i,t}) = y_{i,t}$.
%
%


\begin{definition}[Dynamic Crossing Region]
Suppose that agents are obedient in taking regular actions.
A region (not necessarily continuous interval) $\mathring{S}_{i,t}\subseteq S_{i,t}$ is a \textit{dynamic crossing region (DCR)} if for any obedient action history $h_{t}\in H_{t}$, $s'_{i,t}\in \mathring{S}_{i,t}$, $s_{i,t}\in S_{i,t}$, $i\in\mathcal{N}$, $t\in\mathbb{T}\backslash\{T\}$, $s_{i,t+1}\in S_{i,t+1}$, $a_{t}\in A_{t}[\sigma]$,
\begin{equation}\label{eq:dcr_conditions}
    F_{i,t}( s_{i,t+1} |  s_{i,t} , a_{t}, h_{t}) \leq F_{i,t}(s_{i,t+1}|  s'_{i,t},a_{t}, h_{t}).
\end{equation}
%
%
\hfill $\triangle$
\end{definition}


Given any DCR $\mathring{S}_{i,t}\subseteq S_{i,t}$, define the set $M_{i,t}(y_{i,t},h_{t}| \mathring{S}_{i,t})$ by for any $y_{i,t}\in Y_{i,t}(h_{t},x_{i,t})$, any $h_{t}\in H_{t}$, $i\in\mathcal{N}$, $t\in\mathbb{T}$, 
\[
M_{i,t}(y_{i,t},h_{t}|\mathring{S}_{i,t})\equiv \vec{S}_{i,t}[y_{i,t}]\cap \mathring{S}_{i,t}.
\]

\begin{definition}[Dominated Region]\label{def:dominated_region}
Suppose that agents are obedient in taking regular actions.
Given a DCR $\mathring{S}_{i,t}$, the \textup{dominated region} of $\sigma$ for agent $i$ in $t$ is $\overrightarrow{M}_{i,t}(h_{t}|\mathring{S}_{i,t}) = M_{i,t}(y^{*}_{i,t},h_{t}|\mathring{S}_{i,t})\subseteq S_{i,t}$, for all $i\in \mathcal{N}$, $t\in\mathbb{T}$, $h_{t}\in H_{t}$, where $y^{*}_{i,t} \in \arg\max_{y_{i,t}\in Y_{i,t}(h_{t},x_{i,t}) } \big| M_{i,t}(y_{i,t},h_{t}|\mathring{S}_{i,t}) \big|.$
%
\hfill $\triangle$
\end{definition}

Given a DCR $\mathring{S}_{i,t}$, a dominated region for agent $i$ in $t$ is the largest intersection of $\mathring{S}_{i,t}$ and the $y_{i,t}$-cut region for $y_{i,t}\in Y_{i,t}(h_{t},x_{i,t})$.

\begin{proposition}\label{prop:essential_region_DR_SD}
Suppose that agents are obedient in taking regular actions.
Given a DCR $\vec{S}_{i,t}$, the dominated region $\overrightarrow{M}_{i,t}(h_{t}|\vec{S}_{i,t})$ is an essential region if
\begin{itemize}
    \item[(i)] $\overrightarrow{M}_{i,t}(h_{t}|\mathring{S}_{i,t}) = \mathring{S}_{i,t}$, and
    \item[(ii)] there exists a state $\vec{s}_{i,t}\in \overrightarrow{M}_{i,t}(h_{t}|\mathring{S}_{i,t})$ such that $\overrightarrow{M}_{i,t}(h_{t}|\mathring{S}_{i,t})\backslash \{\vec{s}_{i,t}\}$ is a dominated region of $\overrightarrow{M}_{i,t}(h_{t}|\mathring{S}_{i,t})$.
\end{itemize}
\end{proposition}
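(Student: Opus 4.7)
The plan is to verify directly that the dominated region $\overrightarrow{M}_{i,t}(h_{t}|\mathring{S}_{i,t})$, together with the distinguished state $\vec{s}_{i,t}$ supplied by condition (ii), fulfills the two inequalities in Definition \ref{def:essential_region}. My first step is to translate condition (ii) into a clean characterization of $\vec{s}_{i,t}$: since a dominated region is, by Definition \ref{def:dominated_region}, the intersection of a $\zeta$-sublevel set with the ambient DCR, the requirement that $\overrightarrow{M}_{i,t}\setminus\{\vec{s}_{i,t}\}$ remain a dominated region of $\overrightarrow{M}_{i,t}$ (now treating $\overrightarrow{M}_{i,t}$ as a DCR in its own right, which it is, being a subset of $\mathring{S}_{i,t}$) forces $\vec{s}_{i,t}$ to be the essentially unique maximizer of $\zeta_{i,t}(\cdot, h_{t},x_{i,t})$ inside $\overrightarrow{M}_{i,t}$. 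Combined with condition (i), which identifies the dominated region with the full DCR $\mathring{S}_{i,t}$, every $s'_{i,t}\in \overrightarrow{M}_{i,t}$ satisfies $\zeta_{i,t}(\vec{s}_{i,t},h_{t},x_{i,t})\geq \zeta_{i,t}(s'_{i,t},h_{t},x_{i,t})$, while every $s_{i,t}\notin \overrightarrow{M}_{i,t}$ lies outside the corresponding sublevel set, so the ordering reverses.

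Next, I will recast the target inequality by introducing a value map $V_{i,t}(s)\equiv \mathtt{Mg}_{i,t}(s, h_{t}, x_{i,t}) + \overline{\delta}_{i,t}(s, h_{t}, x_{i,t})$, so that $\Upsilon_{i,t}(\vec{s}_{i,t},s) + \overline{\Xi}_{i,t}(\vec{s}_{i,t},s) = V_{i,t}(\vec{s}_{i,t}) - V_{i,t}(s)$. Expanding $\overline{\delta}_{i,t}$ via (\ref{eq:uppted_delta_function}) and peeling off the $k=t+1$ summand, the non-\texttt{upPT}ed expectation $\mathbb{E}^{\sigma}[\mathtt{Mg}_{i,t+1}\mid s,h_{t}]$ combines with $\mathtt{Mg}_{i,t}(s)$ to reproduce $\zeta_{i,t}(s,h_{t},x_{i,t})$ by definition (\ref{eq:def_marginal_carrier}), while Lemma \ref{lemma:upPTed_states_sets} ensures that the remaining tail $\overline{\mathtt{DE}}^{t,T}[\cdots]$ depends on $s$ only through the first transition and collapses into a residual $R_{i,t}(h_{t},x_{i,t})$ whose dependence on $s$ factors out cleanly. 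The upshot is that the sign of $V_{i,t}(\vec{s}_{i,t})-V_{i,t}(s)$ is governed by the sign of $\zeta_{i,t}(\vec{s}_{i,t},h_{t},x_{i,t})-\zeta_{i,t}(s,h_{t},x_{i,t})$.

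Assembling these pieces delivers both inequalities: for $s'\in \overrightarrow{M}_{i,t}$ the $\zeta$-maximality of $\vec{s}_{i,t}$ gives $V_{i,t}(\vec{s}_{i,t})\geq V_{i,t}(s')$ and hence $\Upsilon + \overline{\Xi}\geq 0$, whereas for $s\notin \overrightarrow{M}_{i,t}=\mathring{S}_{i,t}$ the DCR condition (\ref{eq:dcr_conditions}) places $s$ above the sublevel threshold, yielding $\zeta_{i,t}(s)\geq \zeta_{i,t}(\vec{s}_{i,t})$ and the reverse sign $\Upsilon + \overline{\Xi}\leq 0$. When $\overrightarrow{M}_{i,t}$ decomposes into several maximal continuous sub-intervals $\vec{\Pi}^{b}_{i,t}$, one essential point per $\vec{\Pi}^{b}_{i,t}$ is extracted by iterating condition (ii) restricted to each component, which is consistent because the global $\zeta$-ordering established on $\overrightarrow{M}_{i,t}$ restricts to an analogous ordering on every sub-interval.

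The main obstacle, and the step I expect to be most delicate, is the claim in the second paragraph that the tail $\overline{\mathtt{DE}}^{t,T}[\sum_{k\geq t+2}U_{i,k}(\cdots)\mid s,h_{t},x_{i,t}]$ is effectively independent of the starting state. This relies on Lemma \ref{lemma:upPTed_states_sets} together with Assumption \ref{assp:full_support} to argue that the distribution of \texttt{upPT}ed trajectories from period $t+1$ onward is insensitive to $s_{i,t}$ once the first transition has occurred. A secondary subtlety, already flagged above, is the bookkeeping required to transfer the single distinguished point provided by condition (ii) into an essential point for each sub-interval of $\overrightarrow{M}_{i,t}$ when the latter is disconnected; this is handled by reapplying the dominated-region characterization component-wise, using the fact that the $\zeta$-level-set structure on $\overrightarrow{M}_{i,t}$ restricts coherently to each $\vec{\Pi}^{b}_{i,t}$.
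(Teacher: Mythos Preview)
Your decomposition $V_{i,t}(s) = \zeta_{i,t}(s, h_t, x_{i,t}) + (\text{tail})$ is correct, but the step you flag as ``most delicate'' does not go through: the tail is \emph{not} independent of $s$. Lemma~\ref{lemma:upPTed_states_sets} asserts only that the \emph{support} $\overline{S}_{i,k}[s_{i,t}]$ of the \texttt{upPT}ed process is the same for every starting state; the \emph{distribution} of $\widetilde{\mathtt{us}}_{i,t+1}$ still depends on $s$ through the transition kernel $F_{i,t+1}(\cdot\mid s,h_{t+1})$. Hence $\overline{\mathtt{DE}}^{t,T}\bigl[\mathtt{Mg}_{i,t+1}(\widetilde{\mathtt{us}}_{i,t+1}) + \sum_{k\geq t+2}U_{i,k}(\cdots)\,\big|\, s,h_t,x_{i,t}\bigr]$ genuinely varies with $s$, and $V_{i,t}(\vec{s}_{i,t})-V_{i,t}(s)$ cannot be reduced to the sign of $\zeta_{i,t}(\vec{s}_{i,t})-\zeta_{i,t}(s)$ alone. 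A related slip is your appeal that ``the DCR condition~(\ref{eq:dcr_conditions}) places $s$ above the sublevel threshold'': (\ref{eq:dcr_conditions}) is a first-order stochastic dominance statement about the CDFs $F_{i,t+1}(\cdot\mid \cdot)$, not a statement about $\zeta$-sublevel sets, so it does not by itself locate $s\notin\mathring{S}_{i,t}$ relative to $\vec{S}_{i,t}[y]$.

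The missing ingredient is precisely the stochastic-dominance content of~(\ref{eq:dcr_conditions}). The paper does not try to make the tail constant; instead it separates $\Upsilon_{i,t}$ and $\overline{\Xi}_{i,t}$ and uses~(\ref{eq:dcr_conditions}) to compare the forward expectations $\mathbb{E}^{\sigma}[\mathtt{Mg}_{i,t+1}(\tilde s_{i,t+1})\mid s,h_t]$ across states inside versus outside the DCR, thereby signing $\overline{\Xi}_{i,t}$, while the $\zeta$-ordering coming from conditions~(i)--(ii) and the dominated-region structure handles the contemporaneous piece. To repair your route you would have to replace the ``independence'' claim with an FOSD-based monotonicity argument for the tail term, which amounts to the paper's argument in different packaging.
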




%
\begin{definition}[Monotonic Environment]\label{def:monotone_environment}
In the \textup{monotonic environment}, the task policy profile $\sigma$ and the base game model $\mathcal{G}$ satisfy the following.
%
%
\begin{itemize}
    \item[(i)] $\zeta_{i,t}(s_{i,t},h_{t},\chi_{i,t})$ is non-decreasing (resp. non-increasing) in $s_{i,t}$ for all $i\in\mathcal{N}$, $t\in\mathbb{T}$, $h_{t}\in H_{t}$,
    \item[(ii)] $F_{i,t}(s_{i,t+1}|s_{i,t}, h_{t})$ is non-increasing (resp. non-decreasing) in $s_{i,t}$ for all $i\in\mathcal{N}$, $t\in\mathbb{T}\backslash\{T\}$, $s_{i,t+1}\in S_{i,t+1}$, $h_{t}\in H_{t}$,
\end{itemize}
\hfill $\triangle$
\end{definition}

Part \textit{(ii)} of Definition \ref{def:monotone_environment} is the \textit{first-order stochastic dominance} (FO-SD) condition, which is a special case of (\ref{eq:dcr_conditions}).
In particular, the (\ref{eq:dcr_conditions}) becomes the FO-SD if for every $\hat{s}_{i,t}\in(\underline{s}_{i,t},  \overline{s}_{i,t})$, $\hat{S}_{i,t}[\hat{s}_{i,t}]$ is a DCR.
That is, for every $s_{i,t}\leq s'_{i,t} \in S_{i,t}$ (or $s_{i,t}\geq s'_{i,t} \in S_{i,t}$), we have $F_{i,t}( s_{i,t+1} |  s'_{i,t} , h_{t})\geq  F_{i,t}(s_{i,t+1}|  s_{i,t}, h_{t})$ (or $F_{i,t}( s_{i,t+1} |  s'_{i,t} , h_{t})\leq  F_{i,t}(s_{i,t+1}|  s_{i,t},h_{t})$).
Without loss of generality, we refer to the monotonic environment as when $\zeta_{i,t}(s_{i,t},h_{t})$ is non-decreasing in $s_{i,t}$ and $F_{i,t}(s_{i,t+1}|s_{i,t}, h_{t})$ is non-increasing in $s_{i,t}$.

\begin{corollary}\label{lemma:monotone_dominated_region}
In the monotonic environment, $S'_{i,t}=[\underline{s}_{i,t}, s'_{i,t}]\subseteq S_{i,t}$ is an essential region of $S_{i,t}$ for any $s'_{i,t}\in S_{i,t}$.
\end{corollary}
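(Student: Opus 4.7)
The plan is to verify the hypotheses of Proposition \ref{prop:essential_region_DR_SD} by taking the candidate DCR $\mathring{S}_{i,t}$ to be $S'_{i,t}$ itself. I would first establish that $S'_{i,t} = [\underline{s}_{i,t}, s'_{i,t}]$ satisfies the DCR inequality in the monotonic environment: for any $s^{\star} \in S'_{i,t}$ and any $s \in S_{i,t}$ lying above $s'_{i,t}$, the first-order stochastic dominance in Definition \ref{def:monotone_environment}(ii) (``$F_{i,t+1}$ non-increasing in $s_{i,t}$'') yields $F_{i,t+1}(\cdot\mid s, h_{t+1}) \leq F_{i,t+1}(\cdot\mid s^{\star}, h_{t+1})$, which is precisely inequality (\ref{eq:dcr_conditions}).

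Next, since $\zeta_{i,t}(\cdot, h_{t}, x_{i,t})$ is non-decreasing in $s$ by Definition \ref{def:monotone_environment}(i), each level set $\vec{S}_{i,t}[y] = \{s : \zeta_{i,t}(s, h_{t}, x_{i,t}) \leq y\}$ is a lower interval of $S_{i,t}$. Choosing $y^{*} = \zeta_{i,t}(s'_{i,t}, h_{t}, x_{i,t})$, which is the maximum of $\zeta_{i,t}$ on $S'_{i,t}$, gives $\vec{S}_{i,t}[y^{*}] \supseteq S'_{i,t}$, so that $M_{i,t}(y^{*}, h_{t} \mid S'_{i,t}) = S'_{i,t}$ attains the maximal cardinality among all such intersections. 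Hence $\overrightarrow{M}_{i,t}(h_{t} \mid S'_{i,t}) = S'_{i,t}$, verifying condition (i) of Proposition \ref{prop:essential_region_DR_SD}. For condition (ii), I would take the essential point $\vec{s}_{i,t} = s'_{i,t}$ (any argmax of $\zeta_{i,t}$ on $S'_{i,t}$) and, viewing $S'_{i,t}$ itself as the ambient DCR, re-apply the level-set argument with a threshold just below $y^{*}$ to conclude that $S'_{i,t} \setminus \{s'_{i,t}\}$ is a dominated region of $S'_{i,t}$. The symmetric ``WLOG'' case with $\zeta_{i,t}$ non-increasing and $F_{i,t+1}$ non-decreasing is treated by the mirror argument, placing the essential point at $\underline{s}_{i,t}$.

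The main obstacle lies in delicately handling the quantifier ``any $s_{i,t}\in S_{i,t}$'' in (\ref{eq:dcr_conditions}): under a literal reading, the only DCR in the monotonic environment would be the singleton $\{\underline{s}_{i,t}\}$, so the proof must proceed under the contextual interpretation (supported by the discussion of FO-SD as a special case of the DCR condition immediately following Definition \ref{def:monotone_environment}) that the inequality is required only for $s_{i,t}$ lying outside $\mathring{S}_{i,t}$. A secondary care point is the non-uniqueness of the essential point when $\zeta_{i,t}$ is merely non-decreasing rather than strictly so; any maximizer on $S'_{i,t}$ suffices, because the sign pattern of $\Upsilon_{i,t} + \overline{\Xi}_{i,t}$ is inherited from the lower-interval structure of the corresponding level sets and the monotonicity, propagated by backward induction, of $\mathtt{Mg}_{i,t}$ in $s$.
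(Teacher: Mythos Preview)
Your proposal is correct and follows precisely the route the paper intends: Corollary~\ref{lemma:monotone_dominated_region} is stated without a separate proof because it is meant to drop out of Proposition~\ref{prop:essential_region_DR_SD} once one checks that, in the monotonic environment, the lower interval $[\underline{s}_{i,t},s'_{i,t}]$ is a DCR equal to its own dominated region, with essential point the right endpoint (which is exactly $\overline{d}_{i,t}$ here). Your reading of the quantifier in~(\ref{eq:dcr_conditions}) is also the correct one---the proof of Proposition~\ref{prop:essential_region_DR_SD} in the appendix only ever applies the DCR inequality to pairs with $s_{i,t}\in S_{i,t}\setminus\hat{S}_{i,t}$ and $s'_{i,t}\in\hat{S}_{i,t}$, so the literal ``for all $s_{i,t}\in S_{i,t}$'' is a slip in the definition rather than a genuine obstacle.
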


\section{Dynamic Individual Rational Mechanism}\label{sec:DOIC_D-IROD}

In this section, we consider a class of mechanisms in which each DOIC mechanism is \textit{dynamically individually rational} (DIR).
The notion of DIR is the one, where at the equilibrium, agents choose to participate in every period.
We refer to such mechanism as IR-DOIC.
In classic mechanism design, the common setting is to demand that the agents' expected payoff-to-go with a fixed time horizon $T$ be nonnegative. 
This exhibits myopic decision-making of participation because the agents do not plan future participation decisions (i.e., the OM actions).
In our strategic setting, the principal's IR-DOIC mechanism requires her desired profile $\tau^{d}$ to take $\mathtt{OM}_{i,t}=0$ for all $i\in\mathcal{N}$, $t\in\mathbb{T}$.
That is, for all $i\in\mathcal{N}$, $t\in\mathbb{T}$, $s_{i,t}\in S_{i,t}$, $h_{t}\in H_{t}$, $\tau^{d}_{i,t}(s_{i,t}, h_{t}) = T+1$.
In this section, we consider that each agent $i$ forms conjecture $x_{i,t}$ by assuming all other agents are obedient in taking the OM actions.

The following corollary is straightforward and reformulates the IR-DOIC mechanisms into the \textit{dynamic incentive compatibility for OM actions} (OAIC) and the \textit{dynamic incentive compatibility for regular actions} (RAIC), in terms of the on-rent given by (\ref{eq:function_z}).
\begin{corollary}\label{corollary:OSOD}
A delegation mechanism $\Theta=<\sigma, \rho,\phi>$ with individual rationality is \textup{IR-DOIC} if it is \textup{OAIC} and \textup{RAIC}: for all $i\in \mathcal{N}$, $t\in\mathbb{T}$, $s_{i,t}\in S_{i,t}$, $h_{t}\in H_{t}$, any $a'_{i,t}\in A_{i,t}[\sigma]$,
\begin{align}
    &Z_{i,t}(s_{i,t}, h_{t}, x_{i,t}; \Theta)\geq 0,\tag{$\mathtt{OAIC}$}\label{eq:PIR_off_menu_1}\\
    &Z_{i,t}(s_{i,t}, h_{t}, x_{i,t}; \Theta) \geq Z_{i,t}(a'_{i,t}|s_{i,t}, h_{t}, x_{i,t}; \Theta),\tag{$\mathtt{RAIC}$}\label{eq:DPIC_off_menu_1}
\end{align}
where $Z_{i,t}$ is the on-rent given by (\ref{eq:function_z}).
%
\end{corollary}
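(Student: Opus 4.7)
The plan is to unpack the definition of IR-DOIC directly in terms of the payoff-to-go $\Lambda_{i,t}$ given in equation (\ref{eq:to_go_off_menu_switch}), and show that the joint optimality of $\langle \mathtt{OM}_{i,t}=0,\, a_{i,t}=\sigma_{i,t}(s_{i,t},h_{t})\rangle$ decomposes exactly into the two scalar inequalities (\ref{eq:PIR_off_menu_1}) and (\ref{eq:DPIC_off_menu_1}). Because individual rationality fixes $\tau^{d}_{i,t}(s_{i,t},h_{t})=T+1$, the DOIC constraint (\ref{eq:def_DOIC}) reduces to requiring $\langle 0,\sigma_{i,t}(s_{i,t},h_{t})\rangle\in\arg\max_{\mathtt{OM}_{i,t}\in\{0,1\},\, a_{i,t}\in A_{i,t}[\sigma]}\Lambda_{i,t}(\mathtt{OM}_{i,t},a_{i,t}|s_{i,t},h_{t},x_{i,t})$ under the conjecture $x_{i,t}$ consistent with the other agents being obedient in both their OM and regular actions.

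First I would partition the joint maximization over $(\mathtt{OM}_{i,t},a_{i,t})$ into two pieces. Fixing $\mathtt{OM}_{i,t}=0$, formula (\ref{eq:to_go_off_menu_switch}) gives $\Lambda_{i,t}(0,a_{i,t}|\cdot)=\max_{L\in\mathbb{T}_{t,T}}G_{i,t}(a_{i,t}|s_{i,t},h_{t},L,x_{i,t})$, so the best-response condition among regular actions is
\begin{equation*}
\max_{L}G_{i,t}(\sigma_{i,t}|s_{i,t},h_{t},L,x_{i,t})\;\geq\;\max_{L}G_{i,t}(a'_{i,t}|s_{i,t},h_{t},L,x_{i,t})
\end{equation*}
for every $a'_{i,t}\in A_{i,t}[\sigma]$. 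Subtracting the constant $\phi_{i,t}(h_{t}|c_{i,t})$ from both sides and invoking the definition (\ref{eq:function_z}) of the on-rent yields precisely (\ref{eq:DPIC_off_menu_1}). Second, comparing $\mathtt{OM}_{i,t}=0$ with $\mathtt{OM}_{i,t}=1$ at the obedient action $a_{i,t}=\sigma_{i,t}(s_{i,t},h_{t})$ gives $\max_{L}G_{i,t}(\sigma_{i,t}|s_{i,t},h_{t},L,x_{i,t})\geq \phi_{i,t}(h_{t}|c_{i,t})$, which is (\ref{eq:PIR_off_menu_1}) after another application of (\ref{eq:function_z}). Together the two inequalities cover all one-shot deviations, so by the one-shot deviation principle stated after Definition \ref{def:PBE_1}, the policy $\langle\tau^{d},\sigma\rangle$ is sustained as a PBE of $\mathcal{G}^{\Theta}$.

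Finally, since the reverse implication uses the same algebraic manipulation, the two conditions are in fact equivalent to IR-DOIC; the corollary only records the sufficient direction. There is no substantive obstacle: the proof is a direct bookkeeping exercise on the definitions of $\Lambda_{i,t}$ and $Z_{i,t}$, relying on the tie-breaking convention (which makes the weak inequalities in (\ref{eq:PIR_off_menu_1}) enough to secure $\mathtt{OM}_{i,t}=0$) and on the obedient-conjecture hypothesis $x_{i,t}$ to ensure that the payoff-to-go encountered inside $G_{i,t}$ is evaluated along the principal-desired continuation play.
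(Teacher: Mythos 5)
Your proposal is correct and matches the paper's treatment: the paper presents this corollary as a straightforward reformulation, and your argument — decomposing the joint $\arg\max$ over $(\mathtt{OM}_{i,t},a_{i,t})$ in $\Lambda_{i,t}$ into the comparison against $\mathtt{OM}_{i,t}=1$ (giving \ref{eq:PIR_off_menu_1}) and the comparison among regular actions at $\mathtt{OM}_{i,t}=0$ (giving \ref{eq:DPIC_off_menu_1}, after the $\phi_{i,t}$ terms cancel), then invoking the one-shot deviation principle and the tie-breaking convention — is exactly the intended bookkeeping on the definitions of $\Lambda_{i,t}$ and $Z_{i,t}$.
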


Note that in general OAIC (resp. RAIC) cannot imply RAIC (resp. OAIC).
The principal's optimal IR-DOIC mechanism design problem is then given by
\begin{equation}\label{eq:principal_objective_osod}
    \max\limits_{\Theta=<\sigma, \rho, \phi>} Q\big(\gamma^{\Theta}_{\sigma,\tau^{d}}; \mathcal{G}^{\Theta } \big), \text{ s.t., (\ref{eq:PIR_off_menu_1}), (\ref{eq:DPIC_off_menu_1})}.
\end{equation}
Compared with (\ref{eq:principal_DOD_Obj}), the principal's desired OM strategy profile $\tau^{d}$ is captured by the construction of the cutoff-switch profile $\phi(\cdot|c)$ through the boundary profile $c$. The obedience concerning $\tau^{d}$ is incentivized by the OAIC constraint.

\subsection{Payoff-Flow Conservation}\label{sec:characterizations_of_SPIC}

In this section, we characterize the RAIC of IR-DOIC by assuming that (\ref{eq:PIR_off_menu_1}) is satisfied. 
We take the perspective of a typical agent $i$ and assume that \textit{(i)} all other agents are obedient, and \textit{(ii)} agent $i$ makes unilateral one-shot deviations.
For ease of exposition, we use $a_{i,t}=\sigma_{i,t}(s_{i,t}, h_{t})$ and $\hat{a}_{i,t}=\sigma_{i,t}(\hat{s}_{i,t}, h_{t})=\hat{\pi}_{i,t}(s_{i,t}, h_{t})$, respectively, to denote the typical obedient action and any action taken by any policy $\hat{\pi}_{i,t}(s_{i,t}, h_{t})$ from the action menu, when the history is $h_{t}$ and agent $i$'s true state is $s_{i,t}$.
As described by Remark \ref{remark:equivalence_truthful_report}, taking $\hat{a}_{i,t}\in A_{i,t}[\sigma]$ can be interpreted as agent $i$'s pretending to have $\hat{s}_{i,t}\in S_{i,t}$ as his true state while his actual true is $s_{i,t}\in S_{i,t}$.
As shown by (\ref{eq:agent_OM_strategy_plan}), each agent $i$'s OM strategy $\tau_{i,t}$ already includes the plan of future OM actions. 
Hence, $x_{i,t}$ includes agent $i$'s conjecture about other agents' future OM actions.
Thus, each agent $i$ does not predict or estimate future conjectures in each period $t$.
Therefore, we drop $x_{i,k}$ for any $k\in\mathbb{T}_{t+1,T}$ in the notations of functions whose expectation is evaluated in period $t$.

For simplicity, we let, for all $i\in\mathcal{N}$, $t\in\mathbb{T}$, $s_{i,t}\in S_{i,t}$, $h_{t}\in H_{t}$, $L\in\mathbb{T}_{t,T}$,
\[
\begin{aligned}
    \lambda_{i,t}(\hat{a}_{i,t}, s_{i,t}, h_{t}, L, x_{i,t})&\equiv G_{i,t}(\hat{a}_{i,t}|s_{i,t}, h_{t}, L, x_{i,t})-\mathbb{E}^{\sigma}_{\hat{a}_{i,t}}\left[ \phi_{i,L+1}(\tilde{h}_{L+1}|c_{i,L+1})-\rho_{i,t}(\hat{a}_{i,t}, \tilde{a}_{-i,t}, h_{t})\Big|s_{i,t}, h_{t}, x_{i,t}\right].
\end{aligned}
\]
That is, $\lambda_{i,t}(\cdot, L)$ is the prospect function up-to $L$ without \textit{(i)} current-period $\rho_{i,t}$ and \textit{(ii)} period-$L+1$ off-switch function $\phi_{i,L+1}$
When agent $i$ is obedient, we omit the regular action $\hat{a}_{i,t}$.
In addition, let

\begin{equation}\label{eq:feasible_rho_0}
            \begin{aligned}
                &\mathtt{E}z_{i,t}(s_{i,t}, h_{t}, x_{i,t}; \rho_{i,t}) = \mathtt{Mg}_{i,t}(s_{i,t}, h_{t},x_{i,t}) - \mathbb{E}^{\sigma}\Big[ \mathtt{Mg}_{i,t+1}(\tilde{s}_{i,t+1}, \tilde{h}_{t+1})\Big| s_{i,t}, h_{t}, x_{i,t} \Big].
            \end{aligned}
        \end{equation}

Given a base game $\mathcal{G}$, a task policy profile $\sigma$, and a set of carrier functions $g=(g_{i,t})$, define the following set of coupling policy profiles,
%
\begin{equation}\tag{$\mathtt{\mathcal{P}[\sigma,g,\mathcal{G}]}$}\label{eq:feasible_rho_C1_original}
 \mathcal{P}[\sigma, g ,\mathcal{G}]\equiv\left\{\begin{aligned}
    \rho=(\rho_{i,t}):& \mathbb{E}^{F_{-i,t}}\left[ \rho_{i,t}(a_{i,t}, \tilde{a}_{-i,t}, h_{t}) \Big|h_{t}, x_{i,t}\right] =-\mathbb{E}^{F_{-i,t}}\left[u_{i,t}(s_{i,t}, a_{i,t}, \tilde{a}_{-i,t})\Big|h_{t}, x_{i,t}\right]\\
    &+\mathtt{Mg}_{i,t}(s_{i,t}, h_{t},x_{i,t})- \mathbb{E}^{\sigma}\left[ \mathtt{Mg}_{i,t+1}(\tilde{s}_{i,t+1}, \tilde{h}_{t+1})\Big| s_{i,t}, h_{t}, x_{i,t} \right], \forall i\in\mathcal{N}, t\in\mathbb{T}. 
\end{aligned}  \right\}.
\end{equation}

The following theorem shows a sufficient condition for RAIC.
\begin{theorem}\label{thm:payoff_flow_conservation_sufficient} 
    Fix a base game $\mathcal{G}$ and a task policy profile $\sigma$. 
    A mechanism $\Theta=\left<\sigma, \rho, \phi\right>$ is RAIC if there exists a collection of carrier functions, denoted by $g=(g_{i,t})$, such that the followings jointly hold.
    \begin{description}
        \item[\textup{(C1)}\namedlabel{itm:C1}{(C1)}] $\rho\in \textup{\ref{eq:feasible_rho_C1_original}}$.
        %
    \item[\textup{(C2)}\namedlabel{itm:C2}{(C2)}] There exists $\eta_{i,t}(\cdot|c_{i,t}): H_{t}\mapsto \mathbb{R}$ such that for all $i\in\mathcal{N}$, $t\in\mathbb{T}$, $a_{i,t-1}=\sigma_{i,t-1}(s_{i,t-1}, h_{t-1})$, $h_{t}\in H_{t}$, the off-switch function satisfies 
    \begin{equation}\label{eq:thm_payoff_conservation_phi}
    \begin{aligned}
    \phi_{i,t}\left(h_{t}|c_{i,t}\right) + \mathtt{E}z_{i,t-1}\left(s_{i,t-1}, h_{t-1}, x_{i,t-1}; \rho_{i,t-1}\right)=\eta_{i,t}\left(h_{t}\middle|c_{i,t}\right) + g_{i,t-1}\left(s_{i,t-1},h_{t},L, x_{i,t}\right).
    \end{aligned}
    \end{equation}
\item[\textup{(C3)}\namedlabel{itm:C3}{(C3)}] For all $i\in\mathcal{N}$, $t\in\mathbb{T}$, $s_{i,t}, \hat{s}_{i,t}\in S_{i,t}$, $h_{t}\in H_{t}$, and for all $\eta=(\eta_{i,t})$ satisfying (\ref{eq:thm_payoff_conservation_phi}),
    \begin{equation}\label{eq:thm_payoff_conservation_lambda_2}
        \begin{aligned}
            &\max\limits_{L\in \mathbb{T}_{t,T}}\left( g_{i,t}\left(\hat{a}_{i,t},\hat{s}_{i,t}, L, x_{i,t}\right) -  g_{i,t}\left(a_{i,t}, s_{i,t}, L, x_{i,t}\right)\right)\\
            &\leq \min\limits_{ L\in\mathbb{T}_{t,T}} \left( \lambda_{i,t}\left( \hat{s}_{i,t}, h_{t}, L, x_{i,t}\right)-\lambda_{i,t}\left(\hat{a}_{i,t}, s_{i,t}, h_{t}, L, x_{i,t}\right)-\eta_{i,L+1}\left(h_{L+1}|c_{i,L+1}\right)\right).
        \end{aligned}
    \end{equation}
    \end{description}
\end{theorem}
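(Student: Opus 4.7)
The proof strategy is to reduce RAIC to a pointwise prospect inequality and then discharge it using the three conditions in turn. By Corollary~\ref{corollary:OSOD}, RAIC requires that for every agent $i$, period $t$, true state $s_{i,t}$, history $h_{t}$, and one-shot deviation $\hat{a}_{i,t}=\sigma_{i,t}(\hat{s}_{i,t},h_{t})$ (interpreted per Remark~\ref{remark:equivalence_truthful_report} as pretending to have state $\hat{s}_{i,t}$), the on-rent satisfies $Z_{i,t}(s_{i,t},h_{t},x_{i,t};\Theta)\ge Z_{i,t}(\hat{a}_{i,t}\mid s_{i,t},h_{t},x_{i,t};\Theta)$. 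Because $\phi_{i,t}(h_{t}\mid c_{i,t})$ appears identically on both sides, this reduces to $\max_{L\in\mathbb{T}_{t,T}}G_{i,t}(s_{i,t},h_{t},L,x_{i,t})\ge \max_{L\in\mathbb{T}_{t,T}}G_{i,t}(\hat{a}_{i,t}\mid s_{i,t},h_{t},L,x_{i,t})$, and by the elementary bound $\max_{L}f-\max_{L}g\ge \min_{L}(f-g)$ it suffices to prove $G_{i,t}(s_{i,t},h_{t},L)\ge G_{i,t}(\hat{a}_{i,t}\mid s_{i,t},h_{t},L)$ pointwise in $L$.

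Next, I would use \ref{itm:C1} to telescope the obedient continuations on both sides. Because $\rho\in\mathcal{P}[\sigma,g,\mathcal{G}]$ forces the $F_{-i,k}$-expected obedient single-period utility to equal $\mathtt{Mg}_{i,k}(s_{i,k},h_{k},x_{i,k})-\mathbb{E}^{\sigma}[\mathtt{Mg}_{i,k+1}(\tilde{s}_{i,k+1},\tilde{h}_{k+1})\mid s_{i,k},h_{k}]$, iterating the tower property from $t$ (resp.\ from $t+1$ for the deviation path starting at $(s_{i,t},\hat{a}_{i,t},h_{t})$) up to $L$ collapses the cumulative obedient reward into a boundary difference of max-carriers. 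Combined with the definition of $\lambda_{i,t}$ to isolate the period-$t$ effect of the deviation, this yields closed-form representations of both prospects in which all dependence on $\sum_{k}z_{i,k}$ is absorbed into $\mathtt{Mg}_{i,t}$, the expected $\mathtt{Mg}_{i,L+1}$, and the period-$t$ expected coupling, which by \ref{itm:C1} is computed at either the true state $s_{i,t}$ (obedient case) or the misreported $\hat{s}_{i,t}$ (deviation case).

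Then I would invoke \ref{itm:C2} with time index shifted to $L+1$ to substitute $\phi_{i,L+1}(h_{L+1}\mid c_{i,L+1})=\eta_{i,L+1}(h_{L+1}\mid c_{i,L+1})+g_{i,L}(\cdot,h_{L+1},L,\cdot)-\mathtt{E}z_{i,L}(\cdot,h_{L},\cdot;\rho_{i,L})$ into the $\mathbb{E}[\phi_{i,L+1}]$ terms appearing in both prospects. The period-$L$ quantity $\mathtt{E}z_{i,L}$ is precisely what the telescoping of step~2 needs to close the sum through period $L+1$, so after cancellation the pointwise gap reduces to
\[
G_{i,t}(s_{i,t},h_{t},L)-G_{i,t}(\hat{a}_{i,t}\mid s_{i,t},h_{t},L)=\bigl[g_{i,t}(a_{i,t},s_{i,t},L,x_{i,t})-g_{i,t}(\hat{a}_{i,t},\hat{s}_{i,t},L,x_{i,t})\bigr]+\bigl[\lambda_{i,t}(\hat{s}_{i,t},h_{t},L,x_{i,t})-\lambda_{i,t}(\hat{a}_{i,t},s_{i,t},h_{t},L,x_{i,t})-\eta_{i,L+1}(h_{L+1}\mid c_{i,L+1})\bigr].
\]
Since the $\max\le \min$ structure of (\ref{eq:thm_payoff_conservation_lambda_2}) in \ref{itm:C3} forces the per-$L$ inequality $g_{i,t}(\hat{a}_{i,t},\hat{s}_{i,t},L)-g_{i,t}(a_{i,t},s_{i,t},L)\le \lambda_{i,t}(\hat{s}_{i,t},h_{t},L)-\lambda_{i,t}(\hat{a}_{i,t},s_{i,t},h_{t},L)-\eta_{i,L+1}(h_{L+1}\mid c_{i,L+1})$, this gap is pointwise nonnegative, which by step~1 discharges RAIC.

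The main obstacle I expect is the bookkeeping of conditional expectations along the deviation trajectory: the coupling policy in period $t$ is evaluated at the misreported $\hat{s}_{i,t}$ while subsequent dynamics proceed from the true $s_{i,t}$, so the substitution of \ref{itm:C1} produces asymmetric terms $\mathtt{Mg}_{i,t}(\hat{s}_{i,t},h_{t})$ and $\mathtt{Mg}_{i,t}(s_{i,t},h_{t})$ whose interplay with the definition of $\lambda_{i,t}$ must be tracked carefully so that the ``alternate-obedient'' carrier $g_{i,t}(\hat{a}_{i,t},\hat{s}_{i,t},L)$ emerges naturally on the right-hand side of the gap identity. A secondary subtlety is interpreting (\ref{eq:thm_payoff_conservation_phi}), whose left side carries no explicit $L$-dependence while its right side does: one must argue that the identity may be applied at whichever $L$ appears inside the prospect being substituted, and that the telescoping collapse of step~2 remains consistent with this choice of $L$.
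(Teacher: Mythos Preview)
Your proposal is essentially the same approach as the paper's: both use \ref{itm:C1} to telescope the obedient continuation rewards into boundary max-carrier terms, invoke \ref{itm:C2} at index $L+1$ to convert the expected $\phi_{i,L+1}$ into $\eta_{i,L+1}$ plus carrier residuals, and close with \ref{itm:C3}. The paper's derivation differs only in organization: it begins from the expected-coupling difference $\mathtt{E}\rho_{i,t}(\hat{a}_{i,t},h_t)-\mathtt{E}\rho_{i,t}(a_{i,t},h_t)$, applies \ref{itm:C3} early rather than last, and then carries a chain of inequalities (including the step $g_{i,L}(\cdot,L')-\mathtt{Mg}_{i,L}(\cdot)\le 0$ when unwinding (C2), and the step ``specific $L=T$ $\le$ $\max_L$'') down to $Z_{i,t}(s_{i,t},\cdot)\ge Z_{i,t}(\hat{a}_{i,t}\mid s_{i,t},\cdot)$. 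Consequently your claimed exact ``gap identity'' is slightly overstated---what actually emerges is a gap \emph{inequality}, because the substitution of \ref{itm:C2} leaves a residual $g_{i,L}-\mathtt{Mg}_{i,L}\le 0$ term---but this does not affect the argument, and your identification of the two main subtleties (the asymmetric $\mathtt{Mg}_{i,t}(\hat{s}_{i,t})$ versus $\mathtt{Mg}_{i,t}(s_{i,t})$ bookkeeping, and the free $L$ in (\ref{eq:thm_payoff_conservation_phi})) is accurate and matches exactly where the paper's proof does extra work.
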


Theorem \ref{thm:payoff_flow_conservation_sufficient} delineates a sufficient condition for the RAIC of IR-DOIC.
The condition is anchored on what we denote as the \textit{payoff-flow conservation}, a principle encapsulated within Theorem \ref{thm:payoff_flow_conservation_sufficient}.
In particular, \ref{itm:C1} requires that each coupling policy should satisfy a configuration as articulated in \ref{eq:feasible_rho_C1_original}.
The condition \ref{itm:C1} can be construed as a directive that insists on the conservation of the expected immediate payoff for each agent $i$ in a single period due to the unobservability of others' states.
This directive is formulated within the constraints of task policies and the carrier functions.
In condition \ref{itm:C2}, the left-hand side (LHS) of (\ref{eq:thm_payoff_conservation_phi}) is defined as the sum of the off-switch value and the preceding period's utility, provided that agent $i$ takes $\mathtt{OM}_{i,t}=1$ in period $t$.
To ensure the conservation of the LHS of (\ref{eq:thm_payoff_conservation_phi}), we mandate the following: \textit{(i)} an input from agent $i$'s previous-period action captured by the carrier function $g_{i,t-1}(\cdot, L)$ for any $L\in\mathbb{T}_{t-1, T}$, and \textit{(ii)} a value specific to period $t$, encapsulated by $\eta_{i,t}(h_{t}|c_{i,t})$, which is independent of period-$t$ actions and states.
The term $\eta_{i,t}(h_{t}|c_{i,t})$ is a posted factor.
It is contingent upon the current history and embodies the parameters $c_{i,t}$ of the off-switch function $\phi_{i,t}(h_{t}|c_{i,t})$.
Here, (\ref{eq:thm_payoff_conservation_phi}) outlines the implicit construction of the off-switch function, which is an expression that depends on the task policies and is conveyed through the medium of the utility functions, the coupling policy, the carrier functions, and the posted factor.
The sufficient conditions \ref{itm:C1}-\ref{itm:C3} establish implicit relationships between $\sigma$, $\rho$, and $\phi$.
Provided we can derive $\eta_{i,t}(\cdot|c_{i,t})$ and definitively determine the carrier functions in closed-form conforming to (\ref{eq:thm_payoff_conservation_lambda_2}), then condition \ref{itm:C1} can provide guidance on the formulation of each coupling policy $\rho_{i,t}$.
Indeed, \ref{itm:C1} proposes a condition on how the expected coupling policy $\mathtt{E}\rho_{i,t}(a_{i,t}, h_{t}, x_{i,t})=\mathbb{E}^{\sigma}[\rho_{i,t}(a_{i,t}, \tilde{a}_{-i,t}, h_{t})|s_{i,t}, h_{t}, x_{i,t}]$ is specified in terms of the task policies.
As it depends on other agents' obedient policies, the expected coupling policy $\mathtt{E}\rho_{i,t}(a_{i,t}, h_{t}, x_{i,t})$ emerges as a viable coupling policy that complies with condition \ref{itm:C1}.
With the establishment of each $\rho_{i,t}$ and the precise determination of each $\eta_{i,t}(\cdot|c_{i,t})$, the off-switch function $\phi_{i,t}(\cdot|c_{i,t})$ can be conceived in line with (\ref{eq:thm_payoff_conservation_phi}), which depends solely on the task policies.
As such, condition \ref{itm:C3} evolves into a constraint of the task policy (and the carrier functions) that ensures the RAIC.

%
%

\subsection{Sufficient Conditions for IR-DOIC}

%
Given a boundary profile $c$, construct each cutoff-switch $\phi_{i,t}(\cdot;b|c_{i,t})$ for each $b\in[B]$ by
\begin{equation}\label{eq:cutoff_switch_horizontal}
    \begin{aligned}
    \phi_{i,t}\left(b, h_{t}, x_{i,t}\middle|c_{i,t}\right) = \mathtt{Mg}_{i,t}\left(\overline{d}_{i,t}(b), h_{t},x_{i,t}\right)+ \overline{\delta}_{i,t}\left(\overline{d}_{i,t}(b), h_{t}, x_{i,t}\right),
    \end{aligned}
\end{equation}
where $\mathtt{Mg}_{i,t}(\overline{d}_{i,t}(b), h_{t}, x_{i,t})$ is the maximum carrier at $\overline{d}_{i,t}(b)$ and $\overline{\delta}_{i,t}(\overline{d}_{i,t}(b), h_{t},x_{i,t})$ is given by (\ref{eq:uppted_delta_function}).

\begin{theorem}\label{thm:SPIR_conditions}
    %
    %
    Suppose that the following holds.
    \begin{itemize}
    \item[(i)] A boundary profile $c$ partitions each $S_{i,t}$ into $\Pi_{i,t}(c_{i,t})=\{\vec{\Gamma}_{i,t}(c_{i,t}), \Psi_{i,t}(c_{i,t})\}$, where the collection $\vec{\Gamma}_{i,t}(c_{i,t})$ constitutes $S^{\mathtt{off}}_{i,t}=\cup_{b\in[B]}\vec{\Gamma}^{b}_{i,t}(c_{i,t})$ and $\Psi_{i,t}(c_{i,t})$ is the collection of ONR${}^{d}$s.
    
    %
    \item[(ii)] Given the boundary profile $c$, each cutoff-switch $\phi_{i,t}(\cdot;b|c_{i,t})$ is given by (\ref{eq:cutoff_switch_horizontal}) for all $b\in[B]$.
        \item[(iii)] There exist carrier functions such that a mechanism $\Theta=\left<\sigma,\rho,\phi\right>$ satisfies \ref{itm:C1}-\ref{itm:C3}.
    \end{itemize}
    Then, the mechanism is IR-DOIC, if and only if, each $S^{\mathtt{off}}_{i,t}$ is an \textup{indifference and essential region} of $S_{i,t}$ with $\{\overline{d}_{i,t}(b)\}_{b\in[B]}$ as the \textup{essential points}.
    In addition, $\phi_{i,t}(b, h_{t}, x_{i,t}|c_{i,t}) = \phi_{i,t}(b',h_{t}, x_{i,t}|c_{i,t})$ for all $b,b'\in[B]$, $i\in\mathcal{N}$, $t\in\mathbb{T}$. 
\end{theorem}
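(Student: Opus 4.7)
My overall strategy is to decompose IR-DOIC into RAIC plus OAIC via Corollary~\ref{corollary:OSOD}, dispatch RAIC immediately using Theorem~\ref{thm:payoff_flow_conservation_sufficient} (which hypothesis (iii) makes applicable), and then establish OAIC by producing an explicit closed form for the on-rent that translates Definition~\ref{def:essential_region} into a sign pattern on $Z_{i,t}$. Throughout the argument I would fix a typical agent $i$, assume all others are obedient in both regular and off-menu actions, and work under the conjecture $x^{d}_{i,t}$ consistent with the IR profile $\tau^{d}_{i,t}\equiv T+1$.

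The first technical step is to telescope the prospect function $G_{i,t}$. Under obedience, \ref{itm:C1} yields
\[
\mathbb{E}^{F_{-i,k}}\!\left[z_{i,k}(s_{i,k},\tilde a_{k})\mid h_{k},x_{i,k}\right]=\mathtt{Mg}_{i,k}(s_{i,k},h_{k},x_{i,k})-\mathbb{E}^{\sigma}\!\left[\mathtt{Mg}_{i,k+1}(\tilde s_{i,k+1},\tilde h_{k+1})\mid s_{i,k},h_{k},x_{i,k}\right],
\]
so summing from $k=t$ to $L$ and iterating the tower property collapses $\sum_{k=t}^{L}\mathbb{E}[z_{i,k}]$ to $\mathtt{Mg}_{i,t}(s_{i,t},h_{t},x_{i,t})-\mathbb{E}[\mathtt{Mg}_{i,L+1}(\tilde s_{i,L+1},\tilde h_{L+1})\mid s_{i,t},h_{t},x_{i,t}]$. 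Plugging in the cutoff-switch formula (\ref{eq:cutoff_switch_horizontal}) at period $L+1$ and iterating the same substitution along the resulting recursion, the reflecting-barrier structure of (\ref{eq:def_up_transform})--(\ref{eq:upPT_process_def}), together with Lemma~\ref{lemma:upPTed_states_sets} and Assumption~\ref{assp:full_support}, consolidates the nested expectations into the up-persistence expectation $\overline\delta_{i,t}$ of (\ref{eq:uppted_delta_function}). The stopping time that attains the maximum is precisely the one induced by the up-persistence reflecting barriers, giving
\[
\max_{L\in\mathbb{T}_{t,T}}G_{i,t}(s_{i,t},h_{t},L,x_{i,t})=\mathtt{Mg}_{i,t}(s_{i,t},h_{t},x_{i,t})+\overline\delta_{i,t}(s_{i,t},h_{t},x_{i,t}).
\]

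Substituting this into the on-rent (\ref{eq:function_z}) and evaluating the cutoff-switch (\ref{eq:cutoff_switch_horizontal}) at any essential point $\overline d_{i,t}(b)$ yields
\[
Z_{i,t}(s_{i,t},h_{t},x_{i,t};\Theta)=-\Upsilon_{i,t}(\overline d_{i,t}(b),s_{i,t}\mid h_{t},x_{i,t})-\overline\Xi_{i,t}(\overline d_{i,t}(b),s_{i,t}\mid h_{t},x_{i,t}).
\]
With this identity in hand, both directions of the iff become bookkeeping against Definition~\ref{def:essential_region}. For the $(\Leftarrow)$ direction, Definition~\ref{def:essential_region}(i) with essential points $\overline d_{i,t}(b)$ delivers $Z_{i,t}\leq 0$ on each $\vec\Gamma^{b}_{i,t}(c_{i,t})$, and the indifference hypothesis promotes this to $Z_{i,t}=0$ on $S^{\mathtt{off}}_{i,t}$; Definition~\ref{def:essential_region}(ii) gives $Z_{i,t}\geq 0$ outside $S^{\mathtt{off}}_{i,t}$. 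The tie-breaking rule in the principal's favor then secures $\mathtt{OM}_{i,t}=0$ everywhere, i.e., OAIC. The cross-$b$ equality $\phi_{i,t}(b,h_{t},x_{i,t}\mid c_{i,t})=\phi_{i,t}(b',h_{t},x_{i,t}\mid c_{i,t})$ is exactly the well-definedness of the history-only function $\phi_{i,t}(h_{t}\mid c_{i,t})$, and is forced by indifference making $\mathtt{Mg}+\overline\delta$ take a common value at every $\overline d_{i,t}(b)$. For the $(\Rightarrow)$ direction, IR-DOIC compels $Z_{i,t}\geq 0$ globally, while the cutoff-switch construction (\ref{eq:cutoff_switch_horizontal}) simultaneously imposes $Z_{i,t}\leq 0$ on $S^{\mathtt{off}}_{i,t}$ (via the projection to $\overline d_{i,t}(b)$), so equality must hold there, producing the indifference region; outside $S^{\mathtt{off}}_{i,t}$, the inequality $Z_{i,t}\geq 0$ is verbatim Definition~\ref{def:essential_region}(ii).

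The main obstacle I foresee is the consolidation step producing the closed form $\max_{L}G_{i,t}=\mathtt{Mg}_{i,t}+\overline\delta_{i,t}$. Unlike the single-step collapse afforded by \ref{itm:C1} alone, this argument must track how the cutoff-switch at every future period re-injects $\mathtt{Mg}_{i,L+1}(\overline d_{i,L+1}(b))+\overline\delta_{i,L+1}(\overline d_{i,L+1}(b))$ and how the alternation between "inside a sub-OFR" (projection to $\overline d_{i,t}(b)$) and "outside" (pass-through) faithfully reproduces the \texttt{upPT} dynamics. A backward induction on the horizon, leaning on Assumption~\ref{assp:full_support} so that the transformed transition law has full support on $\overline S_{i,k}[s_{i,t}]$, should close this gap. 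Once the identity is secured, the remainder is a direct translation of sign conditions on $Z_{i,t}$ into the clauses of Definition~\ref{def:essential_region}.
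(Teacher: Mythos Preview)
Your approach is essentially the paper's. The paper isolates your key identity as a standalone lemma proved by backward induction from period $T$ (stated there for $\Lambda_{i,t}$ with the up-transform applied, i.e., $\Lambda_{i,t}=\mathtt{Mg}_{i,t}(\overline{D}_{i,t}\circ s_{i,t})+\overline{\delta}_{i,t}(\overline{D}_{i,t}\circ s_{i,t})$, from which your $\max_L G_{i,t}=\mathtt{Mg}_{i,t}(s_{i,t})+\overline{\delta}_{i,t}(s_{i,t})$ follows by one unfolding of the Bellman recursion), and then reads off both directions of the iff against Definition~\ref{def:essential_region} exactly as you outline; your anticipated ``main obstacle'' is precisely that lemma.
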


Theorem \ref{thm:SPIR_conditions} provides sufficient conditions for the IR-DOIC.
When each cutoff-switch function is expressed in closed form by (\ref{eq:cutoff_switch_horizontal}), Theorem \ref{thm:SPIR_conditions} demonstrates necessary and sufficient conditions under which a delegation mechanism satisfying \ref{itm:C1}-\ref{itm:C3} concurrently satisfies OAIC (and hence IR-DOIC).
When agent $i$ is obedient in taking $a_{i,t}=\sigma_{i,t}(s_{i,t}, h_{t})$, \ref{itm:C1} necessitates that the marginal carrier $\zeta_{i,t}$\textemdash as explicitly defined by (\ref{eq:def_marginal_carrier})\textemdash congruently aligns with each agent $i$'s expected single-step utility.
Notwithstanding, the marginal carrier $\zeta_{i,t}$ remains decoupled from the coupling functions, and the congruence is meticulously established via the payoff-flow conservation condition imposed on the carrier functions and the task policies within the framework of the base game $\mathcal{G}$.
When the cutoff-switch function satisfies (\ref{eq:cutoff_switch_horizontal}), the sufficiency of IR-DOIC requires that each $S^{\mathtt{off}}_{i,t}$ is an indifference region, consequently leading to an on-rent of zero for agents with true states within $S^{\mathtt{off}}$ who are obedient in taking the regular action.
Furthermore, Theorem \ref{thm:SPIR_conditions} validates that under the conditions of payoff-flow conservation outlined in Theorem \ref{thm:payoff_flow_conservation_sufficient}, the cutoff switch $\phi_{i,t}(h_{t};b|c_{i,t})$\textemdash that adheres to (\ref{eq:cutoff_switch_horizontal})\textemdash for each sub-OFR${}^{d}$ is the same for all $b\in[B]$ given any $h_{t}\in H_{t}$.

By dropping the index $b$, we construct the cutoff switch function as follows: for all  $i\in\mathcal{N}$, $t\in\mathbb{T}$, $h_{t}\in H_{t}$,
\begin{equation}\label{eq:off_switch_monotone}
    \phi_{i,t}(h_{i,t},x_{i,t}|c_{i,t}) = \mathtt{Mg}_{i,t}(\underline{s}_{i,t},h_{t},x_{i,t}) +\overline{\delta}_{i,t}(\underline{s}_{i,t}, h_{t}, x_{i,t}), \textup{ where } c_{i,t}=\left\{\underline{s}_{i,t}\right\},
\end{equation}
where $\underline{s}_{i,t}$ is the smallest state in each $S_{i,t}$.
Hence, the corresponding OFR${}^{d}$ contains one state; i.e., $S^{\mathtt{off}}_{i,t}=\{\underline{s}_{i,t}\}$, for all $i\in\mathcal{N}$, $t\in\mathbb{T}$.
The following corollary follows Corollary \ref{lemma:monotone_dominated_region} and Theorem \ref{thm:SPIR_conditions}.
\begin{corollary}\label{corollary:monotone_SPIR}
Suppose that there exists a collection of carrier functions such that the mechanism $\Theta=\left<\sigma,\rho,\phi\right>$ satisfies \ref{itm:C1}-\ref{itm:C3}.
Then, in the \textup{monotone environment (Definition \ref{def:monotone_environment})}, the mechanism $\Theta>$ with each $\phi_{i,t}$ given by (\ref{eq:off_switch_monotone}) is OAIC.
\end{corollary}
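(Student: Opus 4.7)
The plan is to invoke Theorem \ref{thm:SPIR_conditions} with the boundary profile $c_{i,t}=\{\underline{s}_{i,t}\}$, under which the desired off region collapses to the singleton $S^{\mathtt{off}}_{i,t}=\vec{\Gamma}^{1}_{i,t}(c_{i,t})=\{\underline{s}_{i,t}\}$. The first step is to note that the argmax in (\ref{eq:operator_d_bar}) taken over the singleton sub-OFR is trivially $\underline{s}_{i,t}$ itself, so $\overline{d}_{i,t}(1)=\underline{s}_{i,t}$. Consequently, the closed-form $\phi_{i,t}$ in (\ref{eq:off_switch_monotone}) coincides with the cutoff-switch (\ref{eq:cutoff_switch_horizontal}) evaluated at $\overline{d}_{i,t}(1)$, so assumption (ii) of Theorem \ref{thm:SPIR_conditions} holds; assumption (iii), namely \ref{itm:C1}-\ref{itm:C3}, is explicitly assumed in the corollary.

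To invoke the theorem it remains to check that $\{\underline{s}_{i,t}\}$ is both an essential and an indifference region with $\underline{s}_{i,t}$ as the essential point. Essentiality follows from Corollary \ref{lemma:monotone_dominated_region} with $s'_{i,t}=\underline{s}_{i,t}$, which asserts that in the monotone environment any interval $[\underline{s}_{i,t}, s'_{i,t}]$ is an essential region of $S_{i,t}$; the degenerate choice $s'_{i,t}=\underline{s}_{i,t}$ yields essentiality of the singleton, with condition (i) of Definition \ref{def:essential_region} vacuous and condition (ii) reducing to $\Upsilon_{i,t}(\underline{s}_{i,t}, s|h_{t}, x_{i,t})+\overline{\Xi}_{i,t}(\underline{s}_{i,t}, s|h_{t}, x_{i,t})\leq 0$ for $s>\underline{s}_{i,t}$, which is secured by monotonicity of $\zeta_{i,t}$ and the first-order stochastic dominance in Definition \ref{def:monotone_environment}. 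Indifference follows from combining \ref{itm:C1}-\ref{itm:C3} with the chosen form of $\phi_{i,t}$: payoff-flow conservation telescopes the obedient prospect $\max_{L}G_{i,t}(\underline{s}_{i,t}|h_{t}, L, x_{i,t})$ into $\mathtt{Mg}_{i,t}(\underline{s}_{i,t}, h_{t}, x_{i,t})+\overline{\delta}_{i,t}(\underline{s}_{i,t}, h_{t}, x_{i,t})$, which matches $\phi_{i,t}(h_{t}, x_{i,t}|c_{i,t})$ by (\ref{eq:off_switch_monotone}), yielding $Z_{i,t}(\underline{s}_{i,t}, h_{t}, x_{i,t};\Theta)=0$.

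With both properties verified, Theorem \ref{thm:SPIR_conditions} certifies that $\Theta$ is IR-DOIC, and since IR-DOIC entails OAIC by Corollary \ref{corollary:OSOD}, the claim follows. The main obstacle is the indifference check, because it requires the implicit closed-form structure of the carrier functions $g_{i,t}$ to be consistent with the obedient prospect recursion through the payoff-flow conservation identities in \ref{itm:C1}-\ref{itm:C3}; the monotone environment tames this step because the singleton sub-OFR trivializes the up-persistence transformation (the projection step maps any state realized in $S^{\mathtt{off}}_{i,t}$ directly to $\underline{s}_{i,t}$, while states outside are left intact), so the \texttt{upPT} process preserves the marginal-carrier identities that underlie the telescoping of $G_{i,t}(\underline{s}_{i,t}|h_{t}, L, x_{i,t})$ into $\mathtt{Mg}_{i,t}+\overline{\delta}_{i,t}$.
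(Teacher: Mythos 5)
Your proposal is correct and follows essentially the same route as the paper: establish via Corollary \ref{lemma:monotone_dominated_region} that the singleton $\{\underline{s}_{i,t}\}$ is an essential region with $\underline{s}_{i,t}$ as the essential point, then invoke Theorem \ref{thm:SPIR_conditions}. Your explicit verification of the indifference condition (that the telescoped prospect at $\underline{s}_{i,t}$ equals $\phi_{i,t}$, giving zero on-rent) fills in a step the paper's one-line argument leaves implicit, but it is the same argument.
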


\proof{Proof.}
The proof of Corollary \ref{corollary:monotone_SPIR} is straightforward.
Specifically, Corollary \ref{lemma:monotone_dominated_region} implies that the region $S^{\mathtt{off}}_{i,t}=\vec{\Gamma}_{i,t}(\{\underline{s}_{i,t}\})$ is an single-state essential region of $S_{i,t}$ where $\underline{s}_{i,t}$ is the corresponding essential point.
Thus, Theorem \ref{thm:SPIR_conditions} implies that the RAIC mechanism with each $\phi_{i,t}$ given by (\ref{eq:off_switch_monotone}) is OAIC.
%
%
\endproof

The following proposition leverages the conditions of Theorems \ref{thm:payoff_flow_conservation_sufficient} and \ref{thm:SPIR_conditions}.

\begin{proposition}\label{thm:doic_irod_sufficient_condition}
    Fix a base game $\mathcal{G}$, task policy profile $\sigma$, and a collection of carrier functions $g$.
    Suppose that each cutoff-switch function $\phi_{i,t}$ is given by (\ref{eq:cutoff_switch_horizontal}).
    Then, for any $\rho\in \textup{\ref{eq:feasible_rho_C1_original}}$, the mechanism $\Theta=\left<\sigma, \rho, \phi\right>$ is IR-DOIC if and only if
\begin{equation}\label{eq:sufficient_if_rho_c1}
        \begin{aligned}
            \mathtt{Mg}_{i,t}\left(\overline{D}_{i,t}\circ s_{i,t}, h_{t},x_{i,t}\right) + \overline{\delta}_{i,t}\left(\overline{D}_{i,t}\circ s_{i,t}, h_{t},x_{i,t}\right)\geq \mathtt{Mg}_{i,t}\left(a'_{i,t} , s_{i,t}, h_{t},x_{i,t}\right) + \overline{\delta}_{i,t}\left(a'_{i,t}, s_{i,t}, h_{t},x_{i,t}\right),
        \end{aligned}
    \end{equation}
    for all $i\in\mathcal{N}$, $t\in\mathbb{T}$, $s_{i,t}\in S_{i,t}$, $h_{t}\in H_{t}$, $a'_{i,t}\in A_{i,t}[\sigma]$.
\end{proposition}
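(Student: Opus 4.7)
The plan is to show that under the hypotheses of the proposition, the on-rent $Z_{i,t}(a'_{i,t}|s_{i,t}, h_t, x_{i,t}; \Theta)$ admits a closed-form expression whose sign-structure is precisely (\ref{eq:sufficient_if_rho_c1}). By Corollary \ref{corollary:OSOD}, IR-DOIC is equivalent to OAIC together with RAIC, and both can be written in terms of the on-rent: OAIC requires $Z_{i,t}(s_{i,t},h_t,x_{i,t};\Theta) \geq 0$ at the obedient regular action, while RAIC requires that the on-rent be maximized at the obedient action over all $a'_{i,t}\in A_{i,t}[\sigma]$. The strategy is to show that the closed-form cutoff-switch (\ref{eq:cutoff_switch_horizontal}) together with $\rho \in \textup{\ref{eq:feasible_rho_C1_original}}$ forces
\[
Z_{i,t}(a'_{i,t}|s_{i,t},h_t,x_{i,t};\Theta) = \bigl[\mathtt{Mg}_{i,t}(a'_{i,t}, s_{i,t}, h_t, x_{i,t}) + \overline{\delta}_{i,t}(a'_{i,t}, s_{i,t}, h_t, x_{i,t})\bigr] - \bigl[\mathtt{Mg}_{i,t}(\overline{D}_{i,t}\!\circ\! s_{i,t}, h_t, x_{i,t}) + \overline{\delta}_{i,t}(\overline{D}_{i,t}\!\circ\! s_{i,t}, h_t, x_{i,t})\bigr],
\]
after which OAIC at the obedient action and RAIC at general $a'_{i,t}$ jointly collapse into (\ref{eq:sufficient_if_rho_c1}), yielding both directions of the equivalence.

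To derive this closed form, I would first invoke \ref{itm:C1} on each single-period expected utility along the obedient continuation path: the hypothesis $\rho \in \textup{\ref{eq:feasible_rho_C1_original}}$ rewrites $\mathbb{E}^{\sigma}[z_{i,k}(\tilde s_{i,k},\tilde a_{i,k}) \mid \tilde s_{i,k},\tilde h_k, x_{i,k}]$ as the telescoping difference $\mathtt{Mg}_{i,k}(\tilde s_{i,k},\tilde h_k, x_{i,k}) - \mathbb{E}^{\sigma}[\mathtt{Mg}_{i,k+1}(\tilde s_{i,k+1}, \tilde h_{k+1}) \mid \tilde s_{i,k},\tilde h_k, x_{i,k}]$ for $k\geq t+1$; at $k=t$, the deviating action enters only through the term $\mathtt{Mg}_{i,t}(a'_{i,t}, s_{i,t}, h_t, x_{i,t})$. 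Summing from $k=t$ to $L$ in the prospect (\ref{eq:offswitch_payoff_to_go}) collapses the intermediate terms and leaves the boundary contribution $\mathtt{Mg}_{i,t}(a'_{i,t}, s_{i,t}, h_t, x_{i,t}) - \mathbb{E}^{\sigma}[\mathtt{Mg}_{i,L+1}(\tilde s_{i,L+1}, \tilde h_{L+1})\mid\cdot] + \mathbb{E}^{\sigma}[\phi_{i,L+1}(\tilde h_{L+1}|c_{i,L+1})\mid\cdot]$.

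Second, I would substitute (\ref{eq:cutoff_switch_horizontal}) for the terminal off-switch and recognise the residual as the \texttt{upPT}-expectation: the difference between $\mathbb{E}^{\sigma}[\mathtt{Mg}_{i,L+1}(\overline{d}_{i,L+1}(b),\cdot) + \overline{\delta}_{i,L+1}(\overline{d}_{i,L+1}(b),\cdot)\mid\cdot]$ and $\mathbb{E}^{\sigma}[\mathtt{Mg}_{i,L+1}(\tilde s_{i,L+1}, \cdot)\mid \cdot]$ is precisely the accumulation of $U_{i,k}$-deviations defining $\overline{\delta}_{i,t}$ via (\ref{eq:uppted_delta_function}), because the \texttt{upPT} operators (\ref{eq:uppt_marginal})--(\ref{eq:uppt_sequence}) re-route state dynamics through the essential points $\overline{d}_{i,k}(b)$ whenever they enter $S^{\mathtt{off}}_{i,k}$. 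Taking $\max_L$ and using the definition (\ref{eq:def_maximum_carrier}) of $\mathtt{Mg}$ as the upper envelope over stopping times yields the prospect maximum; subtracting $\phi_{i,t}(h_t|c_{i,t})$, which by (\ref{eq:cutoff_switch_horizontal}) equals $\mathtt{Mg}_{i,t}(\overline{D}_{i,t}\!\circ\! s_{i,t}, h_t, x_{i,t}) + \overline{\delta}_{i,t}(\overline{D}_{i,t}\!\circ\! s_{i,t}, h_t, x_{i,t})$, produces the displayed closed form for $Z_{i,t}$.

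The main obstacle is the bookkeeping in the second step: justifying that the inductive expansion of $\overline{\delta}_{i,L+1}$ at the projected terminal state, combined with the telescoping residual from the first step, reassembles exactly into the sum-over-future-$U_{i,k}$-deviations prescribed by $\overline{\delta}_{i,t}(a'_{i,t}, s_{i,t}, h_t, x_{i,t})$, and that this identification respects the deviating action $a'_{i,t}$ entering only through the initial $\mathtt{Mg}_{i,t}$-slot without perturbing the subsequent \texttt{upPT}-paths. Once this alignment is verified, both the ``if'' and ``only if'' directions follow immediately by sign inspection of the closed form, with the principal's tie-breaking rule handling the zero-on-rent states in $S^{\mathtt{off}}_{i,t}$ so that indifference is resolved in her favour.
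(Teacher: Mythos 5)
Your proposal follows essentially the same route as the paper: the paper's argument also reduces the claim to the closed-form representation $\Lambda_{i,t}(\cdot,a'_{i,t}|s_{i,t},h_{t})=\mathtt{Mg}_{i,t}(a'_{i,t},s_{i,t},h_{t},x_{i,t})+\overline{\delta}_{i,t}(a'_{i,t},s_{i,t},h_{t},x_{i,t})$ (established by exactly the backward-induction/telescoping bookkeeping you describe, as Lemma \ref{lemma:app_proof_thm2} and its extension to deviating actions), after which both directions are read off by comparing the obedient and deviating values. One small correction: the identity $\phi_{i,t}(h_{t}|c_{i,t})=\mathtt{Mg}_{i,t}(\overline{D}_{i,t}\circ s_{i,t},h_{t},x_{i,t})+\overline{\delta}_{i,t}(\overline{D}_{i,t}\circ s_{i,t},h_{t},x_{i,t})$ that you invoke holds only for $s_{i,t}\in S^{\mathtt{off}}_{i,t}$ (where $\overline{D}_{i,t}\circ s_{i,t}=\overline{d}_{i,t}(b)$); for on-region states that expression is the optimized payoff-to-go, which weakly exceeds $\phi_{i,t}$, so your displayed formula is the obedient-minus-deviation gap rather than the on-rent itself — but since (\ref{eq:sufficient_if_rho_c1}) is precisely that gap being nonnegative, the equivalence goes through unchanged.
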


%
Given a collection of carrier functions, Proposition \ref{thm:doic_irod_sufficient_condition} establishes sufficient conditions for IR-DOIC if $\rho$ and $\phi$ are constructed according to \ref{eq:feasible_rho_C1_original} and (\ref{eq:cutoff_switch_horizontal}), respectively.
The condition (\ref{eq:sufficient_if_rho_c1}) imposes constraints on the task policy profile $\sigma$ (explicitly and implicitly through the carrier functions).
The study of the closed-form formulation of each carrier function is deferred to Section \ref{sec:characterization_DOIC}.

\section{Switchability}\label{sec:switchability_D-SOD}

In this section, we depart from IR-DOIC and move to the general DOIC when agents are obedient in the sense of (\ref{eq:def_DOIC}), in which the principal desires $\mathbb{P}(\mathtt{OM}_{i,t}=1)>0$ for some $i\in\mathcal{N}$, $t\in\mathbb{T}$.
To distinguish from individual rationality, we use the notion of \textit{switchability} or \textit{$S^{\mathtt{off}}$-switchability} to capture the mechanism's capability of incentivizing the agents to take obedient OM actions from $\{0,1\}$, leading to the actual OFR that coincides with the principal-desired OFR OFR${}^{d}$; i.e., 
\[
\mathtt{S}_{i,t}\left(h_{t}, x_{i,t}\right) = S^{\mathtt{off}}_{i,t}, \forall i\in \mathcal{N}, t\in\mathbb{T}, h_{t}\in H_{t},
\]
where the OM strategy profile $\tau$ corresponding to $x=(x_{i,t})$ satisfies (\ref{eq:def_switchability_tau}) given $S^{\mathtt{off}}$.
%
We refer to the corresponding DOIC as $S^{\mathtt{off}}$-DOIC.

Throughout the remainder of this article, we adopt the default configuration where a boundary profile $c$ partitions each $S_{i,t}$ into $B'$ intervals, and the principal's OFR${}^d$ has $B$ sub-OFR${}^d$s and the rest $B'-B$ constitute the ONR${}^{d}$.
Recall that $S^{\mathtt{off}}=\bigcup_{b\in[B]}\vec{\Gamma}^{b}_{i,t}(c_{i,t})$, where $\vec{\Gamma}^{b}_{i,t}(c_{i,t})$ is the $b$-th sub-OFR${}^{b}$.
Let $\vec{\Gamma}_{i,t}(c_{i,t})=\{\vec{\Gamma}^{b}_{i,t}(c_{i,t})\}_{b\in[B]}$, and let $\Psi_{i,t}(c_{i,t})=\{\Psi^{e}_{i,t}(c_{i,t})\}_{e\in[B'-B]}$ denote the collection of sub-ONR${}^{d}$, where each $\Psi^{e}_{i,t}(c_{i,t})$ is the $e$-th sub-ONR${}^{d}$.
Hence, each boundary profile $c_{i,t}$ leads to a partition $\{\vec{\Gamma}_{i,t}(c_{i,t}), \Psi_{i,t}(c_{i,t})\}$ of each $S_{i,t}$.
In addition, we assume that each agent $i$ forms conjecture $x_{i,t}$ by considering all others are obedient.

\subsection{Horizontal Cutoff and Knowledgeable Principal}

In an imperfect information environment, the principal faces various challenges when designing off-switch functions for agents. There are two main constraints. First, the principal cannot rely on the stats that are privately realized at the beginning of every period for each agent. 
This limits the principal's ability to tailor off-switch functions to determine off-switch values according to the specific states.
Second, the ordering of OM and regular actions within each period prevents the principal from learning the true states of agents by observing their regular actions.
The OM actions occur before the agents' regular actions, meaning that the principal cannot observe agent behavior to deduce their states before deciding the off-switch value. 
This further complicates the principal's task, as she must make decisions about the off-switch value on limited information.
Consequently, the principal must design off-switch functions that work effectively across a range of possible states, without knowing the exact realized states.

To cope with the unobservability of the agents' private states, we consider two special settings: \textit{horizontal cutoff-switch} and \textit{knowledgeable principal}.

\begin{definition}[Horizontal Cutoff-Switch]\label{def:horizontal_cutoff_switch}
Suppose that a boundary profile $c$ leads to a partition $\{\vec{\Gamma}_{i,t}(c_{i,t}), \Psi_{i,t}(c_{i,t})\}$ of each $S_{i,t}$ for all $i\in\mathcal{N}$, $t\in\mathbb{T}$.
A cutoff-switch function $\phi_{i,t}(\cdot|c_{i,t})$ is a \textit{horizontal cutoff-switch} if for all $b, b'\in[B]$, $w,w'\in\{\ell,r\}$, and $\hat{s}_{i,t}\in \bigcup_{e\in[B'-B]}\Psi^{e}_{i,t}(c_{i,t})$,
\begin{itemize}
    \item[(i)]  $\zeta_{i,t}\left(s^{w,b}_{i,t}, h_{t},x_{i,t}\right) \leq \zeta_{i,t}\left(\hat{s}_{i,t}, h_{t},x_{i,t}\right)$, and
    \item[(ii)] $\zeta_{i,t}\left(s^{w,b}_{i,t}, h_{t},x_{i,t}\right) =\zeta_{i,t}\left(\overline{d}_{i,t}(b), h_{t},x_{i,t}\right) = \zeta_{i,t}\left(s^{w',b'}_{i,t},h_{t},x_{i,t}\right) =\zeta_{i,t}\left(\overline{d}_{i,t}(b'), h_{t},x_{i,t}\right)$, for all $s^{w,b}_{i,t}, s^{w',b'}_{i,t}$ $\neq \underline{s}_{i,t}$ or $\overline{s}_{i,t}$.
\end{itemize}
\hfill $\triangle$
\end{definition}

\begin{figure}
    \centering
    \begin{subfigure}[b]{\linewidth}
    \centering
    \includegraphics[width=0.7\linewidth]{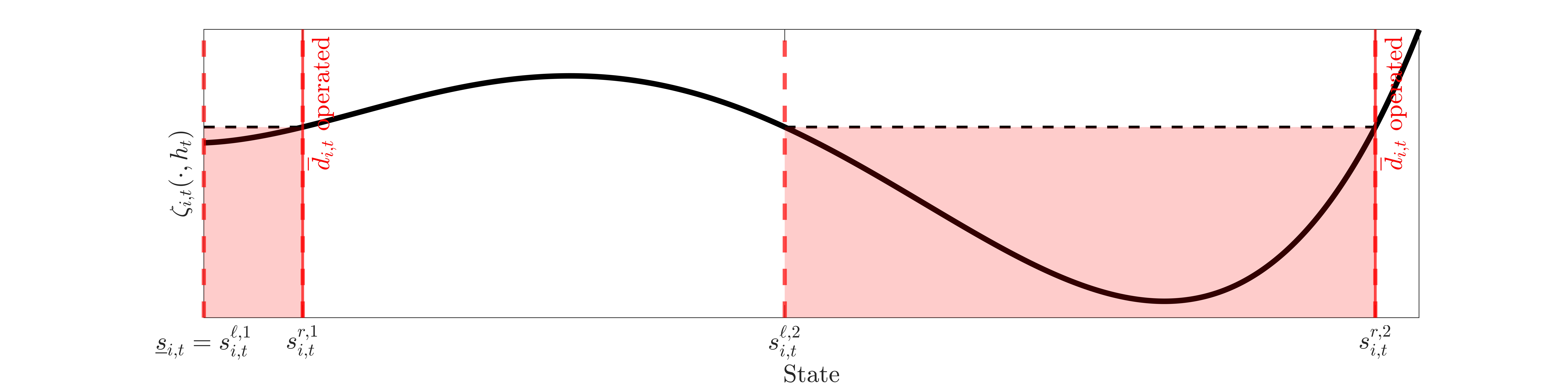}
    \caption{ \label{fig:horizontal_cut_eg} }
    \end{subfigure}
    \begin{subfigure}[b]{\linewidth}
    \centering
    \includegraphics[width=0.7\linewidth]{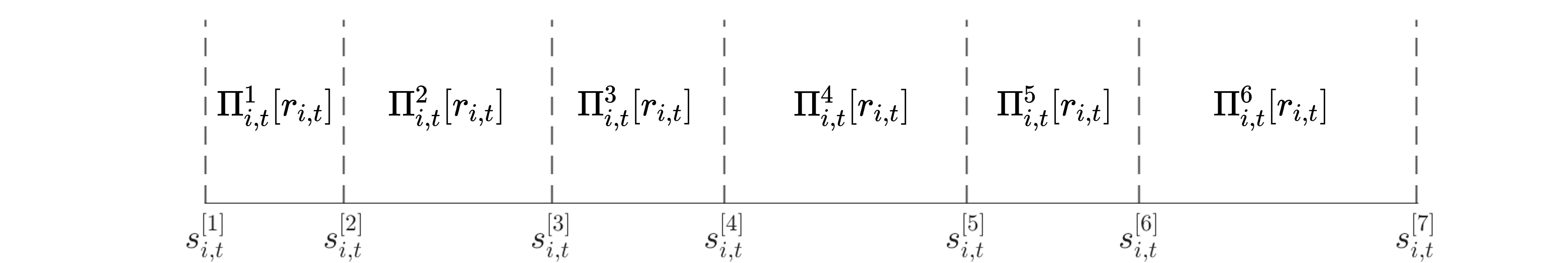}
    \caption{\label{fig:knowledge_state}}
    \end{subfigure}
    \caption{ (a): Example of the horizontal cut-off switch, in which $S^{\mathtt{off}}_{i,t}$ is composed of two sub-OFR${}^{d}$s, $[s^{\ell,1}_{i,t}, s^{r,1}_{i,t}]$ and $[s^{\ell,2}_{i,t}, s^{r,2}_{i,t}]$. (b): Example of a partition of $S_{i,t}$.
    }
\end{figure}

With the horizontal cutoff-switch (function) $\phi_{i,t}(\cdot|c_{i,t})$, each boundary $s^{w,b}_{i,t}$ for every $b\in[B]$ leads to the same value of marginal carrier and this value is the maximum one that can be achieved by all states in $S^{\mathtt{off}}_{i,t}$ (see, Fig. \ref{fig:horizontal_cut_eg}). 
In addition, this maximum marginal carrier value of the OFR${}^{d}$ is no greater than the minimum marginal carrier value that all states in the ONR${}^{d}$, $\bigcup_{e\in[B'-B]}\Psi^{e}_{i,t}(c_{i,t})$, can achieve.

With reference to Fig. \ref{fig:knowledge_state}, let $r=\{r_{i,t}\}$ be a collection of boundary profiles where each $r_{i,t}$ is given by (\ref{eq:ordered_sequence_boundary}) for $S'_{i,t}=S_{i,t}$.
Each $r_{i,t}$ leads to a partition $\Pi_{i,t}[r_{i,t}]$ of $S_{i,t}$ for all $i\in\mathcal{N}$, $t\in\mathbb{T}$.
Let $\Pi[r]=\{\Pi_{i,t}[r_{i,t}]\}$ denote the partition profile.
With abuse of notation, we denote $c_{i,t}\subseteq r_{i,t}$ if $\vec{\Gamma}_{i,t}(c_{i,t})$ is a collections of intervals chosen from $\Pi_{i,t}[r_{i,t}]=\{\Pi^{w}_{i,t}[r_{i,t}]\}_{w\in[B']}$.
That is, each $\vec{\Gamma}^{b}_{i,t}(c_{i,t})\in \Pi_{i,t}[r_{i,t}]$ for all $b\in [B]$, $i\in\mathcal{N}$, $t\in\mathbb{T}$.

\begin{definition}[Knowledgeable Principal]\label{def:knowledgeable_principal}
We say that the principal is \textit{knowledgeable} with $\Pi[r]$ if at the beginning of each period $t$ the principal knows which partitioned interval of $\Pi_{i,t}[r_{i,t}]$ where each agent $i$'s state is located for all $i\in\mathcal{N}$, $t\in\mathbb{T}$.
\hfill $\triangle$
\end{definition}

A knowledgeable principal still does not observe the realizations of the agents' true states.
However, the principal is \textit{more informed} about the agents' private states by knowing which partitioned intervals the agents' states are located at the beginning of each period.

\begin{remark}
    The knowledgeable principal can form beliefs (or knowledgeable beliefs) about the agents' contemporaneous states that contain more information about each agent's state realization than $F_{t}=\{F_{i,t}\}_{i\in\mathcal{N}}$ based on the public history.
    By leveraging the knowledgeable beliefs, our results for RAIC remain unchanged.
    Here, we omit the repetitive characterizations of RAIC when the principal is knowedgeable.
    \hfill $\triangle$
\end{remark}

\subsection{Jump Transformation}\label{sec:jump-transform}

In this section, we define the \textit{jump transform} operator by extending the up transform operator defined in (\ref{eq:def_up_transform}).
Let a boundary profile $r=\{r_{i,t}\}$ lead to a partition $\{\vec{\Gamma}_{i,t}(c_{i,t}), \Psi_{i,t}(c_{i,t})\}$ of $S_{i,t}$, where $c_{i,t}\subseteq r_{i,t}$ for all $i\in\mathcal{N}$, $t\in\mathbb{T}$.
We define an operator for all $e\in[B'-B]$,
\[
    \begin{aligned}
    &\underline{d}_{i,t}\left(e,h_{t}\middle|c_{i,t}\right) \equiv  \sup\left\{ s'_{i,t}\in\argmin\limits_{s''_{i,t} \in \Psi^{e}_{i,t}[c_{i,t}] } \zeta^{e}_{i,t}\left(s_{i,t}, h_{t}, x_{i,t}\right)  \right\}.
    \end{aligned}
\]

\begin{figure}
  \centering
    \includegraphics[width=0.7\linewidth]{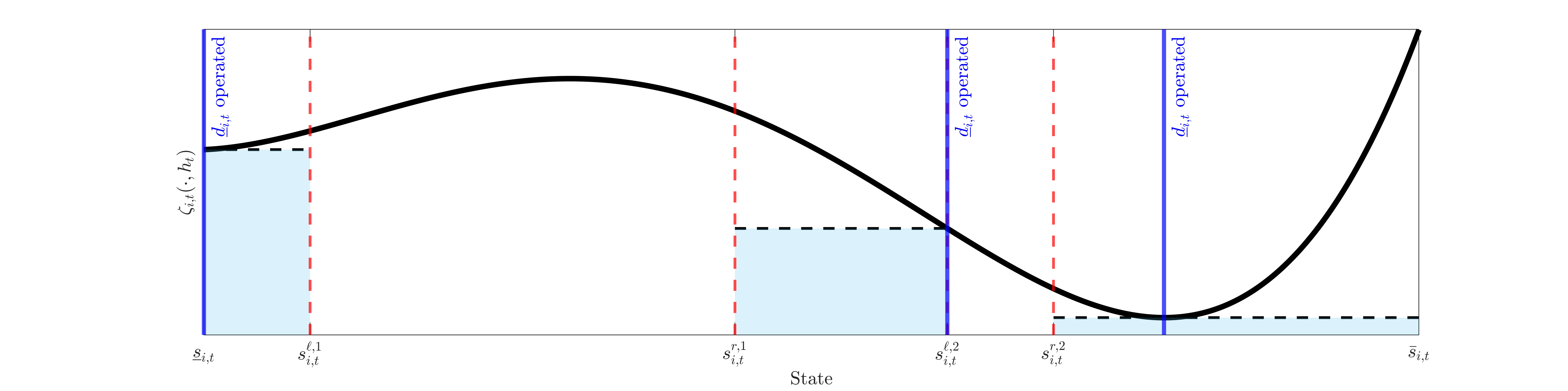}
    \caption{Example of $\underline{d}_{i,t}(e,h_{t}|c_{i,t})$, in which the state space $S_{i,t}$ is partitioned into $6$ regions. There are three sub-ONR${}^{d}$s $\Psi^{1}_{i,t}(c_{i,t})=[\underline{s}_{i,t}, s^{\ell, 1}_{i,t}]$, $\Psi^{2}_{i,t}(c_{i,t})=[s^{r, 1}_{i,t}, s^{\ell, 2}_{i,t}]$, and $\Psi^{2}_{i,t}(c_{i,t})=[s^{r, 1}_{i,t}, s^{\ell, 2}_{i,t}]$. } \label{fig:example_down_op}
\end{figure}

With reference to Fig. \ref{fig:example_down_op}, 
the operator $\underline{d}_{i,t}(e)=\underline{d}_{i,t}(e,h_{t},x_{i,t}|c_{i,t})$ identifies the state $s'_{i,t}$ in the $e$th sub-ONR${}^{d}$ $\Psi^{e}_{i,t}(c_{i,t})$ that minimizes the one-stage value of the marginal carrier $\zeta_{i,t}(s_{i,t}, h_{t}, x_{i,t})$.
When there are multiple maximizers, we choose the largest state.
Then, the the \textit{jump transform} (\texttt{jT}) operator for agent $i$ in period $t$ is defined by, for all $i\in\mathcal{N}$, $s_{i,t}\in S_{i,t}$, $b\in[B]$, $e\in[B'-B]$,
\begin{equation}\tag{$\mathtt{jT}$}\label{eq:region_jump_operator}
    \begin{aligned}
    &\overline{\underline{D}}_{i,t}\circ s_{i,t} \equiv 
    \begin{cases}
    \overline{d}_{i,t}(b),& \text{ if } s_{i,t}\in \vec{\Gamma}^{b}_{i,t}(c_{i,t}),\\
    \underline{d}_{i,t}(e),& \text{ if } s_{i,t}\in \Psi^{e}_{i,t}(c_{i,t}).
    \end{cases}
    \end{aligned}
\end{equation}

\subsection{$S^{\mathtt{off}}$-DOIC}

Recall that when agent $i$'s state is $s_{i,t}$, history is $h_{t}$, and his conjecture about others' OM actions is $x_{i,t}$, the continuing outcome perceived by agent $i$ is $\gamma^{\sigma}[s_{i,t}, h_{t}, x_{i,t}]$ (by assuming that agent $i$ is obedient in taking the regular actions).
With abuse of notation, for all the remaining agents participating in period $t-1$, let $\gamma^{\sigma}_{i,t}[s_{i,t}, h_{t}]\in \Delta\Big(\prod^{T}_{k=t} S_{k}\times A_{k} \Big)$ denote the continuing outcome perceived by agent $i$ when every agent $j\in \mathcal{N}_{t-1}$, including agent $i$ himself, takes $\tau_{j,t}(s_{j,t}, h_{t})=T+1$ (i.e., all agents in $\mathcal{N}_{t}$ participate and plan to participate from period $t+1$ onward).
For all $i\in\mathcal{N}_{t-1}$, let
\[
\gamma^{\sigma}_{i,t}[h_{t}]\equiv \mathbb{E}^{F_{i,t}}\Big[\gamma^{\sigma}_{i,t}[\tilde{s}_{i,t}, h_{t}]\Big|h_{t}\Big].
\]
Given any $h_{t}\in H_{t}$, $S^{\mathtt{off}}$ and $\gamma^{\sigma}_{i,t}[h_{t}]$ induce a probability distribution, $\chi_{i,t}(h_{t})=\big(\chi_{i,t}(k|h_{t}) \big)_{k\in\mathbb{T}_{t,T}}\in \Delta(\mathbb{T}_{t,T})$, where $\sum_{k=t}^{T} \chi_{i,t}(k|h_{t}) = 1$ and each $\chi_{i,t}(k|h_{t})\geq 0$ satisfies, for all $t'\in\mathbb{T}_{t,T}$,
\begin{equation}\label{eq:principal_desired_belief_sym}
    \begin{aligned}
        \begin{aligned}
        \chi_{i,t}\left(k\middle|h_{t}\right) =\mathbb{P}\left(s_{i,k}\in S^{\mathtt{off}}_{i,k} \middle|h_{t}\right),
\end{aligned}
    \end{aligned}
\end{equation}
where $\mathbb{P}(s_{i,k} \in S^{\mathtt{off}}_{i,k} |h_{t})$ is the probability of the event $\{s_{t'} \in S^{\mathtt{off}}_{t'}\}$ that is determined by the probability measure $\gamma^{\natural}_{i,t}[h_{t}]$ given the history $h_{t}$.
Given $S^{\mathtt{off}}$, $\chi_{i,t}(k|h_{t})$ is the (principal-desired) probability distribution of each agent $i$'s taking $\mathtt{OM}_{i,k}=1$ in period $k\in\mathbb{T}_{t,T}$ when the agent is obedient (in taking the OM and the regular actions).
Hence, the principal desires each agent $i$ to form a conjecture $x_{i,t}(\cdot)=\chi_{-i,t}(\cdot)$ at each period $t$ about others' obedient OM strategies.

Given any conjecture $x_{i,t}(h_{t})\in \Delta\big((\mathbb{T}_{t,T})^{|\mathcal{N}_{t}|-1}\big)$, we rewrite agent $i$'s period-$t$ expected payoff-to-go (\ref{eq:to_go_off_menu_switch}) as follows: for all $i\in\mathcal{N}$, $t\in\mathbb{T}$, $s_{i,t}\in S_{i,t}$, $L^{i}\in \mathbb{T}_{t,T+1}$,
\begin{equation}\label{eq:rewrite_payoff_togo}
    \begin{aligned}
        \Lambda_{i,t}\left(L^{i}, a'_{i,t}\middle|s_{i,t}, h_{t}, x_{i,t}\right)=\phi_{i,t}\left(h_{t},x_{i,t}\middle|c_{i,t}\right)\mathbf{1}_{\{L^{i}=t\}} +G_{i,t}\left(a_{i,t}\middle|s_{i,t}, h_{t}, L^{i}, x_{i,t}\right)\mathbf{1}_{\{L^{i}>t\}},
    \end{aligned}
\end{equation}
with $\Lambda_{i,t}(L^{i}|s_{i,t}, h_{i,t}, x_{i,t})=\Lambda_{i,t}(L^{i}, a_{i,t}|s_{i,t}, h_{i,t}, x_{i,t})$ when $a_{i,t}$ is obedient.

\subsubsection{Existence of Equilibrium OM Strategies}

Let $\epsilon_{k}=(\epsilon^{t}_{k}, \epsilon^{t+1}_{k},\dots,\epsilon^{T+1}_{k})\in \mathbb{R}^{|T-t+2|}$ be the vector with $\epsilon^{k}_{k}=1$ and $\epsilon^{\ell}_{k}=0$ for all $\ell\neq k \in \mathbb{T}_{t, T+1}$.
For any RAIC mechanism, the set of agent $i$'s best responses to the conjecture $x_{i,t}$ is defined by the correspondence, 
\begin{equation}\label{eq:correspondence_conj}
    \begin{aligned}
        C_{i,t}\left(s_{i,t}, h_{t}, x_{i,t}\right)\equiv\left\{\epsilon_{k}: \Lambda_{i,t}\left(k\middle|s_{i,t}, h_{t}, x_{i,t}\right)\geq \Lambda_{i,t}\left(k'\middle|s_{i,t}, h_{t}, x_{i,t}\right),\forall k'\in \mathbb{T}_{t,T+1}\right\},
    \end{aligned}
\end{equation}
for all $i\in\mathcal{N}$, $t\in\mathbb{T}$, $h_{t}\in H_{t}$.
Define the set
\begin{equation}\label{eq:set_correspondence_conj}
    \begin{aligned}
    &X_{i,t}\left(h_{t},x_{i,t}\right)\equiv\left\{\int_{s_{i,t}\in S_{i,t}} \varsigma_{i,t}\left(s_{i,t}, h_{t},x_{t}\right) d F_{i,t}\left(s_{i,t}|s_{i,t-1}, h_{t}\right):\varsigma^{\natural}_{t}\left(s_{t}, h_{t},x_{t}\right) \in C_{i,t}\left(s_{t}, h_{t}, x_{t}\right), \forall s_{i,t}\in S_{i,t}   \right\}.
\end{aligned}
\end{equation}
That is, $X_{i,t}(h_{t},x_{i,t})$ is the set of distributions of each agent $i$'s best-responding choices of $L^{i}$ to the conjecture $x_{i,t}$ given the history $h_{t}$.
Finally, define the set
\begin{equation}
    \begin{aligned}
        \mathtt{FE}_{t}\left(h_{t};\Theta\right)\equiv\Big\{\mu_{t}=\left(\mu_{i,t}\right): \mu_{i,t}\in X_{i,t}\left(h_{t},\mu_{-i,t}\right), \forall i\in\mathcal{N}\Big\},
    \end{aligned}
\end{equation}
where $\mu_{-i,t}$ is agent $i$'s conjecture about other agents' OM actions (i.e., $x_{i,t}=\mu_{-i,t}$) and $\mu_{i,t}$ is the conjecture of other agents about agent $i$'s OM actions.
If $\mathtt{FE}_{t}(h_{t};\Theta)\neq \emptyset$, then by the definition of $X_{i,t}$ in (\ref{eq:set_correspondence_conj}), there exists a measurable function profile $\tau_{t}=(\tau_{i,t})$ such that for every $\mu_{t}=(\mu_{i,t}, \mu_{-i,t})\in \mathtt{FE}_{t}(h_{t};\Theta)$, we have
\[
\begin{aligned}
    \Lambda_{i,t}\left( \tau_{i,t}(s_{i,t}, h_{t})\middle|s_{i,t}, h_{t}, \mu_{-i,t}\right)\geq \Lambda_{i,t}\left(k'\middle|s_{i,t}, h_{t}, \mu_{-i,t}\right),
\end{aligned}
\]
for all $i\in\mathcal{N}$, $t\in\mathbb{T}$, $s_{i,t}\in S_{i,t}$, $h_{t}\in H_{t}$.
The non-emptiness of $ \mathtt{FE}_{t}(h_{t};\Theta)$ is independent of the specific choice of $S^{\mathtt{off}}$ (given the cutoff-switch function profile) but imposes conditions for the base game $\mathcal{G}$ and the mechanism $\Theta$.

Let $\bm{X}_{t}(h_{t}, \mu_{t})\equiv\prod_{i\in \mathcal{N}} X_{i,t}(h_{t}, \mu_{-i,t})$, where $\mu_{t}=(\mu_{i,t})_{i\in \mathcal{N}}$ is the profile of agents' conjectures.
Hence, we say $\mu_{t}\in \bm{X}_{t}(h_{t}, \mu_{-i,t})$ if each $\mu_{i,t}\in X_{i,t}(h_{t}, \mu_{-i,t})$ for all $i\in\mathcal{N}$.

\begin{lemma}\label{lemma:existence_OM_strategies}
     For any RAIC mechanism $\Theta$, there exists $\mu^{*}_{t}=(\mu^{*}_{i,t}, \mu^{*}_{-i,t})\in \Delta\big(\mathbb{T}_{t,T+1} \big)^{n}$ such that $\mu^{*}_{i,t}\in \bm{X}_{t}(h_{t}, \mu^{*}_{-i,t})$ for all $i\in\mathcal{N}$; i.e., $\mathtt{FE}_{t}(h_{t};\Theta)\neq \emptyset$.
   %
\end{lemma}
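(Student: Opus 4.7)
The plan is to derive this as a fixed-point statement for the best-response correspondence of the OM actions and invoke the Kakutani–Fan–Glicksberg theorem. The candidate domain is $\Delta(\mathbb{T}_{t,T+1})^n$, which is a non-empty, compact, convex subset of a finite-dimensional Euclidean space. The candidate correspondence is $\mu_t \mapsto \bm{X}_t(h_t, \mu_t) = \prod_{i\in\mathcal{N}} X_{i,t}(h_t, \mu_{-i,t})$, and a fixed point of this correspondence is exactly an element of $\mathtt{FE}_t(h_t;\Theta)$.

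First, I would verify that each $X_{i,t}(h_t, x_{i,t})$ is non-empty and convex. Non-emptiness: since $\mathbb{T}_{t,T+1}$ is finite and $\Lambda_{i,t}(\cdot\mid s_{i,t}, h_t, x_{i,t})$ is bounded on it by Assumption \ref{assp:bounded_prospect}, the correspondence $s_{i,t} \rightrightarrows C_{i,t}(s_{i,t}, h_t, x_{i,t})$ has non-empty values; since the payoffs $\Lambda_{i,t}$ are measurable in $s_{i,t}$, the Kuratowski–Ryll-Nardzewski selection theorem yields a measurable selector $\varsigma_{i,t}$, whose integral against $F_{i,t}(\cdot\mid s_{i,t-1},h_t)$ lies in $X_{i,t}(h_t, x_{i,t})$. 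Convexity: the set of measurable selectors of $C_{i,t}$ is convex (since $C_{i,t}(s_{i,t},h_t,x_{i,t})$ is the convex hull of unit vectors corresponding to maximizers, and pointwise convex combinations of selectors are again selectors), and integration preserves convexity.

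Second, I would establish upper hemi-continuity of $x_{i,t} \mapsto X_{i,t}(h_t, x_{i,t})$. The engine is that $\Lambda_{i,t}(k\mid s_{i,t},h_t,x_{i,t})$ is continuous—indeed affine—in $x_{i,t}$, because the conjecture $x_{i,t}$ enters $G_{i,t}$ only through a convex combination (over others' declared leaving times $L^{-i}$) of expectations against well-defined continuing outcomes $\gamma^\sigma_\pi[s_{i,t}, h_t, x_{i,t}]$. Berge's Maximum Theorem applied pointwise in $s_{i,t}$ therefore yields upper hemi-continuity of $x_{i,t} \rightrightarrows C_{i,t}(s_{i,t}, h_t, x_{i,t})$. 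To lift this through the integral, I would invoke a closed-graph argument: take $x^n_{i,t}\to x_{i,t}$ and $\mu^n_{i,t}\in X_{i,t}(h_t, x^n_{i,t})$ with $\mu^n_{i,t}\to \mu_{i,t}$; represent $\mu^n_{i,t}$ by selectors $\varsigma^n_{i,t}$ taking values in the simplex of unit vectors indexed by $\mathbb{T}_{t,T+1}$, and use weak-$\ast$ compactness of $L^\infty$-valued selectors (equivalently, tightness together with Assumption \ref{assp:bounded_prospect} providing dominating integrability) to extract a limit selector $\varsigma_{i,t}$. The pointwise UHC of $C_{i,t}$ forces $\varsigma_{i,t}(s_{i,t}) \in C_{i,t}(s_{i,t},h_t,x_{i,t})$ a.e.-$F_{i,t}$, so $\mu_{i,t} \in X_{i,t}(h_t, x_{i,t})$, closing the graph.

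With non-emptiness, convexity, and upper hemi-continuity in hand, $\bm{X}_t(h_t, \cdot)$ inherits the same properties on the compact convex domain $\Delta(\mathbb{T}_{t,T+1})^n$, and Kakutani's theorem produces a fixed point $\mu^*_t$ with $\mu^*_{i,t} \in X_{i,t}(h_t, \mu^*_{-i,t})$ for every $i \in \mathcal{N}$, which is the desired conclusion. The main obstacle is the lifting step of the second paragraph: while the pointwise UHC is immediate from the Maximum Theorem, transferring it to the integrated correspondence $X_{i,t}$ requires a selection-theoretic closed-graph argument, and the standard pitfall is ensuring measurability of the limit selector and the absence of measure-zero escape. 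Assumptions \ref{assp:full_support} and \ref{assp:bounded_prospect} supply the full-support regularity of $F_{i,t}$ and the uniform domination needed to apply dominated convergence on the simplex-valued selectors, which ultimately resolves this technical step.
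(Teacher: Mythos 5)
Your proof takes essentially the same route as the paper's: the paper's own argument is a two-line sketch that reinterprets the profile of distributions $\tau^{\natural}_{i,t}(k)=\mathbb{E}^{\sigma}\big[\mathbf{1}_{\{\tau_{i,t}(\tilde{s}_{i,t},h_t)=k\}}\,\big|\,h_t\big]$ as a mixed-strategy profile of a finite-action normal-form game and invokes Kakutani, and you are simply supplying the non-emptiness/convexity/upper hemi-continuity verifications that the paper omits. The one imprecision worth fixing is your convexity step: as defined in the paper, $C_{i,t}(s_{i,t},h_t,x_{i,t})$ is the \emph{discrete} set of maximizing unit vectors $\epsilon_k$, not their convex hull, so pointwise convex combinations of selectors are generally not selectors; convexity of $X_{i,t}(h_t,x_{i,t})$ should instead be obtained from the Lyapunov--Aumann convexity of the integral of a correspondence against the (atomless) continuous type distribution $F_{i,t}$, or by first replacing $C_{i,t}$ with its convex hull and noting this does not change the integral.
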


Lemma \ref{lemma:existence_OM_strategies} implies that there always exists a profile $\tau_{t}$ such that $\mathtt{FE}_{t}(h_{t};\Theta)\neq \emptyset$.
Therefore, to ensure $S^{\mathtt{off}}$-DOIC, we seek the choices of $S^{\mathtt{off}}$ such that for all $t\in\mathbb{T}$, $h_{t}\in H_{t}$, 
\begin{equation}\tag{\texttt{FP}}\label{eq:fixed_point_general}
    \begin{aligned}
        \chi_{t}(\cdot|h_{t}) \in \bm{X}_{t}(h_{t}, \chi_{t}),
    \end{aligned}
\end{equation}
where $\chi_{t}=(\chi_{i,t}, \chi_{-i,t})$ with each $\chi_{i,t}$ satisfying (\ref{eq:principal_desired_belief_sym}) given $S^{\mathtt{off}}_{i,t}$, for all $i\in\mathcal{N}$, $t\in\mathbb{T}$.
It is straightforward to see that if the RAIC mechanism is $S^{\mathtt{off}}$-switchable, then (\ref{eq:fixed_point_general}) holds. That is,
\begin{equation}\label{eq:prob_equal_nece}
    \mathbb{P}\left(\tau_{i,t}(s_{i,t}, h_{t})=t\right) = \chi_{i,t}(t|h_{t}).
\end{equation}
However, the existence of (\ref{eq:fixed_point_general}) does not imply $S^{\mathtt{off}}$-switchability.
In particular, for a RAIC mechanism to be $S^{\mathtt{off}}$-switchable, we need to guarantee that there exists a profile $\tau_{t}=(\tau_{i,t}, \tau_{-i,t})$ such that \textit{(i)} each $\tau_{i,t}$ is a best response to $\chi_{-i,t}$, where each $\chi_{j,t}$ satisfies (\ref{eq:principal_desired_belief_sym}) given $S^{\mathtt{off}}_{j,t}$ for all $j\neq i$, and \textit{(ii)} for all $i\in\mathcal{N}$, $t\in\mathbb{T}$, $h_{t}\in H_{t}$,
\begin{equation}\label{eq:prob_equal_suffice}
    \mathbb{P}\Big(\tau_{i,t}(s_{i,t}, h_{t}) = t \textup{ and } s_{i,t}\in S^{\mathtt{off}}_{i,t} \big| h_{t} \Big) = \chi_{i,t}(t|h_{t}).
\end{equation}
%
%
Here, (\ref{eq:prob_equal_suffice}) implies (\ref{eq:fixed_point_general}), but not vice versa.
In particular, (\ref{eq:fixed_point_general}) only implies (\ref{eq:prob_equal_nece}) but does not necessarily align the probability of each agent $i$'s taking $\mathtt{OM}_{i,t}=1$ (according to $\tau_{i,t}$) with the probability of realizing a state $s_{i,t}\in S^{\mathtt{off}}_{i,t}$.

Construct the cutoff-switch function by,
for \textup{any} $b\in [B]$,
\begin{equation}\label{eq:cutoff_horizontal_general}
    \begin{aligned}
    \phi_{i,t}(b, h_{i,t}, \chi_{-i,t}|c_{i,t}) = \mathtt{Mg}_{i,t}( \overline{d}_{i,t}(b), h_{i,t}, \chi_{-i,t})+\overline{\delta}_{i,t}(\overline{d}_{i,t}(b), h_{i,t}, \chi_{-i,t}).
    \end{aligned}
\end{equation}
By Definition \ref{def:horizontal_cutoff_switch}, it is straightforward to see that the horizontal cutoff-switch $\phi_{i,t}(b,h_{i,t},\chi_{-i,t}|c_{i,t})$ given by (\ref{eq:cutoff_horizontal_general}) is independent of $b$. Hence, we omit $b$ and denote $\phi_{i,t}(h_{i,t},\chi_{-i,t}|c_{i,t})$.
Let $s_{i,t}[b]\in \vec{\Gamma}^{b}(c_{i,t})$ and $s_{i,t}[e]\in \Psi^{e}_{i,t}(c_{i,t})$, respectively, denote the typical states in the $b$th OFR${}^{d}$ and the $e$th ONR${}^{d}$.  
Construct the cutoff-switch function as follows: for $w\in \{b,e\}$, $b\in[B]$, $e\in[B'-B]$,
\begin{equation}\label{eq:cutoff_knowledgeable_general}
    \begin{aligned}
    \phi_{i,t}\left(w,h_{i,t},\chi_{-i,t}|c_{i,t}\right) = \mathtt{Mg}_{i,t}\left(\overline{\underline{D}}_{i,t}\circ s_{i,t}[w], h_{i,t}, \chi_{-i,t}\right)+ \overline{\delta}_{i,t}\left(\overline{\underline{D}}_{i,t}\circ s_{i,t}[w], h_{i,t},\chi_{-i,t}\right).
    \end{aligned}
\end{equation}

\begin{proposition}\label{prop:condition_conjecture_MAOS_gen}
    Suppose that $S^{\mathtt{off}}$ is desired by the principal, and let $\chi=(\chi_{i,t})$, where each $\chi_{i,t}\in \Delta\big(\mathbb{T}_{t,T+1} \big)$ satisfies (\ref{eq:principal_desired_belief_sym}) given $S^{\mathtt{off}}_{i,t}$ for all $i\in\mathcal{N}$, $t\in\mathbb{T}$.
    Suppose in addition that there exists a set of carrier functions such that the mechanism $\Theta=\left<\sigma, \rho, \phi\right>$ satisfies \ref{itm:C1}-\ref{itm:C3}, in which \textup{either of the following holds.} 
    \begin{itemize}
        \item[(i)] Each cutoff-switch $\phi_{i,t}$ is \textup{horizontal} and is given by (\ref{eq:cutoff_horizontal_general}).
        \item[(ii)] The principal is \textup{knowledgeable} and each cutoff-switch $\phi_{i,t}$ is given by (\ref{eq:cutoff_knowledgeable_general}).
    \end{itemize}
    Then, each $\chi_{i,t}$ satisfies (\ref{eq:prob_equal_suffice}) where each $\tau_{i,t}$ is a best response to $\chi_{-i,t}$ for all $i\in\mathcal{N}$, \textup{if and only if}, (\ref{eq:fixed_point_general}) holds for all $i\in\mathcal{N}$, $t\in\mathbb{T}$, $h_{t}\in H_{t}$. 
\end{proposition}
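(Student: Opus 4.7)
The ``only if'' direction is immediate, as already noted before the statement: if $\tau=(\tau_{i,t})$ is a best response to $\chi_{-i,t}$ and each $\chi_{i,t}$ satisfies (\ref{eq:prob_equal_suffice}), then the induced distribution of quitting times exactly equals $\chi_{i,t}$ for every $i$, so $\chi_{t}\in \bm{X}_{t}(h_{t},\chi_{t})$. I would simply spell out this aggregation, invoking (\ref{eq:agent_OM_strategy_plan}) to transform ``$\tau_{i,t}(s_{i,t},h_{t})=t$ and $s_{i,t}\in S^{\mathtt{off}}_{i,t}$'' into the marginal probability used in the definition of $X_{i,t}$.

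For the ``if'' direction, the plan is to derive a closed-form expression for the on-rent under the hypothesized cutoff construction, using the payoff-flow conservation encoded in \ref{itm:C1}--\ref{itm:C3}. Specifically, Theorem \ref{thm:payoff_flow_conservation_sufficient} implies that, when $\rho\in\mathcal{P}[\sigma,g,\mathcal{G}]$ and $\phi$ satisfies (\ref{eq:thm_payoff_conservation_phi}), the prospect function reduces (modulo posted $\eta$ terms that cancel against the off-switch value) to $\max_{L}G_{i,t}(s_{i,t},h_{t},L,\chi_{-i,t})=\mathtt{Mg}_{i,t}(s_{i,t},h_{t},\chi_{-i,t})+\overline{\delta}_{i,t}(s_{i,t},h_{t},\chi_{-i,t})$. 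Combining this with the cutoff formula in case (i) or case (ii), the on-rent becomes
\[
Z_{i,t}(s_{i,t},h_{t},\chi_{-i,t};\Theta)=\Upsilon_{i,t}(s_{i,t},\bar{s}_{i,t}^{*}\mid h_{t},\chi_{-i,t})+\overline{\Xi}_{i,t}(s_{i,t},\bar{s}_{i,t}^{*}\mid h_{t},\chi_{-i,t}),
\]
where $\bar{s}_{i,t}^{*}=\overline{d}_{i,t}(b)$ under (i), or $\bar{s}_{i,t}^{*}=\overline{\underline{D}}_{i,t}\circ s_{i,t}$ under (ii).

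Next, I would verify the sign of $Z_{i,t}$ region by region. Under (i), Definition \ref{def:horizontal_cutoff_switch}(ii) guarantees that $\phi_{i,t}$ in (\ref{eq:cutoff_horizontal_general}) is independent of $b$, and together with (i) of that definition, the horizontal property yields $Z_{i,t}\le 0$ on $S^{\mathtt{off}}_{i,t}$ (with equality at the projections $\overline{d}_{i,t}(b)$, by the essential-point argument underlying Theorem \ref{thm:SPIR_conditions}) and $Z_{i,t}\ge 0$ on $S_{i,t}\setminus S^{\mathtt{off}}_{i,t}$. Under (ii), the knowledgeable principal uses (\ref{eq:cutoff_knowledgeable_general}), in which the jump transform in (\ref{eq:region_jump_operator}) projects an OFR${}^{d}$ state to the $\zeta$-maximizer $\overline{d}_{i,t}(b)$ and an ONR${}^{d}$ state to the $\zeta$-minimizer $\underline{d}_{i,t}(e)$; a symmetric calculation delivers the same sign pattern. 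Hence $s_{i,t}\in S^{\mathtt{off}}_{i,t}$ implies $Z_{i,t}\le 0$ (weak preference to quit), and $s_{i,t}\notin S^{\mathtt{off}}_{i,t}$ implies $Z_{i,t}\ge 0$ (weak preference to stay).

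Finally, I would invoke the tie-breaking rule in the principal's favor on the zero-on-rent indifference set to build an explicit best response $\tau_{i,t}(s_{i,t},h_{t})$ which sets $L^{i}=t$ exactly on $S^{\mathtt{off}}_{i,t}$, and then verify that this $\tau$ is indeed a best response against $\chi_{-i,t}$, so that (\ref{eq:prob_equal_suffice}) follows from $\chi_{t}\in\bm{X}_{t}(h_{t},\chi_{t})$. The main obstacle is the third step: in principle, the fixed point condition (\ref{eq:fixed_point_general}) only pins down marginal distributions of quitting times and admits best responses whose state-conditional quitting patterns differ from the desired one. The hypothesized horizontal/knowledgeable constructions, coupled with the tie-breaking convention, are precisely what rule this out, because they force $Z_{i,t}$ to be strictly of one sign outside the indifference region $S^{\mathtt{off}}_{i,t}$ (so no other best response is admissible there). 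Making this strictness rigorous will require a careful use of Assumption \ref{assp:full_support} together with the essential-region monotonicity of $\zeta_{i,t}$ implicit in Definition \ref{def:horizontal_cutoff_switch}(i) and the extremal choice (\ref{eq:region_jump_operator}).
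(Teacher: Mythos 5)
Your ``only if'' direction matches the paper's and is fine. The ``if'' direction, however, contains a genuine gap in your third step. You claim that the horizontal property (Definition \ref{def:horizontal_cutoff_switch}) together with the cutoff formula (\ref{eq:cutoff_horizontal_general}) ``yields $Z_{i,t}\le 0$ on $S^{\mathtt{off}}_{i,t}$ and $Z_{i,t}\ge 0$ on $S_{i,t}\setminus S^{\mathtt{off}}_{i,t}$,'' and later that the construction ``forces $Z_{i,t}$ to be strictly of one sign outside the indifference region.'' Neither claim follows from the hypotheses of the proposition. Definition \ref{def:horizontal_cutoff_switch} constrains only the single-period marginal carrier $\zeta_{i,t}$ at the boundary and projected points; the sign of $Z_{i,t}=\Upsilon_{i,t}+\overline{\Xi}_{i,t}$ outside $S^{\mathtt{off}}_{i,t}$ involves the full maximum carrier $\mathtt{Mg}_{i,t}$ and the multi-period term $\overline{\delta}_{i,t}$, and is precisely the essential-region condition $S^{\mathtt{off}}_{i,t}\in\mathtt{S}^{\mathtt{H}}[\sigma,\mathcal{G}]$ of Theorem \ref{thm:MAO_switchability_conjecture_gen}, which is there shown to be \emph{equivalent} to (\ref{eq:fixed_point_general}) and is not assumed in Proposition \ref{prop:condition_conjecture_MAOS_gen}. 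If your sign pattern were derivable from horizontality alone, (\ref{eq:fixed_point_general}) would hold automatically and the ``if and only if'' in Theorem \ref{thm:MAO_switchability_conjecture_gen} would be vacuous; so either the derivation fails or the argument is circular.

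The paper's proof avoids any sign information outside $S^{\mathtt{off}}_{i,t}$. It uses only two facts: (a) under \ref{itm:C1}--\ref{itm:C3} the payoff-to-go admits the representation $\mathtt{Mg}_{i,t}(\overline{D}_{i,t}\circ s_{i,t},h_{t},\chi_{-i,t})+\overline{\delta}_{i,t}(\overline{D}_{i,t}\circ s_{i,t},h_{t},\chi_{-i,t})$, so that for every $s_{i,t}\in\vec{\Gamma}^{b}_{i,t}(c_{i,t})$ the continuation value equals $\mathtt{Mg}_{i,t}(\overline{d}_{i,t}(b),\cdot)+\overline{\delta}_{i,t}(\overline{d}_{i,t}(b),\cdot)$, which by (\ref{eq:cutoff_horizontal_general}) is \emph{exactly} $\phi_{i,t}(h_{t},\chi_{-i,t}|c_{i,t})$ — i.e., the on-rent is identically zero on all of $S^{\mathtt{off}}_{i,t}$, not merely nonpositive; and (b) the marginal matching $\mathbb{P}(\tau_{i,t}=t|h_{t})=\mathbb{P}(s_{i,t}\in S^{\mathtt{off}}_{i,t}|h_{t})$ supplied by (\ref{eq:fixed_point_general}). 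One then argues by contradiction: if some set outside $S^{\mathtt{off}}_{i,t}$ quits at $t$, the marginal matching forces a compensating set $S''_{i,t}\subseteq S^{\mathtt{off}}_{i,t}$ on which $\tau_{i,t}\neq t$; by the tie-breaking convention this requires a \emph{strict} preference to stay at some $s''_{i,t}\in S''_{i,t}$, i.e.\ $\phi_{i,t}<\phi_{i,t}$, which is absurd. You should replace your sign-pattern step with this counting argument (and the analogous one via the jump transform (\ref{eq:region_jump_operator}) in case (ii)); your first two steps — deriving the $\mathtt{Mg}+\overline{\delta}$ representation from payoff-flow conservation — are the right preparation for it.
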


Proposition \ref{prop:condition_conjecture_MAOS_gen} establishes conditions under which (\ref{eq:fixed_point_general}) in an RAIC mechanism is equivalent to $S^{\mathtt{off}}$-switchability.

\subsubsection{Sufficient Conditions for $S^{\mathtt{off}}$-DOIC}

We boil down the conditions of state spaces when there exist certain essential regions given a base game model $\mathcal{G}$ and the task policy profile $\sigma$ by defining the following two sets:
\begin{equation}\label{set:horizontal_essential_regions}
    \mathtt{S}^{\mathtt{H}}[\sigma, \mathcal{G}]\equiv \left\{\begin{aligned}
    &S'=\prod\limits_{i,t}S'_{i,t}: S'_{i,t}=  \bigcup_{b\in [B]} \Pi^{b}_{i,t} \textup{ is an \textit{essential region} of } S_{i,t}, \textup{ where each } \Pi^{b}_{i,t}  \textup{ is the $b$-th} \\
    &\textup{ \textit{essential interval} of } S_{i,t} \textup{ with } \overline{d}_{t}(b) \textup{ as the \textit{essential point} }\forall i\in\mathcal{N}, t\in\mathbb{T}, b\in[B]
\end{aligned}\right\}.
\end{equation}

\begin{equation}\label{set:knowledgeable_essential_regions}
    \mathtt{S}^{\mathtt{K}}[\sigma, \mathcal{G}]\equiv \left\{\begin{aligned}
    &S'= \prod_{i,t}S'_{i,t}: S'_{i,t}=\left(\bigcup_{b\in[B]}\Pi^{b}_{i,t}\right)\bigcup\left(\bigcup_{e\in[B'-B]}\Pi^{e}_{i,t} \right), \textup{ where} \\
    &\textup{\textit{(i)} each }\Pi^{b}_{i,t}\backslash \{\overline{\underline{D}}\circ s_{i,t}[b]\} \textup{ is an \textit{essential region} of } \Pi^{b}_{i,t}, \textup{ and}\\
    & \textup{\textit{(ii)} each } \{\overline{\underline{D}}\circ s_{i,t}[e]\} \textup{ is an \textit{essential region} of } \Pi^{e}_{i,t} \forall i\in\mathcal{N}, t\in\mathbb{T}
\end{aligned}\right\}.
\end{equation}
The following theorem further characterizes the conditions of Proposition \ref{prop:condition_conjecture_MAOS_gen}.

\begin{theorem}\label{thm:MAO_switchability_conjecture_gen}
    Fix a base game model $\mathcal{G}$.
    Suppose that $S^{\mathtt{off}}$ is desired by the principal, and let $\chi=(\chi_{i,t})$, where each $\chi_{i,t}\in \Delta\big(\mathbb{T}_{t,T+1} \big)$ satisfies (\ref{eq:principal_desired_belief_sym}) given $S^{\mathtt{off}}_{i,t}$ for all $i\in\mathcal{N}$, $t\in\mathbb{T}$.
    Suppose in addition that there exists a collection of carrier functions such that the mechanism $\Theta=\left<\sigma, \rho, \phi\right>$ satisfies \ref{itm:C1}-\ref{itm:C3}.
    Then, the following holds.
    \begin{itemize}
        \item[(i)] Suppose that each cutoff-switch $\phi_{i,t}$ is \textup{horizontal} and is given by (\ref{eq:cutoff_horizontal_general}).
        Then, each $\chi_{t}=(\chi_{i,t})_{i\in\mathcal{N}}$ satisfies (\ref{eq:fixed_point_general}) \textup{if and only if} each $S^{\mathtt{off}}_{i,t}\in \mathtt{S}^{\mathtt{H}}[\sigma, \mathcal{G}]$ for all $i\in\mathcal{N}$, $t\in\mathbb{T}$.
        \item[(ii)] Suppose that the principal is \textup{knowledgeable} and each cutoff-switch $\phi_{i,t}$ is given by (\ref{eq:cutoff_knowledgeable_general}).
        Then, each $\chi_{t}=(\chi_{i,t})_{i\in\mathcal{N}}$ satisfies (\ref{eq:fixed_point_general}) \textup{if and only if} each $S^{\mathtt{off}}_{i,t}\in \mathtt{S}^{\mathtt{K}}[\sigma, \mathcal{G}]$ for all $i\in\mathcal{N}$, $t\in\mathbb{T}$.
    \end{itemize}
\end{theorem}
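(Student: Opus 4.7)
The plan is to reduce Theorem \ref{thm:MAO_switchability_conjecture_gen} to an on-rent computation under the cutoff-switch constructions, and then read off the essential-region conditions from the resulting inequalities. Since Proposition \ref{prop:condition_conjecture_MAOS_gen} already establishes, under both cutoff constructions, that (\ref{eq:fixed_point_general}) is equivalent to $S^{\mathtt{off}}$-switchability (equivalently, OAIC together with $\tau$ satisfying (\ref{eq:def_switchability_tau})), it suffices to show that the OAIC condition translates exactly into membership in $\mathtt{S}^{\mathtt{H}}[\sigma,\mathcal{G}]$ or $\mathtt{S}^{\mathtt{K}}[\sigma,\mathcal{G}]$.

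The first step is to derive a closed form for the on-rent $Z_{i,t}(s_{i,t},h_{t},\chi_{-i,t};\Theta)$ under (C1)-(C3). Using (C1), the expected single-period utility $\mathtt{E}z_{i,t}$ collapses to the marginal carrier $\zeta_{i,t}$, so that by a telescoping argument along the up-persistence transformation and (C3), the maximizer-in-$L$ of the prospect function is pinned down, and the obedient prospect becomes $\mathtt{Mg}_{i,t}(s_{i,t},h_{t},\chi_{-i,t}) + \overline{\delta}_{i,t}(s_{i,t},h_{t},\chi_{-i,t})$ up to an additive $h_{t}$-measurable term that is absorbed by $\phi_{i,t}$ through (C2). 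Substituting the horizontal construction (\ref{eq:cutoff_horizontal_general}) and using Definition \ref{def:horizontal_cutoff_switch}(ii) to verify that $\phi_{i,t}(b,\cdot|c_{i,t})$ is independent of $b$, the on-rent at state $s_{i,t}$ becomes
\begin{equation*}
Z_{i,t}(s_{i,t},h_{t},\chi_{-i,t};\Theta) \;=\; \Upsilon_{i,t}\!\left(s_{i,t},\overline{d}_{i,t}(b)\,\middle|\,h_{t},\chi_{-i,t}\right) + \overline{\Xi}_{i,t}\!\left(s_{i,t},\overline{d}_{i,t}(b)\,\middle|\,h_{t},\chi_{-i,t}\right),
\end{equation*}
for any representative $b\in[B]$. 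Under the knowledgeable construction (\ref{eq:cutoff_knowledgeable_general}), the same computation yields the analogous formula with $\overline{d}_{i,t}(b)$ replaced by $\overline{\underline{D}}_{i,t}\circ s_{i,t}[w]$, but now the value of $\phi_{i,t}(w,\cdot|c_{i,t})$ is allowed to depend on which partitioned sub-interval the state lies in.

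Case \textit{(i)}: OAIC requires $Z_{i,t}\le 0$ on $S^{\mathtt{off}}_{i,t}$ and $Z_{i,t}\ge 0$ off $S^{\mathtt{off}}_{i,t}$, i.e.,
\begin{equation*}
\Upsilon_{i,t}(\overline{d}_{i,t}(b),s'_{i,t}|\cdot)+\overline{\Xi}_{i,t}(\overline{d}_{i,t}(b),s'_{i,t}|\cdot)\ge 0 \text{ on } \vec{\Gamma}^{b}_{i,t}, \quad \le 0 \text{ on } S_{i,t}\setminus S^{\mathtt{off}}_{i,t},
\end{equation*}
across all $b$. Because the horizontal construction forces a single cutoff level across every sub-OFR${}^{d}$, the two inequalities must be simultaneously satisfied by one common set of essential points $\{\overline{d}_{i,t}(b)\}_{b\in[B]}$ for the \emph{whole} partitioned $S^{\mathtt{off}}_{i,t}\subseteq S_{i,t}$. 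This is precisely Definition \ref{def:essential_region} applied to $S^{\mathtt{off}}_{i,t}\subseteq S_{i,t}$, hence $S^{\mathtt{off}}_{i,t}\in\mathtt{S}^{\mathtt{H}}[\sigma,\mathcal{G}]$. The converse is immediate: membership in $\mathtt{S}^{\mathtt{H}}[\sigma,\mathcal{G}]$ supplies the essential points required by the on-rent inequalities. Case \textit{(ii)} proceeds identically, except that the knowledgeable principal's freedom to set $\phi_{i,t}$ separately on each sub-interval decouples the inequalities: on each $\Pi^{b}_{i,t}$ we only need OAIC within that interval (with essential point $\overline{\underline{D}}\circ s_{i,t}[b]$), and on each $\Pi^{e}_{i,t}$ we only need the off-continuation inequality at the singleton essential region $\{\overline{\underline{D}}\circ s_{i,t}[e]\}$. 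This matches the definition of $\mathtt{S}^{\mathtt{K}}[\sigma,\mathcal{G}]$.

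The principal obstacle is the derivation in the first step: the interplay between (C3) (whose minimum is taken over $L\in\mathbb{T}_{t,T}$) and the construction of $\phi_{i,t}$ via (C2) must be handled so that the maximizer $L^{\ast}$ in the on-rent aligns with the up-persistence telescoping, so that $\overline{\delta}_{i,t}$ emerges cleanly. Once this identification is made, the rest is essentially bookkeeping via Proposition \ref{prop:condition_conjecture_MAOS_gen} to pass between (\ref{eq:fixed_point_general}) and (\ref{eq:prob_equal_suffice}), and the equivalences follow directly from the definitions of $\mathtt{S}^{\mathtt{H}}$ and $\mathtt{S}^{\mathtt{K}}$.
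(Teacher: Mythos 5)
Your proposal is correct and follows essentially the same route as the paper: it relies on the payoff-to-go representation $\Lambda_{i,t}=\mathtt{Mg}_{i,t}(\overline{D}_{i,t}\circ s_{i,t},\cdot)+\overline{\delta}_{i,t}(\overline{D}_{i,t}\circ s_{i,t},\cdot)$ (the paper's Lemmas in the proofs of Theorem \ref{thm:SPIR_conditions} and Proposition \ref{prop:condition_conjecture_MAOS_gen}), substitutes the cutoff-switch constructions (\ref{eq:cutoff_horizontal_general}) and (\ref{eq:cutoff_knowledgeable_general}) to express the on-rent as $\Upsilon_{i,t}+\overline{\Xi}_{i,t}$ relative to the essential points, and invokes Proposition \ref{prop:condition_conjecture_MAOS_gen} to pass between (\ref{eq:fixed_point_general}) and obedience. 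The only part you leave compressed is the backward-induction/telescoping derivation of that representation, which you correctly flag as the main technical step.
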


Theorem \ref{thm:MAO_switchability_conjecture_gen} establishes sufficient conditions for $S^{\mathtt{off}}$-DOIC that are placed on the task policy profile and the carrier functions, given the base game model.
In particular, under the assumption that the mechanism is RAIC by satisfying \ref{itm:C1}-\ref{itm:C3}, \textit{(i)} and \textit{(ii)} of Theorem \ref{thm:MAO_switchability_conjecture_gen} show necessary and sufficient conditions that incorporate essential regions for each $\chi_{t}$ (associated with $S^{\mathtt{off}}$) to satisfy (\ref{eq:fixed_point_general}).
Lemma \ref{lemma:existence_OM_strategies} implies that in a RAIC mechanism (by satisfying \ref{itm:C1}-\ref{itm:C3}), there always exists an equilibrium OM strategy profile.
Proposition \ref{prop:condition_conjecture_MAOS_gen} further shows that given certain formulations of the cutoff-switch functions (i.e., when it is horizontal or the principal is knowledgeable), the principal's desired $S^{\mathtt{off}}$ with the corresponding $\chi_{t}$ induces an equilibrium OM strategy profile $\tau=(\tau_{i,t}, \tau_{-i,t})$ such that (\ref{eq:fixed_point_general}) holds.
In other words, for any $S^{\mathtt{off}}$ with each $\chi_{t}$ satisfying (\ref{eq:fixed_point_general}), each agent $i$ forms conjecture $\chi_{-i,t}$ such that there exists a $\tau_{i,t}$ that is best-response to $\chi_{-i,t}$, and $\tau_{i,t}$ and $\chi_{i,t}$ satisfies (\ref{eq:prob_equal_suffice}).
Finally, Theorem \ref{thm:MAO_switchability_conjecture_gen} places conditions for the base game model $\mathcal{G}$, the principal's desired $S^{\mathtt{off}}$, the task policy profile, and the carrier functions, under which (\ref{eq:fixed_point_general}) guarantees \textit{(a)} the existence of an equilibrium OM strategy profile $\tau$ and \textit{(b)} $\tau$ is obedient.

\begin{proposition}\label{prop:switchability_sufficient_condition_new}
    Fix a base game $\mathcal{G}$, task policy profile $\sigma$, and a collection of carrier functions $g$.
    Suppose that $S^{\mathtt{off}}$ is desired by the principal, and let $\chi=(\chi_{i,t})$, where each $\chi_{i,t}\in \Delta\big(\mathbb{T}_{t,T+1} \big)$ satisfies (\ref{eq:principal_desired_belief_sym}) given $S^{\mathtt{off}}_{i,t}$ for all $i\in\mathcal{N}$, $t\in\mathbb{T}$.
    \begin{itemize}
        \item[(i)] Let each $\phi_{i,t}$ be constructed by (\ref{eq:cutoff_horizontal_general}).
        Suppose that each $S^{\mathtt{off}}_{i,t}\in \mathtt{S}^{\mathtt{H}}[\sigma, \mathcal{G}]$ for all $i\in\mathcal{N}$, $t\in\mathbb{T}$. 
        Then, for any $\rho\in \textup{\ref{eq:feasible_rho_C1_original}}$, the mechanism $\Theta=\left<\sigma, \rho, \phi\right>$ is $S^{\mathtt{off}}$-DOIC, if and only if, for all $i\in\mathcal{N}$, $t\in\mathbb{T}$, $s_{i,t}\in S_{i,t}$, $h_{t}\in H_{t}$, $a'_{i,t}\in A_{i,t}[\sigma]$,
        \begin{equation}
        \begin{aligned}
            \mathtt{Mg}_{i,t}(\overline{D}_{i,t}\circ s_{i,t}, h_{t},\chi_{-i,t}) + \overline{\delta}_{i,t}(\overline{D}_{i,t}\circ s_{i,t}, h_{t},\chi_{-i,t})\geq \mathtt{Mg}_{i,t}(a'_{i,t} , s_{i,t}, h_{t},\chi_{-i,t}) + \overline{\delta}_{i,t}(a'_{i,t}, s_{i,t}, h_{t},\chi_{-i,t}).
        \end{aligned}
    \end{equation}
    \item[(ii)] Let each $\phi_{i,t}$ be constructed by  (\ref{eq:cutoff_knowledgeable_general}).
    Suppose that each $S^{\mathtt{off}}_{i,t}\in \mathtt{S}^{\mathtt{K}}[\sigma, \mathcal{G}]$ for all $i\in\mathcal{N}$, $t\in\mathbb{T}$.
    Then, for any $\rho\in \textup{\ref{eq:feasible_rho_C1_original}}$, the mechanism $\Theta=\left<\sigma, \rho, \phi\right>$ is $S^{\mathtt{off}}$-DOIC, if and only if, for all $i\in\mathcal{N}$, $t\in\mathbb{T}$, $s_{i,t}\in S_{i,t}$, $h_{t}\in H_{t}$, $a'_{i,t}\in A_{i,t}[\sigma]$, $w\in \{b,e\}$, $b\in[B]$, $e\in[B'-B]$,
    \begin{equation}
        \begin{aligned}
            \mathtt{Mg}_{i,t}(\overline{D}_{i,t}\circ s_{i,t}[w], h_{i,t}, \chi_{-i,t})+ \overline{\delta}_{i,t}(\overline{D}_{i,t}\circ s_{i,t}[w], h_{i,t},\chi_{-i,t})\geq \mathtt{Mg}_{i,t}(a'_{i,t},s_{i,t}[w], h_{i,t}, \chi_{-i,t})+ \overline{\delta}_{i,t}(a'_{i,t}, s_{i,t}[w], h_{i,t},\chi_{-i,t}).
        \end{aligned}
    \end{equation}
    \end{itemize}
\end{proposition}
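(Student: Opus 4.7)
The plan is to assemble the statement by combining the RAIC characterization of Theorem \ref{thm:payoff_flow_conservation_sufficient} (via payoff-flow conservation under \ref{itm:C1}--\ref{itm:C3}), the cutoff-switch structure of Theorem \ref{thm:SPIR_conditions}, the switchability characterization of Theorem \ref{thm:MAO_switchability_conjecture_gen}, and the fixed-point-to-switchability equivalence of Proposition \ref{prop:condition_conjecture_MAOS_gen}. Throughout, I would use the fact that a $S^{\mathtt{off}}$-DOIC mechanism decomposes into (a) $S^{\mathtt{off}}$-switchability of the OM actions and (b) RAIC of the regular actions, and treat the two in turn for each of parts (i) and (ii).

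First I would dispatch the switchability component. For part (i), the hypothesis $S^{\mathtt{off}}_{i,t}\in \mathtt{S}^{\mathtt{H}}[\sigma,\mathcal{G}]$ together with the horizontal cutoff-switch (\ref{eq:cutoff_horizontal_general}) and \ref{itm:C1}--\ref{itm:C3} triggers Theorem \ref{thm:MAO_switchability_conjecture_gen}(i), giving (\ref{eq:fixed_point_general}); Proposition \ref{prop:condition_conjecture_MAOS_gen}(i) then upgrades this to $S^{\mathtt{off}}$-switchability, i.e., (\ref{eq:prob_equal_suffice}) holds and the induced $\tau$ matches the principal's desired OM strategy profile consistent with $S^{\mathtt{off}}$. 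Part (ii) is identical with $\mathtt{S}^{\mathtt{K}}[\sigma,\mathcal{G}]$, the knowledgeable-principal cutoff (\ref{eq:cutoff_knowledgeable_general}), and clause (ii) of both results. This step reduces $S^{\mathtt{off}}$-DOIC to verifying RAIC under the given ingredients.

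Second, I would derive the closed form of the on-rent that turns RAIC into the stated inequality. Under \ref{itm:C1}, for any obedient continuation the expected flow payoff telescopes through (\ref{eq:feasible_rho_0}), so the prospect function becomes
\begin{equation*}
G_{i,t}(a'_{i,t}|s_{i,t},h_{t},L,\chi_{-i,t}) = \mathbb{E}^{\sigma}_{a'_{i,t}}\!\left[\mathtt{Mg}_{i,t}(\tilde{s}_{i,t+1},\tilde{h}_{t+1},\chi_{-i,t})\,\middle|\,s_{i,t},h_{t}\right] + \mathbb{E}^{\sigma}_{a'_{i,t}}\!\left[\phi_{i,L+1}(\tilde{h}_{L+1}|c_{i,L+1})\,\middle|\,s_{i,t},h_{t}\right] + \text{telescoped terms},
\end{equation*}
while the closed-form cutoff-switches (\ref{eq:cutoff_horizontal_general}) or (\ref{eq:cutoff_knowledgeable_general}) plug in $\mathtt{Mg}_{i,k}(\overline{d}_{i,k}(b),\cdot)+\overline{\delta}_{i,k}(\overline{d}_{i,k}(b),\cdot)$ at every stopping time $L{+}1$. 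Because $\overline{\delta}_{i,t}$ is defined via the up-persistence process (\ref{eq:upPT_process_def}), which substitutes precisely these boundary values whenever the state trajectory enters a sub-OFR${}^{d}$, the resulting $\max_{L}$ in the on-rent collapses to
\begin{equation*}
\max_{L\in\mathbb{T}_{t,T}}G_{i,t}(a'_{i,t}|s_{i,t},h_{t},L,\chi_{-i,t}) = \mathtt{Mg}_{i,t}(a'_{i,t},s_{i,t},h_{t},\chi_{-i,t}) + \overline{\delta}_{i,t}(a'_{i,t},s_{i,t},h_{t},\chi_{-i,t}).
\end{equation*}
Specializing to $a'_{i,t}=\sigma_{i,t}(s_{i,t},h_{t})$ and using $\overline{D}_{i,t}\circ s_{i,t}$ to encode the obedient OM projection shows that $Z_{i,t}(s_{i,t},h_{t},\chi_{-i,t};\Theta) - Z_{i,t}(a'_{i,t}|s_{i,t},h_{t},\chi_{-i,t};\Theta)$ equals the difference of the two sides of the stated inequality, making it equivalent to RAIC. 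Combined with the switchability established above, this yields $S^{\mathtt{off}}$-DOIC; the necessity direction runs backwards: any violation yields a profitable one-shot deviation to $a'_{i,t}$.

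The principal obstacle will be the telescoping identity in the previous paragraph, because the \texttt{upPT} process treats sub-OFR${}^{d}$ entries by projection while leaving the ONR${}^{d}$ dynamics intact, and this asymmetry must be tracked through the expectation over $\tilde{h}_{k}$. The horizontal-cutoff hypothesis in (i) is exactly what guarantees that $\phi_{i,k}(\tilde{h}_{k}|c_{i,k})$ is constant across the sub-OFR${}^{d}$ boundaries, so the telescoping does not depend on which sub-interval the path enters; in (ii) the same collapse is underwritten by the knowledgeable-principal assumption, which lets the cutoff value be indexed by the correct partition cell $s_{i,t}[w]$ via $\overline{\underline{D}}_{i,t}$. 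Once this bookkeeping is carried out carefully at each stopping time $L$, the remainder of the argument is a direct assembly of the cited theorems.
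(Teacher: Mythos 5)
Your second and third steps are essentially the paper's own route: the paper disposes of this proposition in one line, stating that it follows the same procedure as the proof of Proposition \ref{thm:doic_irod_sufficient_condition} but based on Lemma \ref{lemma:app_proof_prop_single_agent_switchable} in place of Lemma \ref{lemma:app_proof_thm2}. Concretely, one first shows by backward induction that under \ref{itm:C1} and the closed-form cutoff-switches the payoff-to-go of a one-shot deviator satisfies $\Lambda_{i,t}(\mathtt{OM}_{i,t},a'_{i,t}|s_{i,t},h_{t},\chi_{-i,t})=\mathtt{Mg}_{i,t}(a'_{i,t},s_{i,t},h_{t},\chi_{-i,t})+\overline{\delta}_{i,t}(a'_{i,t},s_{i,t},h_{t},\chi_{-i,t})$, and equals the left-hand side of the displayed inequality (with $\overline{D}_{i,t}\circ s_{i,t}$, resp.\ $\overline{D}_{i,t}\circ s_{i,t}[w]$) under full obedience; the inequality is then literally the one-shot best-response condition, and the equivalence with $S^{\mathtt{off}}$-DOIC follows in both directions by the one-shot deviation principle. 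Your identification of the telescoping bookkeeping through the \texttt{upPT} process, and of the role of horizontality (resp.\ knowledgeability) in making $\phi_{i,k}$ constant across sub-OFR boundaries (resp.\ indexable by partition cell), is exactly where the work in those two lemmas lives.

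The one flaw is your first step. Theorem \ref{thm:MAO_switchability_conjecture_gen} and Proposition \ref{prop:condition_conjecture_MAOS_gen} both hypothesize that the mechanism satisfies \ref{itm:C1}--\ref{itm:C3}, i.e.\ that it is already RAIC, whereas the proposition you are proving only supplies \ref{itm:C1} (through $\rho\in\mathcal{P}[\sigma,g,\mathcal{G}]$) together with the closed-form $\phi$ and the essential-region membership. In the ``if'' direction, RAIC is precisely what the displayed inequality is supposed to deliver, so invoking those results up front to ``dispatch switchability'' and then ``reduce to RAIC'' is circular. It is also unnecessary: once the representation of $\Lambda_{i,t}$ is in hand, the single inequality simultaneously encodes obedience in the OM action (via the projection $\overline{D}_{i,t}\circ s_{i,t}$, whose value coincides with $\phi_{i,t}(h_{t}|c_{i,t})$ on $S^{\mathtt{off}}_{i,t}$ by the essential-region hypothesis) and obedience in the regular action, so switchability and RAIC emerge together rather than sequentially. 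Relatedly, your phrasing of the conclusion as a comparison of on-rents slightly misstates what the inequality captures: the left-hand side is the full optimized payoff-to-go including the OM choice, not $Z_{i,t}$ itself. Drop the first step (or move it after the representation, where its hypotheses are available) and the argument is sound.
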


Proposition \ref{prop:switchability_sufficient_condition_new} extends Proposition \ref{thm:doic_irod_sufficient_condition} to $S^{\mathtt{off}}$-DOIC by leveraging the results in Theorem \ref{thm:MAO_switchability_conjecture_gen}.
In particular, the sufficient conditions in Proposition \ref{thm:doic_irod_sufficient_condition} only requires \ref{itm:C1} of Theorem \ref{thm:MAO_switchability_conjecture_gen}.

\section{Further Characterizations of \texorpdfstring{$S^{\mathtt{off}}$}{Lg}-DOIC}\label{sec:characterization_DOIC}

Sections \ref{sec:DOIC_D-IROD} and \ref{sec:switchability_D-SOD} characterize the DOIC when the mechanism is dynamically individually rational and $S^{\mathtt{off}}$-switchable, respectively, by establishing sufficient conditions.
These conditions rely on the existence of the carrier functions, without requiring explicit formulations to be provided. 
In this section, we further characterize the DOIC and provide explicit conditions placed on the task policy profile by obtaining a closed-form formulation for each carrier function.
We start by showing the following conditions used in our derivation.

\begin{condition}\label{cond:differentiable_reward}
    For all $i\in\mathcal{N}$, $t\in\mathbb{T}$, $a_{t}\in A_{t}[\sigma]$, the reward function $u_{i,t}(s_{i,t}, a_{t})$ is differentiable in $s_{i,t}$. In addition, each collection $\{u_{i,t}(\cdot, a_{t})\}_{a_{t}\in A_{t}[\sigma]}$ is equi-Lipschitz continuous on $S_{i,t}$, for all $i\in\mathcal{N}$, $t\in\mathbb{T}$; i.e., there exists a constant $c_{i,t}\in \mathbb{R}_{+}$ such that $|u_{i,t}(s_{i,t}, a_{t}) - u_{i,t}(s'_{i,t}, a_{t})|\leq c_{i,t}|s_{i,t}-s'_{i,t}|$, with $c_{i,t}\in\mathbb{R}_{+}$ and $\sum^{T}_{t=1}c_{i,t}<\infty$.
\end{condition}

\begin{condition}\label{cond:bounded_dynamic}
    For all $i\in\mathcal{N}$, $t\in\mathbb{T}$, $\omega_{i,t}\in \Omega$, $h_{t-1}\in H_{t-1}$, $\frac{\partial}{\partial v}\kappa_{i,t}(v, h_{t-1}, \omega_{t})\big|_{v=s_{i,t-1}}$ exists, and there exists $\bar{C}_{i,t}:\Omega\mapsto \mathbb{R}_{+}$ such that $\left|\frac{\partial}{\partial v}\kappa_{i,t}(v, h_{t-1}, \omega_{t})|_{v=s_{i,t-1}}\right|\leq \bar{C}_{i,t}(\omega_{i,t})$ with $\bar{C}_{i,t}(\omega_{i,t})$ be an integrable function and $\left|\bar{C}_{i,t}(\omega_{i,t})\right|<\hat{c}_{i,t}$, where $\hat{c}_{i,t}\in\mathbb{R}_{+}$ is a constant, and $\prod_{k=0}^{T}\big(\sum_{t=1}^{T} \hat{c}_{i,t}\big)<\infty$.
\end{condition}

Condition \ref{cond:differentiable_reward} assumes differentiability and equi-Lipschitz continuity of each agent's reward function at the true state in each period.
Condition \ref{cond:bounded_dynamic} places analogous assumptions on the persistence function to ensure that small changes in the current true state result in bounded impacts on future states.



\subsection{Necessary Conditions for \texorpdfstring{$S^{\mathtt{off}}$}{Lg}-DOIC}\label{sec:necessary_condition}

Given obedient conjecture $x_{i,t}=\chi_{-i,t}$, construct for all $i\in\mathcal{N}$, $s_{i,t},s'_{i,t}\in S_{i,t}$, $h_{t}\in H_{t}$, $L\in\mathbb{T}_{t,T}$,
\begin{align*}
    &MG_{i,t}\left(s'_{i,t}|s_{i,t},h_{t}, \chi_{-i,t}\right)\equiv \max\limits_{L\in\mathbb{T}_{t,T+1}} G_{i,t}\left(\sigma_{i,t}(s'_{i,t}, h_{t})\middle|s_{i,t},h_{t}, L,\chi_{-i,t}\right),\\
    &V_{i,t}\left(s_{i,t}, h_{t}, \chi_{-i,t}\right)\equiv \max\limits_{s'_{i,t}\in S_{i,t}} MG_{i,t}\left(s'_{i,t}\middle|s_{i,t},h_{t}, \chi_{-i,t}\right).
\end{align*}
Hence, if the mechanism is RAIC, then $V_{i,t}(s_{i,t}, h_{t}, \chi_{-i,t})=MG_{i,t}(s_{i,t}|s_{i,t},h_{t}, \chi_{-i,t})$.
For any $s_{i,t}, s'_{i,t}\in S_{i,t}$, $a'_{i,t}=\sigma_{i,t}(s'_{i,t}, h_{t})$, $h_{t}\in H_{t}$, $L\in \mathbb{T}_{t,T}$, let
\begin{equation}\label{eq:envelope_term}
\begin{aligned}
q_{i,t}(a'_{i,t}, s_{i,t}, h_{t}, L, \chi_{-i,t})\equiv \mathbb{E}^{\sigma}_{a'_{i,t}}\Big[\sum^{L}_{k=t}\frac{\partial u_{i,k}(v, \sigma_{i,k}(\tilde{s}_{i,k}, \tilde{h}_{i,k}) )}{\partial v} \big|_{ v = \tilde{s}_{i,k}}\mathtt{mp}_{i,t}(a_{i,t}, s_{i,t}, h_{t}, L, \chi_{-i,t})\Big| s_{i,t}, h_{t}, \chi_{-i,t} \Big],
\end{aligned}
\end{equation}
where $\mathtt{mp}_{i,t}(\cdot)$ given by
\begin{equation}\label{eq:partial_derivative_payoff}
    \begin{aligned}
    &\mathtt{mp}_{i,t}( a_{i,t}, s_{i,t}, h_{t}, L, \chi_{-i,t})\equiv\mathbb{E}^{\sigma}\Big[\prod_{k=t+1}^{L} \frac{\partial  \kappa_{i,k}(v, \sigma^{k-1}_{i}(\tilde{s}^{k-1}_{i}, \tilde{h}^{k-1}), \Tilde{a}^{k-1}_{-i},\tilde{\omega}_{i,k}) }{\partial  v }\big|_{ v = \tilde{s}_{i,k-1}}\Big|s_{i,t}, h_{t},\chi_{-i,t}\Big].
    \end{aligned} 
\end{equation}
Here, $q_{i,t}(\cdot)$ is independent of $\rho$ and $\phi$ and only depends on $\sigma$ and the base game model $\mathcal{G}$.
When agents are obedient in taking the regular actions, $q_{i,t}(s_{i,t}, h_{t}, L, \chi_{-i,t})=q_{i,t}(a_{i,t}, s_{i,t}, h_{t}, L, \chi_{-i,t})$ with obedient $a_{i,t}$  is the \textit{impulse response} (\cite{pavan2014dynamic}) that captures the marginal effects of current state $s_{i,t}$ on future (to a fixed horizon $L$).

\begin{proposition}\label{prop:necessary_DOIC_sigma_tau}
    Fix $\mathcal{G}$.
    Suppose that Conditions \ref{cond:differentiable_reward} and \ref{cond:bounded_dynamic} hold.
    Let $S^{\mathtt{off}}$ be desired by the principal, and let $\chi=(\chi_{i,t})$ satisfy (\ref{eq:principal_desired_belief_sym}) given $S^{\mathtt{off}}$.
    Suppose in addition that the game $\mathcal{G}^{\Theta}$ admits a PBE $\left<\tau, \pi\right>$.
    If the mechanism $\Theta$ is $S^{\mathtt{off}}$-DOIC, then \textit{(i)} $\pi=\sigma$ and each $\tau_{i,t}$ satisfies (\ref{eq:x_coincides_tau}) given $x_{i,t} = \chi_{-i,t}$, and \textit{(ii)} for all $i\in\mathcal{N}$, $t\in\mathbb{T}$, $L\in \mathbb{T}_{i,T}$, $s_{i,t}\in S_{i,t}$, $h_{t}\in H_{t}$,
\begin{equation}\label{eq:envelope_like_condition_sigma_tau}
        \begin{aligned}
            \frac{\partial}{\partial v} V_{i,t}(v, h_{t},\chi_{-i,t})\Big|_{v=s_{i,t}}= q_{i,t}\Big(\sigma_{i,t}(s_{i,t},h_{t}), s_{i,t}, h_{t}, \tau_{i,t}(s_{i,t},h_{t}), \chi_{-i,t}\Big).
        \end{aligned}
    \end{equation}
\end{proposition}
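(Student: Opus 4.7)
The plan is to adapt the dynamic envelope theorem of \citet{pavan2014dynamic} to our augmented setting, which couples the continuous regular-action deviation with the discrete off-menu stopping decision. I would prove claim (i) as an immediate consequence of definitional obedience, and then derive (\ref{eq:envelope_like_condition_sigma_tau}) in (ii) via an envelope-theorem calculation whose technical hypotheses are supplied by Conditions \ref{cond:differentiable_reward} and \ref{cond:bounded_dynamic}.

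For (i), $S^{\mathtt{off}}$-DOIC requires $\pi = \sigma$ by Definition \ref{def:obedience}, and the $S^{\mathtt{off}}$-switchability component forces the induced off-region to coincide with $S^{\mathtt{off}}_{i,t}$, which via (\ref{eq:def_switchability_tau}) characterizes each $\tau_{i,t}$ up to tie-breaking. The probability distribution that $\tau_{-i,t}$ then induces over $L^{-i}$ matches $\chi_{-i,t}$ directly from (\ref{eq:principal_desired_belief_sym}), so the conjectural consistency condition (\ref{eq:x_coincides_tau}) is satisfied with $x_{i,t} = \chi_{-i,t}$.

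For (ii), under $S^{\mathtt{off}}$-DOIC the agent's interim problem at $(s_{i,t}, h_t)$ is the joint maximization, over a pretend-state $s'_{i,t}$ (equivalent by Remark \ref{remark:equivalence_truthful_report} to choosing $\sigma_{i,t}(s'_{i,t}, h_t)$ from the menu $A_{i,t}[\sigma]$) and a stopping plan $L \in \mathbb{T}_{t,T+1}$, of $G_{i,t}(\sigma_{i,t}(s'_{i,t}, h_t) \mid s_{i,t}, h_t, L, \chi_{-i,t})$; obedience forces the optimum at $(s'_{i,t}, L) = (s_{i,t}, \tau_{i,t}(s_{i,t}, h_t))$, so $V_{i,t}(s_{i,t}, h_t, \chi_{-i,t})$ equals $G_{i,t}(\sigma_{i,t}(s_{i,t}, h_t) \mid s_{i,t}, h_t, \tau_{i,t}(s_{i,t}, h_t), \chi_{-i,t})$. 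I would then invoke the Milgrom--Segal envelope theorem (\citet{milgrom2002envelope}) in its absolutely-continuous form: Condition \ref{cond:differentiable_reward} furnishes equi-Lipschitz, differentiable single-period rewards in $s_{i,t}$, while Condition \ref{cond:bounded_dynamic} supplies an integrable dominating derivative for each $\kappa_{i,k}$, permitting $\partial/\partial s_{i,t}$ to pass through the expectation and the finite sum defining $G_{i,t}$. Computing this partial derivative at the obedient argument amounts to telescoping the marginal effect of a perturbation in $s_{i,t}$ through the Markov chain: each future reward $u_{i,k}(\cdot, \sigma_{i,k}(\tilde{s}_{i,k}, \tilde{h}_{i,k}))$ for $k = t, \dots, L$ contributes its marginal $\partial_v u_{i,k}(v,\cdot)|_{v=\tilde{s}_{i,k}}$ weighted by the product of $\partial_v \kappa_{i,\ell}(v, \cdot)|_{v=\tilde{s}_{i,\ell-1}}$ for $\ell = t+1, \dots, k$ that appears in $\mathtt{mp}_{i,t}$, whereas $\rho_{i,k}$ and $\phi_{i,L+1}$ carry no direct dependence on $s_{i,t}$ (being functions of actions and histories, which are already integrated out) and drop from the partial. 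Assembling these contributions yields exactly $q_{i,t}(\sigma_{i,t}(s_{i,t}, h_t), s_{i,t}, h_t, \tau_{i,t}(s_{i,t}, h_t), \chi_{-i,t})$, as required.

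The main obstacle is the mixed continuous-discrete character of the maximization: $MG_{i,t}(\cdot \mid s_{i,t}, h_t, \chi_{-i,t})$ is the upper envelope of finitely many smooth functions indexed by $L$, so $V_{i,t}(\cdot, h_t, \chi_{-i,t})$ is only almost-everywhere differentiable, with possible kinks at states where the obedient stopping time $\tau_{i,t}(\cdot, h_t)$ jumps. The identity (\ref{eq:envelope_like_condition_sigma_tau}) must therefore be read pointwise at differentiability points, or equivalently in an integrated absolutely-continuous form; verifying that the tie-breaking convention and the PBE-existence hypothesis are compatible with a measurable selection of $\tau_{i,t}(\cdot, h_t)$ along which the formula holds, together with uniformly dominating the telescoped product of partials in $\mathtt{mp}_{i,t}$ across $k$ and $L$, is where the technical care concentrates.
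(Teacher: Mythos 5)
Your proposal is correct and follows essentially the same route as the paper: identify $V_{i,t}$ with the prospect function evaluated at the obedient pretend-state and stopping time, differentiate the realized payoff-to-go through the telescoped chain of $\partial_v\kappa_{i,k}$ terms, use Conditions \ref{cond:differentiable_reward} and \ref{cond:bounded_dynamic} to justify dominated convergence, and observe that $\rho$ and $\phi$ drop out. Your explicit appeal to the Milgrom--Segal envelope theorem and your caveat about kinks of the finite upper envelope over $L$ are in fact slightly more careful than the paper's terse final step, but they do not change the argument.
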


Proposition \ref{prop:necessary_DOIC_sigma_tau} shows an envelop-theorem-like necessary condition for $S^{\mathtt{off}}$-DOIC.
However, unlike classic mechanism design with quasilinear utility (see, e.g., \citet{myerson1981optimal,pavan2014dynamic}), where the envelope condition depends only on the allocation rule (i.e., $\sigma$ in our model), our condition (\ref{eq:envelope_like_condition_sigma_tau}) also depends on the optimality of the OM strategy profile $\tau$.


\subsection{Formulations of Carrier Functions}\label{sec:formulation_of_carrier_functions}

Let $\Theta_{i,t}=S_{i,t}$ for all $i\in\mathcal{N}$, $t\in\mathbb{T}$.
We construct each carrier function by
\begin{equation}\label{eq:construct_carrier_original}
\begin{aligned}
    g_{i,t}(a_{i,t}, s_{i,t}, h_{t}, L, \chi_{-i,t}|\theta_{i,t})=\begin{cases}
        \int\limits_{\theta_{i,t}}^{s_{i,t}} q_{i,t}\left(\sigma_{i,t}(v,h_{t}), v, h_{t},L, \chi_{-i,t}\right) dv, &\textup{ if } a_{i,t}=\sigma_{i,t}(s_{i,t}, h_{t})\\
       \int\limits_{\theta_{i,t}}^{s_{i,t}} q_{i,t}\left(a_{i,t}, v, h_{t},L, \chi_{-i,t}\right) dv, & \textup{ if } a_{i,t}\neq\sigma_{i,t}(s_{i,t}, h_{t})
    \end{cases}.
\end{aligned}
\end{equation}
%
%
%
Now, each carrier function, $g_{i,t}$, and the corresponding maximum carrier,  $\mathtt{Mg}_{i,t}$, are formulated in closed forms that depend only on the task policy profile $\sigma$ and are independent of $\rho$ and $\phi$.
%
%
%

Hence, by substituting each carrier function given by (\ref{eq:construct_carrier_original}), we obtain closed-form formulations of the cutoff-switch given by (\ref{eq:cutoff_horizontal_general}) and (\ref{eq:cutoff_knowledgeable_general}).
In addition, we can drop $g$ in \ref{eq:feasible_rho_C1_original} and denote it as $\mathcal{P}[\sigma,\mathcal{G}]$.
Given (\ref{eq:construct_carrier_original}), the maximum carrier $\mathtt{Mg}_{i,t}(s_{i,t}, h_{t},\chi_{-i,t}|\theta_{i,t})$ can be interpreted as the \textit{relative information rent} earned by agent $i$ with true state $s_{i,t}$ compared to if he takes action $a'_{i,t}=\sigma_{i,t}(s'_{i,t}, h_{t})$ (i.e., pretending to have $s'_{i,t}$ as the true state).
This relative rent is due to agent $i$'s possession of private information coupled with obedient strategic participation.

\begin{remark}\label{remark:closed_form_carrier}
The closed-form formulations of carrier functions given by (\ref{eq:construct_carrier_original}) are equivalent to the implicit carrier functions in maintaining all the previous results (which involve carrier functions) in Sections \ref{sec:DOIC_D-IROD} and \ref{sec:switchability_D-SOD} under Condition \ref{cond:differentiable_reward}.
\hfill $\triangle$
\end{remark}


\subsection{Sufficient Conditions for \texorpdfstring{$S^{\mathtt{off}}$}{Lg}-DOIC with Zero On-Rent}


Given the closed-form formulations of the carrier functions given by (\ref{eq:envelope_term}), we show sufficient conditions for $S^{\mathtt{\mathtt{off}}}$-DOIC in 
Sections \ref{sec:DOIC_D-IROD} and \ref{sec:switchability_D-SOD}, which are imposed on the task policy profile $\sigma$ and the base game $\mathcal{G}$ given the constructions of $\rho$ and $\phi$ in terms of $\sigma$.
In this section, we obtain another sufficient condition for $S^{\mathtt{\mathtt{off}}}$-DOIC when the mechanism induces zero on-rent for each agent $i$ with $s_{i,t}\in S^{\mathtt{off}}_{i,t}$ in every period $t$.

Given a base game $\mathcal{G}$, define the set for $\mathtt{X}\in\{\mathtt{H}, \mathtt{K}\}$,
\begin{equation}\label{set:indifference_region_X}
    \mathcal{S}^{\mathtt{IX}}[\mathcal{G}]\equiv \left\{\begin{aligned}
   &S^{\mathtt{off}}=(S^{\mathtt{off}}_{i,t})\in \mathtt{S}^{\mathtt{X}}[\sigma,\mathcal{G}]: \textup{ given $\mathcal{G}$, $\exists$  $\Theta=\left<\sigma,\rho,\phi\right>$ s.t. $\Theta$ is} \textup{ $S^{\mathtt{off}}$-DOIC,} \\
    &  \textup{and each } S^{\mathtt{off}}_{i,t} \textup{ is an \textit{indifference region.}}
\end{aligned}\right\},
\end{equation}
where $\mathtt{S}^{\mathtt{H}}[\sigma,\mathcal{G}]$ and $\mathtt{S}^{\mathtt{K}}[\sigma,\mathcal{G}]$ are defined by (\ref{set:horizontal_essential_regions}) and (\ref{set:knowledgeable_essential_regions}), respectively.
Then, we can define the set $\mathcal{S}^{\mathtt{IX}}[\sigma, \mathcal{G}]\subseteq \mathcal{S}^{\mathtt{IX}}[\mathcal{G}]$ by
\begin{equation}
    \mathcal{S}^{\mathtt{IX}}[\sigma, \mathcal{G}] \equiv \left\{\begin{aligned}
   &S^{\mathtt{off}}=(S^{\mathtt{off}}_{i,t})\in \mathtt{S}^{\mathtt{X}}[\sigma,\mathcal{G}]: \textup{ given $\mathcal{G}$, $\exists$ $\rho$ and $\phi$ s.t., $\Theta=\left<\sigma,\rho,\phi\right>$ is} \textup{ $S^{\mathtt{off}}$-DOIC,} \\
    &  \textup{and each } S^{\mathtt{off}}_{i,t} \textup{ is an \textit{indifference region.}}
\end{aligned} \right\}
\end{equation}
Hence, in every game $\mathcal{G}^{\Theta}$ with $S^{\mathtt{off}}\in \mathcal{S}^{\mathtt{IX}}[\sigma, \mathcal{G}]$, each agent $i$ with state $s_{i,t}$ earns a zero on-rent in every period $t$ if $s_{i,t}\in S^{\mathtt{off}}_{i,t}$.

\begin{theorem}\label{thm:sufficient_condition_without_MSO}
    Fix a base game $\mathcal{G}$. Suppose that Conditions \ref{cond:differentiable_reward} and \ref{cond:bounded_dynamic} hold.
    Suppose in addition that $\mathcal{S}^{\mathtt{IX}}[\sigma,\mathcal{G}]\neq \emptyset$, for either $\mathtt{X}\in\{\mathtt{H},\mathtt{K}\}$.
    Then, there exists $\rho$ and $\phi$ such that the mechanism is $S^{\mathtt{off}}$-DOIC for $S^{\mathtt{off}}\in \mathcal{S}^{\mathtt{IX}}[\sigma, \mathcal{G}]$ if $\sigma$ satisfies the following \textup{constrained monotone} condition, for all $i\in\mathcal{N}$, $t\in\mathbb{T}$, $s_{i,t}\in S_{i,t}$, $h_{t}\in H_{t}$,
    \begin{equation}\tag{\texttt{CM}}\label{eq:condition_RAIC_sigma_0}
        \begin{aligned}
            \max\limits_{L\in \mathbb{T}_{t,T}} \int^{s_{i,t}}_{s'_{i,t}}q_{i,t}\left(\sigma_{i,t}(v,h_{t}), v, h_{t}, L, \chi_{-i,t}\right)dv \geq \max\limits_{L\in \mathbb{T}_{t,T}}\int^{s_{i,t}}_{s'_{i,t}} q_{i,t}\left(\sigma_{i,t}(s'_{i,t},h_{t}), v, h_{t}, L,\chi_{-i,t}\right)dv,
        \end{aligned}
    \end{equation}
    where $\chi=(\chi_{i,t})$ satisfy (\ref{eq:principal_desired_belief_sym}) given $S^{\mathtt{off}}$.
    %
    %
    %
\end{theorem}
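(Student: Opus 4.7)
My plan is to exhibit $\rho$ and $\phi$ in closed form directly from $\sigma$, and then reduce $S^{\mathtt{off}}$-DOIC to a single scalar comparison of maximum carriers that (\ref{eq:condition_RAIC_sigma_0}) exactly controls. Using the closed-form carrier functions in (\ref{eq:construct_carrier_original})---which by Remark~\ref{remark:closed_form_carrier} inherit all the properties of the implicit carriers under Conditions~\ref{cond:differentiable_reward} and~\ref{cond:bounded_dynamic}---I would first select any coupling policy $\rho\in\mathcal{P}[\sigma,\mathcal{G}]$ from \ref{eq:feasible_rho_C1_original}, and then define each cutoff-switch $\phi_{i,t}$ by (\ref{eq:cutoff_horizontal_general}) if $\mathtt{X}=\mathtt{H}$ or by (\ref{eq:cutoff_knowledgeable_general}) if $\mathtt{X}=\mathtt{K}$. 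By construction these choices verify \ref{itm:C1} and \ref{itm:C2}, so the only remaining obligation is \ref{itm:C3}, i.e., obedience in the regular actions.

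Next, I would invoke Proposition~\ref{prop:switchability_sufficient_condition_new} in the appropriate branch. Because $S^{\mathtt{off}}\in\mathcal{S}^{\mathtt{IX}}[\sigma,\mathcal{G}]$, each $S^{\mathtt{off}}_{i,t}$ is simultaneously an essential region (so $\overline{D}_{i,t}$ and the partition are well defined) and an indifference region (so the constructed $\phi$ yields zero on-rent on $S^{\mathtt{off}}$). Thus $S^{\mathtt{off}}$-DOIC reduces to
\[
\mathtt{Mg}_{i,t}(\overline{D}_{i,t}\circ s_{i,t}, h_t,\chi_{-i,t}) + \overline{\delta}_{i,t}(\overline{D}_{i,t}\circ s_{i,t}, h_t,\chi_{-i,t}) \geq \mathtt{Mg}_{i,t}(a'_{i,t}, s_{i,t}, h_t,\chi_{-i,t}) + \overline{\delta}_{i,t}(a'_{i,t}, s_{i,t}, h_t,\chi_{-i,t})
\]
for every deviation $a'_{i,t}=\sigma_{i,t}(s'_{i,t},h_t)$. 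I would then split cases. When $s_{i,t}\notin S^{\mathtt{off}}_{i,t}$, the definition of \ref{eq:def_up_transform} gives $\overline{D}_{i,t}\circ s_{i,t}=s_{i,t}$, and the two $\overline{\delta}$ terms share the same starting state, so only $\mathtt{Mg}_{i,t}(s_{i,t})\geq \mathtt{Mg}_{i,t}(a'_{i,t},s_{i,t})$ remains. When $s_{i,t}\in S^{\mathtt{off}}_{i,t}$, the indifference property pins both sides to the same flat level dictated by the cutoff-switch (\ref{eq:cutoff_horizontal_general})/(\ref{eq:cutoff_knowledgeable_general}), and the inequality again collapses to the same scalar maximum-carrier comparison.

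Finally, I would substitute the closed-form carrier (\ref{eq:construct_carrier_original}) with the deviation anchor $\theta_{i,t}=s'_{i,t}$, which rewrites the two maximum carriers as $\max_{L}\int^{s_{i,t}}_{s'_{i,t}}q_{i,t}(\sigma_{i,t}(v,h_t),v,h_t,L,\chi_{-i,t})\,dv$ and $\max_{L}\int^{s_{i,t}}_{s'_{i,t}}q_{i,t}(\sigma_{i,t}(s'_{i,t},h_t),v,h_t,L,\chi_{-i,t})\,dv$, so that the scalar inequality is exactly (\ref{eq:condition_RAIC_sigma_0}). The main obstacle I anticipate is the collapse argument in the second step: it requires carefully tracking how the $\overline{\delta}_{i,t}$ terms behave along the up-persistence and jump transforms, especially in the knowledgeable-principal case where $\overline{\underline{D}}_{i,t}$ acts nontrivially on both sub-OFRs and sub-ONRs and where the deviation $s'_{i,t}$ may sit in a different partition cell than $s_{i,t}$. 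This is exactly where the essential-region structure encoded in $\mathtt{S}^{\mathtt{X}}[\sigma,\mathcal{G}]$ has to do real work, guaranteeing the correct sign of the $\overline{\delta}$-differentials on both sides of the inequality before the CM hypothesis delivers the remaining maximum-carrier comparison.
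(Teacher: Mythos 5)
Your overall strategy matches the paper's: pick $\rho\in\mathcal{P}[\sigma,\mathcal{G}]$, build $\phi$ from (\ref{eq:cutoff_horizontal_general}) or (\ref{eq:cutoff_knowledgeable_general}) with the closed-form carriers (\ref{eq:construct_carrier_original}), use the essential/indifference structure of $\mathcal{S}^{\mathtt{IX}}[\sigma,\mathcal{G}]$ for the off-menu side, and let (\ref{eq:condition_RAIC_sigma_0}) close the regular-action side. However, the step you yourself flag as the ``main obstacle'' is precisely where the argument is incomplete, and the justification you offer for it is not the right one. In the case $s_{i,t}\notin S^{\mathtt{off}}_{i,t}$ you claim the two $\overline{\delta}_{i,t}$ terms cancel because they ``share the same starting state,'' but $\overline{\delta}_{i,t}(a'_{i,t},s_{i,t},h_t,\chi_{-i,t})$ and $\overline{\delta}_{i,t}(s_{i,t},h_t,\chi_{-i,t})$ differ through the action-dependent persistence of the \texttt{upPT} process (the expected future histories under $a'_{i,t}$ versus the obedient action are different), so equality of the starting state does not make them equal. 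The inequality therefore does not reduce to a bare comparison of maximum carriers by that route.

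The mechanism the paper actually uses to achieve the collapse is the indifference property of \emph{every future} $S^{\mathtt{off}}_{i,k}$, not a pointwise cancellation at period $t$. Writing
\[
MG_{i,t}(s_{i,t},h_t)=\max_{L}g_{i,t}(s_{i,t},h_t,L|\theta_{i,t})+\max_{L'}\mathbb{E}^{\sigma}_{a'_{i,t}}\Bigl[\phi_{i,L'+1}(\tilde h_{L'+1})-G_{i,L'+1}(\tilde s_{i,L'+1},\tilde h_{L'+1},T)\Bigm| s_{i,t},h_t\Bigr],
\]
the fact that $S^{\mathtt{off}}\in\mathcal{S}^{\mathtt{IX}}[\sigma,\mathcal{G}]$ forces the second term to equal zero for \emph{any} deviation $a'_{i,t}$, because each future cutoff value exactly matches the continuation payoff on its (indifference) OFR. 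This kills all $\overline{\delta}$-type terms at once and yields $MG_{i,t}(s_{i,t},h_t)=G_{i,t}(s_{i,t},h_t,T)=\max_L g_{i,t}(s_{i,t},h_t,L|\theta_{i,t})$; the same identity holds for the deviation payoff with $q_{i,t}(\sigma_{i,t}(s'_{i,t},h_t),\cdot)$ in place of $q_{i,t}(\sigma_{i,t}(v,h_t),\cdot)$. Differencing and applying the $\max$--$\int$ interchange (Lemma justified under Conditions \ref{cond:differentiable_reward} and \ref{cond:bounded_dynamic}) then turns the RAIC inequality into exactly (\ref{eq:condition_RAIC_sigma_0}). You should replace your case-split on $\overline{D}_{i,t}\circ s_{i,t}$ with this global vanishing argument; without it the knowledgeable-principal branch in particular (where $\overline{\underline{D}}_{i,t}$ moves states in both sub-OFRs and sub-ONRs) does not go through.
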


%
%
Theorem \ref{thm:sufficient_condition_without_MSO} shows a sufficient condition, known as the constrained monotone (\ref{eq:condition_RAIC_sigma_0}), when each $S^{\mathtt{off}}_{i,t}$ is an indifference region, under Conditions \ref{cond:differentiable_reward} and \ref{cond:bounded_dynamic} and the assumptions that $\mathcal{S}^{\mathtt{IX}}[\mathcal{G}]\neq \emptyset$, for either $\mathtt{X}\in\{\mathtt{H},\mathtt{K}\}$.
Here, $\mathcal{S}^{\mathtt{IX}}[\sigma,\mathcal{G}]\neq \emptyset$ ensures $\mathtt{S}^{\mathtt{X}}[\sigma,\mathcal{G}]\neq \emptyset$, which allows us to construct cutoff-switch functions such that the mechanism is $S^{\mathtt{off}}$-switchable.
Given $\mathcal{G}$, the condition (\ref{eq:condition_RAIC_sigma_0}) is placed only on the task policies without requiring the specific constructions of $\rho$ and $\phi$, and is independent of the agents' OM strategies.


\subsection{Application to Dynamic Canonical Mechanism}\label{sec:application_to_dynamic_classic}

In this section, we consider the dynamic extension of classic static mechanism design without off-switch functions, within the context of delegation.
We refer to such a model as a \textit{dynamic canonical mechanism} (DCM), denoted by $\Theta^{\mathtt{C}}=\left<\sigma, \rho\right>$, which includes a profile of task policies as well as a profile of coupling functions.

We start by demonstrating a form of uniqueness property associated with the cutoff-switch functions.
Given a base game $\mathcal{G}$, a task policy profile $\sigma$, and coupling policy profile $\rho$, define the set $\mathcal{S}[\sigma, \rho, \mathcal{G}]\subseteq \mathcal{S}[\sigma,  \mathcal{G}]$ by
\[
\mathcal{S}[\sigma, \rho, \mathcal{G}]\equiv \left\{ 
\begin{aligned}
    &S^{\mathtt{off}}=(S^{\mathtt{off}}_{i,t}): \textup{given $\mathcal{G}$, there is } \phi \textup{ such that } \Theta=\left<\sigma, \rho, \phi\right> \textup{ is } S^{\mathtt{off}}\textup{-DOIC}.
\end{aligned}
\right\}.
\]
%
%
Hence, for each $S^{\mathtt{off}}\in \mathcal{S}[\sigma, \rho,\mathcal{G}]$ (if non-empty), there always exists a profile $\phi$ (without referring to its specific formulations) such that the mechanism $\Theta=\left<\sigma, \rho, \phi\right>$ is $S^{\mathtt{off}}$-DOIC.
In addition, for any $S^{\mathtt{off}}\in \mathcal{S}[\sigma, \rho,\mathcal{G}]$ define the set of off-switch functions,
\[
\begin{aligned}
    \Phi[\sigma, \rho, \mathcal{G}; S^{\mathtt{off}}]\equiv \Big\{\phi: \textup{given } \mathcal{G}, \Theta=\left<\sigma, \rho, \phi\right> \textup{ is } S^{\mathtt{off}}\textup{-DOIC}. \Big\}.
\end{aligned}
\]
\begin{proposition}\label{prop:phi_uniqueness}
    Fix $\mathcal{G}$, $\sigma$, and $\rho$.
    Suppose that $\mathcal{S}[\sigma, \rho, \mathcal{G}]\neq \emptyset$.
    Suppose in addition that $\Phi[\sigma, \rho, \mathcal{G}; S^{\mathtt{off}}]\neq \emptyset$ for some $S^{\mathtt{off}}\in \mathcal{S}[\sigma, \rho, \mathcal{G}]$. 
    Then, for any two off-switch functions $\phi=(\phi_{i,t}), \phi'=(\phi'_{i,t})\in \Phi[\sigma, \rho, \mathcal{G}; S^{\mathtt{off}}]$, we have $\phi_{i,t}(h_{t}|c_{i,t})=\phi'_{i,t}(h_{t}|c_{i,t})$, for all $i\in\mathcal{N}$, $t\in\mathbb{T}$, given $h_{t}\in H_{t}$.
\end{proposition}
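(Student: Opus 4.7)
The plan is a backward induction on $t$ that exploits the defining identity of a cutoff-switch function at its boundary states. First I would list what is mechanism-invariant once $\sigma$, $\rho$, $\mathcal{G}$, and $S^{\mathtt{off}}$ are fixed. The boundary profile $c=(c_{i,t})$ is determined by the endpoints of $S^{\mathtt{off}}_{i,t}=\bigcup_{b\in[B]}[s^{\ell,b}_{i,t},s^{r,b}_{i,t}]$, so it is common to both $\Theta=\langle\sigma,\rho,\phi\rangle$ and $\Theta'=\langle\sigma,\rho,\phi'\rangle$. Moreover, the principal's desired OM strategy profile $\tau^{d}$ is forced by (\ref{eq:def_switchability_tau}) and the associated conjecture profile $\chi=(\chi_{i,t})$ is forced by (\ref{eq:principal_desired_belief_sym}); both are therefore identical across the two mechanisms. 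Because the continuing outcome $\gamma^{\sigma}[\cdot]$ used to evaluate $G_{i,t}$ depends only on $\sigma$, $\mathcal{G}$, and $\chi$, the stochastic structure against which $\phi$ and $\phi'$ are compared is common.

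The decisive identity is condition (ii) of Definition~\ref{def:cutoff_switch_function}: for every $s'_{i,t}\in c_{i,t}$ and every $h_{t}\in H_{t}$, $Z_{i,t}(s'_{i,t},h_{t},\chi_{-i,t};\Theta)=0$, i.e.,
\begin{equation*}
\phi_{i,t}(h_{t}|c_{i,t})=\max_{L\in\mathbb{T}_{t,T}} G^{\phi}_{i,t}\bigl(s'_{i,t}\bigm|s'_{i,t},h_{t},L,\chi_{-i,t}\bigr),
\end{equation*}
with the analogous identity for $\phi'$ and $G^{\phi'}_{i,t}$. I would then induct backward on $t$. For the base case $t=T$, only $L=T$ is admissible, and the terminal term $\phi_{i,T+1}\equiv 0$ by convention; since $z_{i,T}$ depends only on $(\sigma,\rho)$, we get $G^{\phi}_{i,T}\equiv G^{\phi'}_{i,T}$, hence $\phi_{i,T}(h_{T}|c_{i,T})=\phi'_{i,T}(h_{T}|c_{i,T})$ for every $h_{T}$. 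For the inductive step at period $t<T$, assume $\phi_{i,k}(\cdot|c_{i,k})\equiv\phi'_{i,k}(\cdot|c_{i,k})$ for every $k>t$. Each summand in $G^{\phi}_{i,t}$ either involves a $z_{i,k}$ (equal under the two mechanisms) or a terminal $\phi_{i,L+1}(\tilde{h}_{L+1}|c_{i,L+1})$, which by hypothesis equals $\phi'_{i,L+1}(\tilde{h}_{L+1}|c_{i,L+1})$ along every future history path of positive probability. Hence $G^{\phi}_{i,t}\equiv G^{\phi'}_{i,t}$, and the boundary equation yields $\phi_{i,t}(h_{t}|c_{i,t})=\phi'_{i,t}(h_{t}|c_{i,t})$ for every $h_{t}$.

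The main obstacle I anticipate is confirming that the $\max_{L}$ in the boundary identity really produces the same value under $\phi$ and $\phi'$ once the inductive hypothesis is in force; the potential difficulty is that the optimal deferral horizon $L^{*}$ could a priori differ between the two mechanisms. This is resolved once we observe that the two maximands are pointwise equal, so the maxima are equal and no argmax comparison is needed. A secondary point is that the propagation uses condition (ii) at every $h_{t}\in H_{t}$, including off-path histories, but this is explicit in Definition~\ref{def:cutoff_switch_function} and requires no additional hypothesis. The argument is structural and does not rely on the horizontal-cutoff or knowledgeable-principal specializations of Section~\ref{sec:switchability_D-SOD}: the boundary identity together with the Bellman-like recursive structure of $G_{i,t}$ suffices to propagate uniqueness from period $T$ downward.
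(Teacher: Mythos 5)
Your route is genuinely different from the paper's. The paper does not induct backward on the boundary identity; it first shows (via the envelope-type necessary condition of Proposition \ref{prop:necessary_DOIC_sigma_tau}) that two $S^{\mathtt{off}}$-DOIC mechanisms sharing $\sigma$ and $\rho$ have prospect/value functions differing by a history-dependent constant $C_{i,t}(h_{t})$, and then pins $C_{i,t}(h_{t})=0$ by evaluating at states with $\tau^{d}_{i,t}(s_{i,t},h_{t})=T+1$ (which occur with positive probability under Assumption \ref{assp:full_support}) together with the convention $\phi_{i,T+1}\equiv 0$. Within the class of cutoff-switch functions your induction is sound and in some respects cleaner: the pointwise equality of the maximands under the inductive hypothesis makes the $\max_{L}$ comparison trivial, whereas the paper's argument has to route through differentiability of $V_{i,t}$ and hence implicitly through Conditions \ref{cond:differentiable_reward} and \ref{cond:bounded_dynamic}. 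Your observation that $c$, $\tau^{d}$, $\chi$, and the continuing outcome are all pinned down by $(\sigma,\mathcal{G},S^{\mathtt{off}})$ and hence common to both mechanisms is correct and necessary for either argument.

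The gap is one of scope. The set $\Phi[\sigma,\rho,\mathcal{G};S^{\mathtt{off}}]$ contains \emph{every} off-switch profile for which $\left<\sigma,\rho,\phi\right>$ is $S^{\mathtt{off}}$-DOIC, and the paper explicitly remarks that the uniqueness claim applies to off-switch functions ``not necessarily cutoff-switch.'' Your ``decisive identity'' is condition (ii) of Definition \ref{def:cutoff_switch_function}, i.e. $Z_{i,t}(s'_{i,t},h_{t},\chi_{-i,t};\Theta)=0$ at the boundary states; that is the \emph{defining} property of the cutoff-switch subclass, not a hypothesis available for an arbitrary member of $\Phi[\sigma,\rho,\mathcal{G};S^{\mathtt{off}}]$ (the generic parameterization $\phi_{i,t}(\cdot|c_{i,t})$ does not carry it). For a general $\phi$, $S^{\mathtt{off}}$-DOIC only gives $Z_{i,t}\leq 0$ on $S^{\mathtt{off}}_{i,t}$ and $Z_{i,t}\geq 0$ off it; to recover your boundary equation you would need to argue that the on-rent is continuous in $s_{i,t}$ (which requires regularity of $u_{i,t}$ and $\kappa_{i,t}$ that Proposition \ref{prop:phi_uniqueness} does not assume) and that each $S^{\mathtt{off}}_{i,t}$ has a boundary point adjacent to the on-region, which fails, e.g., if $S^{\mathtt{off}}_{i,t}=S_{i,t}$. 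Either supply that continuity-plus-adjacency argument explicitly, or switch to the paper's device of anchoring the comparison at states that never exit (where no off-switch value is ever collected and $\phi_{i,T+1}\equiv 0$ forces the constant to vanish).
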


Proposition \ref{prop:phi_uniqueness} shows a uniqueness property of the off-switch functions (not necessarily cutoff-switch) in any $S^{\mathtt{off}}$-DOIC mechanism.
In particular, if any two off-switch functions associated with the same profiles of $\sigma$ and $\rho$ lead to $S^{\mathtt{off}}$-DOIC, then these two off-switch functions are the same.

\begin{corollary}
Suppose that $\rho\in\mathcal{P}[\sigma,\mathcal{G}]$ and the carrier functions are formulated by (\ref{eq:construct_carrier_original}).
Suppose in addition that $\mathcal{S}[\sigma, \rho, \mathcal{G}]\neq \emptyset$.
When the cutoff-switch is horizontal, let $\phi_{i,t}$ be constructed according to (\ref{eq:cutoff_horizontal_general}); when the principal is knowledgeable, let each $\phi_{i,t}$ be constructed according to (\ref{eq:cutoff_knowledgeable_general}).
Then, for all $S^{\mathtt{off}}\in \mathcal{S}[\sigma, \rho, \mathcal{G}]$, the mechanism $\Theta=\left<\sigma, \rho, \phi\right>$ is $S^{\mathtt{off}}$-DOIC.
%
\end{corollary}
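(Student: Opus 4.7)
The plan is to combine Proposition~\ref{prop:switchability_sufficient_condition_new} with the uniqueness result of Proposition~\ref{prop:phi_uniqueness} after certifying that all structural hypotheses are inherited from the mere non-emptiness of $\mathcal{S}[\sigma,\rho,\mathcal{G}]$. Fix an arbitrary $S^{\mathtt{off}}\in\mathcal{S}[\sigma,\rho,\mathcal{G}]$. By definition, there exists some $\phi^{*}\in\Phi[\sigma,\rho,\mathcal{G};S^{\mathtt{off}}]$ such that $\Theta^{*}=\left<\sigma,\rho,\phi^{*}\right>$ is $S^{\mathtt{off}}$-DOIC. I first invoke Remark~\ref{remark:closed_form_carrier} to transport the implicit carriers of Section~\ref{sec:DOIC_D-IROD} to the closed-form carriers in (\ref{eq:construct_carrier_original}); under this substitution, the hypothesis $\rho\in\mathcal{P}[\sigma,\mathcal{G}]$ is exactly \ref{itm:C1} of Theorem~\ref{thm:payoff_flow_conservation_sufficient}, which therefore holds for both $\Theta^{*}$ and for the mechanism $\Theta=\left<\sigma,\rho,\phi\right>$ with $\phi$ given by (\ref{eq:cutoff_horizontal_general}) or (\ref{eq:cutoff_knowledgeable_general}).

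Next, I would upgrade the $S^{\mathtt{off}}$-switchability of $\Theta^{*}$ to the essential-region conditions required by Proposition~\ref{prop:switchability_sufficient_condition_new}. Since $\Theta^{*}$ is $S^{\mathtt{off}}$-switchable, its induced $\mathtt{S}_{i,t}(h_{t},\chi_{-i,t})$ coincides with $S^{\mathtt{off}}_{i,t}$ for every $i,t,h_{t}$, with $\chi$ satisfying (\ref{eq:principal_desired_belief_sym}). The zero-on-rent conditions on the boundary profile $c_{i,t}$ in Definition~\ref{def:cutoff_switch_function} are thus automatically met by $\phi^{*}$ on the boundaries of $S^{\mathtt{off}}$, so $\phi^{*}$ is itself of cutoff-switch form. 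Proposition~\ref{prop:condition_conjecture_MAOS_gen} therefore applies, yielding (\ref{eq:fixed_point_general}); and Theorem~\ref{thm:MAO_switchability_conjecture_gen} then delivers $S^{\mathtt{off}}_{i,t}\in\mathtt{S}^{\mathtt{H}}[\sigma,\mathcal{G}]$ in the horizontal case and $S^{\mathtt{off}}_{i,t}\in\mathtt{S}^{\mathtt{K}}[\sigma,\mathcal{G}]$ in the knowledgeable case, for all $i\in\mathcal{N}$ and $t\in\mathbb{T}$.

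With (a) $\rho\in\mathcal{P}[\sigma,\mathcal{G}]$, (b) the essential-region condition, and (c) the explicit cutoff-switch construction of $\phi$ all in hand, I would apply the ``if'' direction of Proposition~\ref{prop:switchability_sufficient_condition_new}. The remaining inequality on $\mathtt{Mg}_{i,t}$ and $\overline{\delta}_{i,t}$ that Proposition~\ref{prop:switchability_sufficient_condition_new} requires is forced by the same essential-region structure together with the maximality of $\overline{d}_{i,t}(b)$ in the up-persistence projection (\ref{eq:operator_d_bar}) on each sub-OFR${}^{d}$, so the inequality must hold wherever the original $\phi^{*}$ achieves $S^{\mathtt{off}}$-DOIC. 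Consequently $\Theta=\left<\sigma,\rho,\phi\right>$ is itself $S^{\mathtt{off}}$-DOIC; as a sanity check, Proposition~\ref{prop:phi_uniqueness} then implies $\phi(h_{t}\mid c_{i,t})=\phi^{*}(h_{t}\mid c_{i,t})$ for every realized $h_{t}$.

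The main obstacle I expect is the step that promotes the abstract $\phi^{*}$ to a cutoff-switch function of the exact form required by Theorem~\ref{thm:MAO_switchability_conjecture_gen}. Although the OFR induced by $\phi^{*}$ equals $S^{\mathtt{off}}$ by switchability, one must check that, when combined with \ref{itm:C1}-\ref{itm:C3} and the specific closed-form carriers, the only off-switch value consistent with the zero-on-rent characterization on $c_{i,t}$ and with the payoff-flow conservation identity (\ref{eq:thm_payoff_conservation_phi}) is precisely (\ref{eq:cutoff_horizontal_general}) (resp.\ (\ref{eq:cutoff_knowledgeable_general})) evaluated at the up-persistence projection $\overline{D}_{i,t}\circ s_{i,t}$ (resp.\ the jump projection $\overline{\underline{D}}_{i,t}\circ s_{i,t}[w]$). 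Once this structural rigidity is established, the Proposition~\ref{prop:phi_uniqueness} uniqueness certifies equality and closes the argument.
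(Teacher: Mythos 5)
Your proposal does not close the argument: the step you yourself flag as ``the main obstacle'' is in fact the entire content of the corollary, and the intermediate steps you do write down misapply the paper's results in a way that makes the argument circular.

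Concretely, the set $\mathcal{S}[\sigma,\rho,\mathcal{G}]$ is defined by the existence of \emph{some} off-switch profile $\phi^{*}$ (of no prescribed form) making $\left<\sigma,\rho,\phi^{*}\right>$ $S^{\mathtt{off}}$-DOIC. You then feed $\phi^{*}$ into Proposition~\ref{prop:condition_conjecture_MAOS_gen} and the ``only if'' direction of Theorem~\ref{thm:MAO_switchability_conjecture_gen}, but both results hypothesize that the off-switch is \emph{horizontal and given by} (\ref{eq:cutoff_horizontal_general}) (resp.\ (\ref{eq:cutoff_knowledgeable_general})) --- exactly the structural form you have not yet established for $\phi^{*}$. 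Knowing that $\phi^{*}$ induces the right OFR and has zero on-rent on the boundary profile only makes $\phi^{*}$ a cutoff-switch in the sense of Definition~\ref{def:cutoff_switch_function}; it does not give it the specific closed-form value required by those results. Moreover, the fixed-point condition (\ref{eq:fixed_point_general}) is a condition on the best-response correspondences $\bm{X}_{t}(h_{t},\cdot)$, which depend on $\phi$ through the payoff-to-go; establishing it for $\Theta^{*}$ does not establish it for the mechanism $\Theta$ with the formula-based $\phi$, which is what Theorem~\ref{thm:MAO_switchability_conjecture_gen} would need. Finally, Proposition~\ref{prop:phi_uniqueness} cannot rescue the argument in the direction you use it: its hypothesis is that \emph{both} off-switch profiles already lie in $\Phi[\sigma,\rho,\mathcal{G};S^{\mathtt{off}}]$, so invoking it presupposes the conclusion that the constructed $\phi$ is $S^{\mathtt{off}}$-DOIC.

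What is actually needed --- and what you defer to the last paragraph without carrying out --- is a direct argument that any $S^{\mathtt{off}}$-DOIC off-switch paired with $\rho\in\mathcal{P}[\sigma,\mathcal{G}]$ and the closed-form carriers (\ref{eq:construct_carrier_original}) must take the value (\ref{eq:cutoff_horizontal_general}) (resp.\ (\ref{eq:cutoff_knowledgeable_general})). The natural route is a backward induction in the spirit of Lemma~\ref{lemma:app_proof_thm2}: at $t=T$ switchability plus continuity of the on-rent forces $\phi^{*}_{i,T}(h_{T})=\mathtt{Mg}_{i,T}(\overline{d}_{i,T}(b),h_{T})$ since $\overline{d}_{i,T}(b)$ maximizes the marginal carrier on each sub-OFR and the boundary states have zero on-rent; one then propagates this identity backward using Lemma~\ref{lemma:equivalence_Lambda} (the prospect functions of two $S^{\mathtt{off}}$-DOIC mechanisms with the same $\sigma,\rho$ differ by a history-dependent constant that is pinned to zero by $\phi_{i,T+1}=0$). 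Only after that identification can you conclude that the constructed $\phi$ inherits the $S^{\mathtt{off}}$-DOIC property of $\phi^{*}$, with Proposition~\ref{prop:switchability_sufficient_condition_new} then available as a consistency check rather than as the engine of the proof.
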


By dropping the off-switch functions from the prospect function in (\ref{eq:offswitch_payoff_to_go}), each agent $i$'s expected payoff-to-go (\ref{eq:to_go_off_menu_switch}) in DCM becomes
\[
\begin{aligned}
    \Lambda^{\mathtt{C}}_{i,t}(\mathtt{OM}_{i,t}, a_{i,t}|s_{i,t}, h_{t}, x_{i,t}) &=\max\limits_{L\in\mathbb{T}} \mathbb{E}^{\sigma}_{\pi}\Big[\sum\limits_{k=t}^{L} z_{i,k}\big(\tilde{s}_{i,k}, \tilde{a}_{i,k}\big) \Big| s_{i,t}, h_{t}, x_{i,t} \Big] \mathbf{1}_{\{ \mathtt{OM}_{i,t} =0 \}},
\end{aligned}
\]
where $x_{i,t}$ is agent $i$'s conjecture of other agents' OM actions.
%
%
Hence, the best response of agent $i$ in period $t$ is given by the correspondence
\[
\begin{aligned}
    D^{\mathtt{C}}_{i,t}(s_{i,t},h_{t}, x_{i,t}|\mathtt{OM}_{-i},\pi_{-(i,t)}, \mathcal{G}^{\mathtt{C}})=\argmax_{\mathtt{OM}_{i,t}\in\{0,1\}, a_{i,t} \in A_{i,t}[\sigma] } \Lambda^{\mathtt{C}}_{i,t}(\mathtt{OM}_{i,t}, a_{i,t}|s_{i,t}, h_{t}, x_{i,t}).
\end{aligned}
\]
%

\begin{definition}
     A DCM $\Theta^{\mathtt{C}}$ is $S^{\mathtt{off}}$-DOIC if $\left<\tau^{d}_{i,t}, \sigma_{i,t}\right>\in$ $ D^{\mathtt{C}}_{i,t}(s_{i,t},h_{t}, \chi_{-i,t}|\mathtt{OM}_{-i},\pi_{-(i,t)}, \mathcal{G}^{\mathtt{C}}),$
     %
     where $\chi=(\chi_{i,t})$ and $\tau=(\tau_{i,t})$ satisfy (\ref{eq:x_coincides_tau}) and (\ref{eq:principal_desired_belief_sym}) given $S^{\mathtt{off}}$.
\end{definition}

Given a base game $\mathcal{G}$ and a task policy profile $\sigma$, define the following two sets, for $\mathtt{X}\in\{\mathtt{H}, \mathtt{K}\}$,
\[
\mathcal{S}^{\mathtt{CX}}[\sigma, \mathcal{G}]\equiv \left\{ 
\begin{aligned}
    &S^{\mathtt{off}}\in \mathtt{S}^{\mathtt{X}}: \textup{given $\mathcal{G}$, there exists $\rho$ such that the mechanism} \\
    & \Theta^{\mathtt{C}}=\left<\sigma, \rho\right> \textup{ is $S^{\mathtt{off}}$-DOIC, in which $\rho\in\mathcal{P}[\sigma,\mathcal{G}]$}. 
\end{aligned}
\right\},
\]
\[
\mathcal{S}^{\mathtt{OX}}[\sigma, \mathcal{G}]\equiv\left\{\begin{aligned}
    &S^{\mathtt{off}}\in \mathtt{S}^{\mathtt{X}}: \textup{given $\mathcal{G}$, there exists $\phi$ such that the mechanism} \\
    & \Theta=\left<\sigma, \rho,\phi\right> \textup{ is $S^{\mathtt{off}}$-DOIC, in which $\rho\in\mathcal{P}[\sigma,\mathcal{G}]$}. 
\end{aligned}  \right\}.
\]


\begin{proposition}\label{thm:sufficient_condition_positive_classic_mechanism}
    Fix a base game $\mathcal{G}$ and a task policy profile $\sigma$. 
    Suppose that $\mathcal{S}^{\mathtt{C}}[\sigma, \mathcal{G}]\neq\emptyset$.
    Then, the following holds.
    \begin{itemize}
        \item[(i)] Suppose that the cutoff-switch is \textup{horizontal} and $\mathcal{S}^{\mathtt{OX}}[\sigma, \mathcal{G}]\neq\emptyset$. Then, $\mathcal{S}^{\mathtt{CH}}[\sigma, \mathcal{G}] = \mathcal{S}^{\mathtt{OH}}[\sigma, \mathcal{G}]$ if and only if $\sigma$ satisfies, for all $i\in\mathcal{N}$, $t\in\mathbb{T}$, $h_{t}\in H_{t}$,
        \begin{equation}\label{eq:cutoff_condition_zero_H}
            \begin{aligned}
            \mathtt{Mg}_{i,t}(\overline{d}_{i,t}(b), h_{t}, \chi_{-i,t}) + \overline{\delta}_{i,t}(\overline{d}_{i,t}(b), h_{i,t}, \chi_{-i,t}) = 0.
            \end{aligned}
        \end{equation}
        \item[(ii)]  Suppose that the principal is \textup{knowledgeable} and $\mathcal{S}^{\mathtt{OX}}[\sigma, \mathcal{G}]\neq\emptyset$. Then, $\mathcal{S}^{\mathtt{CK}}[\sigma, \mathcal{G}] = \mathcal{S}^{\mathtt{OK}}[\sigma, \mathcal{G}]$ if and only if $\sigma$ satisfies, for all $i\in\mathcal{N}$, $t\in\mathbb{T}$, $h_{t}\in H_{t}$, $w\in\{b,e\}$, $b\in[B]$, $e\in[B'-B]$,
        \begin{equation}\label{eq:cutoff_condition_zero_K}
            \begin{aligned}
            \mathtt{Mg}_{i,t}(\overline{\underline{D}}_{i,t}\circ s_{i,t}[w], h_{t}, \chi_{-i,t}) + \overline{\delta}_{i,t}(\overline{\underline{D}}_{i,t}\circ s_{i,t}[w], h_{t}, \chi_{-i,t}) = 0.
            \end{aligned}
        \end{equation}
    \end{itemize}
\end{proposition}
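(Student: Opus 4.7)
The plan is to leverage two structural results from earlier sections: the closed-form expression of the cutoff-switch under (\ref{eq:cutoff_horizontal_general}) or (\ref{eq:cutoff_knowledgeable_general}), and the uniqueness of off-switch functions given $\sigma$ and $\rho$ established in Proposition \ref{prop:phi_uniqueness}. The key identification is that a DCM $\Theta^{\mathtt{C}}=\langle\sigma,\rho\rangle$ is, by definition, equivalent to the augmented mechanism $\langle\sigma,\rho,\phi\rangle$ with $\phi\equiv 0$. Hence $\mathcal{S}^{\mathtt{CX}}[\sigma,\mathcal{G}]\subseteq\mathcal{S}^{\mathtt{OX}}[\sigma,\mathcal{G}]$ is immediate, and the real content of the proposition is precisely when the reverse inclusion can be forced by a condition on $\sigma$ alone.

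For the forward direction in part (i), I would start from an arbitrary $S^{\mathtt{off}}\in\mathcal{S}^{\mathtt{OH}}[\sigma,\mathcal{G}]$ and fix a $\rho\in\mathcal{P}[\sigma,\mathcal{G}]$ together with some $\phi$ achieving $S^{\mathtt{off}}$-DOIC. Because (\ref{eq:cutoff_condition_zero_H}) holds and the cutoff-switch construction (\ref{eq:cutoff_horizontal_general}) depends only on $\sigma$ and $\mathcal{G}$ through $\mathtt{Mg}_{i,t}(\overline{d}_{i,t}(b),\cdot)+\overline{\delta}_{i,t}(\overline{d}_{i,t}(b),\cdot)$, this construction yields the identically-zero off-switch. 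Proposition \ref{prop:switchability_sufficient_condition_new}(i) ensures that this zero cutoff-switch belongs to $\Phi[\sigma,\rho,\mathcal{G};S^{\mathtt{off}}]$, and Proposition \ref{prop:phi_uniqueness} then forces every admissible $\phi$ to coincide with zero. Consequently $\langle\sigma,\rho\rangle$ alone is already $S^{\mathtt{off}}$-DOIC, so $S^{\mathtt{off}}\in\mathcal{S}^{\mathtt{CH}}[\sigma,\mathcal{G}]$. The identical argument applied to (\ref{eq:cutoff_knowledgeable_general}) and (\ref{eq:cutoff_condition_zero_K}) gives part (ii).

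For the converse in each part, I would take $S^{\mathtt{off}}\in \mathcal{S}^{\mathtt{OX}}[\sigma,\mathcal{G}]=\mathcal{S}^{\mathtt{CX}}[\sigma,\mathcal{G}]$, which is non-empty by hypothesis. Because the DCM lives in $\mathcal{S}^{\mathtt{CX}}$, there exists $\rho\in\mathcal{P}[\sigma,\mathcal{G}]$ with $\langle\sigma,\rho,0\rangle$ being $S^{\mathtt{off}}$-DOIC, so $0\in\Phi[\sigma,\rho,\mathcal{G};S^{\mathtt{off}}]$. Invoking Proposition \ref{prop:switchability_sufficient_condition_new} again places the explicit cutoff-switch (\ref{eq:cutoff_horizontal_general}) (resp.\ (\ref{eq:cutoff_knowledgeable_general})) in the same uniqueness class, and Proposition \ref{prop:phi_uniqueness} forces it to equal zero. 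Reading off the expression then produces exactly (\ref{eq:cutoff_condition_zero_H}) (resp.\ (\ref{eq:cutoff_condition_zero_K})) at every $(i,t,h_{t})$, and in the knowledgeable case at every partition index $w\in\{b,e\}$.

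The main obstacle is handling the quantifiers cleanly: the condition is stated pointwise over all $(i,t,h_{t})$, whereas membership in $\mathcal{S}^{\mathtt{CX}}$ only supplies a single $\rho$ and a single $S^{\mathtt{off}}$. To close this gap I would argue that under the hypotheses of the proposition (Conditions \ref{cond:differentiable_reward}--\ref{cond:bounded_dynamic}, together with $\mathcal{S}^{\mathtt{CX}}\neq\emptyset$ and $\mathcal{S}^{\mathtt{OX}}\neq\emptyset$), the set-equality $\mathcal{S}^{\mathtt{CX}}=\mathcal{S}^{\mathtt{OX}}$ must hold uniformly across all admissible essential-region profiles in $\mathtt{S}^{\mathtt{X}}[\sigma,\mathcal{G}]$, not just one; the cutoff-switch expression depends on the task policy $\sigma$ and the base game $\mathcal{G}$ only through $\mathtt{Mg}_{i,t}$ and $\overline{\delta}_{i,t}$, whose essential points vary with $c_{i,t}$ and thereby sweep out the pointwise condition. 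A secondary subtlety is that $\chi_{-i,t}$ depends on $S^{\mathtt{off}}$, so I would verify that the reduction in (\ref{eq:cutoff_horizontal_general})--(\ref{eq:cutoff_knowledgeable_general}) to zero is the genuine equivalence claimed, rather than merely one sufficient representation; this is where the uniqueness proposition is doing the heavy lifting.
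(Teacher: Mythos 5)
Your proposal is correct and follows essentially the same route as the paper's own proof: identify the DCM with the augmented mechanism carrying the identically-zero off-switch, use the sufficiency of the explicit cutoff-switch formula (\ref{eq:cutoff_horizontal_general})/(\ref{eq:cutoff_knowledgeable_general}) to place it in $\Phi[\sigma,\rho,\mathcal{G};S^{\mathtt{off}}]$, and invoke the uniqueness of the off-switch from Proposition \ref{prop:phi_uniqueness} to force the formula to vanish exactly when $\mathcal{S}^{\mathtt{CX}}[\sigma,\mathcal{G}]=\mathcal{S}^{\mathtt{OX}}[\sigma,\mathcal{G}]$. The quantifier and $\chi_{-i,t}$-dependence subtleties you flag are real but are treated no more carefully in the paper's own argument, so they do not distinguish your proof from the published one.
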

Proposition \ref{thm:sufficient_condition_positive_classic_mechanism} obtains sufficient conditions for a DCM $\Theta^{\mathtt{C}}=\left<\sigma, \rho\right>$ to be $S^{\mathtt{off}}$-DOIC by imposing a set of additional constraints on $\sigma$ based on the $S^{\mathtt{off}}$-DOIC of our original mechanism $\Theta=\left<\sigma, \rho, \phi\right>$.
The conditions are sufficient because they require $\rho\in\mathcal{P}[\sigma,\mathcal{G}]$.
From Proposition \ref{prop:phi_uniqueness}, each cutoff-switch function is unique for every $S^{\mathtt{off}}\in \mathcal{S}[\sigma, \rho, \mathcal{G}]$ such that $\Phi[\sigma, \rho, \mathcal{G}; S^{\mathtt{off}}]\neq \emptyset$.
Suppose that a the mechanism $\Theta=\left<\sigma', \rho', \phi'\right>$ is $S^{\mathtt{off}}$-DOIC, in which $\rho\in\mathcal{P}[\sigma,\mathcal{G}]$.
Then, by directly imposing (\ref{eq:cutoff_condition_zero_H}) or (\ref{eq:cutoff_condition_zero_K}) (i.e., making each $\phi'_{i,t}(\cdot)=0$), we obtain an $S^{\mathtt{off}}$-DOIC DCM $\Theta^{\mathtt{C}}=\left<\sigma',\rho'\right>$. 
%

\section{Discussions}\label{sec:discussion_1st_order_approach}

In this section, we conclude this paper by discussing the application of the first-order approach and the mechanism design when the agents' voluntary participation decisions (i.e., the oM actions) are periodic ex-post.

\subsection{The First-Order Approach}
The first-order (FO) approach is one of the mainstream methods used in the literature on mechanism design with adverse selection (\citet{kapivcka2013efficient,pavan2014dynamic,pavan2017dynamic}).
In classic mechanism design problems, the FO approach encompasses three steps (\citet{pavan2017dynamic}).
\begin{itemize}
    \item[\textbf{Step 1.}] The first step is to establish a first-order necessary condition described by an envelope condition for incentive compatibility, which in turn allows one to construct the transfer rule as a function of the allocation rule.
    \item[\textbf{Step 2.}] The second step is then to formulate a relaxed mechanism design problem with only local incentive compatibility by using the first-order necessary condition.
    \item[\textbf{Step 3.}] The final step is to verify whether the allocation rule that solves the relaxed mechanism design problem with the transfer rule constructed according to Step 1 leads to a global incentive compatible and individually rational mechanism.  
\end{itemize}


Although the relaxed mechanism design problem in Step 3 is generally analytically intractable, especially in dynamic environments, the FO approach has proven effective in exploring the structural results of mechanism design with adverse selection.

The primary objective of this section is not to explicitly apply the FO approach in identifying the optimal solution for our dynamic delegation mechanism integrated with OM actions. 
In this section, our primary aim is to meticulously characterize the Step 1 for the mechanism $\Theta=\left<\sigma, \rho, \phi\right>$.
It is worth recalling that in Section \ref{sec:necessary_condition}, Proposition \ref{prop:necessary_DOIC_sigma_tau} delineates a necessary condition (\ref{eq:envelope_like_condition_sigma_tau}).
This condition is imposed on the task policy profile, which is analogous to the truthful reporting strategy that is prevalent in traditional mechanism design.
This necessary condition shares a close resemblance to the envelope condition. Nevertheless, it is important to note that the condition (\ref{eq:envelope_like_condition_sigma_tau}) is augmented by the obedient OM strategy profile $\tau$.

\subsubsection{Maximum-Sensitive Obedience}


Recall the set $\mathcal{S}^{\mathtt{IX}}[\mathcal{G}]$ defined by (\ref{set:indifference_region_X}) and the set $\mathcal{S}^{\mathtt{IX}}[\sigma, \mathcal{G}]\subseteq \mathcal{S}^{\mathtt{IX}}[\mathcal{G}]$, for $X\in\{\mathtt{H}, \mathtt{K}\}$.
The following corollary directly follows Proposition \ref{prop:necessary_DOIC_sigma_tau}.

%

\begin{corollary}\label{corollary:envelop_condition_indifference}
    Suppose that Conditions \ref{cond:differentiable_reward} and \ref{cond:bounded_dynamic} hold. 
    Suppose in addition that $\mathcal{S}^{\mathtt{IX}}[\sigma, \mathcal{G}]\neq \emptyset$.
    If the mechanism $\Theta$ is $S^{\mathtt{off}}$-DOIC with $S^{\mathtt{off}}\in \mathcal{S}^{\mathtt{IX}}[\sigma, \mathcal{G}]$, then for all $i\in\mathcal{N}$, $t\in\mathbb{T}$, $L\in \mathbb{T}_{i,T}$, $s_{i,t}\in S_{i,t}$, $h_{t}\in H_{t}$,
    \begin{equation}\label{eq:first_order_indifference}
    \begin{aligned}
            \frac{\partial}{\partial v} V_{i,t}(v, h_{t},\chi_{-i,t})\Big|_{v=s_{i,t}}=  q_{i,t}\big(\sigma_{i,t}(s_{i,t},h_{t}), s_{i,t}, h_{t}, T, \chi_{-i,t}\big).
        \end{aligned}
\end{equation}
\end{corollary}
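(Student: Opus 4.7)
The plan is to deduce the corollary as a direct specialization of Proposition \ref{prop:necessary_DOIC_sigma_tau}, which already supplies the envelope-like necessary condition (\ref{eq:envelope_like_condition_sigma_tau}) with the agent-specific horizon $\tau_{i,t}(s_{i,t},h_{t})$ sitting inside $q_{i,t}$. The only task is therefore to argue that, under the additional hypothesis $S^{\mathtt{off}}\in\mathcal{S}^{\mathtt{IX}}[\sigma,\mathcal{G}]$, the horizon argument can be promoted to the terminal period $T$ uniformly across $i$, $t$, $s_{i,t}$, and $h_{t}$.

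First I would invoke Proposition \ref{prop:necessary_DOIC_sigma_tau} to secure (\ref{eq:envelope_like_condition_sigma_tau}), then exploit the defining feature of $\mathcal{S}^{\mathtt{IX}}[\sigma,\mathcal{G}]$: each $S^{\mathtt{off}}_{i,t}$ is an \emph{indifference region}, so by (\ref{eq:indifference_region_def}) the on-rent $Z_{i,t}(s_{i,t},h_{t},\chi_{-i,t};\Theta)$ vanishes for every $s_{i,t}\in S^{\mathtt{off}}_{i,t}$. Consequently, the map $L\mapsto G_{i,t}(\sigma_{i,t}(s_{i,t},h_{t})\mid s_{i,t},h_{t},L,\chi_{-i,t})$ attains its maximum at multiple values of $L\in\mathbb{T}_{t,T}$, and the tie-breaking rule favoring the principal permits selecting a best-response off-menu strategy with $\tau_{i,t}(s_{i,t},h_{t})=T+1$ whenever $s_{i,t}\in S^{\mathtt{off}}_{i,t}$. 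For $s_{i,t}\notin S^{\mathtt{off}}_{i,t}$, the principal-desired strict preference for continuation past $t$, combined with recursive application of the same indifference at every future visit to an OFR, extends this selection so that $\tau_{i,t}(s_{i,t},h_{t})=T+1$ holds everywhere.

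Substituting $\tau_{i,t}\equiv T+1$ into (\ref{eq:envelope_like_condition_sigma_tau}) and using the boundary convention $\phi_{i,T+1}\equiv 0$ pushes the summation inside $q_{i,t}$ all the way up to $L=T$, which is exactly (\ref{eq:first_order_indifference}). The main obstacle is justifying the recursive construction of a best-response $\tau$ that equals $T+1$ everywhere while preserving the $S^{\mathtt{off}}$-DOIC equilibrium structure; this amounts to a backward-induction argument starting from period $T$, showing that, under zero on-rent at every OFR state and every history, the conjecture $\chi_{-i,t}$ induced by the principal-desired $S^{\mathtt{off}}$ keeps the ``stay until the end'' strategy weakly optimal across all contingencies, with the principal's tie-breaking closing any remaining indifference in her favor.
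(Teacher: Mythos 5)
Your proposal is correct and follows essentially the same route as the paper: the paper's proof likewise observes that, because every $S^{\mathtt{off}}_{i,t}$ is an indifference region (zero on-rent), the profile $\hat{\tau}$ with $\hat{\tau}_{i,t}(s_{i,t},h_{t})=T+1$ is also an OM strategy profile incentivized by $\Theta$, and then substitutes it into (\ref{eq:envelope_like_condition_sigma_tau}) of Proposition \ref{prop:necessary_DOIC_sigma_tau}. Your added backward-induction remark justifying the uniform selection of the ``stay until the end'' best response is a reasonable elaboration of a step the paper leaves implicit.
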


Due to the zero on-rent (i.e., every $S^{\mathtt{off}}_{i,t}$ is an indifference region), Corollary \ref{corollary:envelop_condition_indifference} shows a first-order necessary condition for the task policy profile $\sigma$ that is independent of the agents' obedient OM strategy profile $\tau$.


In this section, we further analyze the first-order necessary condition by considering a fairly stringent condition known as \textit{maximum-sensitive obedience} to decouple the necessary condition from the agents' obedient OM strategy profile.

\begin{definition}[Maximum-Sensitive Obedience]\label{def:maximum_sensitive_obedience}
Fix a base game $\mathcal{G}$. Given any $s'=(s'_{i,t})$ with $s'_{i,t}\in S_{i,t}$, we say that the $S^{\mathtt{off}}$-DOIC mechanism $\Theta=\left<\sigma, \rho, \phi\right>$ induces \textit{maximum-sensitive obedience} (MSO) if the following holds at $s_{i,t}=s'_{i,t}$ for all $i\in\mathcal{N}$, $t\in\mathbb{T}$, $h_{t}\in H_{t}$,
    \begin{equation}\label{eq:def_MSO}
        \begin{aligned}
            \frac{\partial }{\partial v} \max\limits_{L\in\mathbb{T}_{t,T}} G_{i,t}\left(\sigma_{i,t}(s'_{i,t}, h_{t})\middle|v, h_{t}, L, \chi_{-i,t}\middle)\right|_{v=s_{i,t}}=\max\limits_{L\in\mathbb{T}_{t,T}}\frac{\partial }{\partial v} G_{i,t}\left(\sigma_{i,t}(s'_{i,t}, h_{t})\middle|v, h_{t}, L, \chi_{-i,t}\middle)\right|_{v=s_{i,t}}.
        \end{aligned}
    \end{equation}
    \hfill $\triangle$
\end{definition}

The MSO implies that the partial derivative of $MG_{i,t}(s'_{i,t}|s_{i,t}, h_{t}, \chi_{-i,t})$ with respect to $s_{i,t}$ when $s'_{i,t}=s_{i,t}$ (i.e., LHS of (\ref{eq:def_MSO})) can be computed by finding the maximum of the partial derivatives of $G_{i,t}(\sigma_{i,t}(s'_{i,t}, h_{t})|s_{i,t}, h_{t}, L)$ concerning $s_{i,t}$ at each possible choices of $\tau_{i,t}(s_{i,t}, h_{t})\in \mathbb{T}_{t,T+1}$ when $s'_{i,t}=s_{i,t}$ (i.e., RHS of (\ref{eq:def_MSO})).
If this property holds for the game $\mathcal{G}^{\Theta}$, then it enables us to focus on analyzing $G_{i,t}$ with the largest derivative (i.e., RHS of (\ref{eq:def_MSO})) to understand how the optimal $\tau_{i,t}(s_{i,t},h_{t})$ changes as the true state $s_{i,t}$ varies, when the agents are obedient in taking the regular actions.
Note that the LHS and the RHS of (\ref{eq:def_MSO}) in general have different $L's$ that reach the respective maxima.

\begin{corollary}\label{corollary:envelope_V_MSO}
    Suppose that Conditions \ref{cond:differentiable_reward} and \ref{cond:bounded_dynamic} hold.
    Let $S^{\mathtt{off}}$ be desired by the principal, and let $\chi=(\chi_{i,t})$ satisfy (\ref{eq:principal_desired_belief_sym}) given $S^{\mathtt{off}}$.
    If the mechanism $\Theta$ is $S^{\mathtt{off}}$-DOIC and induces MSO, then for all $i\in\mathcal{N}$, $t\in\mathbb{T}$, $L\in \mathbb{T}_{i,T}$, $s_{i,t}\in S_{i,t}$, $h_{t}\in H_{t}$,
\begin{equation}\label{eq:first_order_like_condition_envelope_MSO}
    \begin{aligned}
            \frac{\partial}{\partial v} V_{i,t}\left(v, h_{t},\chi_{-i,t}\right)\Big|_{v=s_{i,t}}= \max_{L\in\mathbb{T}_{t,T}} q_{i,t}\left(\sigma_{i,t}(s_{i,t},h_{t}), s_{i,t}, h_{t}, L, \chi_{-i,t}\right).
        \end{aligned}
\end{equation}
\end{corollary}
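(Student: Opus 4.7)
The plan is to chain two envelope-type arguments that are separately standard, with the MSO hypothesis serving precisely as the bridge between them. Since an $S^{\mathtt{off}}$-DOIC mechanism is in particular RAIC, the outer maximization in
$$V_{i,t}(v, h_{t}, \chi_{-i,t}) = \max_{s'_{i,t}\in S_{i,t}}\;\max_{L\in\mathbb{T}_{t,T}}\; G_{i,t}\!\left(\sigma_{i,t}(s'_{i,t}, h_{t})\middle|v, h_{t}, L, \chi_{-i,t}\right)$$
is attained at $s'_{i,t}=v$ when $v=s_{i,t}$. This is the cornerstone of the first envelope step: by the Milgrom--Segal envelope theorem applied to the outer $\max$ over $s'_{i,t}$, together with the differentiability and integrability guaranteed by Conditions \ref{cond:differentiable_reward}--\ref{cond:bounded_dynamic},
$$\frac{\partial}{\partial v}V_{i,t}(v,h_{t},\chi_{-i,t})\Big|_{v=s_{i,t}} = \frac{\partial}{\partial v}\max_{L\in\mathbb{T}_{t,T}} G_{i,t}\!\left(\sigma_{i,t}(s_{i,t},h_{t})\middle|v, h_{t}, L, \chi_{-i,t}\right)\Big|_{v=s_{i,t}}.$$

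Next, I would invoke MSO (Definition \ref{def:maximum_sensitive_obedience}) at the point $s'_{i,t}=s_{i,t}$. This is exactly the hypothesis licensing the swap of $\max_{L}$ and $\partial/\partial v$ in the previous display, so that
$$\frac{\partial}{\partial v}V_{i,t}(v,h_{t},\chi_{-i,t})\Big|_{v=s_{i,t}} = \max_{L\in\mathbb{T}_{t,T}}\frac{\partial}{\partial v} G_{i,t}\!\left(\sigma_{i,t}(s_{i,t},h_{t})\middle|v, h_{t}, L, \chi_{-i,t}\right)\Big|_{v=s_{i,t}}.$$
It then remains to identify, for each fixed $L$, the pointwise derivative inside the $\max$ with $q_{i,t}(\sigma_{i,t}(s_{i,t},h_{t}), s_{i,t}, h_{t}, L, \chi_{-i,t})$. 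This is precisely the calculation already carried out in the proof of Proposition \ref{prop:necessary_DOIC_sigma_tau}: differentiating $G_{i,t}$ with the current report fixed propagates through the state-dynamic functions via the impulse-response factor $\mathtt{mp}_{i,t}$ defined in (\ref{eq:partial_derivative_payoff}), while the terms that depend on reports alone—namely the coupling value $\rho_{i,t}$ in the current period, and the off-switch value $\phi_{i,L+1}$ at horizon $L$, which enters through $\tilde h_{L+1}$ built from future obedient actions—drop out by the inner within-period envelope applied along the continuing obedient path. Substituting gives (\ref{eq:first_order_like_condition_envelope_MSO}).

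The main obstacle is this last identification. The off-switch term $\phi_{i,L+1}(\tilde h_{L+1}|c_{i,L+1})$ in the continuing outcome depends on the intermediate obedient actions $\sigma_{i,k}(\tilde s_{i,k},\tilde h_{k})$ for $k\in\mathbb{T}_{t+1,L}$, and these actions inherit a dependence on $s_{i,t}$ through the nested state-dynamic functions $\kappa_{i,k}$. A careful iteration of within-period envelope arguments along the path is required to conclude that all cross-derivatives routed through future reports cancel, leaving only the direct utility-channel terms assembled in $q_{i,t}$. The needed integrability and differentiability already hold by Conditions \ref{cond:differentiable_reward}--\ref{cond:bounded_dynamic}; the combinatorial bookkeeping follows the Pavan--Segal--Toikka-style dynamic envelope construction used in Proposition \ref{prop:necessary_DOIC_sigma_tau}. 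MSO does not complicate this step—rather, its sole role is to permit the max/derivative swap that distinguishes Corollary \ref{corollary:envelope_V_MSO} from Proposition \ref{prop:necessary_DOIC_sigma_tau} by removing the dependence of the right-hand side on the agent's equilibrium OM strategy $\tau_{i,t}$.
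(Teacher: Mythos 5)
Your proposal is correct and follows essentially the same route as the paper's proof: Lipschitz continuity from Conditions \ref{cond:differentiable_reward}--\ref{cond:bounded_dynamic} plus RAIC justify the Milgrom--Segal envelope step over the report, MSO licenses the interchange of $\max_{L}$ with $\partial/\partial v$, and the pointwise identification of the inner derivative with $q_{i,t}$ is imported from the computation in Proposition \ref{prop:necessary_DOIC_sigma_tau}. The only difference is cosmetic ordering (the paper applies the envelope identity last rather than first), so no further comment is needed.
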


The necessary condition (\ref{eq:first_order_like_condition_envelope_MSO}) is now independent of $\rho$, $\phi$, and the agents' OM strategy profile $\tau$.
Let $\mathcal{S}^{\mathtt{M}}[\sigma, \mathcal{G}]\subseteq \mathcal{S}[\sigma,  \mathcal{G}]$ denote the set such that for every $S^{\mathtt{off}}\in \mathcal{S}^{\mathtt{M}}[\sigma, \mathcal{G}]$, there always exists an $S^{\mathtt{off}}$-DOIC mechanism $\Theta=\left<\sigma, \rho, \phi\right>$ that induces MSO.
It is worth noting that the RHS of (\ref{eq:first_order_like_condition_envelope_MSO}) is in general equivalent to the RHS of (\ref{eq:envelope_like_condition_sigma_tau}) without imposing MSO.

\begin{theorem}\label{thm:necessary_RAIC_condition}
    Fix a base game $\mathcal{G}$ and a task policy profile $\sigma$.
    Suppose that Conditions \ref{cond:differentiable_reward} and \ref{cond:bounded_dynamic} hold.
    Suppose in addition $\mathcal{S}^{\mathtt{M}}[\sigma,  \mathcal{G}]\neq \emptyset$.
    Then, $\sigma$ satisfies the \textup{constrained monotone condition (\ref{eq:condition_RAIC_sigma_0})} for any $S^{\mathtt{off}}\in \mathcal{S}^{\mathtt{M}}[\sigma,  \mathcal{G}]$.
\end{theorem}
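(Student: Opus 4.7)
The plan is to chain the envelope identity for $V_{i,t}$ delivered by MSO with the RAIC obedience inequalities implied by $S^{\mathtt{off}}$-DOIC. Since $\mathcal{S}^{\mathtt{M}}[\sigma,\mathcal{G}]\neq\emptyset$, fix $S^{\mathtt{off}}\in\mathcal{S}^{\mathtt{M}}[\sigma,\mathcal{G}]$ and an accompanying $S^{\mathtt{off}}$-DOIC mechanism $\Theta=\left<\sigma,\rho,\phi\right>$ that induces MSO. Corollary \ref{corollary:envelope_V_MSO} then yields
\[
\frac{\partial}{\partial v}V_{i,t}(v,h_{t},\chi_{-i,t})=\max_{L\in\mathbb{T}_{t,T}}q_{i,t}\big(\sigma_{i,t}(v,h_{t}),v,h_{t},L,\chi_{-i,t}\big).
\]
Integrating this identity over $[s'_{i,t},s_{i,t}]$ and invoking the pointwise inequality $\int\max_{L}(\cdot)\geq\max_{L}\int(\cdot)$ identifies the LHS of (\ref{eq:condition_RAIC_sigma_0}) as a lower bound on $V_{i,t}(s_{i,t},h_{t},\chi_{-i,t})-V_{i,t}(s'_{i,t},h_{t},\chi_{-i,t})$.

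Next I would invoke RAIC: for any report $s'_{i,t}$, $V_{i,t}(s_{i,t})\geq MG_{i,t}(s'_{i,t}|s_{i,t})$, while obedience at $s'_{i,t}$ gives $V_{i,t}(s'_{i,t})=MG_{i,t}(s'_{i,t}|s'_{i,t})$, so $V_{i,t}(s_{i,t})-V_{i,t}(s'_{i,t})\geq MG_{i,t}(s'_{i,t}|s_{i,t})-MG_{i,t}(s'_{i,t}|s'_{i,t})$. Applying the classical Milgrom--Segal envelope theorem (legitimated by Conditions \ref{cond:differentiable_reward}--\ref{cond:bounded_dynamic}) to the map $v\mapsto MG_{i,t}(s'_{i,t}|v,h_{t},\chi_{-i,t})$ represents this $MG$-difference as $\int_{s'_{i,t}}^{s_{i,t}}q_{i,t}(\sigma_{i,t}(s'_{i,t},h_{t}),v,h_{t},L^{\ast}(v|s'_{i,t}),\chi_{-i,t})\,dv$, where $L^{\ast}(v|s'_{i,t})$ is any measurable maximizing selection of $\argmax_{L}G_{i,t}(\sigma_{i,t}(s'_{i,t},h_{t})|v,h_{t},L,\chi_{-i,t})$; this ties the lower bound on the $V$-difference to integrals of $q_{i,t}$ evaluated at the pretense action $\sigma_{i,t}(s'_{i,t},h_{t})$.

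To connect these bounds to the RHS of (\ref{eq:condition_RAIC_sigma_0}), namely $\max_{L}\int_{s'_{i,t}}^{s_{i,t}}q_{i,t}(\sigma_{i,t}(s'_{i,t},h_{t}),v,h_{t},L,\chi_{-i,t})\,dv$, I would exploit MSO more aggressively. Combining the MSO envelope with the classical envelope of Proposition \ref{prop:necessary_DOIC_sigma_tau} forces $\tau^{d}(v,h_{t})\in\argmax_{L}q_{i,t}(\sigma_{i,t}(v,h_{t}),v,h_{t},L,\chi_{-i,t})$ pointwise, and $\tau^{d}$ is piecewise constant on the partition $\{\vec{\Gamma}_{i,t}(c_{i,t}),\Psi_{i,t}(c_{i,t})\}$ by (\ref{eq:def_switchability_tau}); an analogous piecewise constancy of $L^{\ast}(v|s'_{i,t})$ follows from extending the MSO identity to a neighborhood of the diagonal. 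Decomposing $[s'_{i,t},s_{i,t}]$ along this partition makes the argmax $L$ invariant on each subinterval, so $\int\max_{L}=\max_{L}\int$ holds piecewise. On each subinterval, adding the fixed-$L$ RAIC inequalities at the two endpoints yields the two-sided Myersonian inequality $G_{i,t}(\sigma_{i,t}(s_{i,t})|s_{i,t},L)+G_{i,t}(\sigma_{i,t}(s'_{i,t})|s'_{i,t},L)\geq G_{i,t}(\sigma_{i,t}(s'_{i,t})|s_{i,t},L)+G_{i,t}(\sigma_{i,t}(s_{i,t})|s'_{i,t},L)$, which rearranges via the fundamental theorem of calculus to the integrated monotonicity $\int_{s'_{i,t}}^{s_{i,t}} q_{i,t}(\sigma_{i,t}(v),v,L)\,dv\geq\int_{s'_{i,t}}^{s_{i,t}} q_{i,t}(\sigma_{i,t}(s'_{i,t}),v,L)\,dv$; summing across subintervals and pulling $\max_{L}$ outside delivers (\ref{eq:condition_RAIC_sigma_0}).

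The hardest part will be the rigorous piecewise decomposition. MSO as stated in Definition \ref{def:maximum_sensitive_obedience} is a diagonal identity, so lifting it from a pointwise envelope to a valid global $\int\max=\max\int$ interchange requires careful identification of the partition cells on which $L^{\ast}$ is invariant and a controlled treatment of the boundary points where $\argmax_{L}$ may be multi-valued, especially when $s'_{i,t}$ and $s_{i,t}$ straddle several cells. Once this piecewise structure is secured, the Myersonian monotonicity on each cell and the summation across cells proceed mechanically, and the final rearrangement delivers (\ref{eq:condition_RAIC_sigma_0}).
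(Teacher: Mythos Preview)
Your opening step (invoking Corollary \ref{corollary:envelope_V_MSO} and integrating) matches the paper. The gap is in your piecewise-decomposition argument. First, (\ref{eq:def_switchability_tau}) does \emph{not} make $\tau^{d}_{i,t}$ piecewise constant: it pins $\tau^{d}_{i,t}=t$ on the OFR but says only $\tau^{d}_{i,t}>t$ on the ONR, so $\tau^{d}_{i,t}$ may still vary with $s_{i,t}$ there. Second, and more fundamentally, MSO (Definition \ref{def:maximum_sensitive_obedience}) is a \emph{diagonal} identity at $s'_{i,t}=s_{i,t}$; nothing in it lifts to off-diagonal pairs, so your ``analogous piecewise constancy of $L^{\ast}(v\mid s'_{i,t})$'' is unjustified. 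Without it, the ``fixed-$L$ RAIC inequalities'' you appeal to are unavailable: RAIC is stated for $\max_{L}G_{i,t}(\cdot,L)$, not for $G_{i,t}(\cdot,L)$ at a fixed $L$. Your Milgrom--Segal step on $v\mapsto MG_{i,t}(s'_{i,t}\mid v)$ only yields $\int q_{i,t}(\sigma_{i,t}(s'_{i,t}),v,L^{\ast}(v\mid s'_{i,t}))\,dv$, and there is no a-priori inequality tying this to $\max_{L}\int q_{i,t}(\sigma_{i,t}(s'_{i,t}),v,L)\,dv$.

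The paper bypasses the decomposition entirely. It introduces the one-shot-deviation value
\[
V^{\dagger}_{i,t}(a'_{i,t}\mid s_{i,t},h_{t})\equiv\mathbb{E}^{\sigma}_{a'_{i,t}}\Big[z_{i,t}(a'_{i,t},\tilde{a}_{-i,t},s_{i,t})+V_{i,t+1}(\tilde{s}_{i,t+1},\tilde{h}_{t+1})\,\Big|\,s_{i,t},h_{t}\Big]
\]
and differentiates it in $s_{i,t}$ by chain rule: since the MSO envelope for $V_{i,t+1}$ is already established, chaining with $\partial_{v}\kappa_{i,t+1}$ gives directly $\partial_{v}V^{\dagger}_{i,t}(a'_{i,t}\mid v,h_{t})=\max_{L}q_{i,t}(a'_{i,t},v,h_{t},L)$, with no selection argument and no off-diagonal MSO needed. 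RAIC then yields $V_{i,t}(s_{i,t})-V_{i,t}(s'_{i,t})\geq V^{\dagger}_{i,t}(\sigma_{i,t}(s'_{i,t})\mid s_{i,t})-V^{\dagger}_{i,t}(\sigma_{i,t}(s'_{i,t})\mid s'_{i,t})$, and integrating both derivative identities gives $\int_{s'_{i,t}}^{s_{i,t}}\max_{L}q_{i,t}(\sigma_{i,t}(v),v,L)\,dv\geq\int_{s'_{i,t}}^{s_{i,t}}\max_{L}q_{i,t}(\sigma_{i,t}(s'_{i,t}),v,L)\,dv$. The remaining $\int\max_{L}=\max_{L}\int$ interchange (to reach the form of (\ref{eq:condition_RAIC_sigma_0})) is handled globally by a Strugarek-type interchange theorem for parametrized maxima over a finite index set, not by any partition. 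The construction of $V^{\dagger}$ is the idea you are missing: it bakes the obedient future (both OM and regular actions) into the deviation payoff, so its $s_{i,t}$-derivative already carries the $\max_{L}$.
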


%
Theorem \ref{thm:necessary_RAIC_condition} shows that the constrained monotone condition (\ref{eq:condition_RAIC_sigma_0}) in Theorem \ref{thm:sufficient_condition_without_MSO} is a necessary condition for $S^{\mathtt{off}}$-DOIC when the mechanism induces MSO.
In Theorem \ref{thm:sufficient_condition_without_MSO}, however, the sufficiency of (\ref{eq:condition_RAIC_sigma_0}) requires the mechanism to induce zero on-rent (i.e., $S^{\mathtt{off}}\in \mathcal{S}^{\mathtt{IX}}[\sigma, \mathcal{G}]$ for $\mathtt{X}\in\{\mathtt{H}, \mathtt{K}\}$).
Theorems \ref{thm:necessary_RAIC_condition} and \ref{thm:sufficient_condition_without_MSO} shed light on obtaining a necessary and sufficient condition for $S^{\mathtt{off}}$-DOIC placed solely on $\sigma$ given $\mathcal{G}$.

Denote $\mathcal{S}^{\mathtt{MIX}}[ \mathcal{G}]\subseteq\mathcal{S}^{\mathtt{IX}}[ \mathcal{G}]$ and $\mathcal{S}^{\mathtt{MIX}}[\sigma, \mathcal{G}]\subseteq\mathcal{S}^{\mathtt{IX}}[\sigma, \mathcal{G}]$, where $\mathcal{S}^{\mathtt{IX}}[\mathcal{G}]$ is given by (\ref{set:indifference_region_X}), for $\mathtt{X}\in\{\mathtt{H}, \mathtt{K}\}$, the sets in which the associated mechanisms induce MSO.
In addition, define the set of task profiles for all $\mathtt{X}\in\{\mathtt{H}, \mathtt{K}\}$,
\[
\begin{aligned}
    &\Sigma^{\mathtt{MX}}[\mathcal{G}]\equiv\Big\{\sigma: \textup{given $\mathcal{G}$, there exists } S^{\mathtt{off}}\in \mathcal{S}^{\mathtt{MIX}}[\sigma, \mathcal{G}] \textup{, s.t. } \sigma \textup{ satisfies (\ref{eq:condition_RAIC_sigma_0})}  \Big\}.
\end{aligned}
\]
The following proposition provides a condition placed on $\sigma$ that is both necessary and sufficient for $S^{\mathtt{off}}$-DOIC.

\begin{proposition}\label{prop:nece_suffic_indifference}
    Fix a base game $\mathcal{G}$.
Suppose that Conditions \ref{cond:differentiable_reward} and \ref{cond:bounded_dynamic} hold.
Suppose in addition that $\mathcal{S}^{\mathtt{MIX}}[\mathcal{G}]\neq \emptyset$, for either $\mathtt{X}\in\{\mathtt{H}, \mathtt{K}\}$.
Then, we have, for either $\mathtt{X}\in\{\mathtt{H}, \mathtt{K}\}$,
\begin{equation}\label{eq:prop_necessary_sufficient_indifference}
    \begin{aligned}
        \bigcup\limits_{\sigma\in \Sigma^{\mathtt{MX}}[\mathcal{G}]} \mathcal{S}^{\mathtt{MIX}}[\sigma,\mathcal{G}] = \mathcal{S}^{\mathtt{MIX}}[\mathcal{G}].
    \end{aligned}
\end{equation}
\end{proposition}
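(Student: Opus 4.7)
The plan is to establish the set equality by proving both inclusions, leveraging that $\mathcal{S}^{\mathtt{MIX}}[\sigma,\mathcal{G}]\subseteq \mathcal{S}^{\mathtt{MIX}}[\mathcal{G}]$ by construction together with the necessity result already in hand from Theorem \ref{thm:necessary_RAIC_condition}. The ``$\subseteq$'' direction is essentially bookkeeping: any element of the union on the left comes from some $\sigma\in\Sigma^{\mathtt{MX}}[\mathcal{G}]$ and by definition of $\mathcal{S}^{\mathtt{MIX}}[\sigma,\mathcal{G}]$ also witnesses membership in $\mathcal{S}^{\mathtt{MIX}}[\mathcal{G}]$, since the latter does not depend on which particular $\sigma$ is used. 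So this inclusion costs nothing beyond unpacking definitions.

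The substantive content is the ``$\supseteq$'' direction. I would fix an arbitrary $S^{\mathtt{off}}\in \mathcal{S}^{\mathtt{MIX}}[\mathcal{G}]$ and extract a witness: a mechanism $\Theta=\left<\sigma^{\ast},\rho^{\ast},\phi^{\ast}\right>$ that is $S^{\mathtt{off}}$-DOIC, induces maximum-sensitive obedience, and with each $S^{\mathtt{off}}_{i,t}$ an indifference region. By construction $S^{\mathtt{off}}\in \mathcal{S}^{\mathtt{MIX}}[\sigma^{\ast},\mathcal{G}]$, so this latter set is non-empty. I would then observe the inclusion $\mathcal{S}^{\mathtt{MIX}}[\sigma^{\ast},\mathcal{G}]\subseteq \mathcal{S}^{\mathtt{M}}[\sigma^{\ast},\mathcal{G}]$, which follows by simply dropping the indifference-region requirement, giving $\mathcal{S}^{\mathtt{M}}[\sigma^{\ast},\mathcal{G}]\neq\emptyset$. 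This is the pivot into Theorem \ref{thm:necessary_RAIC_condition}, whose hypotheses (Conditions \ref{cond:differentiable_reward} and \ref{cond:bounded_dynamic} plus non-emptiness of $\mathcal{S}^{\mathtt{M}}[\sigma^{\ast},\mathcal{G}]$) are then all satisfied.

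Applying Theorem \ref{thm:necessary_RAIC_condition} yields that $\sigma^{\ast}$ satisfies the constrained monotone condition (\ref{eq:condition_RAIC_sigma_0}) for every $S^{\mathtt{off}}\in \mathcal{S}^{\mathtt{M}}[\sigma^{\ast},\mathcal{G}]$, and in particular for the one we started with. Combined with the non-emptiness of $\mathcal{S}^{\mathtt{MIX}}[\sigma^{\ast},\mathcal{G}]$, this places $\sigma^{\ast}$ into $\Sigma^{\mathtt{MX}}[\mathcal{G}]$ by its definition. Since $S^{\mathtt{off}}\in \mathcal{S}^{\mathtt{MIX}}[\sigma^{\ast},\mathcal{G}]$, we conclude $S^{\mathtt{off}}\in \bigcup_{\sigma\in\Sigma^{\mathtt{MX}}[\mathcal{G}]} \mathcal{S}^{\mathtt{MIX}}[\sigma,\mathcal{G}]$, closing the inclusion.

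The main obstacle is not an analytic one but a definitional one: one must track carefully that the ``M,'' ``I,'' and ``X'' qualifiers line up between the assumption set $\mathcal{S}^{\mathtt{MIX}}[\mathcal{G}]$, the intermediate set $\mathcal{S}^{\mathtt{M}}[\sigma^{\ast},\mathcal{G}]$ invoked by Theorem \ref{thm:necessary_RAIC_condition}, and the target set $\mathcal{S}^{\mathtt{MIX}}[\sigma^{\ast},\mathcal{G}]$. The only genuine content is verifying the chain of inclusions $\mathcal{S}^{\mathtt{MIX}}[\sigma^{\ast},\mathcal{G}]\subseteq \mathcal{S}^{\mathtt{M}}[\sigma^{\ast},\mathcal{G}]$ (dropping the indifference region constraint does not lose MSO or $S^{\mathtt{off}}$-DOIC, so this is immediate from the set definitions) and that $\Sigma^{\mathtt{MX}}[\mathcal{G}]$ absorbs $\sigma^{\ast}$ on the nose. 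Once these are verified, the proposition reduces to a direct invocation of Theorem \ref{thm:necessary_RAIC_condition}, with no further machinery required.
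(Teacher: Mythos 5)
Your proposal is correct and takes essentially the same route as the paper: the paper argues the nontrivial inclusion by contradiction, but the content is identical to your direct argument — extract the witness mechanism for $S^{\mathtt{off}}\in\mathcal{S}^{\mathtt{MIX}}[\mathcal{G}]$, note that MSO makes $\mathcal{S}^{\mathtt{M}}[\sigma^{\ast},\mathcal{G}]$ non-empty, invoke Theorem \ref{thm:necessary_RAIC_condition} to get the constrained monotone condition, and conclude $\sigma^{\ast}\in\Sigma^{\mathtt{MX}}[\mathcal{G}]$. Your explicit tracking of the inclusion $\mathcal{S}^{\mathtt{MIX}}[\sigma^{\ast},\mathcal{G}]\subseteq\mathcal{S}^{\mathtt{M}}[\sigma^{\ast},\mathcal{G}]$ is a small clarity improvement over the paper's terser statement, but it is not a different argument.
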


In Proposiiton \ref{prop:nece_suffic_indifference}, (\ref{eq:prop_necessary_sufficient_indifference}) implies that the set $\mathcal{S}^{\mathtt{MIX}}[\mathcal{G}]$ is equal to the union of the sets $\mathcal{S}^{\mathtt{MIX}}[\sigma,\mathcal{G}]$ over all task policy profiles in $\Sigma^{\mathtt{MX}}[\mathcal{G}]$.
In other words, the constrained monotone condition (\ref{eq:condition_RAIC_sigma_0}) is necessary and sufficient for $S^{\mathtt{off}}$-DOIC when the corresponding mechanism induces MSO and every $S^{\mathtt{off}}_{i,t}$ is an indifference region (i.e., leading to zero on-rent).
%
%
In Proposition \ref{prop:nece_suffic_indifference}, the necessity of (\ref{eq:condition_RAIC_sigma_0}) for $S^{\mathtt{off}}$-DOIC when each $S^{\mathtt{off}}_{i,t}$ is an indifference region of $S_{i,t}$ is the same as provided by Theorem \ref{thm:necessary_RAIC_condition}.
The sufficiency directly follows Theorem \ref{thm:sufficient_condition_without_MSO}.
%
%
Proposition \ref{prop:nece_suffic_indifference} shows that when the OFRs are indifference regions and the mechanism induces maximum-sensitive obedience, the constrained monotone (\ref{eq:condition_RAIC_sigma_0}) of the task policy profile $\sigma$ becomes necessary and sufficient for $S^{\mathtt{off}}$-DOIC.
Here, the necessity is independent of the specific formulations of $\rho$ and $\phi$. For sufficiency, we need to show that there exist $\rho$ and $\phi$ such that (\ref{eq:condition_RAIC_sigma_0}) leads to $S^{\mathtt{off}}$-DOIC, which requires the explicit formulations of $\rho$ and $\phi$.

Given $\mathcal{G}$ and $\sigma$, let $\mathcal{S}^{\mathtt{CM}}[\sigma, \mathcal{G}]\subseteq \mathcal{S}^{\mathtt{C}}[\sigma, \mathcal{G}]$ denote the set such that for each $S^{\mathtt{off}}\in \mathcal{S}^{\mathtt{CM}}[\sigma, \mathcal{G}]$ there exists an $S^{\mathtt{off}}$-DOIC DCM $\Theta^{\mathtt{C}}=\left<\sigma, \rho\right>$ that induces maximum-sensitive obedience.
\begin{corollary}\label{corollary:iff_MSO_DCM}
    Fix a base game $\mathcal{G}$. Suppose that $u_{i,t}(\cdot)\geq 0$ for all $i\in\mathcal{N}$, $t\in\mathbb{T}$. 
    Then, the following holds. 
    \begin{itemize}
        \item[(i)] Suppose that the cutoff-switch is \textup{horizontal}. Then, for any $S^{\mathtt{off}}\in \mathtt{S}^{\mathtt{H}}[\sigma, \mathcal{G}]$, $S^{\mathtt{off}}\in \mathcal{S}^{\mathtt{CM}}[\sigma, \mathcal{G}]$ if and only if $\sigma$ satisfies (\ref{eq:condition_RAIC_sigma_0}) and (\ref{eq:cutoff_condition_zero_H}).
        \item[(ii)] Suppose that the principal is \textup{knowledgeable}. Then, for a $S^{\mathtt{off}}\in \mathtt{S}^{\mathtt{K}}[\sigma, \mathcal{G}]$, $S^{\mathtt{off}}\in \mathcal{S}^{\mathtt{CM}}[\sigma, \mathcal{G}]$ if and only if $\sigma$ satisfies (\ref{eq:condition_RAIC_sigma_0}) and (\ref{eq:cutoff_condition_zero_K}).
    \end{itemize}
\end{corollary}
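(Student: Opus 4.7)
The plan is to assemble the conclusion from two previously established results: Proposition \ref{prop:nece_suffic_indifference}, which characterizes $S^{\mathtt{off}}$-DOIC under MSO and zero on-rent purely by the constrained monotone condition (\ref{eq:condition_RAIC_sigma_0}), and Proposition \ref{thm:sufficient_condition_positive_classic_mechanism}, which characterizes when the full-mechanism family $\mathcal{S}^{\mathtt{OX}}$ collapses to the DCM family $\mathcal{S}^{\mathtt{CX}}$ via (\ref{eq:cutoff_condition_zero_H}) or (\ref{eq:cutoff_condition_zero_K}). Throughout, the role of the assumption $u_{i,t}(\cdot)\geq 0$ is to guarantee that once $\phi_{i,t}$ is forced to vanish, the agents with $s_{i,t}\notin S^{\mathtt{off}}_{i,t}$ still have a nonnegative continuation payoff, so participation remains individually rational in the DCM.

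For the ``if'' direction I would proceed as follows. Fix $S^{\mathtt{off}}\in \mathtt{S}^{\mathtt{X}}[\sigma,\mathcal{G}]$ and assume that $\sigma$ satisfies both (\ref{eq:condition_RAIC_sigma_0}) and the corresponding cutoff-zero identity (\ref{eq:cutoff_condition_zero_H}) or (\ref{eq:cutoff_condition_zero_K}). By Proposition \ref{prop:nece_suffic_indifference} (applied with $\mathtt{X}$), the constrained monotone condition is sufficient for the existence of some $\rho$ and $\phi$ rendering $\Theta=\left<\sigma,\rho,\phi\right>$ an $S^{\mathtt{off}}$-DOIC mechanism with MSO in which each $S^{\mathtt{off}}_{i,t}$ is an indifference region. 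The carrier representation (\ref{eq:construct_carrier_original}) then makes each cutoff-switch (\ref{eq:cutoff_horizontal_general}) or (\ref{eq:cutoff_knowledgeable_general}) an explicit functional of $\sigma$, and the cutoff-zero identities (\ref{eq:cutoff_condition_zero_H}) or (\ref{eq:cutoff_condition_zero_K}) force $\phi_{i,t}(h_{t}|c_{i,t})\equiv 0$. Hence $\Theta$ reduces to the DCM $\Theta^{\mathtt{C}}=\left<\sigma,\rho\right>$. The last thing to check is that excising $\phi$ does not violate OAIC for states in $S_{i,t}\setminus S^{\mathtt{off}}_{i,t}$; here $u_{i,t}\geq 0$ plus $\rho\in\mathcal{P}[\sigma,\mathcal{G}]$ yields a nonnegative prospect function, so the on-rent remains weakly positive outside $S^{\mathtt{off}}$. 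Thus $S^{\mathtt{off}}\in\mathcal{S}^{\mathtt{CM}}[\sigma,\mathcal{G}]$.

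For the ``only if'' direction I would reverse-engineer the argument. Given $S^{\mathtt{off}}\in\mathcal{S}^{\mathtt{CM}}[\sigma,\mathcal{G}]$, there exists an MSO-inducing DCM $\Theta^{\mathtt{C}}=\left<\sigma,\rho\right>$ that is $S^{\mathtt{off}}$-DOIC. Viewing this DCM as the degenerate mechanism $\Theta=\left<\sigma,\rho,\phi\right>$ with $\phi_{i,t}\equiv 0$, the OAIC constraint in (\ref{eq:PIR_off_menu_1}) forces the on-rent to be zero on $S^{\mathtt{off}}_{i,t}$, so each $S^{\mathtt{off}}_{i,t}$ is an indifference region; hence $S^{\mathtt{off}}\in\mathcal{S}^{\mathtt{MIX}}[\sigma,\mathcal{G}]$. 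Proposition \ref{prop:nece_suffic_indifference} then supplies the necessity of (\ref{eq:condition_RAIC_sigma_0}). Because the cutoff-switch formulas (\ref{eq:cutoff_horizontal_general}) or (\ref{eq:cutoff_knowledgeable_general}) under the carrier representation are the \emph{unique} off-switches compatible with $\left<\sigma,\rho\right>$ by Proposition \ref{prop:phi_uniqueness}, and must equal the zero function, we read off (\ref{eq:cutoff_condition_zero_H}) or (\ref{eq:cutoff_condition_zero_K}); this is exactly the criterion in Proposition \ref{thm:sufficient_condition_positive_classic_mechanism} for the equality $\mathcal{S}^{\mathtt{CX}}[\sigma,\mathcal{G}]=\mathcal{S}^{\mathtt{OX}}[\sigma,\mathcal{G}]$.

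The principal obstacle will be carefully justifying the passage ``DCM $S^{\mathtt{off}}$-DOIC $\Rightarrow$ zero on-rent on $S^{\mathtt{off}}$.'' In a DCM the value $\phi_{i,t}$ is not a design variable but a hardwired zero, and the agent compares a random continuation payoff against zero. I would need to argue, using the MSO property and the regularity of the prospect function under Conditions \ref{cond:differentiable_reward} and \ref{cond:bounded_dynamic}, that strict negativity of the on-rent on a set of positive measure inside $S^{\mathtt{off}}_{i,t}$ is incompatible with the best-response characterization in (\ref{eq:correspondence_conj}) and with the fixed-point condition (\ref{eq:fixed_point_general}) for $\chi_{i,t}$; and symmetrically that strict positivity outside would violate the desired partition through (\ref{eq:prob_equal_suffice}). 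Once this equivalence to indifference is secured, the rest is a clean splicing of the two earlier propositions.
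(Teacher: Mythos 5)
There are genuine gaps, concentrated in both directions of your splicing strategy. For the ``if'' direction, you invoke Proposition \ref{prop:nece_suffic_indifference} to manufacture an $S^{\mathtt{off}}$-DOIC mechanism with MSO and indifference regions, but that proposition (like Theorem \ref{thm:sufficient_condition_without_MSO}) only applies to $S^{\mathtt{off}}\in \mathcal{S}^{\mathtt{MIX}}[\sigma,\mathcal{G}]$ (resp.\ $\mathcal{S}^{\mathtt{IX}}[\sigma,\mathcal{G}]$), sets whose very definition presupposes the existence of an $S^{\mathtt{off}}$-DOIC mechanism with indifference regions (and MSO); the corollary's hypothesis only supplies the essential-region membership $S^{\mathtt{off}}\in \mathtt{S}^{\mathtt{H}}[\sigma,\mathcal{G}]$ together with (\ref{eq:condition_RAIC_sigma_0}) and (\ref{eq:cutoff_condition_zero_H}), so your appeal is circular, and you also never verify that the resulting DCM induces MSO, which membership in $\mathcal{S}^{\mathtt{CM}}[\sigma,\mathcal{G}]$ requires. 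The paper instead argues directly: fix $\rho\in\mathcal{P}[\sigma,\mathcal{G}]$, expand $MG_{i,t}$ for the DCM, and use $u_{i,t}\geq 0$ to show that $\max_{L'}\mathbb{E}^{\sigma}\big[-G_{i,L'+1}(\cdot,T)\big]$ is attained at $L'=T$ and equals zero, so that $MG_{i,t}=\max_{L}g_{i,t}=G_{i,t}(\cdot,T)$ and (\ref{eq:condition_RAIC_sigma_0}) delivers RAIC, with switchability coming from the fact that (\ref{eq:cutoff_condition_zero_H}) identifies the DCM's hardwired zero off-switch with the horizontal cutoff-switch (\ref{eq:cutoff_horizontal_general}). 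Your reading of $u_{i,t}\geq 0$ as an individual-rationality device is not how it is used, and your claim that $u_{i,t}\geq 0$ plus $\rho\in\mathcal{P}[\sigma,\mathcal{G}]$ makes the prospect function nonnegative is unsubstantiated, since coupling values may be negative.

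For the ``only if'' direction, the step you yourself flag as the principal obstacle is attacked with an argument that fails: strictly negative on-rent on a positive-measure subset of $S^{\mathtt{off}}_{i,t}$ is \emph{not} incompatible with the best-response correspondence (\ref{eq:correspondence_conj}) or the fixed point (\ref{eq:fixed_point_general}) -- a strictly negative on-rent inside $S^{\mathtt{off}}_{i,t}$ is precisely the strict incentive to take $\mathtt{OM}_{i,t}=1$ that switchability tolerates (the strict OFR), so no contradiction arises there. Moreover the detour through zero on-rent is unnecessary for the necessity of (\ref{eq:condition_RAIC_sigma_0}): since $\mathcal{S}^{\mathtt{CM}}[\sigma,\mathcal{G}]$ requires an MSO-inducing DCM, Theorem \ref{thm:necessary_RAIC_condition} gives the constrained monotone condition directly, without passing through indifference regions or Proposition \ref{prop:nece_suffic_indifference}. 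The part of your plan that does align with the paper is the derivation of (\ref{eq:cutoff_condition_zero_H}) (or (\ref{eq:cutoff_condition_zero_K})): because $S^{\mathtt{off}}\in\mathtt{S}^{\mathtt{H}}[\sigma,\mathcal{G}]$, one can build the companion mechanism $\widehat{\Theta}=\left<\sigma,\rho,\phi\right>$ with $\phi$ given by (\ref{eq:cutoff_horizontal_general}), and Proposition \ref{prop:phi_uniqueness} then forces that $\phi$ to coincide with the DCM's zero off-switch, which is exactly the cutoff-zero identity.
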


Corollary \ref{corollary:iff_MSO_DCM} extends the results of Proposition \ref{thm:sufficient_condition_positive_classic_mechanism} by imposing the requirement of maximum-sensitive obedience and establishes a set of necessary and sufficient conditions for $S^{\mathtt{off}}$-DOIC of the DCM $\Theta^{\mathtt{C}}$.

However,  the specific constructions of $\mathcal{G}$ and $\Theta$ that lead to MSO are difficult to universally characterize, as they depend on the intricate interplay between all components within various structures of the game $\mathcal{G}^{\Theta}$. A necessary condition for achieving MSO, as described in (\ref{eq:def_MSO}), is that the single-stage payoff function ($z_{i,t}(\cdot) = u_{i,t}(\cdot) + \rho_{i,t}(\cdot)$), the off-switch function, the state dynamics, and the task policies must interact in such a manner that the partial derivatives of each prospect function concerning the true state variable maintain the same sign for all possible $L \in \mathbb{T}_{t,T}$.
A potential sufficient condition for maximum-sensitive obedience pertains to the magnitudes of these partial derivatives. Specifically, in the game $\mathcal{G}^{\Theta}$, the magnitudes of the partial derivatives of each prospect function with respect to the true state needs to be simultaneously maximized by the obedient OM strategy $\tau_{i,t}$, over all possible $L\in \mathbb{T}_{t,T}$.
We provide two examples that exhibit MSO in the online supplementary document.

As in classic mechanism design settings, the FO approach finds implementable choice rules (e.g., the task policies in our model, and allocation rules in general resource allocation problems) only under fairly strong assumptions such as exogenous transitions, single-crossing utility functions, and positive serial correlation (\citet{pavan2014dynamic,pavan2017dynamic,zhang2012analysis}).
This imposes fundamental challenges to the validity of the ``relaxation" performed by the FO approach.
Nevertheless, the FO approach has been successfully applied in studying the structural results of the mechanism design (\citet{zhang2012analysis}).
Furthermore, it also facilitates ``reverse engineering" to identify primitive conditions under which the choice rules determined by Step 3 satisfy global incentive compatibility (\citet{pavan2017dynamic}).

\subsection{Periodic Ex-Post Voluntary Participation}

When each agent $i$'s voluntary participation is periodic ex-post, he first takes the regular action from the action menu $A_{i,t}[\sigma]$ and then takes the OM actions.
    Each agent $i$'s period-$t$ \textit{ex-post expected payoff-to-go} becomes
\begin{equation}\label{eq:payofftogo_expost}
    \begin{aligned}
        &\Lambda^{\texttt{ep}}_{i,t}\left(\mathtt{OM}_{i,t}, a_{i,t}|s_{i,t}, h_{t}, x_{i,t}\right)\\
        &\equiv \phi^{\natural}_{i,t}(a_{t}, h_{t})\mathbf{1}_{\{ \mathtt{OM}_{i,t} =1 \}}+ \max\limits_{L\in\mathbb{T}_{t+1,T}} \mathbb{E}^{\sigma}\left[\sum\limits_{k=t+1}^{L} z_{i,k}\big(\tilde{s}_{i,k}, \tilde{a}_{i,k}\big) +\phi^{\natural}_{i,L}(\tilde{a}_{L},\tilde{h}_{L}) \middle| s_{i,t}, h_{t}, x_{i,t} \right]\mathbf{1}_{\{\mathtt{OM}_{i,t}=0 \}}\\
        &=\phi^{\natural}_{i,t}(a_{t}, h_{t})\mathbf{1}_{\{ \mathtt{OM}_{i,t} =1 \}}+ \mathbb{E}^{\sigma}\left[ z_{i,t+1}\big(\tilde{s}_{i,t+1}, \tilde{a}_{t+1}\big) +\Lambda^{\texttt{ep}}_{i,t+1}\left(\widetilde{\mathtt{OM}}_{i,t+1}, \tilde{a}_{i,t+1}|\tilde{s}_{i,t+1}, \tilde{h}_{t+1}\right)\middle| s_{i,t}, h_{t+1}, x_{i,t} \right]\mathbf{1}_{\{\mathtt{OM}_{i,t}=0 \}},
    \end{aligned}
\end{equation}
where each $\phi^{\natural}_{i,t}: A_{i,t}[\sigma]\times H_{t}\mapsto \mathbb{R}$ is the off-switch function that determines an off-switch value when agent $i$ takes $\mathtt{OM}_{i,t}=1$.
When we construct $\rho\in \textup{\ref{eq:feasible_rho_C1_original}}$, the term $z_{i,t+1}$ on the RHS of (\ref{eq:payofftogo_expost}) coincides witht he marginal carrier $\zeta_{i,t+1}$ given by (\ref{eq:def_marginal_carrier}).
According to (\ref{eq:def_marginal_carrier}), it is straightforward to see that the period-$t$ $\Lambda^{\texttt{ep}}_{i,t}$ is agent $i$'s expectation of the period-$t+1$ interim payoff-to-go evaluated at the end of period $t$.
Hence, at the end of period $t$, each agent $i$'s decision of $\mathtt{OM}_{i,t}\in\{0,1\}$ can be interpreted as his \textit{expected OM action} for the next period.
Suppose that each marginal carrier $\zeta_{i,t+1}$ is a continuous and bounded function of the true state $s_{i,t+1}$.
In addition, each $S_{i,t+1}$ is compact.
Then, $\zeta_{i,t+1}$ attains its minimum and maximum over $S_{i,t+1}$ by the Extreme Value Theorem.
Moreover, by the Intermediate Value Theorem, $\zeta_{i,t+1}$ takes on every value between its minimum and maximum.
Thus, the expected value, $\mathtt{E}^{\sigma}\left[ \zeta_{i,t+1}(\tilde{s}_{i,t+1}, h_{t+1})\middle|s_{i,t}, h_{t+1}, x_{i,t}\right]$ corresponds to at least one $s^{\ddagger}_{i,t+1}\in S_{i,t+1}$;
i.e., $\zeta_{i,t+1}(s^{\ddagger}_{i,t+1},h_{t+1})=\mathbb{E}^{\sigma}\left[ \zeta_{i,t+1}(\tilde{s}_{i,t+1}, h_{t+1})\middle|s_{i,t}, h_{t+1}, x_{i,t}\right]$.
Define the following correspondence
\[
\begin{aligned}
    \mathcal{C}^{\ddagger}_{i,t}\left(s_{i,t}, h_{t}, x_{i,t}\right)\equiv  \max\left\{s'_{t}\in S_{i,t+1}: \zeta_{i,t+1}(s^{\ddagger}_{i,t+1},h_{t+1})=\mathbb{E}^{\sigma}\left[ \zeta_{i,t+1}(\tilde{s}_{i,t+1}, h_{t+1})\middle|s_{i,t}, h_{t+1}, x_{i,t}\right]\right\}.
\end{aligned}
\]
Then, we can rewrite $\Lambda^{\texttt{ep}}_{i,t}\left(\mathtt{OM}_{i,t}|s_{i,t}, h_{t}, x_{i,t}\right)$ (i.e., when $a_{i,t}$ is obedient) by
\begin{equation}\label{eq:payofftogo_expost_2}
    \begin{aligned}
        &\Lambda^{\texttt{ep}}_{i,t}\left(\mathtt{OM}_{i,t}|s_{i,t}, h_{t}, x_{i,t}\right)=\phi^{\natural}_{i,t}(\sigma_{i,t}(s_{i,t},h_{t}), a_{-i,t}, h_{t})\mathbf{1}_{\{ \mathtt{OM}_{i,t} =1 \}}\\
& + \left(\underbrace{\zeta_{i,t+1}\left(\mathcal{C}^{\ddagger}_{i,t}\left(s_{i,t}, h_{t}, x_{i,t}\right),h_{t+1}\right)+\mathbb{E}^{\sigma}\left[\Lambda^{\texttt{ep}}_{i,t+1}\left(\widetilde{\mathtt{OM}}_{i,t+1}, \tilde{a}_{i,t+1}|\tilde{s}_{i,t+1}, \tilde{h}_{t+1}\right)\middle| s_{i,t}, h_{t+1}, x_{i,t} \right]}_{\mathbf{E}_{i,t+1}\left( \mathcal{C}^{\ddagger}_{i,t}\left(s_{i,t}, h_{t}, x_{i,t}\right), h_{t+1}, x_{i,t} \right)}\right)\mathbf{1}_{\{\mathtt{OM}_{i,t}=0 \}}.
    \end{aligned}
\end{equation}
Since $h_{t+1}=\left(a_{t},  h_{t}\right)$, the reformulation of $\Lambda^{\mathtt{ep}}_{i,t}$ in (\ref{eq:payofftogo_expost_2}) becomes similar to $\Lambda_{i,t}$ given by (\ref{eq:to_go_off_menu_switch}) (or the recursive form given by (\ref{eq:app_thm_4_0})).
The $S^{\mathtt{off}}$-switchability requires, for all $i\in\mathcal{N}$, $t\in\mathbb{T}$,
\begin{equation}
    \Lambda^{\texttt{ep}}_{i,t}\left(\mathtt{OM}_{i,t}|s_{i,t}, h_{t}, x_{i,t}\right)=
\begin{cases}
   \phi^{\natural}_{i,t}(h_{t+1}) ,& \textup{ if } s_{i,t}\in S^{\mathtt{off}}_{i,t},\\
    \mathbf{E}_{i,t+1}\left( \mathcal{C}^{\ddagger}_{i,t}\left(s_{i,t}, h_{t}, x_{i,t}\right), h_{t+1}, x_{i,t} \right), & \textup{ if } s_{i,t}\in 
S_{i,t}\backslash S^{\mathtt{off}}_{i,t}.
\end{cases}
\end{equation}
For any $S^{\mathtt{off}}_{i,t}$, define the sets 
\[
\begin{aligned}
&S^{\ddagger}_{i,t+1}\left[S^{\mathtt{off}}_{i,t};h_{t}, x_{i,t}\right]\equiv\left\{s^{\ddagger}_{i,t+1} = \mathcal{C}^{\ddagger}_{i,t}\left(s_{i,t}, h_{t}, x_{i,t}\right): s_{i,t}\in S^{\mathtt{off}}_{i,t} \right\},\\
& S^{\sharp}_{i,t+1}\left[S^{\mathtt{off}}_{i,t};h_{t}, x_{i,t}\right]\equiv\left\{s^{\sharp}_{i,t+1} = \mathcal{C}^{\ddagger}_{i,t}\left(s_{i,t}, h_{t}, x_{i,t}\right): s_{i,t}\in S_{i,t}\backslash S^{\mathtt{off}}_{i,t} \right\}.
\end{aligned}
\]
Therefore, we have (with abuse of notation)
\[
\Lambda^{\texttt{ep}}_{i,t}\left(\mathtt{OM}_{i,t}, s^{\ddagger}_{i,t+1}|s_{i,t}, h_{t}, x_{i,t}\right)=
\begin{cases}
   \phi^{\natural}_{i,t}(h_{t+1}) ,& \textup{ if } s^{\ddagger}_{i,t+1}\in S^{\ddagger}_{i,t+1}[S^{\mathtt{off}}_{i,t};h_{t}, x_{i,t}],\\
    \mathbf{E}_{i,t+1}\left( s^{\ddagger}_{i,t+1}, h_{t+1}, x_{i,t} \right), & \textup{ if } s^{\ddagger}_{i,t+1}\in S^{\sharp}_{i,t+1}[S^{\mathtt{off}}_{i,t};h_{t}, x_{i,t}],
\end{cases}
\]
which exhibits a similar structure of our periodic interim $S^{\mathtt{off}}$-switchability.
However, when the mechanism is RAIC, the principal can infer each agent $i$'s true state $s_{i,t}$ from $a_{i,t}$.
Accordingly, with the established correspondence $\mathcal{C}^{\ddagger}{i,t}$, the principal obtains knowledge of $s^{\ddagger}{i,t+1}$ that corresponds to the actual state, $s_{i,t}$, of each agent $i$. This provides the principal with a significant informational edge when devising the mechanism when the agents' participation decisions are made on a periodic ex-post basis.
The added advantage would simplify the design principles introduced in this paper, especially with regard to the formulation of the off-switch functions.
The design methodology introduced in this paper can furnish a set of design principles for the mechanism design when the agents take the OM actions (i.e., the strategic participation decision) on a periodic ex-post basis.
Nonetheless, we omit the comprehensive characterizations of such mechanism design problems as they exceed the intended scope of this current study

\bibliographystyle{cas-model2-names} 
\bibliography{OM_Reference}


\appendix

\section{Example of Maximum-Sensitive Obedience}

In this section, we show an example of the base game model $\mathcal{G}$ and the mechanism $\Theta$ such that the maximum-sensitive obedience is achieved.


Let $M_{i,t}:S_{i,t} \mapsto \mathbb{R}$ be a linear function of the state and let $R_{i,t}: A_{t}\mapsto \mathbb{R}$ be a function of the joint action that is independent of the state.
We assume that the constant $\mathcal{M}_{i,t}=M'_{i,t}(s_{i,t})\geq0$ for all $i\in\mathcal{N}$, $t\in\mathbb{T}$, $s_{i,t}\in S_{i,t}$.
The agents' reward functions satisfy \textit{additive separation} if, for all $i\in\mathcal{N}$, $t\in\mathbb{T}$,
\begin{equation}\label{eq:add_separation}
    u_{i,t}(s_{i,t}, a_{t}) = M_{i,t}(s_{i,t}) + R_{i,t}(a_{t}).
\end{equation}
We assume that the choice of $\{M_{i,t}\}$ and $\{R_{i,t}\}$ satisfies Condition \ref{cond:differentiable_reward}.
The state-dynamic model satisfies \textit{additive separation} if, for all $i\in\mathcal{N}$, $t\in\mathbb{T}$,
\[
\kappa_{i,t+1}(s_{i,t}, h_{t+1}, \omega_{i,t+1}) = \widehat{M}_{i,t+1}(s_{i,t}) + \widehat{R}_{i,t+1}(h_{t+1}, \omega_{i,t+1}),
\]
where $\widehat{M}_{i,t+1}$ is a linear function of $s_{i,t}$ and $\widehat{R}_{i,t+1}$ is independent of $s_{i,t}$.
Suppose that the choice of $(\widehat{M}_{i,t+1})$ and $(\widehat{R}_{i,t+1})$ satisfies Condition \ref{cond:bounded_dynamic}.
The additive separation of the state-dynamic model implies
\[
    \begin{aligned}
    &\mathtt{mp}_{i,t}( a_{i,t}, s_{i,t}, h_{t}, L, \chi_{-i,t})=\prod_{k=t+1}^{L} \widehat{\mathcal{M}}_{i,k},
    \end{aligned} 
\]
where $\widehat{\mathcal{M}}_{i,k}=\widehat{\mathcal{M}}'_{i,k}(s_{i,k-1})$ is a constant for all $s_{i,k-1}\in S_{i,k-1}$, $i\in\mathcal{N}$, $k\in\mathbb{T}_{t+1,T}$.
We assume that the constant $\widehat{\mathcal{M}}_{i,k}\geq0$ for all $i\in\mathcal{N}$, $t\in\mathbb{T}$, $s_{i,t}\in S_{i,t}$.
Consider a $S^{\mathtt{off}}$-DOIC mechanism $\Theta$ with $S^{\mathtt{off}}\in \mathcal{S}^{\mathtt{IX}}[\mathcal{G}]$.
Let $\tau=(\tau_{i,t})$ be the OM strategy profile corresponding to $S^{\mathtt{off}}$, and let $\chi=(\chi_{i,t})$ satisfies (\ref{eq:x_coincides_tau}) given $\tau$. 
Then, given $L^{*}=\tau_{i,t}(s_{i,t}, h_{t})$,
\[
\begin{aligned}
    &\frac{\partial }{\partial v} \max_{L\in\mathbb{T}_{t,T}} G_{i,t}\left(\sigma_{i,t}(s'_{i,t}, h_{t})\middle|v, h_{t}, L, \chi_{-i,t}\right)\big|_{v=s_{i,t}}\\
    =&\frac{\partial }{\partial v}G_{i,t}\left(\sigma_{i,t}(s'_{i,t}, h_{t})\middle|v, h_{t}, L^{*}, \chi_{-i,t}\right)\big|_{v=s_{i,t}}= \frac{\partial }{\partial v}G_{i,t}\left(\sigma_{i,t}(s'_{i,t}, h_{t})\middle|v, h_{t}, T, \chi_{-i,t}\right)\big|_{v=s_{i,t}} \\
    =&\sum_{k'=t}^{T}\Big(\mathcal{M}_{i,k'}\prod_{k=t+1}^{k'} \widehat{\mathcal{M}}_{i,k}\Big)=\max\limits_{L\in\mathbb{T}_{t,T}}\sum_{k'=t}^{L}\Big(\mathcal{M}_{i,k'}\prod_{k=t+1}^{k'} \widehat{\mathcal{M}}_{i,k}\Big)
    =\max\limits_{L\in\mathbb{T}_{t,T}}\frac{\partial }{\partial v} G_{i,t}\left(\sigma_{i,t}(s'_{i,t}, h_{t})\middle|v, h_{t}, L, \chi_{-i,t}\right)\Big|_{v=s_{i,t}},
\end{aligned}
\]
where the second equality is due to the zero on-rent, and the fourth equality is due to $\mathcal{M}_{i,k}\geq 0$ and $\widehat{\mathcal{M}}_{i,k}\geq 0$, for all $k\in\mathbb{T}$.






\hfill $\square$

\section{Proof of Proposition \ref{prop:dynamic_revelation_principle}}

Since the mechanism $\Theta'$ is DD and induces a PBE $<\tau', \pi'>$, we have that for every $s_{i,t}\in S_{i,t}$ there exists $s'_{i,t}\in S_{i,t}$ such that $\pi_{i,t}(s_{i,t}, h_{t}) = \sigma_{i,t}(s'_{i,t}, h_{t})$ for any given $h_{t}\in H_{t}$.
Suppose that the mechanism $\Theta'$ leads to OFR $S^{\mathtt{off}}=\{S^{\mathtt{off}}_{i,t}\}$.

First, we construct a feasible mechanism $\Theta$. 
The task policy profile $\sigma=(\sigma_{i,t})$ satisfies $\sigma_{i,t}(s_{i,t}, h_{t}) = \pi'_{i,t}(s_{i,t}, h_{t})$, for all $i\in\mathcal{N}$, $t\in\mathbb{T}$, $s_{i,t}\in S_{i,t}$, $h_{t}\in H_{t}$.
For the coupling policies and the off-switch functions, we let $\rho_{i,t}(a_{t}, h_{t}) = \rho'_{i,t}(a_{t}, h_{t})$ and $\phi_{i,t}(h_{t}|c'_{i,t}) = \phi'_{i,t}(h_{t}|c'_{i,t})$, for all $i\in \mathcal{N}$, $t\in\mathbb{T}$, $s_{i,t}\in S_{i,t}$, $h_{t}\in H_{t}$.
Note that by letting $\rho_{i,t}(\cdot) = \rho'_{i,t}(\cdot)$ (resp. $\phi_{i,t}(\cdot|c'_{i,t}) = \phi'_{i,t}(\cdot|c'_{i,t})$), we do not assume that $\rho_{i,t}$ (resp. $\phi_{i,t}$) is exactly the same function of $\rho'_{i,t}$ (resp. $\phi'_{i,t}$).
%
%
%

Second, we show that $\Theta$ coupled with $<\tau, \sigma>$ leads to the same expected payoff for the principal as what $\Theta'$ with $<\tau', \sigma'>$ does.
To highlight the dependence on the mechanism, we add $\Theta$ in the notations of functions.
The period-$t$ interim expected payoff-to-go under $\Theta'$ is then given by
\[
    \begin{aligned}
    &\Lambda_{i,t}(\mathtt{OM}_{i,t}, a_{i,t}|s_{i,t}, h_{t}, x_{i,t}; \Theta')\equiv \phi'_{i,t}(h_{t}|c_{i,t}) \mathbf{1}_{\{ \mathtt{OM}_{i,t} =1 \}}+ \max\limits_{L\in\mathbb{T}_{t,T}} G_{i,t}(a_{i,t}|s_{i,t}, h_{t}, L, x_{i,t}; \Theta') \mathbf{1}_{\{ \mathtt{OM}_{i,t} =0 \}}.
    \end{aligned}
\]
By the construction of $\Theta$, the obedient policy $\pi_{i,t}(\cdot)=\sigma_{i,t}(\cdot)$ leads to the same action choice for every state as $\sigma'_{i,t}$ does, thereby inducing the same history of actions.
Given the same base game model $\mathcal{G}$, we have $\gamma^{\sigma'}_{\pi',\tau'} = \gamma^{\sigma}_{\sigma,\tau}$.
In addition, from the constructions of $\rho$ and $\phi$, respectively from $\rho'$ and $\phi'$, it holds that $Q(\gamma^{\sigma'}_{\pi',\tau'};\mathcal{G}^{\Theta'}) = Q(\gamma^{\sigma}_{\sigma,\tau};\mathcal{G}^{\Theta})$.

Third, we show that $<\tau, \sigma>$ constitutes a PBE of the game $\mathcal{G}^{\Theta}$.
Since $<\tau', \pi'>$ is a PBE of the game $\mathcal{G}^{\Theta'}$, 
\[
<\tau'_{i,t}(s_{i,t}, h_{t}),\pi'_{i,t}(s_{i,t}, h_{t})> \in D_{i,t}(s_{i,t},h_{t}|x'_{i,t}, \pi'_{-(i,t)}, \mathcal{G}^{\Theta'}),
\]
where each conjecture $x'_{i,t}$ satisfies (\ref{eq:x_coincides_tau}) given $\tau'$.
That is, for all $i\in\mathcal{N}$, $t\in\mathbb{T}$, $s_{i,t}\in S_{i,t}$, $\mathtt{OM}'_{i,t}\in\{0,1\}$, $a'_{i,t}\in A_{i,t}[\sigma']$,
\[
\Lambda_{i,t}(\mathtt{OM}_{i,t}, a_{i,t}|s_{i,t}, h_{t}, x_{i,t}; \Theta') \geq \Lambda_{i,t}(\mathtt{OM}'_{i,t}, a'_{i,t}|s_{i,t}, h_{t}, x'_{i,t}; \Theta'),
\]
where $\mathtt{OM}_{i,t}=\tau'_{i,t}(s_{i,t},h_{t})=\tau_{i,t}(s_{i,t}, h_{t})$ and $a_{i,t}=\pi_{i,t}(s_{i,t}, h_{t})=\sigma_{i,t}(s_{i,t},h_{t})$.
Due to $\tau_{i,t}(\cdot) = \tau_{i,t}(\cdot)$, the associated conjectures $x_{i,t}=x'_{i,t}$.
Since $\Lambda_{i,t}(\mathtt{OM}_{i,t}, a_{i,t}|s_{i,t}, h_{t}, x'_{i,t}; \Theta') =\Lambda_{i,t}(\mathtt{OM}_{i,t}, a_{i,t}|s_{i,t}, h_{t}, x_{i,t}; \Theta)$, it is straightforward to see that 
\[
<\tau_{i,t}(s_{i,t}, h_{t}),\sigma_{i,t}(s_{i,t}, h_{t})> \in D_{i,t}(s_{i,t},h_{t}|x_{i,t}, \sigma_{-(i,t)}, \mathcal{G}^{\Theta}).
\]
That is, $<\tau,\sigma>$ is a PBE of $\mathcal{G}^{\Theta}$.
%
\hfill $\square$

\section{Proof of Theorem \ref{thm:payoff_flow_conservation_sufficient}}\label{app:thm:payoff_flow_conservation_sufficient}

\textbf{Notations. }
We use $(a_{i,t}, s_{i,t})$ and $(\hat{a}_{i,t}, \hat{s}_{i,t})$ denote two pairs of obedient action and the corresponding state and use $(\hat{a}_{i,t}, s_{i,t})$ or $(a_{i,t}, \hat{s}_{i,t})$ to denote any pair of disobedient action and the corresponding state.
For ease of notation, we drop the conjectures $x=(x_{i,t})$ on the notations.
In addition, we make the following simplifications: $g_{i,t}(a'_{i,t}, L) = g_{i,t}(a'_{i,t}, s_{i,t}, h_{t}, L|\theta_{i,t})$; $\phi_{i,t}(h_{t}) = \phi_{i,t}(h_{t}|c_{i,t})$; $\sigma_{i,t}(s_{i,t}) = \sigma_{i,t}(s_{i,t}, h_{t})$ and $\rho_{i,t}(a_{t}) = \rho_{i,t}(a_{t}, h_{t})$; $\mathtt{Mg}_{i,t}(a_{i,t}) = \mathtt{Mg}_{i,t}(s_{i,t}, h_{t}|\theta_{i,t})$ for $a_{i,t}=\sigma_{i,t}(s_{i,t}, h_{t})$; $\mathbb{E}^{\sigma}[\cdot|a'_{i,t}, h_{t}] = \mathbb{E}^{\sigma}_{a'_{i,t}}[\cdot|s_{i,t}, h_{t}]$; unless otherwise stated.

For any $\hat{a}_{i,t}\in A_{i,t}[\sigma]$, $h_{t}\in H_{t}$, let
\[
\mathtt{E}\rho_{i,t}(\hat{a}_{i,t}, h_{t}) \equiv \mathbb{E}^{F_{-i,t}}\Big[\rho_{i,t}(\hat{a}_{i,t}, \sigma_{-i,t}(\tilde{s}_{-i,t}, h_{t}), h_{t})\Big|h_{t}\Big]. 
\]
That is, $\mathtt{E}\rho_{i,t}(a_{i,t}, h_{t})$ is agent $i$'s expected coupling value if he plays (obedient) $a_{i,t}$.
By rearranging (\ref{eq:feasible_rho_C1_original}), we obtain
\begin{equation}\label{eq:payment_marginal_relation_S_appendix_prf_thm1}
    \begin{aligned}
        &\mathtt{E}\rho_{i,t}(\hat{a}_{i,t}, h_{t})= \mathtt{Mg}_{i,t}(a_{i,t}) - \mathbb{E}^{\sigma}\Big[\mathtt{Mg}_{i,t+1}(\tilde{a}_{i,t+1})\Big|a_{i,t}, h_{t}\Big]- \mathbb{E}^{F_{-i,t}}\Big[ u_{i,t}(s_{i,t}, a_{i,t})\Big| h_{t} \Big].
    \end{aligned}
\end{equation}
We then construct the off-switch functions using (\ref{eq:payment_marginal_relation_S_appendix_prf_thm1}) by rearranging (\ref{eq:thm_payoff_conservation_phi}):
\begin{equation}\label{eq:thm_payoff_conservation_phi_S_appendix_prf_thm1}
    \begin{aligned}
    &\phi_{i,t}(h_{t}|c_{i,t})= \eta_{i,t}(h_{t}|c_{i,t}) + g_{i,t-1}(a_{i,t-1}, t-1)-\mathtt{Mg}_{i,t}(a_{i,t})+ \mathbb{E}^{\sigma }\Big[\mathtt{Mg}_{i,t+1}(\tilde{a}_{i,t+1}) \Big|a_{i,t}, h_{t}\Big].
    \end{aligned}
\end{equation}
From (\ref{eq:payment_marginal_relation_S_appendix_prf_thm1}),
\[
\begin{aligned}
    \mathtt{E}\rho_{i,t}(\hat{a}_{i,t}, h_{t}) - \mathtt{E}\rho_{i,t}(a_{i,t}, h_{t})&=\mathtt{Mg}_{i,t}(\hat{a}_{i,t})-\mathtt{Mg}_{i,t}(a_{i,t}) -u_{i,t}(s_{i,t}, \hat{a}_{i,t}) + u_{i,t}(s_{i,t}, a_{i,t})\\
    &- \mathbb{E}^{\sigma}\Big[\mathtt{Mg}_{i,t+1}(\tilde{a}_{i,t+1})\Big|\hat{a}_{i,t}, h_{t}\Big] + \mathbb{E}^{\sigma}\Big[\mathtt{Mg}_{i,t+1}(\tilde{a}_{i,t+1})\Big|a_{i,t}, h_{t}\Big].
\end{aligned}
\]
The condition \ref{itm:C3} implies
\begin{equation}\label{eq:app_thm_1_eq_1}
    \begin{aligned}
        &\mathtt{E}\rho_{i,t}(\hat{a}_{i,t}, h_{t}) - \mathtt{E}\rho_{i,t}(a_{i,t}, h_{t})\leq -u_{i,t}(s_{i,t}, \hat{a}_{i,t}) + u_{i,t}(s_{i,t}, a_{i,t})-\mathbb{E}^{\sigma}\Big[\mathtt{Mg}_{i,t+1}(\tilde{a}_{i,t+1})\Big|\hat{a}_{i,t}, h_{t}\Big]\\
        &+ \mathbb{E}^{\sigma}\Big[\mathtt{Mg}_{i,t+1}(\tilde{a}_{i,t+1})\Big|a_{i,t}, h_{t}\Big]+ \min\limits_{ L\in\mathbb{T}_{t,T}}\Big( \lambda_{i,t}(\hat{a}_{i,t}, \hat{a}_{i,t},L)-\lambda_{i,t}(a_{i,t}, \hat{a}_{i,t},L) -\eta_{i,L+1}(\tilde{h}_{L+1}|c_{i,L})\Big).
    \end{aligned}
\end{equation}
Due to (\ref{eq:feasible_rho_C1_original}), we have, for any $L\in \mathbb{T}_{t,T}$, $s_{i,t}\in S_{i,t}$, $a_{i,t},\hat{a}_{i,t}\in A_{i,t}[\sigma]$,
\[
\begin{aligned}
    &u_{i,t}\big(s_{i,t}, \hat{a}_{i,t}\big) + \mathbb{E}^{\sigma}\Big[\mathtt{Mg}_{i,t+1}(\tilde{a}_{i,t+1})\Big|\hat{a}_{i,t}, h_{t}\Big]= \mathbb{E}^{\sigma}\Big[ \sum\limits_{k=t}^{L} u_{i,s}(\tilde{a}_{i,k}, \sigma_{i,k}(\tilde{a}_{k})) +  \sum\limits_{k=t+1}^{L}\rho_{i,k}(\tilde{a}_{k} ) + \mathtt{Mg}_{i,L+1}(\tilde{a}_{i,L+1})  \Big|\hat{a}_{i,t}, h_{t}\Big].
\end{aligned}
\]
Then, the definition of $\lambda_{i,t}$ and the condition \ref{itm:C1} yields:
\begin{equation}\label{eq:app_thm_1_eq_2-1}
    \begin{aligned}
        &\min\limits_{ L\in \mathbb{T}_{t,T}}\Bigg( \lambda_{i,t}(\hat{a}_{i,t}, \hat{a}_{i,t},L|\hat{s}_{i,t}, h_{t}) - \mathbb{E}^{\sigma}\Big[\sum\limits_{k=t}^{L} u_{i,s}(\tilde{a}_{i,k}, \sigma_{i,k}(\tilde{a}_{k})) +  \sum\limits_{k=t+1}^{L}\rho_{i,k}(\tilde{a}_{k} )\\
        &+ \mathtt{Mg}_{i,L+1}(\Tilde{a}_{i,L+1})\Big|\hat{a}_{i,t}, h_{t}\Big] -\lambda_{i,t}(a_{i,t}, \hat{a}_{i,t},L|s_{i,t}, h_{t})-\eta_{i,L+1}(\tilde{h}_{L+1}|c_{i,L+1})\Bigg)\\
        &= \min\limits_{ L\in \mathbb{T}_{t,T}}\Bigg( -\lambda_{i,t}(a_{i,t}, \hat{a}_{i,t},L|s_{i,t}, h_{t}) -\eta_{i,L+1}(\tilde{h}_{L+1}|c_{i,L+1})\\
        &+\mathbb{E}^{\sigma}\Big[u_{i,L}\big( \tilde{s}_{i,L}, \tilde{a}_{L} \big)  +\rho_{i,L}( \tilde{a}_{L}) +\phi_{i,L+1}(\tilde{h}_{L+1})- \mathtt{Mg}_{i,L}(\tilde{a}_{i,L})\Big| \hat{a}_{i,t}, h_{t}\Big]  \Bigg).
    \end{aligned}
\end{equation}

Since $\rho_{i,L}$ and $\phi_{i,L+1}$ satisfy the conditions \ref{itm:C1} and \ref{itm:C2}
, respectively, we have, for any $L'\in \mathbb{T}_{L,T}$, 
\begin{equation}\label{eq:app_thm_1_eq_3}
    \begin{aligned}
        &\mathbb{E}^{\sigma}\Big[u_{i,L}\big( \tilde{s}_{i,L},  \tilde{a}_{L} \big) +\rho_{i,L}( \tilde{a}_{L}) +\phi_{i,L+1}\big(\tilde{h}_{L}\big|c_{L+1}\big)- \mathtt{Mg}_{i,L}(\tilde{a}_{i,L})  \Big| \hat{a}_{i,t}, h_{t} \Big]\\
        &=\mathbb{E}^{\sigma}\Big[ \eta_{i,L+1}(\Tilde{h}_{L+1}|c_{i,L+1}) + g_{i,L}(\Tilde{a}_{i,L}, L') - \mathtt{Mg}_{i,L}(\Tilde{a}_{i,L}) \Big| \hat{a}_{i,t}, h_{t} \Big].
    \end{aligned}
\end{equation}
Since $\mathbb{E}^{\sigma}[\cdot|a'_{i,t}, h_{t}] = \mathbb{E}^{\sigma}_{a'_{i,t}}[\cdot|s_{i,t}, h_{t}]$, (\ref{eq:app_thm_1_eq_3}) $\leq \mathbb{E}^{\sigma}\Big[ \eta_{i,L+1}(\Tilde{h}_{L+1}|c_{i,L+1})\Big| \hat{a}_{i,t}, h_{t} \Big]$.
Hence, the definition of $\lambda_{i,t}$ gives
\[
\begin{aligned}
    (\ref{eq:app_thm_1_eq_2-1})\leq& \min\limits_{ L\in \mathbb{T}_{t,T}} -\lambda_{i,t}(a_{i,t}, \hat{a}_{i,t},L|s_{i,t}, h_{t})
    =\min\limits_{ L\in \mathbb{T}_{t,T}}\Big(-J_{i,t}(\hat{a}_{i,t}, L|s_{i,t}, h_{t})-\mathbb{E}^{\sigma}\Big[ \phi_{i,L+1}\big(\tilde{h}_{L+1}\big)\Big|\hat{a}_{i,t}, h_{t} \Big]\Big) + \mathtt{E}\rho_{i,t}(\hat{a}_{i,t}, h_{t}),
\end{aligned}
\]
which implies
\[
\begin{aligned}
    (\ref{eq:app_thm_1_eq_1}) 
    &\leq \mathbb{E}^{\sigma}\Big[ \sum\limits_{k=t}^{T} u_{i,k}(\tilde{s}_{i,k}, \tilde{a}_{k}) +  \sum\limits_{k=t+1}^{T}\rho_{i,k}(\tilde{a}_{k})   \Big| a_{i,t}, h_{t}   \Big]\\
    &+ \min\limits_{ L\in \mathbb{T}_{t,T}}\Big(-J_{i,t}(\hat{a}_{i,t}, L|s_{i,t}, h_{t}) - \mathbb{E}^{\sigma}\Big[ \phi_{i,L+1}(\tilde{h}_{L+1}|c_{i,L+1})\Big|\hat{a}_{i,t}, h_{t} \Big] \Big)+ \mathtt{E}\rho_{i,t}(\hat{a}_{i,t}, h_{t})\\
    & \leq \max\limits_{L\in \mathbb{T}_{t,T}}\Big(  \sum\limits_{k=t}^{L} u_{i,k}(\tilde{s}_{i,k}, \tilde{a}_{k}) +  \sum\limits_{k=t+1}^{L}\rho_{i,k}(\tilde{a}_{k}) +   \phi_{i,L+1}(\tilde{h}_{L+1}|c_{i,L+1})\Big|a_{i,t}, h_{t} \Big]\Big)\\
    &+ \min\limits_{ L\in \mathbb{T}_{t,T}}\Big(-J_{i,t}(\hat{a}_{i,t}, L|s_{i,t}, h_{t}) - \mathbb{E}^{\sigma}\Big[ \phi_{i,L+1}(\tilde{h}_{L+1}|c_{i,L+1})\Big|\hat{a}_{i,t}, h_{t} \Big] \Big)+ \mathtt{E}\rho_{i,t}(\hat{a}_{i,t}, h_{t})\\
    &=\max\limits_{L\in \mathbb{T}_{t,T}}\Big(  \sum\limits_{k=t}^{L} u_{i,k}(\tilde{s}_{i,k}, \tilde{a}_{k}) +  \sum\limits_{k=t+1}^{L}\rho_{i,k}(\tilde{a}_{k}) +   \phi_{i,L+1}(\tilde{h}_{L+1}|c_{i,L+1})\Big|a_{i,t}, h_{t} \Big]\Big)\\
    &-\max\limits_{L\in \mathbb{T}_{t,T}}\Big(  \sum\limits_{k=t}^{L} u_{i,k}(\tilde{s}_{i,k}, \tilde{a}_{k}) +  \sum\limits_{k=t+1}^{L}\rho_{i,k}(\tilde{a}_{k}) +   \phi_{i,L+1}(\tilde{h}_{L+1}|c_{i,L+1})\Big|\hat{a}_{i,t}, h_{t} \Big]\Big).
\end{aligned}
\]
%
%
%
From the definition of $Z_{i,t}$ in (\ref{eq:function_z}), we have
\[
\begin{aligned}
    &Z_{i,t}(a_{i,t}|s_{i,t}, h_{t}; \sigma,\rho, \phi)=\mathtt{E}\rho_{i,t}(\hat{a}_{i,t}, h_{t}) +\max\limits_{L\in \mathbb{T}_{t,T}}\Big(  \sum\limits_{k=t}^{L} u_{i,k}(\tilde{s}_{i,k}, \tilde{a}_{k}) +  \sum\limits_{k=t+1}^{L}\rho_{i,k}(\tilde{a}_{k}) +   \phi_{i,L+1}(\tilde{h}_{L+1}|c_{i,L+1})\Big|a_{i,t}, h_{t} \Big]\Big).
\end{aligned}
\]
Therefore, we conclude that the conditions \ref{itm:C1}-\ref{itm:C3} implies the RAIC; i.e., for all $i\in\mathcal{N}$, $t\in\mathbb{T}$, $s_{i,t}\in S_{i,t}$, $h_{t}\in H_{t}$,
\[
Z_{i,t}(a_{i,t}|s_{i,t}, h_{t}; \sigma,\rho, \phi)\geq Z_{i,t}(\hat{a}_{i,t}|s_{i,t}, h_{t}; \sigma,\rho, \phi),
\]
where $a_{i,t}=\sigma_{i,t}(s_{i,t}, h_{t})$ is obedient.
%
%
\hfill $\square$

\section{ Proof of Proposition \ref{prop:essential_region_DR_SD} }\label{app:prop:essential_region_DR_SD}

Let $\hat{S}_{i,t} \equiv \overrightarrow{M}_{i,t}(h_{t}|\mathring{S}_{i,t}) = \mathring{S}_{i,t}$.
Let $s'_{i,t}\in \hat{S}_{i,t}$ and $s_{i,t}\in S_{i,t}\backslash \hat{S}_{i,t}$ denote two typical states in $\hat{S}_{i,t}$ and $\hat{S}_{i,t}$, respectively.
From the definition of $\vec{S}_{i,t}[y^{*}_{i,t}]$, we have $\zeta_{i,t}(s_{i,t},h_{t})\geq \zeta_{i,t}(s'_{i,t}, h_{t})$
%
%
which implies 
\begin{equation}\label{eq:app_prop_6_0}
    \Upsilon_{i,t}(s_{i,t},s'_{i,t} |h_{t})\geq  0.
\end{equation}
The (\ref{eq:dcr_conditions}) condition implies for all $i\in\mathcal{N}$, $t\in\mathbb{T}$, $\tau\in\mathbb{T}_{t,T}$,$h_{t}\in H_{t}$,
\[
\begin{aligned}
\int^{s_{i,t}}_{\theta_{i,t}}\mathbb{E}^{\sigma}\Big[ \beta^{\tau-t}_{i} \frac{ \partial  }{ \partial v } u_{i,\tau}(v, \tilde{\sigma}_{i,\tau}(\tilde{s}_{i,\tau}, \tilde{h}_{i,\tau} ) )\big|_{v=\tilde{s}_{i,\tau}} \Big|s_{i,t}, h_{t}\Big]\geq \int^{\hat{s}_{i,t}}_{\theta_{i,t}}\mathbb{E}^{\sigma}\Big[ \beta^{\tau-t}_{i}\frac{ \partial  }{ \partial v } u_{i,\tau}(v, \tilde{\sigma}_{i,\tau}(\tilde{s}_{i,\tau}, \tilde{h}_{i,\tau} ) )\big|_{v=\tilde{s}_{i,\tau}} \Big|s'_{i,t}, h_{t}\Big].
\end{aligned}
\]
Thus, 
\[
\begin{aligned}
    \mathbb{E}^{\sigma}\Big[\mathtt{Mg}_{i,t+1}(\tilde{s}_{i,t+1}, \tilde{h}_{t+1})\Big|s_{i,t}, h_{t}\Big] \geq  \mathbb{E}^{\sigma}\Big[\mathtt{Mg}_{i,t+1}(\tilde{s}_{i,t+1}, \tilde{h}_{t+1})\Big|s'_{i,t}, h_{t}\Big],
\end{aligned}
\]
which implies
\begin{equation}\label{eq:app_prop_6_1}
    \begin{aligned}
\overline{\Xi}_{i,t}(s_{i,t}, s'_{i,t}, h_{t})\geq 0.
\end{aligned}
\end{equation}
Hence, (\ref{eq:app_prop_6_0}) and (\ref{eq:app_prop_6_1}) yield 
\begin{equation}\label{eq:app_prop_6_3}
     \Upsilon_{i,t}(s_{i,t},s'_{i,t} |h_{t}) + \overline{\Xi}_{i,t}(s_{i,t}, s'_{i,t}, h_{t})\geq 0.
\end{equation}

Let $\hat{s}_{i,t}\in \overrightarrow{M}_{i,t}(h_{t}|\mathring{S}_{i,t})$ be a state such that $\overrightarrow{M}_{i,t}(h_{t}|\mathring{S}_{i,t})\backslash \{\vec{s}_{i,t}\}$ is a dominated region of $\overrightarrow{M}_{i,t}(h_{t}|\mathring{S}_{i,t})$.
Then, following (\ref{eq:app_prop_6_0}) and (\ref{eq:app_prop_6_1}), we have, for all $s''_{i,t}\in \overrightarrow{M}_{i,t}(h_{t}|\mathring{S}_{i,t})\backslash \{\vec{s}_{i,t}\}$
\begin{equation}\label{eq:app_prop_6_4}
    \begin{aligned}
    \Upsilon_{i,t}(\hat{s}_{i,t},s'_{i,t} |h_{t}) + \overline{\Xi}_{i,t}(\hat{s}_{i,t}, s''_{i,t}, h_{t})\geq 0.
    \end{aligned}
\end{equation}
Then, from Definition \ref{def:essential_region}, (\ref{eq:app_prop_6_3}) and (\ref{eq:app_prop_6_4}) show that such dominated region $\overrightarrow{M}_{i,t}(h_{t}|\mathring{S}_{i,t})$ is an essential region.
%
\hfill $\square$

\section{ Proof of Theorem \ref{thm:SPIR_conditions} }\label{app:thm:SPIR_conditions}

First, we show by the following lemma that when the cutoff-switch is given by (\ref{eq:cutoff_switch_horizontal}), each agent $i$'s period-$t$ expected payoff-to-go given by (\ref{eq:to_go_off_menu_switch}) can be represented in terms of the carrier function given by (\ref{eq:cutoff_switch_horizontal}) and $\overline{\delta}_{i,t}$ given by (\ref{eq:uppted_delta_function}).
For simplicity, we temporally drop the conjecture in the notations.

\begin{lemma}\label{lemma:app_proof_thm2}
Let $<\sigma, \rho, \phi>$ be a RAIC delegation mechanism with the horizontal cutoff-switch $\phi(\cdot|c)$ given by (\ref{eq:cutoff_switch_horizontal}).
Then, each expected payoff-to-go in (\ref{eq:to_go_off_menu_switch}) is equivalent to the following: for all $i\in \mathcal{N}$, $t\in\mathbb{T}$, $s_{i,t}\in S_{i,t}$, $h_{t}\in H_{t}$,
\begin{equation}\label{eq:app_prospect_represent}
    \begin{aligned}
        \Lambda_{i,t}(\mathtt{OM}_{i,t}, a_{i,t}|s_{i,t},  h_{t}) = \mathtt{Mg}_{i,t}(\overline{D}_{i,t}\circ s_{i,t}, h_{t})+ \overline{\delta}_{i,t}(\overline{D}_{i,t}\circ s_{i,t}, h_{t}).
    \end{aligned}
\end{equation}
\end{lemma}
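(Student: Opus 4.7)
The plan is to prove (\ref{eq:app_prospect_represent}) by backward induction on $t$, case-splitting at each step according to the definition of $\overline{D}_{i,t}$: (a) $s_{i,t}\in \vec{\Gamma}^{b}_{i,t}(c_{i,t})\subseteq S^{\mathtt{off}}_{i,t}$, so $\overline{D}_{i,t}\circ s_{i,t}=\overline{d}_{i,t}(b)$; and (b) $s_{i,t}\notin S^{\mathtt{off}}_{i,t}$, so $\overline{D}_{i,t}\circ s_{i,t}=s_{i,t}$. Case (a) is immediate: the right-hand side of (\ref{eq:app_prospect_represent}) is exactly $\phi_{i,t}(h_{t}|c_{i,t})$ by the horizontal cutoff construction (\ref{eq:cutoff_switch_horizontal}), and horizontality together with the principal-favoring tie-breaking rule leads the obedient agent to $\mathtt{OM}_{i,t}=1$, so $\Lambda_{i,t}$ also equals $\phi_{i,t}(h_{t}|c_{i,t})$. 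The base case $t=T$ is also immediate since $\overline{\delta}_{i,T}\equiv 0$ and (C1) combined with $\mathtt{Mg}_{i,T+1}\equiv 0$ give $\max_{L}G_{i,T}(a_{i,T}|s_{i,T},h_{T},L)=\mathbb{E}^{F_{-i,T}}[z_{i,T}|h_{T}]=\mathtt{Mg}_{i,T}(s_{i,T},h_{T})$.

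For case (b), I would establish a Bellman-type recursion for $\Lambda_{i,t}(0,a_{i,t}|s_{i,t},h_{t})$: because $\phi_{i,L+1}$ enters only at the termination step, splitting off the period-$t$ reward and recognising that the continuation maximum from $t+1$ onward equals $\max_{\mathtt{OM}'}\Lambda_{i,t+1}(\mathtt{OM}',\tilde{a}_{i,t+1}|\tilde{s}_{i,t+1},\tilde{h}_{t+1})$ gives
\[
\Lambda_{i,t}(0,a_{i,t}|s_{i,t},h_{t}) = \mathbb{E}^{\sigma}\bigl[z_{i,t}(\tilde{s}_{i,t},\tilde{a}_{t}) + \max_{\mathtt{OM}'}\Lambda_{i,t+1}(\mathtt{OM}',\tilde{a}_{i,t+1}|\tilde{s}_{i,t+1},\tilde{h}_{t+1}) \bigm| s_{i,t},h_{t}\bigr].
\]
The induction hypothesis at $t+1$ replaces the continuation value by $\mathtt{Mg}_{i,t+1}(\overline{D}_{i,t+1}\circ\tilde{s}_{i,t+1},\tilde{h}_{t+1})+\overline{\delta}_{i,t+1}(\overline{D}_{i,t+1}\circ\tilde{s}_{i,t+1},\tilde{h}_{t+1})$, and (C1) rewrites $\mathbb{E}^{F_{-i,t}}[z_{i,t}|h_{t}]$ as $\mathtt{Mg}_{i,t}(s_{i,t},h_{t})-\mathbb{E}^{\sigma}[\mathtt{Mg}_{i,t+1}(\tilde{s}_{i,t+1},\tilde{h}_{t+1})|s_{i,t},h_{t}]$. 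After regrouping, case (b) reduces to verifying the one-step unrolling
\[
\overline{\delta}_{i,t}(s_{i,t},h_{t}) = \mathbb{E}^{\sigma}\bigl[U_{i,t+1}(\overline{D}_{i,t+1}\circ\tilde{s}_{i,t+1},s_{i,t}|\tilde{h}_{t+1}) + \overline{\delta}_{i,t+1}(\overline{D}_{i,t+1}\circ\tilde{s}_{i,t+1},\tilde{h}_{t+1})\bigm| s_{i,t},h_{t}\bigr],
\]
which follows from the definition of $\overline{\mathtt{DE}}^{t,T}$ in (\ref{eq:upPT_process_def}) and the composition law $\overline{D}^{t,k}_{i}=\overline{D}^{t+1,k}_{i}\circ \overline{D}^{t,t+1}_{i}$ inherited from (\ref{eq:uppt_sequence}).

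The main obstacle I expect is this closing step: matching the telescoping structure of $\overline{\delta}_{i,t}$ generated by $\overline{\mathtt{DE}}$ with the expression delivered by (C1) requires careful bookkeeping of which arguments of $U_{i,k}$ are up-transformed and which are not, as well as of the fact that the conditioning history $\tilde{h}_{k}$ evolves under obedient continuations along the transformed trajectory (note in particular that in $U_{i,t+1}(\overline{D}_{i,t+1}\circ\tilde{s}_{i,t+1},s_{i,t}|\tilde{h}_{t+1})$ the first entry is transformed while the reference state $s_{i,t}$ is not). The horizontality of the cutoff-switch is what keeps the recursion internally consistent across different sub-OFRs, since it makes $\phi_{i,t}$ take a single value on each $\vec{\Gamma}^{b}_{i,t}(c_{i,t})$ and allows the up-transformed process to jump freely among the $\overline{d}_{i,t}(b)$'s without producing case-dependent boundary terms. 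Once the composition law is applied and the period-$(t+1)$ case analysis is dispatched via the induction hypothesis, the remaining algebra is routine.
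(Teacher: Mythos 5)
Your proposal is correct and follows essentially the same route as the paper's proof: a backward induction on the Bellman-type recursion for $\Lambda_{i,t}$, with the terminal case handled via \ref{itm:C1} and $\phi_{i,T+1}=0$, states in the OFR dispatched directly by the construction (\ref{eq:cutoff_switch_horizontal}), and states outside handled by converting the expected flow utility into marginal carriers and absorbing the continuation into $\overline{\delta}_{i,t}$. Your one-step unrolling identity for $\overline{\delta}_{i,t}$ via the composition law (\ref{eq:uppt_sequence}) is exactly the telescoping the paper carries out with the $\overline{\mathtt{DE}}^{t,T}$ operator, so no new ideas or gaps arise.
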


\proof{Proof.}
    From (\ref{eq:offswitch_payoff_to_go}) and (\ref{eq:to_go_off_menu_switch}), we can represent the prospect function of agent $i$ in period $t$ in terms of the current marginal carrier function and the next-period expected payoff-to-go. That is,
\[
 \begin{aligned}
    &G_{i,t}(a_{i,t}|s_{i,t}, h_{t})=\zeta_{i,t}(s_{i,t}, h_{t}) + \mathbb{E}^{\sigma}\Big[  \Lambda_{i,t+1}(\widetilde{\mathtt{OM}}_{i,t+1}, \tilde{a}_{i,t+1}|\tilde{s}_{i,t+1}, \tilde{h}_{t+1})  \Big| s_{i,t}, h_{t}\Big],
    \end{aligned}
\]
where $a_{i,t} = \sigma_{i,t}(s_{i,t}, h_{t})$.
Hence, we can write $\Lambda_{i,t}(\mathtt{OM}_{i,t}, a_{i,t}|s_{i,t}, h_{t})$ in the form of dynamic programming:
\begin{equation}\label{eq:app_thm_4_0}
    \begin{aligned}
\Lambda_{i,t}(\mathtt{OM}_{i,t}, a_{i,t}|s_{i,t}, h_{t}) &= \phi_{i,t}(h_{t}|c_{i,t})\mathbf{1}_{\{\mathtt{OM}_{i,t}=1\}}\\
&+ \Big( \zeta_{i,t}(s_{i,t}, h_{t}) + \mathbb{E}^{\sigma}\Big[ \Lambda_{i,t+1}(\widetilde{\texttt{OM}}_{i,t+1}, \tilde{a}_{i,t+1}|\tilde{s}_{i,t+1}, \tilde{h}_{t+1})  \Big| s_{i,t}, h_{t}\Big] \Big)\mathbf{1}_{\{\mathtt{OM}_{i,t}=0 \}}.
\end{aligned}
\end{equation}

In period $T$, 
\begin{equation}\label{eq:app_thm_4_1}
    \begin{aligned}
&\Lambda_{i,T}(\mathtt{OM}_{i,T}, a_{i,T}|s_{i,T}, h_{T})=\phi_{i,T}(h_{T}|c_{i,T}) \mathbf{1}_{\{\mathtt{OM}_{i,T}=1\}} + \zeta_{i,T}(s_{i,T}, h_{T})\mathbf{1}_{\{\mathtt{OM}_{i,t}=0\}}.\\
\end{aligned}
\end{equation}
From (\ref{eq:cutoff_switch_horizontal}),
\[
\begin{aligned}
\phi_{i,T}(h_{T}|c_{i,T}) =& \mathtt{Mg}_{i,t}(\overline{d}_{i,T}(b), h_{T})
= g_{i,T}( \overline{d}_{i,T}(b), h_{T}, T|\theta_{i,T}),
\end{aligned}
\]
and from (\ref{eq:def_marginal_carrier}),
\[
\begin{aligned}
    \zeta_{i,T}(s_{i,T}, h_{T}) =& \mathtt{Mg}_{i,t}(s_{i,T}, h_{T})
    = g_{i,T}(s_{i,T}, h_{T}, T|\theta_{i,T}).
    \end{aligned}
\]
Then, the definition of up-persistence transformation in (\ref{eq:def_up_transform}) implies, for any $b\in[B]$,
\[
\begin{cases}
g_{i,T}( \overline{d}_{i,T}(b), h_{T}, T|\theta_{i,T}) \geq \zeta_{i,T}(s_{i,T}, h_{T}), & \text{ if } s_{i,T}\in \vec{\Gamma}^{b}_{i,T}(c_{i,T}),\\
g_{i,T}( \overline{d}_{i,T}(b), h_{T}, T|\theta_{i,T})< \zeta_{i,T}(s_{i,T}, h_{T}),& \text{ if } s_{i,T}\not\in \vec{\Gamma}^{b}_{i,T}(c_{i,T}).
\end{cases}
\]
Equivalently, we have
\[
\begin{cases}
\phi_{i,T}(h_{T}|c_{i,T}) \geq \zeta_{i,T}(s_{i,T}, h_{T}), & \text{ if } s_{i,T}\in \vec{\Gamma}^{b}_{i,T}(c_{i,T}),\\
\phi_{i,T}(h_{T}|c_{i,T}) < \zeta_{i,T}(s_{i,T}, h_{T}), & \text{ if } s_{i,T}\not\in \vec{\Gamma}^{b}_{i,T}(c_{i,T}).
\end{cases}
\]
Therefore, (\ref{eq:app_thm_4_1}) can be equivalently represented as follows:
\[
\begin{aligned}
&\Lambda_{i,T}(\mathtt{OM}_{i,T}, a_{i,T}|s_{i,T}, h_{T}) = \mathtt{Mg}_{i,T}(\overline{D}_{i,T}\circ s_{i,T}, h_{T}).
\end{aligned}
\]

Then, in period $T-1$, we have
\[
\begin{aligned}
&\Lambda_{i, T-1}(\mathtt{OM}_{i,T-1}, a_{i,T-1}|s_{i,T-1}, h_{T-1})=\phi_{i,T-1}(h_{T-1}|c_{i,T-1})\mathbf{1}_{\{\mathtt{OM}_{i,T-1}=1\}}\\
& +\Big( \zeta_{i,T-1}(s_{i,T-1}, h_{T-1})+ \mathbb{E}^{\sigma}\Big[ \texttt{Mg}_{i,T}(\overline{D}_{i,t}\circ \tilde{s}_{i,T}, h_{T})\Big| s_{i,T-1}, h_{T-1} \Big]  \Big)\mathbf{1}_{\{\mathtt{OM}_{i,T-1}=0\}}\\
&= \phi_{i,T-1}(h_{T-1}|c_{i,T-1})\mathbf{1}_{\{\mathtt{OM}_{i,T}=1\}} +\Big( \zeta_{i,T-1}(s_{i,T-1}, h_{T-1})+ \overline{\texttt{DE}}^{T-1,T}\Big[\texttt{Mg}_{i,T}(\widetilde{\texttt{us}}_{i,T}, \tilde{h}_{T})\Big|s_{i,T-1}, h_{T-1}  \Big]\Big)\mathbf{1}_{\{\mathtt{OM}_{i,T-1}=0\}}.
\end{aligned}
\]
Here, the cutoff-switch is
\[
\begin{aligned}
&\phi_{i,T-1}(h_{T-1}|c_{i,T-1}) = \mathtt{Mg}_{i,T-1}(\overline{d}_{i,T-1}(b), h_{T-1})+ \overline{\delta}_{i,T-1}(\overline{d}_{i,T-1}(b), h_{T-1})\\
&= \mathtt{Mg}_{i,T-1}(\overline{d}_{i,T-1}(b), h_{T-1})+ \overline{\mathtt{DE}}^{T-1,T}\Big[\mathtt{Mg}_{i,T-1}(\widetilde{\mathtt{us}}_{i,T}, \tilde{h}_{T})\Big|\overline{d}_{i,T-1}(b), h_{T-1} \Big] - \mathbb{E}^{\sigma}\Big[\mathtt{Mg}_{i,T-1}(\tilde{s}_{i,T}, \tilde{h}_{T})\Big| \overline{d}_{i,T-1}(b), h_{T-1} \Big].
\end{aligned}
\]
From the definition of \texttt{uT} in (\ref{eq:def_up_transform}), we have:
\begin{itemize}
    \item If $s_{i,T-1}\in \vec{\Gamma}^{b}_{i,T-1}(c_{i,T-1})$, then
\[
\begin{aligned}
&\mathtt{Mg}_{i,T-1}(\overline{d}_{i,T-1}(b), h_{T-1})-\mathbb{E}^{\sigma}\Big[\mathtt{Mg}_{i,T}(\tilde{s}_{i,T}, \tilde{h}_{T})\Big| \overline{d}_{i,T-1}(b), h_{T-1} \Big]\geq \zeta_{i,T-1}(s_{i,T-1}, h_{T-1}).
\end{aligned}
\]
    \item If $s_{i,T-1}\not\in \vec{\Gamma}^{b}_{i,T-1}(c_{i,T-1})$, then
\[
\begin{aligned}
&\mathtt{Mg}_{i,t}(\overline{d}_{i,T-1}(b), h_{T-1})-\mathbb{E}^{\sigma}\Big[\mathtt{Mg}_{i,t}(\tilde{s}_{i,T}, \tilde{h}_{T})\Big| \overline{d}_{i,T-1}(b), h_{T-1} \Big]< \zeta_{i,T-1}(s_{i,T-1}, h_{T-1}).
\end{aligned}
\]
\end{itemize}

In addition,
\[
\begin{aligned}
&\overline{\texttt{DE}}^{T-1,T}\Big[\texttt{Mg}_{i,T-1}(\widetilde{\texttt{us}}_{i,T}, \tilde{h}_{T})\Big|\overline{d}_{i,T-1}(b), h_{T-1} \Big]\mathbf{1}{\{\mathtt{OM}_{i,T-1}=1\}}+\overline{\texttt{DE}}^{T-1,T}\Big[\texttt{Mg}_{i,T}(\widetilde{\texttt{us}}_{i,T}, \tilde{h}_{T}) \Big|s_{i,T-1}, h_{T-1}  \Big]\mathbf{1}{\{\mathtt{OM}_{i,T-1}=0\}}\\
&= \overline{\mathtt{DE}}^{T-1,T}\Big[\mathtt{Mg}_{i,t}(\widetilde{\mathtt{us}}_{i,T}, \tilde{h}_{T})\Big|\overline{D}_{i,T-1}\circ s_{i,T-1}, h_{T-1} \Big].
\end{aligned}
\]

Hence, the expected payoff-to-go function in $T-1$ can be represented as:
\[
\begin{aligned}
    &\Lambda_{i, T-1}(\mathtt{OM}_{i,T-1}, a_{i,T-1}|s_{i,T-1}, h_{T-1})= \mathtt{Mg}_{i,t}(\overline{D}_{i,T-1}\circ s_{i,T-1}, h_{T-1})- \mathbb{E}^{\sigma}\Big[\mathtt{Mg}_{i,t}(\tilde{s}_{i,T}, \tilde{h}_{T}) \Big| \overline{D}_{i,T-1}\circ s_{i,T-1}, h_{T-1} \Big] \\
    &+ \overline{\mathtt{DE}}^{T-1,T}\Big[\mathtt{Mg}_{i,t}(\widetilde{\mathtt{us}}_{i,T}, \tilde{h}_{T})\Big|\overline{D}_{i,T-1}\circ s_{i,T-1}, h_{T-1} \Big]\\
    &= \mathtt{Mg}_{i,t}(\overline{D}_{i,T-1}\circ s_{i,T-1}, h_{T-1}) + \overline{\delta}_{i,T-1}(\overline{D}_{i,T-1}\circ s_{i,T-1}, h_{T-1}).
\end{aligned}
\]

Suppose, by hypothesis, the following holds, in period $t+1$,
\[
\begin{aligned}
\Lambda_{i,t+1}(\mathtt{OM}_{i,t+1}, a_{i,t+1}|s_{i,t+1}, h_{t+1}) = \mathtt{Mg}_{i,t}(\overline{D}_{i,t}\circ s_{i,t+1}, h_{t+1}) +  \overline{\delta}_{i,t+1}(\overline{D}_{i,t}\circ s_{i,t+1}, h_{t+1}).
\end{aligned}
\]
Then, in period $t$, we have
\[
\begin{aligned}
\Lambda_{i,t}(\mathtt{OM}_{i,t}, a_{i,t}|s_{i,t}, h_{t})
&= \phi_{i,t}(h_{t}|c_{i,t})\mathbf{1}{\{\mathtt{OM}_{i,t}=1 \}}+ \Big(\zeta_{i,t}(s_{i,t}, h_{t}) + \mathbb{E}^{\sigma}\Big[\mathtt{Mg}_{i,t+1}(\overline{D}_{i,t+1}\circ \tilde{s}_{i,t+1}, \tilde{h}_{t+1}) \\
& + \overline{\delta}_{i,t+1}(\overline{D}_{i,t+1}\circ \tilde{s}_{i,t+1}, \tilde{h}_{t+1}) \Big|s_{i,t}, h_{t}   \Big]    \Big)\mathbf{1}{\{\mathtt{OM}_{i,t}=0 \}}.
\end{aligned}
\]
From (\ref{eq:uppted_delta_function}), 
\[
\begin{aligned}
 &\mathbb{E}^{\sigma}\Big[\mathtt{Mg}_{i,t+1}(\overline{D}_{i,t+1}\circ \tilde{s}_{i,t+1}, \tilde{h}_{t+1}) + \overline{\delta}_{i,t+1}(\overline{D}_{i,t+1}\circ \tilde{s}_{i,t+1}, \tilde{h}_{t+1})\Big|s_{i,t}, h_{t}    \Big] \\
 &= \overline{\mathtt{DE}}^{t,T}\Big[g^{\sigma}_{i,t+1}( \widetilde{\mathtt{us}}_{i,t+1}, \tilde{h}_{t+1}, L)+\sum\limits_{k=t+2}^{T} Z^{\sigma}_{i,k}(\widetilde{\mathtt{us}}_{i,k}, \widetilde{\mathtt{us}}_{i,k-1}|\tilde{h}_{k})  \Big|s_{i,t}, h_{t}\Big],
\end{aligned}
\]
and the cutoff-switch in period $t$ is
\[
\begin{aligned}
&\phi_{i,t}(h_{t}|c_{i,t}) = \mathtt{Mg}_{i,t}(\overline{d}_{i,t}(b), h_{t})+ \overline{\mathtt{DE}}^{t,T}\Big[\sum\limits_{k=t+1}^{T} Z^{\sigma}_{i,k}(\widetilde{\mathtt{us}}_{i,k}, \widetilde{\mathtt{us}}_{i,k-1}|\tilde{h}_{k})\Big| \overline{d}_{i,t}(b), h_{t} \Big]\\
&= \mathtt{Mg}_{i,t}(\overline{d}_{i,t}(b), h_{t})- \mathbb{E}^{\sigma}\Big[ \mathtt{Mg}_{i,t+1}(\tilde{a}_{i,t+1}, \tilde{s}_{i,t+1}, \tilde{h}_{t+1})\Big| \overline{d}_{i,t}(b), h_{t}  \Big]\\
&+\overline{\mathtt{DE}}^{t,T}\Big[\mathtt{Mg}_{i,t+1}(\widetilde{\mathtt{us}}_{i,t+1}, \tilde{h}_{t+1}) + \sum\limits_{k=t+2}^{T} Z^{\sigma}_{i,k}(\widetilde{\mathtt{us}}_{i,k}, \widetilde{\mathtt{us}}_{i,k-1}|\tilde{h}_{k})\Big| \overline{d}_{i,t}(b), h_{t} \Big].
\end{aligned}
\]
The operator \texttt{uT} implies:
\begin{itemize}
    \item If $s_{i,t}\in \vec{\Gamma}^{b}_{i,t}(c_{i,t})$, then
\[
\begin{aligned}
&\mathtt{Mg}_{i,t}(\overline{d}_{i,t}(b), h_{t}) - \mathbb{E}^{\sigma}\Big[ \mathtt{Mg}_{i,t+1}(\tilde{a}_{i,t+1}, \tilde{s}_{i,t+1}, \tilde{h}_{t+1})\Big| \overline{d}_{i,t}(b), h_{t}  \Big]\geq \zeta_{i,t}(s_{i,t}, h_{t}).
\end{aligned}
\]
    \item If $s_{i,t}\not\in \vec{\Gamma}^{b}_{i,t}(c_{i,t})$, then
\[
\begin{aligned}
&\mathtt{Mg}_{i,t}(\overline{d}_{i,t}(b), h_{t})- \mathbb{E}^{\sigma}\Big[ \mathtt{Mg}_{i,t+1}(\tilde{s}_{i,t+1}, \tilde{h}_{t+1})\Big| \overline{d}_{i,t}(b), h_{t}  \Big]< \zeta_{i,t}(s_{i,t}, h_{t}).
\end{aligned}
\]
\end{itemize}
In addition, 
\[
\begin{aligned}
&\overline{\texttt{DE}}^{t,T}\Big[g^{\sigma}_{i,t+1}( \widetilde{\texttt{us}}_{i,t+1}, \tilde{h}_{t+1}, L)+\sum\limits_{k=t+2}^{T} Z^{\sigma}_{i,k}(\widetilde{\texttt{us}}_{i,k}, \widetilde{\texttt{us}}_{i,k-1}|\tilde{h}_{k})  \Big|\overline{d}_{i,t}(b) , h_{t}\Big]\mathbf{1}_{\{\mathtt{OM}_{i,t}=1\}}\\
&+\overline{\texttt{DE}}^{t,T}\Big[g^{\sigma}_{i,t+1}( \widetilde{\texttt{us}}_{i,t+1}, \tilde{h}_{t+1}, L)+\sum\limits_{k=t+2}^{T}Z^{\sigma}_{i,k}(\widetilde{\texttt{us}}_{i,k}, \widetilde{\texttt{us}}_{i,k-1}|\tilde{h}_{k})  \Big|s_{i,t}, h_{t}\Big]\mathbf{1}_{\{\mathtt{OM}_{i,t}=0\}}\\
&= \overline{\mathtt{DE}}^{t,T}\Big[g^{\sigma}_{i,t+1}( \widetilde{\mathtt{us}}_{i,t+1}, \tilde{h}_{t+1}, L)+\sum\limits_{k=t+2}^{T} Z^{\sigma}_{i,k}(\widetilde{\mathtt{us}}_{i,k}, \widetilde{\mathtt{us}}_{i,k-1}|\tilde{h}_{k})  \Big|\overline{D}_{i,t}\circ s_{i,t}, h_{t}\Big].
\end{aligned}
\]
Suppose that $\mathtt{OM}_{i,t} = 1$ for all $s_{i,t}\in \vec{\Gamma}_{i,t}(c_{i,t})$ and $\mathtt{OM}_{i,t} = 0$ for all $s_{i,t}\not\in \vec{\Gamma}_{i,t}(c_{i,t})$.
Then, from the definition of \texttt{uT} in (\ref{eq:def_up_transform}), the expected payoff-to-go function can be represented as (\ref{eq:app_prospect_represent}), for all $i\in\mathcal{N}$, $t\in\mathbb{T}$.
%
\endproof

\subsection{$(\Leftarrow)$}
%

To prove the \textit{if} part of Theorem \ref{thm:SPIR_conditions} we need to show that the essential region shown in Theorem \ref{thm:SPIR_conditions} is indeed the OFR.
From the definitions of the essential regions in Definition \ref{def:essential_region} and the indifference region in (\ref{eq:indifference_region_def}), we have, for all $i\in\mathcal{N}$, $t\in\mathbb{T}$, $s_{i,t}\in S_{i,t}$, $h_{t}\in H_{t}$,
\[
\begin{cases}
\textup{RHS of (\ref{eq:app_prospect_represent}) } =  \phi_{i,t}(h_{t}|c_{i,t}), & \textup{ if } s_{i,t}\in  \vec{\Gamma}_{i,t}(c_{i,t}),\\
\textup{RHS of (\ref{eq:app_prospect_represent}) } \geq  \phi_{i,t}(h_{t}|c_{i,t}), & \textup{ if } s_{i,t}\not\in  \vec{\Gamma}_{i,t}(c_{i,t}).
\end{cases}
\]
From Lemma \ref{lemma:app_proof_thm2} and the definition of $\Lambda^{\sigma, \rho}_{i,t}$ in (\ref{eq:to_go_off_menu_switch}), i.e.,
\[
 \begin{aligned}
    &\Lambda_{i,t}(\mathtt{OM}_{i,t}, a_{i,t}|s_{i,t}, h_{t})\equiv \phi_{i,t}(h_{t}|c_{i,t}) \mathbf{1}_{\{ \mathtt{OM}_{i,t} =1 \}}+ G_{i,t}(a_{i,t}|s_{i,t}, h_{t}) \mathbf{1}_{\{ \mathtt{OM}_{i,t} =0 \}},
    \end{aligned}
\]
it is clear to see that the value of $\Lambda^{\sigma,\rho}_{i,t}$ specified by (\ref{eq:app_prospect_represent}) coincides with the following choices of $\mathtt{OM}_{i,t}$:
\[
\begin{cases}
\mathtt{OM}_{i,t} = 1, & \textup{ if } s_{i,t}\in \vec{\Gamma}_{i,t}(c_{i,t}),\\
\mathtt{OM}_{i,t} = 0, & \textup{ if } s_{i,t}\not\in \vec{\Gamma}_{i,t}(c_{i,t}).
\end{cases}
\]
Thus, the essential region shown in Theorem \ref{thm:SPIR_conditions} is the OFR.
Recall that the principal decides $\mathtt{OM}_{i,t}$ for each agent $i$ when his state is in the indifference region.
Hence, the principal can incentive the agents to take $\mathtt{OM}_{i,t}=0$ for all $i\in \mathcal{N}$, $t\in\mathbb{T}$.
Therefore, the RAIC mechanism is OAIC.

\subsection{$(\Rightarrow)$}
%

Now, we prove the \textit{only if} part.
Suppose that the RAIC mechanism with the horizontal cutoff-switch $\phi(\cdot|c)$ given by (\ref{eq:cutoff_switch_horizontal}) is also OAIC.
Then, every OFR is a sub indifference region; i.e., $\vec{\Gamma}_{i,t}(c_{i,t})=\mathtt{Id}_{i,t}(c_{i,t})$, for all $i\in\mathcal{N}$, $t\in\mathbb{T}$.
Hence, we need to show that every such sub indifference region is an essential region in every period.
As in (\ref{eq:app_thm_4_0}), each $\Lambda^{\sigma,\rho}_{i,t}$ can be written as
\[
\begin{aligned}
&\Lambda_{i,t}(\mathtt{OM}_{i,t}, a_{i,t}|s_{i,t}, h_{t}) \\
&= \phi_{i,t}(h_{t}|c_{i,t})\mathbf{1}_{\{\mathtt{OM}_{i,t}=1\}}+ \Big( \zeta_{i,t}(s_{i,t}, h_{t}) + \mathbb{E}^{\sigma}\Big[ \Lambda_{i,t+1}(\widetilde{\texttt{OM}}_{i,t+1}, \tilde{a}_{i,t+1}|\tilde{s}_{i,t+1}, \tilde{h}_{t+1})  \Big| s_{i,t}, h_{t}\Big] \Big)\mathbf{1}_{\{\mathtt{OM}_{i,t}=0 \}}.
\end{aligned}
\]
Then, for any two states $\hat{s}_{i,t}, \hat{s}'_{i,t}\in \mathtt{Id}_{i,t}(c_{i,t})$, we have
\begin{equation}\label{eq:app_equal_phi}
        \begin{aligned}
\phi_{i,t}(h_{t}|c_{i,t}) 
&= \zeta_{i,t}(\hat{s}_{i,t}, h_{t}) + \mathbb{E}^{\sigma}\Big[ \Lambda_{i,t+1}(\widetilde{\mathtt{OM}}_{i,t+1}, \tilde{a}_{i,t+1}|\tilde{s}_{i,t+1}, \tilde{h}_{t+1})  \Big| \hat{s}_{i,t}, h_{t}\Big]\\
&= \zeta_{i,t}(\hat{s}'_{i,t}, h_{t}) + \mathbb{E}^{\sigma}\Big[ \Lambda_{i,t+1}(\widetilde{\mathtt{OM}}_{i,t+1}, \tilde{a}_{i,t+1}|\tilde{s}_{i,t+1}, \tilde{h}_{t+1})  \Big| \hat{s}'_{i,t}, h_{t}\Big].
\end{aligned}
\end{equation}
From Lemma \ref{lemma:app_proof_thm2}, we have
\[
\begin{aligned}
&\zeta_{i,t}(s_{i,t}, h_{t}) +\mathbb{E}^{\sigma}\Big[ \Lambda_{i,t+1}(\widetilde{\mathtt{OM}}_{i,t+1}, \tilde{a}_{i,t+1}|\tilde{s}_{i,t+1}, \tilde{h}_{t+1})  \Big|s_{i,t}, h_{t}\Big]=\mathtt{Mg}_{i,t}(s_{i,t}, h_{t})+ \overline{\delta}_{i,t}(s_{i,t}, h_{t}).
\end{aligned}
\]
Hence, (\ref{eq:app_equal_phi}) implies
\[
\begin{aligned}
\mathtt{Mg}_{i,t}(\hat{s}_{i,t}, h_{t}) - \mathtt{Mg}_{i,t}(\hat{s}'_{i,t}, h_{t})+ \Big(\overline{\delta}_{i,t}(\hat{s}_{i,t}, h_{t})-\overline{\delta}_{i,t}(\hat{s}'_{i,t}, h_{t})\Big)&= \Upsilon_{i,t}(\hat{s}_{i,t},\hat{s}'_{i,t} |h_{t})+\overline{\Xi}_{i,t}(\hat{s}_{i,t}, \hat{s}'_{i,t}, h_{t})=0,
\end{aligned}
\]
which implies part \textit{(i)} of Definition \ref{def:essential_region}.
For any $s_{i,t} \in S_{i,t}\backslash\mathtt{Id}_{i,t}(c_{i,t})$, we have
\[
\begin{aligned}
& \zeta_{i,t}(s_{i,t}, h_{t}) + \mathbb{E}^{\sigma}\Big[ \Lambda^{\sigma,\rho}_{i,t+1}(\widetilde{\mathtt{OM}}_{i,t+1}, \tilde{a}_{i,t+1}|\tilde{s}_{i,t+1}, \tilde{h}_{t+1})  \Big| s_{i,t}, h_{t}\Big]\geq \phi_{i,t}(h_{t}|c_{i,t}).
\end{aligned}
\]
Then, for any $\hat{s}_{i,t}\in \mathtt{Id}_{i,t}(c_{i,t})$,
\[
\begin{aligned}
&\Big(\mathtt{Mg}_{i,t}(\hat{s}_{i,t}, h_{t})  - \mathtt{Mg}_{i,t}(s_{i,t}, h_{t})\Big) + \Big(\overline{\delta}_{i,t}(\hat{s}_{i,t}, h_{t})-\overline{\delta}_{i,t}(s_{i,t}, h_{t})\Big)= \Upsilon_{i,t}(\hat{s}_{i,t},s_{i,t} |h_{t})+\overline{\Xi}_{i,t}(\hat{s}_{i,t}, s_{i,t}, h_{t})\leq 0.
\end{aligned}
\]
Therefore, we obtain the part \textit{(ii)} of Definition \ref{def:essential_region}.
Hence, we conclude that $\vec{\Gamma}_{i,t}(c_{i,t})=\mathtt{Id}_{i,t}(c_{i,t})$ is an essential region of the state space $S_{i,t}$, for all $i\in\mathcal{N}$, $t\in\mathbb{T}$.
%
%
\hfill $\square$

\section{ Proof of Proposition \ref{thm:doic_irod_sufficient_condition} }\label{app:thm:doic_irod_sufficient_condition}

Suppose that $\rho\in \textup{\ref{eq:feasible_rho_C1_original}}$ and each cutoff-switch satisfies (\ref{eq:cutoff_switch_horizontal}).
Suppose in addition that (\ref{eq:sufficient_if_rho_c1}) holds.
Following similar procedures to prove Lemma \ref{lemma:app_proof_thm2}, we have
\begin{equation}\label{eq:app_thm4_e0}
    \Lambda_{i,t}(\mathtt{OM}_{i,t}, a'_{i,t}|s_{i,t},  h_{t}) = \mathtt{Mg}_{i,t}(a'_{i,t} , s_{i,t}, h_{t},x_{i,t}) + \overline{\delta}_{i,t}(a'_{i,t}, s_{i,t}, h_{t},x_{i,t}),
\end{equation}
for any arbitrary $a'_{i,t}\in A_{i,t}[\sigma]$.
Then, (\ref{eq:sufficient_if_rho_c1}) implies 
\begin{equation}\label{eq:app_thm4_e1}
    \Lambda_{i,t}(\mathtt{OM}_{i,t}=0|s_{i,t},  h_{t}) \geq \Lambda_{i,t}(\mathtt{OM}_{i,t}=y, a'_{i,t}|s_{i,t},  h_{t}),
\end{equation}
for all $i\in\mathcal{N}$, $t\in\mathbb{T}$, $s_{i,t}\in S_{i,t}$, $h_{t}\in H_{t}$, $y\in[0,1]$, any $a'_{i,t}\in A_{i,t}[\sigma]$.
Hence, the mechanism is DOIC.

Conversely, suppose that the mechanism is DOIC. Then, (\ref{eq:app_thm4_e1}) holds. From Lemma \ref{lemma:app_proof_thm2} and (\ref{eq:app_thm4_e0}), we obtain (\ref{eq:sufficient_if_rho_c1}).
%
%
\hfill $\square$

\section{Proof of Lemma \ref{lemma:existence_OM_strategies}}\label{app:lemma:existence_OM_strategies}

In every period $t$ of $\mathcal{G}^{\Theta}$ with a RAIC $\Theta$, the agents' stage-game of choosing $\tau=(\tau_{i,t})$ is a static Bayesian game with continuous states $S_{t}$ and discrete actions $\mathbb{T}_{t,T+1}$.
Consider a period-$t$ OM-strategy profile $\tau_{t}=(\tau_{i,t})$.
Let $\tau^{\natural}_{i,t}(k) = \mathbb{E}^{\sigma}\left[\mathbf{1}_{\{\tau_{i,t}(\tilde{s}_{i,t}, h_{t}) = k\}}|h_{t}\right]$.
Hence, a profile $\tau^{\natural}_{t}=(\tau^{\natural}_{i,t})$ can be interpreted as a mixed strategy profile for a normal-form game with finite actions (i.e., $\mathbb{T}_{t,T+1}$).
Then, by Kakutani’s Fixed Point Theorem, there exists a Nash equilibrium profile (\citet{kakutani1941generalization,yuan2017fixed}). 
 \hfill $\square$

\section{Proof of Proposition \ref{prop:condition_conjecture_MAOS_gen} }\label{app:prop:condition_conjecture_MAOS_gen}

The \textit{only if} part is straightforward.
In particular, since each $\chi_{i,t}$ satisfies (\ref{eq:prob_equal_suffice}) where each $\tau_{i,t}$ is a best response to $\chi_{-i,t}$, we have
\[
\begin{cases}
    \tau_{i,t}(s_{i,t}, h_{t})=t &, \textup{ if } s_{i,t}\in S^{\mathtt{off}}_{i,t},\\
   \tau_{i,t}(s_{i,t}, h_{t})=k>t &, \textup{ if } s_{i,t}\in S_{i,t}\backslash S^{\mathtt{off}}_{i,t}.
\end{cases}
\]
Then, $\mathbb{P}\big(\tau_{i,t}(s_{i,t}, h_{t})=t|h_{t}\big) = \mathbb{P}\big(s_{i,t}\in S^{\mathtt{off}}_{i,t}|h_{t}\big)$.
Since $x_{i,t}$ satisfies (\ref{eq:principal_desired_belief_sym}), we obtain 
\[
 \mathbb{P}(\tau_{i,t}(s_{i,t}, h_{t})=t|h_{t})  = \chi_{i,t}(t|h_{t}).
\]
Hence, by definition of $\bm{X}_{t}$, (\ref{eq:fixed_point_general}) holds.

Now we prove the \textit{if}. 
Suppose that each $\chi_{t}$ satisfies (\ref{eq:fixed_point_general}).
Then, there exists a measurable function $\tau_{i,t}:S_{i,t}\times H_{t}\mapsto \mathbb{T}_{t,T}$ for each $i\in\mathcal{N}$ such that the principal's desired $\chi_{t}=(\chi_{i,t}, \chi_{-i,t})$ satisfies, for all $i\in\mathcal{N}$, $t\in\mathbb{T}$, $s_{i,t}\in S_{i,t}$, $h_{t}\in H_{t}$,
\begin{equation}\label{eq:app_proof_thm7_0}
    \begin{aligned}
    \Lambda_{i,t}(k|s_{i,t}, h_{i,t}, \chi_{-i,t})\geq \Lambda_{i,t}(k'|s_{i,t}, h_{i,t}, \chi_{-i,t}),
\end{aligned}
\end{equation}
and 
\begin{equation}\label{eq:app_proof_thm7_1}
    \mathbb{P}\big(\tau_{i,t}(s_{i,t}, h_{t})=t|h_{t}\big)  = \chi_{i,t}(t|h_{t}) = \mathbb{P}\big( s''_{i,t}\in S^{\mathtt{off}}_{i,t}\big| h_{t} \big).
\end{equation}

We proceed with the proof by establishing a contradiction.
Suppose that there exists a set $S'_{i,t}\subseteq S_{i,t}\backslash S^{\mathtt{off}}_{i,t}$ such that $\tau_{i,t}(s'_{i,t}, h_{t})=t$ for all $s'_{i,t}\in S'_{i,t}$.
Then, from (\ref{eq:app_proof_thm7_1}), there must exist $S''_{i,t}\subseteq S^{\mathtt{off}}_{i,t}$ such that $\tau_{i,t}(s'_{i,t}, h_{t})\neq t$ for all $s'_{i,t}\in S'_{i,t}$.
In addition, $\mathbb{P}(s_{i,t}\in S^{\mathtt{off}}_{i,t}|h_{t}) = \mathbb{P}(s_{i,t}\in S'_{i,t} \textup{ or } s_{i,t}\in S^{\mathtt{off}}_{i,t}\backslash S''_{i,t}|h_{t})$.
Then, from (\ref{eq:app_proof_thm7_0}), we have that there exists $\ell=\tau_{i,t}(s''_{i,t}, h_{t})> t$, such that
\begin{equation}\label{eq:app_proof_thm7_2}
    \Lambda_{i,t}(t |s''_{i,t}, h_{t}, \chi_{-i,t}(h_{t}))< \Lambda_{i,t}( \ell|s''_{i,t}, h_{t}, \chi_{-i,t}(h_{t})), 
\end{equation}
for all $s''_{i,t}\in S''_{i,t}$, $h_{t}\in H_{t}$.
By definition of $\Lambda_{i,t}$, the LHS of (\ref{eq:app_proof_thm7_2}) is $\phi_{i,t}(h_{t}, \chi_{-i,t}(h_{t})|c_{i,t})$.

Suppose that each cutoff-switch $\phi_{i,t}$ is horizontal and is given by (\ref{eq:cutoff_horizontal_general}).
Then, from Lemma \ref{lemma:app_proof_thm2}, the RHS of (\ref{eq:app_proof_thm7_2}) is equivalent to (by putting back the conjecture $\chi_{-i,t}$)
\[
\begin{aligned}
    \Lambda_{i,t}(\ell|s''_{i,t},  h_{t}, \chi_{-i,t}) = \mathtt{Mg}_{i,t}(\overline{D}_{i,t}\circ s''_{i,t}, h_{t}, \chi_{-i,t})+ \overline{\delta}_{i,t}(\overline{D}_{i,t}\circ s''_{i,t}, h_{t}, \chi_{-i,t}).
\end{aligned}
\]
Then, from the definition of $\overline{D}_{i,t}$, (\ref{eq:app_proof_thm7_2}) becomes $\phi_{i,t}(h_{t}|c_{i,t})< \mathtt{Mg}_{i,t}\big(\overline{d}_{i,t}(b), h_{t},\chi_{-i,t}\big)+ \overline{\delta}_{i,t}\big(\overline{d}_{i,t}(b), h_{t},\chi_{-i,t} \big)$, which contradicts the setting that each cutoff-switch $\phi^{\natural}_{t}$ is horizontal and is given by (\ref{eq:cutoff_horizontal_general}).
Thus, $S''_{i,t}=\emptyset$. Therefore, $\mathbb{P}\big(\tau_{i,t}(s_{i,t}, h_{t})=t \textup{ and } s_{i,t}\in S^{\mathtt{off}}_{i,t}|h_{t}\big) = \chi_{i,t}(t|h_{t})$.

To prove the proposition when the principal is knowledgeable, we first establish a lemma similar to Lemma \ref{lemma:app_proof_thm2}.

\begin{lemma}\label{lemma:app_proof_prop_single_agent_switchable}
Suppose that $S^{\mathtt{off}}$ is desired by the principal, and let $\chi=(\chi_{i,t})$, where each $\chi_{i,t}\in \Delta\big(\mathbb{T}_{t,T} \big)^{n-1}$ satisfies (\ref{eq:principal_desired_belief_sym}) given $S^{\mathtt{off}}_{i,t}$ for all $i\in\mathcal{N}$, $t\in\mathbb{T}$.
Suppose that in addition, the principal is knowledgeable.
Let $<\sigma, \rho, \phi>$ be a RAIC delegation mechanism with the cutoff-switch $\phi(\cdot|c)$ given by (\ref{eq:cutoff_knowledgeable_general} ).
Then, each expected payoff-to-go given by (\ref{eq:to_go_off_menu_switch}) is equivalent to the following: for all $i\in \mathcal{N}$, $t\in\mathbb{T}$, $s^{w}_{i,t}\in S_{i,t}$, $h_{t}\in H_{t}$, $w\in\{b,e\}$, $b\in[B]$, $e\in[B'-B]$,
\begin{equation}\label{eq:app_switchable_to_go}
    \begin{aligned}
        \Lambda_{i,t}(\mathtt{OM}_{i,t}, a_{i,t}|s_{i,t}[w],  h_{t},\chi_{-i,t}) = \mathtt{Mg}_{i,t}(\overline{D}_{i,t}\circ s_{i,t}[w], h_{t},\chi_{-i,t})+ \overline{\delta}_{i,t}(\overline{D}_{i,t}\circ s_{i,t}[w], h_{t},\chi_{-i,t}).
    \end{aligned}
\end{equation}
\end{lemma}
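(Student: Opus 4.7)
The plan is to mirror the backward-induction argument used to establish Lemma \ref{lemma:app_proof_thm2}, but adapted to the knowledgeable-principal setting in which the cutoff-switch is given by (\ref{eq:cutoff_knowledgeable_general}) and therefore distinguishes, through the jump-transform operator $\overline{\underline{D}}_{i,t}$, between states lying in a sub-OFR${}^{d}$ and those lying in a sub-ONR${}^{d}$. First I would rewrite the one-step Bellman-style recursion for $\Lambda_{i,t}$ as in (\ref{eq:app_thm_4_0}), substituting the conjecture profile $\chi_{-i,t}$ throughout, and note that (since \ref{itm:C1} holds) the per-period increment matches the marginal carrier $\zeta_{i,t}(s_{i,t},h_{t},\chi_{-i,t})$ given in (\ref{eq:def_marginal_carrier}).

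For the base case at $t=T$, I would compute directly: the cutoff-switch at $T$ reduces to $\mathtt{Mg}_{i,T}(\overline{\underline{D}}_{i,T}\circ s_{i,T}[w],h_{T},\chi_{-i,T})$ because $\overline{\delta}_{i,T}\equiv 0$, and $\Lambda_{i,T}=\phi_{i,T}\mathbf{1}_{\{\mathtt{OM}_{i,T}=1\}}+\zeta_{i,T}\mathbf{1}_{\{\mathtt{OM}_{i,T}=0\}}$. The key observation is that by the definitions of $\overline{d}_{i,T}(b)$ in (\ref{eq:operator_d_bar}) and $\underline{d}_{i,T}(e)$ in Section \ref{sec:jump-transform}, one has $\mathtt{Mg}_{i,T}(\overline{d}_{i,T}(b),h_{T},\chi_{-i,T})\geq \mathtt{Mg}_{i,T}(s_{i,T}[b],h_{T},\chi_{-i,T})$ on each sub-OFR${}^{d}$ and $\mathtt{Mg}_{i,T}(\underline{d}_{i,T}(e),h_{T},\chi_{-i,T})\leq \mathtt{Mg}_{i,T}(s_{i,T}[e],h_{T},\chi_{-i,T})$ on each sub-ONR${}^{d}$, so the optimal choice of $\mathtt{OM}_{i,T}$ aligns with $\tau^{d}$ and the maximum over $\{0,1\}$ collapses to $\mathtt{Mg}_{i,T}(\overline{\underline{D}}_{i,T}\circ s_{i,T}[w],h_{T},\chi_{-i,T})$.

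For the inductive step I would assume the identity at $t+1$ and substitute it inside the expectation in (\ref{eq:app_thm_4_0}); this turns the continuation value into a $\overline{\mathtt{DE}}^{t,t+1}$-type expression by (\ref{eq:upPT_process_def}). Combining with the explicit formula (\ref{eq:cutoff_knowledgeable_general}) for $\phi_{i,t}$, the two branches $\mathtt{OM}_{i,t}\in\{0,1\}$ reduce, respectively, to the values $\mathtt{Mg}_{i,t}(\overline{\underline{D}}_{i,t}\circ s_{i,t}[w],h_{t},\chi_{-i,t})+\overline{\delta}_{i,t}(\overline{\underline{D}}_{i,t}\circ s_{i,t}[w],h_{t},\chi_{-i,t})$ and $\zeta_{i,t}(s_{i,t},h_{t},\chi_{-i,t})+\mathbb{E}^{\sigma}[\mathtt{Mg}_{i,t+1}(\overline{D}_{i,t+1}\circ\tilde s_{i,t+1},\tilde h_{t+1},\chi_{-i,t+1})+\overline\delta_{i,t+1}(\cdot)\mid s_{i,t},h_{t},\chi_{-i,t}]$. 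Whichever branch is obedient coincides with (\ref{eq:app_switchable_to_go}).

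The main obstacle will be the sub-ONR${}^{d}$ case, because $\overline{\underline{D}}$ on a sub-ONR${}^{d}$ performs a \emph{downward} projection via $\underline{d}_{i,t}(e)$, whereas the proof of Lemma \ref{lemma:app_proof_thm2} only needed the upward projection. I would handle this by carefully verifying the sign of $\phi_{i,t}(w,h_{t},\chi_{-i,t}|c_{i,t})-\Lambda_{i,t}(\mathtt{OM}_{i,t}=0|s_{i,t}[e],h_{t},\chi_{-i,t})$ on each sub-ONR${}^{d}$, using the essential-region characterization from Definition \ref{def:essential_region} and the fact that $s_{i,t}[e]\in\Psi^{e}_{i,t}(c_{i,t})$ implies $\zeta_{i,t}(s_{i,t}[e],h_{t},\chi_{-i,t})\geq\zeta_{i,t}(\underline{d}_{i,t}(e),h_{t},\chi_{-i,t})$. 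This together with the knowledgeability assumption—which entitles $\phi_{i,t}$ to depend on $w\in\{b,e\}$—ensures the optimal $\mathtt{OM}_{i,t}$ coincides with $\tau^{d}_{i,t}$ on both types of partitioned intervals, closing the induction.
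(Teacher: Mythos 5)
Your overall route—backward induction mirroring Lemma \ref{lemma:app_proof_thm2}, substituting the knowledgeable cutoff-switch (\ref{eq:cutoff_knowledgeable_general}) and exploiting the jump-transform definitions—is the same as the paper's, and your inductive-step decomposition of the two $\mathtt{OM}_{i,t}$ branches is the right one. But there is a concrete error at the base case that, if carried through, proves a different (and false) identity. You assert that at $t=T$ the maximum over $\mathtt{OM}_{i,T}\in\{0,1\}$ "collapses to $\mathtt{Mg}_{i,T}(\overline{\underline{D}}_{i,T}\circ s_{i,T}[w],h_{T},\chi_{-i,T})$." On a sub-ONR${}^{d}$ this is wrong: there the operator $\overline{\underline{D}}_{i,T}$ projects to $\underline{d}_{i,T}(e)$, the \emph{minimizer} of the marginal carrier, so $\phi_{i,T}(e,\cdot)=\mathtt{Mg}_{i,T}(\underline{d}_{i,T}(e),\cdot)\leq \mathtt{Mg}_{i,T}(s_{i,T}[e],\cdot)$ and the agent's relevant value is $\mathtt{Mg}_{i,T}(s_{i,T}[e],\cdot)=\mathtt{Mg}_{i,T}(\overline{D}_{i,T}\circ s_{i,T}[e],\cdot)$, i.e.\ the \emph{up}-transformed value, not the jump-transformed one. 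This is exactly what the target formula (\ref{eq:app_switchable_to_go}) states, and it is the key subtlety of this lemma: the off-switch is built from $\overline{\underline{D}}_{i,t}$, yet the payoff-to-go representation involves $\overline{D}_{i,t}$. The paper's proof makes this switch explicit via the identity $\overline{\underline{D}}_{i,t}\circ s_{i,t}[w]\,\mathbf{1}_{\{\mathtt{OM}_{i,t}=1\}}=\overline{D}_{i,t}\circ s_{i,t}[w]\,\mathbf{1}_{\{s_{i,t}[w]\in \vec{\Gamma}^{w}_{i,t}(c_{i,t})\}}$, which is what lets the mixed $\mathtt{OM}$-branches recombine into a single $\overline{\mathtt{DE}}$-expression at $\overline{D}_{i,t}\circ s_{i,t}[w]$; your proposal never states this identity, and with the base case written as it is, the induction hypothesis you would propagate does not match (\ref{eq:app_switchable_to_go}).

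A second, smaller point: your plan to sign $\phi_{i,t}(w,\cdot)-\Lambda_{i,t}(\mathtt{OM}_{i,t}=0\,|\,s_{i,t}[e],\cdot)$ on the sub-ONR${}^{d}$s "using the essential-region characterization from Definition \ref{def:essential_region}" imports a hypothesis the lemma does not make. The lemma is a representation result for the payoff-to-go evaluated at the partition-aligned $\mathtt{OM}$ decisions; whether those decisions are globally optimal (i.e.\ whether $\mathtt{OM}$ "coincides with $\tau^{d}$") is precisely what the essential-region conditions deliver later, in Theorem \ref{thm:MAO_switchability_conjecture_gen}. For this lemma the paper needs only the one-step comparisons supplied by the definitions of $\overline{d}_{i,t}(b)$ and $\underline{d}_{i,t}(e)$ as the arg-max and arg-min of $\zeta_{i,t}$ on the respective intervals, together with the indicator identity above; appealing to Definition \ref{def:essential_region} here would make your argument either circular or dependent on an unstated assumption.
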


\proof{Proof}
To simplify notations, we omit the conjecture $\chi_{-i,t}$.
The proof of this lemma follows a similar procedure as in the proof of Lemma \ref{lemma:app_proof_thm2}..
Let $s_{i,t}[w]$ be a typical state in $\vec{\Gamma}^{w}_{i,t}(c_{i,t})$ or $\vec{\Psi}^{w}_{i,t}(c_{i,t})$, for $w\in\{b,e\}$.
To highlight the location of the state (i.e., $w$), we denote the expected payoff-to-go as (with a slight abuse of notation) $\Lambda_{i,t}(w,\mathtt{OM}_{i,t}, a_{i,t}|s_{i,t}[w], h_{t})$.
As in (\ref{eq:app_thm_4_0}), the expected payoff-to-go function can be represented in terms of dynamic programming:
\begin{equation}
    \begin{aligned}
&\Lambda_{i,t}(w,\mathtt{OM}_{i,t}, a_{i,t}|s_{i,t}[w], h_{t})= \phi_{i,t}(w, h_{t}|c_{i,t})\mathbf{1}{\{\mathtt{OM}_{i,t}=1\}}\\
&+ \Big( \zeta_{i,t}(s_{i,t}[w], h_{t}) + \mathbb{E}^{\sigma}\Big[ \Lambda_{i,t+1}(\tilde{w},\widetilde{\texttt{OM}}_{i,t+1}, \tilde{a}_{i,t+1}|\tilde{s}_{i,t+1}[\tilde{w}], \tilde{h}_{t+1})  \Big| s_{i,t}[w], h_{t}\Big] \Big)\mathbf{1}{\{\mathtt{OM}_{i,t}=0 \}}.
\end{aligned}
\end{equation}
In period $T$,
\[
\begin{aligned}
\Lambda_{i,T}\big(w, \texttt{OM}_{i,T}, a_{i,T}|s_{i,T}[w], h_{T}\big)&=\phi_{i,T}(w, h_{T}|c_{i,T}) \mathbf{1}{\{\mathtt{OM}_{i,T}=1\}}+ \zeta_{i,T}(s_{i,T}[w], h_{T})\mathbf{1}{\{\mathtt{OM}_{i,t}=0\}}.
\end{aligned}
\]
Since the cutoff-switch is given by (\ref{eq:cutoff_knowledgeable_general}), 
\[
\begin{aligned}
\phi_{i,T}(w,h_{T}|c_{i,T}) = g_{i,T}(\overline{\underline{D}}_{i,T}\circ s_{i,T}[w], h_{T}, L |\theta_{i,T}).
\end{aligned}
\]
From the definitions of jump transformation in (\ref{eq:region_jump_operator}) and the marginal carrier in (\ref{eq:def_marginal_carrier}), we have
\[
\begin{cases}
g_{i,T}(\overline{\underline{D}}_{i,T}\circ s_{i,T}[w], h_{T}, L |\theta_{i,T}) \geq \zeta_{i,T}(s_{i,T}[w],h_{T}), & \textup{ if } s_{i,T}[w]\in \vec{\Gamma}^{w}_{i,T}(c_{i,T}),\\
g_{i,T}(\overline{\underline{D}}_{i,T}\circ s_{i,T}[w], h_{T}, L |\theta_{i,T}) \leq \zeta_{i,T}(s_{i,T}[w],h_{T}), & \textup{ if } s_{i,T}[w]\in \vec{\Psi}^{w}_{i,T}(c_{i,T}).
\end{cases}
\]
Hence,
\[
\begin{cases}
\phi_{i,T}(w,h_{T}|c_{i,T})\geq \zeta_{i,T}(s_{i,T}[w],h_{T}), & \textup{ if } s_{i,T}[w]\in \vec{\Gamma}^{w}_{i,T}(c_{i,T}),\\
\phi_{i,T}(w,h_{T}|c_{i,T}) < \zeta_{i,T}(s_{i,T}[w],h_{T}), & \textup{ if } s_{i,T}[w]\in \vec{\Psi}^{w}_{i,T}(c_{i,T}).
\end{cases}
\]
Thus, the period-$T$ expected payoff-to-go can be represented as follows:
\[
\Lambda_{i,T}(w,\mathtt{OM}_{i,T}, a_{i,T}|s_{i,T}[w], h_{T}) = g_{i,T}(\overline{D}_{i,T}\circ s_{i,T}[w], h_{T}, L |\theta_{i,T}),
\]
where $\overline{D}_{i,T}$ is the up transform defined in (\ref{eq:def_up_transform}). Then, in period $T-1$, we have
\[
\begin{aligned}
&\Lambda_{i, T-1}(w,\mathtt{OM}_{i,T-1}, a_{i,T-1}|s_{i,T-1}[w], h_{T-1}) =\phi_{i,T-1}(w,h_{T-1}|c_{i,T-1})\mathbf{1}{\{\mathtt{OM}_{i,t}=1\}}\\
&+\Big( \zeta_{i,T-1}(s_{i,T-1}[w], h_{T-1}) + \mathbb{E}^{\sigma}\Big[ \texttt{Mg}_{i,t}(\overline{D}_{i,T}\circ \tilde{s}_{i,T}[\tilde{w}], h_{T})\Big| s_{i,T-1}[w], h_{T-1} \Big]\Big)\mathbf{1}{\{\mathtt{OM}_{i,T-1}=0\}}.
\end{aligned}
\]
Since the cutoff-switch is constructed by (\ref{eq:cutoff_knowledgeable_general}), we have
\[
\begin{aligned}
&\phi_{i,T-1}(w,h_{T-1}|c_{i,T-1})=\mathtt{Mg}_{i,T-1}(\overline{\underline{D}}_{i,T-1}\circ s_{i,T-1}[w], h_{T-1})+ \overline{\delta}_{i,T-1}(\overline{\underline{D}}_{i,T-1}\circ s_{i,T-1}[w], h_{T-1};\beta_{i})\\
&=\mathtt{Mg}_{i,T-1}(\overline{\underline{D}}_{i,T-1}\circ s_{i,T-1}[w], h_{T-1})+\overline{\mathtt{DE}}^{T-1,T}\Big[\mathtt{Mg}_{i,T}(\widetilde{\mathtt{us}}_{i,T}, \tilde{h}_{T})\Big|\overline{\underline{D}}_{i,t}\circ s_{i,t}[w], h_{T-1} \Big] \\
&- \mathbb{E}^{\sigma}\Big[\mathtt{Mg}_{i,T}(\tilde{s}_{i,T}[\tilde{w}], \tilde{h}_{T})\Big| \overline{\underline{D}}_{i,T-1}\circ s_{i,T-1}[w], h_{T-1} \Big].
\end{aligned}
\]
Due to (\ref{eq:region_jump_operator}) and (\ref{eq:def_marginal_carrier}), we obtain the following two cases:
\begin{itemize}
    \item If $s_{i,T-1}[w]\in \vec{\Gamma}^{w}_{i,T-1}(c_{i,T-1})$, then
    \[
\begin{aligned}
&\mathtt{Mg}_{i,T-1}(\overline{\underline{D}}_{i,T-1}\circ s_{i,T-1}[w], h_{T-1}) - \mathbb{E}^{\sigma}\Big[\mathtt{Mg}_{i,T}(\tilde{s}_{i,T}[\tilde{w}], \tilde{h}_{T})\Big| \overline{\underline{D}}_{i,T-1}\circ s_{i,T-1}[w], h_{T-1} \Big]\geq \zeta_{i,T-1}(s_{i,T-1}, h_{T-1}).
\end{aligned}
\]

    \item If $s_{i,T-1}[w]\in \vec{\Psi}^{w}_{i,T-1}(c_{i,T-1})$, then
    \[
\begin{aligned}
&\mathtt{Mg}_{i,T-1}(\overline{\underline{D}}_{i,T-1}\circ s_{i,T-1}[w], h_{T-1}) - \mathbb{E}^{\sigma}\Big[\mathtt{Mg}_{i,T}(\tilde{s}_{i,T}, \tilde{h}_{T})\Big| \overline{\underline{D}}_{i,T-1}\circ s_{i,T-1}[w], h_{T-1} \Big]\leq \zeta_{i,T-1}(s_{i,T-1}[w], h_{T-1}).
\end{aligned}
\]
\end{itemize}
In addition,

\begin{equation}\label{eq:app_jump_up_1}
    \begin{aligned}
        &\begin{aligned}
            \overline{\mathtt{DE}}^{T-1,T}\Big[ \mathtt{Mg}_{i,T}(\widetilde{\mathtt{us}}_{i,T}, \tilde{h}_{T})&\Big|\overline{\underline{D}}_{i,t}\circ s_{i,t}[w], h_{T-1} \Big]\mathbf{1}{\{\mathtt{OM}_{i,T-1}=1\}} \\
        &+ \mathbb{E}^{\sigma}\Big[ \texttt{Mg}_{i,T}(\overline{D}_{i,t}\circ \tilde{s}_{i,T}, h_{T})\Big| s_{i,T-1}[w], h_{T-1} \Big]  \mathbf{1}{\{\mathtt{OM}_{i,T-1}=0\}}
        \end{aligned}\\
        &\begin{aligned}
            =\overline{\texttt{DE}}^{T-1,T}\Big[ \texttt{Mg}_{i,T}(\widetilde{\texttt{us}}_{i,T}, \tilde{h}_{T})&\Big|\overline{\underline{D}}_{i,t}\circ s_{i,t}[w], h_{T-1} \Big]\mathbf{1}{\{\mathtt{OM}_{i,T-1}=1\}}\\
            &+ \mathbb{E}^{\sigma}\Big[ \texttt{Mg}_{i,T}(\widetilde{\texttt{us}}_{i,T}, \tilde{h}_{T})\Big| s_{i,T-1}[w], h_{T-1} \Big]  \mathbf{1}{\{\mathtt{OM}_{i,T-1}=0\}}.
        \end{aligned}
    \end{aligned}
\end{equation}
%
%
From the definition of $\overline{\underline{D}}_{i,t}$, we have
\[
\begin{aligned}
\overline{\underline{D}}_{i,t}\circ s_{i,t}[w]\mathbf{1}{\{\mathtt{OM}_{i,T-1}=1\}} = \overline{D}_{i,t}\circ s_{i,t}[w] \mathbf{1}{\{s_{i,t}[w]\in \vec{\Gamma}^{w}_{i,t}(c_{i,t})\}}.
\end{aligned}
\]
Thus, 
\[
\begin{aligned}
\textup{ (\ref{eq:app_jump_up_1}) } = \overline{\mathtt{DE}}^{T-1,T}\Big[\mathtt{Mg}_{i,T}(\widetilde{\mathtt{us}}_{i,T}, \tilde{h}_{T})\Big|\overline{D}_{i,T-1}\circ s_{i,T-1}[w], h_{T-1}  \Big].
\end{aligned}
\]
Hence, the period-$T-1$ expected payoff-to-go function can be represented as follows:
\[
\begin{aligned}
&\Lambda_{i, T-1}(\mathtt{OM}_{i,T-1}, a_{i,T-1}|s_{i,T-1}, h_{T-1})= \mathtt{Mg}_{i,T-1}(\overline{D}_{i,T-1}\circ s_{i,T-1}, h_{T-1})+ \overline{\delta}_{i,T-1}(\overline{D}_{i,T-1}\circ s_{i,T-1}, h_{T-1} ).
\end{aligned}
\]
Following the similar backward induction in Appendix (\ref{app:thm:SPIR_conditions}) obtains (\ref{eq:app_switchable_to_go}) which coincides with (\ref{eq:app_prospect_represent}).
%
%
\endproof

Now, suppose that the principal is knowledgeable and each cutoff-switch $\phi_{i,t}$ is given by (\ref{eq:cutoff_knowledgeable_general}).
Without loss of generality, let $S''_{i,t}\subseteq \vec{\Gamma}^{b}_{i,t}(c_{i,t})$, for some $b\in[B]$.
Then, from Lemma \ref{lemma:app_proof_prop_single_agent_switchable}, the RHS of (\ref{eq:app_proof_thm7_2}) is equivalent to
\[
\begin{aligned}
    \Lambda_{i,t}(\ell|s_{i,t}[b],  h_{t}, \chi_{-i,t}) = \mathtt{Mg}_{i,t}(\overline{D}_{i,t}\circ s_{i,t}[b], h_{t}, \chi_{-i,t}) + \overline{\delta}_{i,t}( \overline{D}_{i,t}\circ s_{i,t}[b], h_{t}).
\end{aligned}
\]
Similarly, the definition of $\overline{D}_{i,t}$ gives 
\[
\phi_{i,t}(h_{t}|c_{i,t})< \mathtt{Mg}_{i,t}\big(\overline{d}_{i,t}(b), h_{t},\chi_{-i,t}\big)+ \overline{\delta}_{i,t}\big(\overline{d}_{i,t}(b), h_{t}, \chi_{-i,t} \big),
\]
which contradicts the definition of $\phi_{i,t}(b,h_{t},x_{t}|c_{i,t})$ in (\ref{eq:cutoff_knowledgeable_general}); i.e.,
\[
\begin{aligned}
    \phi_{i,t}(b,h_{t},\chi_{-i,t}|c_{i,t}) =& \mathtt{Mg}_{i,t}(\overline{\underline{D}}_{i,t}\circ s_{t}[b], h_{t}, \chi_{-i,t})+ \overline{\delta}_{i,t}(\overline{\underline{D}}_{i,t}\circ s_{i,t}[b], h_{t},\chi_{-i,t})\\
    =& \mathtt{Mg}_{i,t}\big(\overline{d}_{i,t}(b), h_{t}, \chi_{-i,t}\big)+ \overline{\delta}_{i,t}\big(\overline{d}_{i,t}(b), h_{t}, \chi_{-i,t} \big).
\end{aligned}
\]
Therefore, $\mathbb{P}\big(\tau_{i,t}(s_{i,t}, h_{t})=t \textup{ and } s_{i,t}\in S^{\mathtt{off}}_{i,t}|h_{t}\big) = \chi_{i,t}(t|h_{t})$.
%
\hfill $\square$

\section{Proof of Theorem \ref{thm:MAO_switchability_conjecture_gen}} \label{app:thm:MAO_switchability_conjecture_gen}

\subsection{($\Rightarrow$)}

We start with proving the \textit{only if}.
Since $\chi_{t}$ satisfies (\ref{eq:fixed_point_general}), there exists a measurable function $\tau_{i,t}:S_{i,t}\times H^{\natural}_{t}\mapsto \mathbb{T}_{t,T}$ for each $i\in\mathcal{N}$ such that
\begin{equation}\label{eq:app_thm_MAOS_0}
    \begin{aligned}
\Lambda_{i,t}(\tau_{i,t}(s_{i,t}, h_{t}) |s_{i,t}, h_{t}, \chi_{-i,t})\geq \Lambda_{i,t}( \ell|s_{i,t}, h_{t}, \chi_{-i,t}), 
\end{aligned}
\end{equation}
for all $\ell\in \mathbb{T}_{t,T}$, $s_{i,t}\in S_{i,t}$, $h_{t}\in H_{t}$.
By Proposition \ref{prop:condition_conjecture_MAOS_gen}, (\ref{eq:app_thm_MAOS_0}) implies that given the conjecture $\chi_{-i,t}$ at least one of such OM strategies satisfies
\begin{equation}\label{eq:app_thm_MAOS_1_gen}
    \begin{cases}
    \tau_{i,t}(s_{i,t}, h_{t})=t &, \textup{ if } s_{i,t}\in S^{\mathtt{off}}_{i,t},\\
   \tau_{i,t}(s_{i,t}, h_{t})=k>t &, \textup{ if } s_{i,t}\in S_{i,t}\backslash S^{\mathtt{off}}_{i,t}.
\end{cases}
\end{equation}

Each agent $i$'s period-$t$ ex-interm expected payoff-to-go is given by
\[
\begin{aligned}
        \Lambda_{i,t}(L|s_{i,t}, h_{t}, \chi_{-i,t})= \phi_{i,t}(h_{t},\chi_{-i,t}|c_{i,t})\mathbf{1}_{\{ L = t \}} + G_{i,t}(L|s_{i,t}, h_{t},\chi_{-i,t}) \mathbf{1}_{\{ L > t \}}.
    \end{aligned}
\]
Then, (\ref{eq:app_thm_MAOS_1_gen}) implies
\[
\begin{aligned}
    \Lambda_{i,t}(L|s_{i,t}, h_{t}, \chi_{-i,t})=\begin{cases}
        \phi_{i,t}(h_{t},\chi_{-i,t}|c_{i,t})&, \textup{ if } s_{i,t}\in S^{\mathtt{off}}_{i,t},\\
       G_{i,t}(L|s_{i,t}, h_{t},\chi_{-i,t}) &, \textup{ if } s_{i,t}\not\in S^{\mathtt{off}}_{i,t}.
    \end{cases}
\end{aligned}
\]

\subsubsection{Part \textit{(i)}}

From Lemma \ref{lemma:app_proof_thm2}, the expected payoff-to-go function $\Lambda_{i,t}$ can be represented by
\[
\begin{aligned}
    \Lambda_{i,t}(L|s_{i,t},  h_{t}, \chi_{-i,t}) = \mathtt{Mg}_{i,t}(\overline{D}_{i,t}\circ s_{i,t}, h_{t},\chi_{-i,t})+ \overline{\delta}_{i,t}(\overline{D}_{i,t}\circ s_{i,t}, h_{t},\chi_{-i,t}).
\end{aligned}
\]
When $s_{i,t}\in S_{i,t}\backslash S^{\mathtt{off}}_{i,t}$, $\phi_{i,t}(h_{t},\chi_{-i,t}|c_{i,t})\leq \mathtt{Mg}_{i,t}(\overline{D}_{i,t}\circ s_{i,t}, h_{t},\chi_{-i,t})+ \overline{\delta}_{i,t}(\overline{D}_{i,t}\circ s_{i,t}, h_{t},\chi_{-i,t})$.
When $s_{i,t}\in S^{\mathtt{off}}_{i,t}$, $\phi_{i,t}(h_{t},\chi_{-i,t}|c_{i,t})\geq \mathtt{Mg}_{i,t}(\overline{D}_{i,t}\circ s_{i,t}, h_{t},\chi_{-i,t})+ \overline{\delta}_{i,t}(\overline{D}_{i,t}\circ s_{i,t}, h_{t},\chi_{-i,t})$.
Since the cutoff-switch function $\phi_{i,t}$ is horizontal and is given by (\ref{eq:cutoff_horizontal_general}), by the definition of $\overline{D}_{i,t}$, we have, for $s_{i,t}\in S_{i,t}\backslash S^{\mathtt{off}}_{i,t}$,
\begin{equation}\label{eq:app_thm_MAOS_onlyif_1_gen}
    \begin{aligned}
    \mathtt{Mg}_{i,t}(\overline{d}_{i,t}(b), h_{t}, \chi_{-i,t})+\overline{\delta}_{i,t}(\overline{d}_{i,t}(b), h_{i,t},\chi_{-i,t})\leq \mathtt{Mg}_{i,t}(s_{i,t}, h_{t}, \chi_{-i,t})+\overline{\delta}_{i,t}(s_{i,t}, h_{t},\chi_{-i,t}),
\end{aligned}
\end{equation}
\begin{equation}\label{eq:app_thm_MAOS_onlyif_2_gen}
    \begin{aligned}
    \mathtt{Mg}_{i,t}(\overline{d}_{i,t}(b), h_{t}, \chi_{-i,t})+\overline{\delta}_{i,t}(\overline{d}_{i,t}(b), h_{t},\chi_{-i,t})\geq \mathtt{Mg}_{i,t}(\overline{d}_{i,t}(b), h_{t}, \chi_{-i,t})+\overline{\delta}_{i,t}(\overline{d}_{i,t}(b), h_{t},\chi_{-i,t}).
\end{aligned}
\end{equation}
By definitions of $\Upsilon_{i,t}$ and $\overline{\Xi}_{i,t}$, (\ref{eq:app_thm_MAOS_onlyif_1_gen}) and (\ref{eq:app_thm_MAOS_onlyif_2_gen}) imply, respectively, for all $s'_{i,t}\in S^{\mathtt{off}}_{i,t}$ and $s_{i,t}\in S_{i,t}\backslash S^{\mathtt{off}}_{i,t}$,
\[
\Upsilon_{i,t}(\overline{d}_{i,t}(b), s_{i,t} |h_{t},\chi_{-i,t})+\overline{\Xi}_{i,t}(\overline{d}_{i,t}(b), s_{i,t}| h_{t},\chi_{-i,t})\leq0,
\]
\[
\begin{aligned}
    \Upsilon_{i,t}(\overline{d}_{i,t}(b),s'_{i,t} |h_{t},\chi_{-i,t})+\overline{\Xi}_{i,t}(\overline{d}_{i,t}(b), s'_{i,t}| h_{t},\chi_{-i,t})\geq0,
\end{aligned}
\]
where $\hat{s}_{i,t} = \overline{d}_{i,t}(b)$, for any $b\in[B]$.
Therefore, each $S^{\mathtt{off}}_{i,t}$ is an essential region (given the conjecture $\chi_{-i,t}$) of $S_{i,t}$ with $\{\overline{d}_{i,t}(b)\}_{b\in[B]}$ as essential points, for all $t\in\mathbb{T}$, $h_{t}\in H_{t}$.

\subsubsection{Part \textit{(ii)}}

From Lemma \ref{lemma:app_proof_prop_single_agent_switchable}, the expected payoff-to-go function $\Lambda^{\natural}_{t}$ when the principal is knowledgeable can be represented by for $w\in\{b,e\}$, $b\in[B]$, $e\in[B'-B]$,
\[
\Lambda_{i,t}(L|s_{i,t}[w],  h_{t},\chi_{-i,t}) = \mathtt{Mg}_{i,t}(\overline{D}_{i,t}\circ s_{i,t}[w], h_{t},\chi_{-i,t})+ \overline{\delta}_{i,t}(\overline{D}_{i,t}\circ s_{i,t}[w], h_{t},\chi_{-i,t}).
\]

When $s_{t}[e]\in \Psi^{e}_{i,t}(c_{i,t})$,
$\phi_{i,t}(e, h_{t},\chi_{-i,t}|c_{i,t})\leq$ $\mathtt{Mg}_{i,t}(\overline{D}_{i,t}\circ s_{i,t}[w], h_{t},\chi_{-i,t})+$ $\overline{\delta}_{i,t}(\overline{D}_{i,t}\circ s_{i,t}[w], h_{t},\chi_{-i,t})$; when $s_{i,t}[b]\in \vec{\Gamma}^{b}_{i,t}(c_{i,t})$,
$\phi_{i,t}(b,h_{t},\chi_{-i,t}|c_{i,t})\geq$ $\mathtt{Mg}_{i,t}(\overline{D}_{i,t}\circ s_{i,t}[b], h_{t},\chi_{-i,t})+$ $\overline{\delta}_{i,t}(\overline{D}_{i,t}\circ s_{i,t}[w], h_{t},\chi_{-i,t})$.
Since the cutoff-switch if given by (\ref{eq:cutoff_knowledgeable_general}) and the principal is knowledgeable,
\begin{align*}
    &\mathtt{Mg}_{i,t}(\underline{d}_{i,t}(e), h_{t}, \chi_{-i,t})+ \overline{\delta}_{i,t}(\underline{d}_{i,t}(e), h_{t},\chi_{-i,t})\leq \mathtt{Mg}_{i,t}(s_{i,t}[e], h_{t}, \chi_{-i,t})+ \overline{\delta}_{i,t}(s_{i,t}[e], h_{t},\chi_{-i,t}),\\
    &\mathtt{Mg}_{i,t}(\overline{d}_{i,t}(b), h_{t}, \chi_{-i,t})+ \overline{\delta}_{i,t}(\overline{d}_{i,t}(b), h_{t},\chi_{-i,t}) \geq \mathtt{Mg}_{i,t}(\overline{d}_{i,t}(b), h_{t}, \chi_{-i,t})+ \overline{\delta}_{i,t}(\overline{d}_{i,t}(b), h_{t},\chi_{-i,t}).
\end{align*}
Hence, we obtain that each $\{\overline{\underline{D}}_{i,t}\circ s_{i,t}[e]\}$ is an essential region of $\Psi^{e}_{i,t}(c_{i,t})$ for all $e\in[B'-B]$ and each $\vec{\Gamma}^{b}_{i,t}(c_{i,t})\backslash \{\overline{\underline{D}}_{i,t}\circ s_{i,t}[b]\}$ is an essential region of $\vec{\Gamma}^{b}_{i,t}(c_{i,t})$ for all $b\in[B]$.

\subsection{($\Leftarrow$)}

Now we proceed with the \textit{if}.

\subsubsection{Part \textit{(i)}}

Since each $S^{\mathtt{off}}_{i,t}$ is an essential region (given the conjecture $\chi_{-i,t}$) of $S_{i,t}$ with $\{\overline{d}_{i,t}(b)\}_{b\in[B]}$ as essential points, for all $t\in\mathbb{T}$, $h_{t}\in H_{t}$, we have for all $b\in[B]$, 
\[
\Upsilon_{i,t}(\overline{d}_{i,t}(b),s'_{i,t} |h_{t},\chi_{-i,t})+\overline{\Xi}_{i,t}(\overline{d}_{i,t}(b), s'_{i,t}, h_{t},\chi_{-i,t})\geq0, \forall s'_{i,t}\in S^{\mathtt{off}}_{i,t},
\]
\[
\Upsilon_{i,t}(\overline{d}_{i,t}(b),s_{i,t} |h_{t},\chi_{-i,t}) + \overline{\Xi}_{i,t}(\overline{d}_{i,t}(b), s_{i,t}, h_{t},\chi_{-i,t}) \leq 0, \forall s_{i,t}\in S_{i,t}\backslash S^{\mathtt{off}}_{i,t}.
\]
That is, for all $b\in[B]$,
\[
\begin{aligned}
    \mathtt{Mg}_{i,t}(\overline{d}_{i,t}(b), h_{t},\chi_{-i,t}) + \overline{\delta}_{i,t}( \overline{d}_{i,t}(b), h_{t},\chi_{-i,t}) \geq \mathtt{Mg}_{i,t}(s'_{i,t}, h_{t},\chi_{-i,t}) + \overline{\delta}_{i,t}( s'_{i,t}, h_{t},\chi_{-i,t}), \forall s'_{i,t}\in S^{\mathtt{off}}_{i,t},
\end{aligned}
\]
\[
\begin{aligned}
    \mathtt{Mg}_{i,t}(\overline{d}_{i,t}(b), h_{t},\chi_{-i,t}) + \overline{\delta}_{i,t}( \overline{d}_{i,t}(b), h_{t},\chi_{-i,t})\leq \mathtt{Mg}_{i,t}(s_{i,t}, h_{t},\chi_{-i,t}) + \overline{\delta}_{i,t}( s_{i,t}, h_{t},\chi_{-i,t}), \forall s_{i,t}\in S_{i,t}\backslash S^{\mathtt{off}}_{i,t}.
\end{aligned}
\]
Since each cutoff-switch $\phi_{i,t}(h_{t},\chi_{-i,t}|c_{i,t})$ is \textup{horizontal} and is given by (\ref{eq:cutoff_horizontal_general}), 
\[
\phi_{i,t}(h_{t}, \chi_{-i,t}|c_{i,t}) \geq \mathtt{Mg}_{i,t}(s'_{i,t}, h_{t},\chi_{-i,t}) + \overline{\delta}_{i,t}( s'_{i,t}, h_{t},\chi_{-i,t}), \forall s'_{i,t}\in S^{\mathtt{off}}_{i,t},
\]
\[
\phi_{i,t}(h_{t}, \chi_{-i,t}|c_{i,t})\leq \mathtt{Mg}_{i,t}(s_{i,t}, h_{t},\chi_{-i,t}) + \overline{\delta}_{i,t}( s_{i,t}, h_{t},\chi_{-i,t}), \forall s_{i,t}\in S_{i,t}\backslash S^{\mathtt{off}}_{i,t}.
\]

By Lemma \ref{lemma:app_proof_thm2}, the expected payoff-to-go function $\Lambda_{i,t}$ can be represented by
\[
\begin{aligned}
    \Lambda_{i,t}(L|s_{i,t},  h_{t}, \chi_{-i,t}) = \mathtt{Mg}_{i,t}(\overline{D}_{i,t}\circ s_{i,t}, h_{t},\chi_{-i,t})+ \overline{\delta}_{i,t}(\overline{D}_{i,t}\circ s_{i,t}, h_{t},\chi_{-i,t}).
\end{aligned}
\]
Then, we have
\[
\begin{aligned}
    \begin{cases}
       \Lambda_{i,t}(L|s_{i,t}, h_{t}, \chi_{-i,t})\leq \phi_{i,t}(h_{t},\chi_{-i,t}|c_{i,t})&, \textup{ if } s_{i,t}\in S^{\mathtt{off}}_{i,t},\\
      \Lambda_{i,t}(L|s_{i,t}, h_{t}, \chi_{-i,t}) \geq  \phi_{i,t}(h_{t},\chi_{-i,t}|c_{i,t})&, \textup{ if } s_{i,t}\not\in S^{\mathtt{off}}_{i,t}.
    \end{cases}
\end{aligned}
\]
However, the expected payoff-to-go function is defined by
\[
\begin{aligned}
        \Lambda_{i,t}(L|s_{i,t}, h_{t}, \chi_{-i,t})= \phi_{i,t}(h_{t},\chi_{-i,t}|c_{-i,t})\mathbf{1}_{\{ L = t \}} + G_{i,t}(L|s_{i,t}, h_{t},\chi_{-i,t}) \mathbf{1}_{\{ L > t \}}.
    \end{aligned}
\]
Thus, we obtain 
\[
\begin{aligned}
    \Lambda_{i,t}(L|s_{i,t}, h_{t}, \chi_{-i,t})=\begin{cases}
        \phi_{i,t}(h_{t},\chi_{-i,t}|c_{i,t})&, \textup{ if } s_{i,t}\in S^{\mathtt{off}}_{i,t},\\
       G_{i,t}(L|s_{i,t}, h_{t},\chi_{-i,t}) &, \textup{ if } s_{i,t}\not\in S^{\mathtt{off}}_{i,t},
    \end{cases}
\end{aligned}
\]
which implies the following best response to the conjecture $\chi_{-i,t}$ for every agent,
\[
    \begin{cases}
    \tau_{i,t}(s_{i,t}, h_{t})=t &, \textup{ if } s_{i,t}\in S^{\mathtt{off}}_{i,t},\\
   \tau_{i,t}(s_{i,t}, h_{t})=k>t &, \textup{ if } s_{i,t}\in S_{i,t}\backslash S^{\mathtt{off}}_{i,t}.
\end{cases}
\]
That is, for all $t\in\mathbb{T}$, $h_{t}\in H_{t}$, $\mathbb{P}(\tau_{i,t}(s_{i,t}, h_{t})=t |h_{t}) = \chi_{i,t}(t|h_{t})$; or equivalently, $\chi_{t}(\cdot|h_{t})\in \bm{X}_{t}(h_{t}, \chi_{t})$.

\subsubsection{Part \textit{(ii)}}

Since each $\vec{\Gamma}^{b}_{i,t}(c_{i,t})\backslash \{\overline{\underline{D}}_{i,t}\circ s_{i,t}[b]\}$ is an essential region (given the conjecture $\chi_{i,t}$) of $\vec{\Gamma}^{b}_{i,t}(c_{i,t})\subset S^{\mathtt{off}}_{i,t}$, for all $b\in[B]$, $t\in\mathbb{T}$, and each $\{\overline{\underline{D}}_{i,t}\circ s_{i,t}[e]\}$ is an essential region of $\Psi^{e}_{i,t}(c_{i,t})\subset S^{\mathtt{off}}_{i,t}$, for all $e\in[B'-B]$, $t\in\mathbb{T}$, we obtain for all $s_{i,t}[e]\in \Psi^{e}_{i,t}(c_{t})$, $s_{i,t}[b]\in\vec{\Gamma}^{b}_{i,t}(c_{i,t})$, $e\in[B'-B]$, $b\in[B]$,

\begin{align}
    &\begin{aligned}
        \mathtt{Mg}_{t}(\underline{d}_{i,t}(e), h_{t}, \chi_{-i,t})+ \overline{\delta}_{i,t}(\underline{d}_{i,t}(e), h_{t},\chi_{-i,t})\leq \mathtt{Mg}_{i,t}(s_{i,t}[e], h_{t}, \chi_{-i,t})+ \overline{\delta}_{i,t}(s_{i,t}[e], h_{t},\chi_{-i,t}),
    \end{aligned}\label{eq:app_thm_MAOS_onlyif_3_gen_1}\\
    &\begin{aligned}
         \mathtt{Mg}_{i,t}(\overline{d}_{i,t}(b), h_{t}, \chi_{-i,t})+ \overline{\delta}_{i,t}(\overline{d}_{i,t}(b), h_{t},\chi_{-i,t})\geq \mathtt{Mg}_{i,t}(s_{i,t}[b], h_{t}, \chi_{-i,t})+ \overline{\delta}_{i,t}(s_{i,t}[b], h_{t},\chi_{-i,t}).\label{eq:app_thm_MAOS_onlyif_3_gen_2}
    \end{aligned}
\end{align}

By Lemma\ref{lemma:app_proof_thm2}, for all $w\in\{b,e\}$, $b\in[B]$, $e\in[B'-B]$,
\[
\Lambda_{i,t}(L|s_{i,t}[w],  h_{t},\chi_{-i,t}) = \mathtt{Mg}_{i,t}(\overline{D}_{i,t}\circ s_{i,t}[w], h_{t},\chi_{-i,t})+ \overline{\delta}_{i,t}(\overline{D}_{i,t}\circ s_{i,t}[w], h_{t},\chi_{-i,t}).
\]
In addition, the cutoff-switch is given by (\ref{eq:cutoff_knowledgeable_general}).
Then, (\ref{eq:app_thm_MAOS_onlyif_3_gen_1}) and (\ref{eq:app_thm_MAOS_onlyif_3_gen_2}), respectively, become, for all $s_{i,t}[e]\in \Psi^{e}_{i,t}(c_{i,t})$, $s_{i,t}[b]\in\vec{\Gamma}^{b}_{i,t}(c_{i,t})$, $e\in[B'-B]$, $b\in[B]$,
\begin{align*}
    &\phi_{i,t}(e,h_{t},\chi_{-i,t}|c_{i,t})\leq \Lambda^{\natural}_{t}(L|s_{i,t}[e],  h_{t},\chi_{-i,t}) \textup{ and }\\
    &\phi_{i,t}(b,h_{t},\chi_{-i,t}|c_{i,t})\geq \Lambda_{i,t}(L|s_{i,t}[b],  h_{t},\chi_{-i,t}).
\end{align*}
Then, from the definition of $\Lambda_{i,t}$, we have
\[
\begin{aligned}
        \Lambda_{i,t}(L|s_{i,t}, h_{t}, \chi_{-i,t})=\begin{cases}
        \phi_{i,t}(b, h_{t},\chi_{-i,t}|c_{i,t})&, \textup{ if } s_{i,t}\in \vec{\Gamma}^{b}_{i,t}(c_{i,t}),\\
       G_{i,t}(L|s_{i,t}, h_{t},\chi_{-i,t}) &, \textup{ if } s_{i,t}\in \vec{\Psi}^{e}_{i,t}(c_{i,t}),
    \end{cases}
\end{aligned}
\]
which implies the following best response to the conjecture $\chi_{i,t}$ for every agent,
\[
    \begin{cases}
    \tau_{i,t}(s_{i,t}, h_{t})=t &, \textup{ if } s_{i,t}\in \vec{\Gamma}^{b}_{i,t}(c_{i,t}),\\
   \tau_{i,t}(s_{i,t}, h_{t})=k>t &, \textup{ if } s_{i,t}\in \vec{\Psi}^{e}_{i,t}(c_{i,t}).
\end{cases}
\]
That is, for all $t\in\mathbb{T}$, $h_{t}\in H_{t}$, $\mathbb{P}(\tau_{i,t}(s_{i,t}[w], h_{t})=t |h_{t}) = \chi_{i,t}(t|h_{t})$ for all $w\in\{e,b\}$, $b\in[B]$, $e\in[B'-B]$; or equivalently, $\chi_{t}(\cdot|h_{t})\in \bm{X}_{t}(h_{t}, \chi_{t})$.
%
\hfill $\square$

\section{Proof of Proposition \ref{prop:switchability_sufficient_condition_new}}\label{app:prop:switchability_sufficient_condition_new}

The proof of Proposition \ref{prop:switchability_sufficient_condition_new} follows the similar procedures as in the proof of \ref{thm:doic_irod_sufficient_condition} based on Lemma \ref{lemma:app_proof_prop_single_agent_switchable}.
Hence, we omit the detailed proofs.
\hfill $\square$

\section{Proof of Proposition \ref{prop:necessary_DOIC_sigma_tau}}\label{app:necessary_DOIC_sigma_tau}

For simplicity, we drop $\chi_{-i,t}$.
We start by showing that both $MG_{i,t}(s'_{i,t}|\cdot, h_{t})$ and $V_{i,t}(s_{i,t}, h_{t})$ are Lipschitz continuous and differentiable in the true state $s_{i,t}$.

Let $\omega^{L}_{i,t}=(\omega_{i,k})_{k=t}^{L}$ be a realized sequence of shocks from period $t$ to $L\in\mathbb{T}_{t,T}$.
Let $a^{L}_{-i,t}=((\hat{a}_{j,k})_{i\neq j\in\mathcal{N}_{k}})_{t'=t}^{L}$ and $\hat{a}^{L}_{i,t}=(\hat{a}_{i,t},(a_{i,k})_{k=t+1}^{L})$, respectively, be the sequences of other agents' obedient actions and agent $i$'s arbitrary actions from period $t$ to $L\in\mathbb{T}_{t,T}$.
In addition, let $\hat{s}^{L}_{i,t}=(\hat{s}_{i,k})_{k=t}^{L}$ denote the corresponding states such that each $\hat{a}_{i,k}=\sigma_{i,k}(\hat{s}_{i,k},h_{k})$ for all $k\in\mathbb{T}_{t, L}$, and let $s^{L}_{i,t}=(s_{i,k})_{k=t}^{L}$ be a realized sequence of true states from period $t$ to $L\in\mathbb{T}_{t,T}$.

By the definition of the persistence function, we can represent each $s_{i,k}=K_{i,t}(k, s_{i,t}, h_{t}, \hat{a}^{k}_{t+1}, \omega^{k}_{i,t+1})$, where $K_{i,t}(k, s_{i,t}, h_{t}, \omega^{k}_{i,t})$ is recursively determined by $\{\kappa_{i,t'}\}_{t'=t+1}^{k}$ given $s_{i,t}$, $h_{t}$, $\hat{h}^{k}_{t+1}$, $\omega^{k}_{i,t}$; e.g., when $k=t+2$, $K_{i,t}(t+2, s_{i,t}, h_{t}, \omega^{k}_{i,t})=\kappa_{i,t+2}\big(\kappa_{i,t+1}(s_{i,t}, \hat{h}_{t+1}, \omega_{i,t+1}), \hat{h}_{t+2}, \omega_{i,t+2}\big)$, where $\hat{h}_{t+1}=\{h_{t}, \hat{a}_{t}\}$ and $\hat{h}_{t+2} = \{h_{t}, \hat{a}^{t+2}_{t+1} \}$.
Since $\kappa_{i,k}(s_{i,k-1}, \hat{a}_{i,k-1}, a_{-i,k-1}, \omega_{i,k})$ differentiable in $s_{i,k-1}$, we have (by fixing $a^{L}_{-i,t}$ and $\hat{a}^{L}_{-i,t}$),
\[
\begin{aligned}
    &\frac{\partial}{\partial v} K_{i,t}(k, v, h_{t}, \hat{a}^{k}_{t+1} \omega^{k}_{i,t+1})\big|_{v=s_{i,t}} = \prod\limits_{t'=t+1}^{k} \frac{\partial}{\partial v'} \kappa_{i,t'}(v', \hat{a}_{i,t'-1}, a_{-i,t'-1}, \omega_{i,t'})\big|_{v'=s_{i,t'}}.
\end{aligned}
\]
When he takes $\mathtt{OM}_{i,t}=0$, agent $i$'s realized payoff-to-go up to period $L$, denoted by $RG_{i,t}$, given history $h_{t}$, $s^{L}_{i,t}$, $a^{L}_{-i,t}$, $\hat{a}^{L}_{-i,t}$ (and thus $\hat{s}^{L}_{i,t}$), and $\omega^{L}_{i,t}$ is given by
\[
\begin{aligned}
    RG_{i,t}(s^{L}_{i,t}, \hat{a}^{L}_{i,t}, a^{L}_{-i,t},h_{t}, \omega^{L}_{i,t}, L)&\equiv \sum\limits_{k=t}^{L}u_{i,k}\Big(\hat{a}_{i,k}, a_{-i,k}, K_{i,t}(k, s_{i,t}, h_{t}, \hat{h}^{k}_{t+1}, \omega^{k}_{i,t+1})\Big) \\
    &+ \sum\limits_{k=t}^{L}\rho_{i,k}(\hat{a}_{i,k}, a_{-i,k}, h_{k}) + \phi_{i,L+1}(h_{L+1}).
\end{aligned}
\]
Then, let 
\[
\begin{aligned}
    DG_{i,t}(\varsigma,s_{i,t}, L)\equiv \frac{1}{\varsigma}\Big(RG_{i,t}((s_{i,t}+\varsigma, s^{L[\varsigma]}_{i,t+1}), \hat{a}^{L}_{i,t},h_{t}, \omega^{L}_{i,t}, L- RG_{i,t}((s_{i,t}, s^{L}_{i,t+1}), \hat{a}^{L}_{i,t},h_{t}, \omega^{L}_{i,t}, L)  \Big).
\end{aligned}
\]
Given $\hat{a}^{L}_{i,t}$ and $a^{L}_{-i,t}$, it is straightforward to see that
\begin{equation}\label{eq:app_prove_envelop_1}
    \begin{aligned}
    DG_{i,t}(\varsigma,s_{i,t}, L)&=\frac{1}{\varsigma}\Big(\sum\limits_{k=t}^{L}u_{i,k}\Big(\hat{a}_{i,k}, a_{-i,k}, K_{i,t}(k, s_{i,t}+\varsigma, h_{t}, \hat{h}^{k}_{t+1}, \omega^{k}_{i,t+1})\Big)\\
    &- \sum\limits_{k=t}^{L}u_{i,k}\Big(\hat{a}_{i,k}, a_{-i,k}, K_{i,t}(k, s_{i,t}, h_{t}, \hat{h}^{k}_{t+1}, \omega^{k}_{i,t+1})\Big)\Big).
\end{aligned}
\end{equation}
Let us temporarily omit $h_{t}, \hat{h}^{k}_{t+1}, \omega^{k}_{i,t+1}$, and $a^{L}_{-i,t}$.
Then, we rewrite (\ref{eq:app_prove_envelop_1}) as
\[
\begin{aligned}
    DG_{i,t}(\varsigma,s_{i,t}, L)&=\frac{1}{\varsigma}\Big(\sum\limits_{k=t}^{L}u_{i,k}\Big(\hat{a}_{i,k}, K_{i,t}(k, s_{i,t}+\varsigma)\Big)- \sum\limits_{k=t}^{L}u_{i,k}\Big(\hat{a}_{i,k}, K_{i,t}(k, s_{i,t})\Big)\Big)\\
    &\times \frac{\big(K_{i,t}(k, s_{i,t}+\varsigma) - K_{i,t}(k, s_{i,t})\big)/\varsigma }{\big( K_{i,t}(k, s_{i,t}+\varsigma) - K_{i,t}(k, s_{i,t})\big)/\varsigma  }=\frac{K_{i,t}(k, s_{i,t}+\varsigma) - K_{i,t}(k, s_{i,t})}{\varsigma}\\
    &\times \frac{ \sum\limits_{k=t}^{L}u_{i,k}\Big(\hat{a}_{i,k}, K_{i,t}(k, s_{i,t}+\varsigma)\Big)- \sum\limits_{k=t}^{L}u_{i,k}\Big(\hat{a}_{i,k}, K_{i,t}(k, s_{i,t})\Big)/\varsigma    }{ \big( K_{i,t}(k, s_{i,t}+\varsigma) - K_{i,t}(k, s_{i,t})\big)/\varsigma  }.
\end{aligned}
\]
Due to Conditions \ref{cond:differentiable_reward} and \ref{cond:bounded_dynamic}, we have
\[
\begin{aligned}
   \frac{\partial RG_{i,t}(v,s^{L}_{i,t+1}, L)}{\partial v}\big|_{v=s_{i,t}}= &\lim_{\varsigma\rightarrow 0} DG_{i,t}(\varsigma,s_{i,t}, L)
   = \sum\limits_{k=t}^{L}\frac{\partial}{\partial v}u_{i,k}(v, a_{k})\frac{\partial}{\partial v'} K_{i,t}(k, v')\big|^{v'=s_{i,t}}_{v=s_{i,k}}.
\end{aligned}
\]
By Condition \ref{cond:bounded_dynamic}, 
\[
\sum\limits_{k=t}^{L}\frac{\partial}{\partial v}u_{i,k}(v, a_{k})\big|_{v=s_{i,k}}\frac{\partial}{\partial v} K_{i,t}(k, v)\big|_{v=s_{i,t}}\leq \sum\limits_{k=t}^{L}\frac{\partial}{\partial v}u_{i,k}(v, a_{k})\big|_{v=s_{i,k}}  \bar{C}_{i,k}(\tilde{\omega}_{i,k})
\]
Hence,
\[
\begin{aligned}
    &\left|RG_{i,t}(s_{i,t},s^{L}_{i,t+1}, L) -RG_{i,t}(s'_{i,t},s^{L}_{i,t+1}, L) \right| =\left|\int^{s_{i,t}}_{s'_{i,t}} \sum\limits_{k=t}^{L}\frac{\partial}{\partial v}u_{i,k}(v, a_{k})\big|_{v=s_{i,k}}\frac{\partial}{\partial v} K_{i,t}(k, v)\big|_{v=r} dr \right|\\
    &\leq\left|\int^{s_{i,t}}_{s'_{i,t}} \sum\limits_{k=t}^{L}\frac{\partial}{\partial v}u_{i,k}(v, a_{k})\big|_{v=s_{i,k}}\bar{C}_{i,k}(\tilde{\omega}_{i,k}) dr \right|\\
    &\leq \left|\int^{s_{i,t}}_{s'_{i,t}} \sum\limits_{k=t}^{L}\frac{\partial}{\partial v}u_{i,k}(v, a_{k})\big|_{v=s_{i,k}}\hat{c}_{i,k} dr \right| \leq \Big(\sum_{k=t}^{L} c_{i,k} \big(\prod^{k}_{t'=1}\hat{c}_{i,t'}\big)\Big)|s_{i,t}- s'_{i,t}|,
\end{aligned}
\]
where the second inequality is from Condition \ref{cond:bounded_dynamic}, and the third inequality is from Condition \ref{cond:differentiable_reward}.
Thus, $RG_{i,t}(s_{i,t},s^{L}_{i,t+1}, L)$ is Lipschitz continuous in $s_{i,t}$ with constant $\sum_{k=t}^{L} c_{i,k} \big(\prod^{k}_{t'=1}\hat{c}_{i,t'}\big)$.
By Conditions \ref{cond:differentiable_reward} and \ref{cond:bounded_dynamic}, Lebesgue dominated convergence theorem implies
\begin{equation}\label{eq:app_prove_envelop_2}
    \begin{aligned}
   \frac{\partial}{\partial v} G_{i,t}(a'_{i,t}|v, h_{t}, L) \Big|_{v=s_{i,t}} 
   &= \lim_{\varsigma \rightarrow 0}\mathbb{E}^{\sigma}_{a'_{i,t}}\Big[ \frac{1}{\varsigma}\Big( RG_{i,t}(s_{i,t}+\varsigma, \tilde{s}^{L}_{i,t+1}, L) - RG_{i,t}(s_{i,t}, \tilde{s}^{L}_{i,t+1}, L)
 \Big)\Big|s_{i,t},h_{t}\Big]\\
 &= \mathbb{E}^{\sigma}_{a'_{i,t}}\Big[\lim_{\varsigma \rightarrow 0} \frac{1}{\varsigma}\Big( RG_{i,t}(s_{i,t}+\varsigma, \tilde{s}^{L}_{i,t+1}, L) - RG_{i,t}(s_{i,t}, \tilde{s}^{L}_{i,t+1}, L)
 \Big)\Big|s_{i,t},h_{t}\Big]\\
 &=\mathbb{E}^{\sigma}_{a'_{i,t}}\Big[ \lim_{h\rightarrow 0} DG_{i,t}(\varsigma,s_{i,t}, L)\Big|s_{i,t},h_{t}\Big]\\
 &= \mathbb{E}^{\sigma}_{a'_{i,t}}\Big[ \sum\limits_{k=t}^{L}\frac{\partial}{\partial v}u_{i,k}(v, \tilde{a}_{k})\big|_{v=\tilde{s}_{i,k}}\frac{\partial}{\partial v} K_{i,t}(k, v)\big|_{v=\tilde{s}_{i,k}}\Big|s_{i,t},h_{t}\Big]\\
 &=q_{i,t}\big(\sigma_{i,t}(s_{i,t}, h_{t}), v, h_{t}, \tau_{i,t}(s_{i,t}, h_{t})\big).
\end{aligned}
\end{equation}
By $S^{\mathtt{off}}$-DOIC, we have 
\[
\begin{aligned}
   &V_{i,t}(s_{i,t}, h_{t}) = MG_{i,t}(s_{i,t}|s_{i,t},h_{t})= G_{i,t}(s_{i,t},h_{t}, \tau_{i,t}(s_{i,t},h_{t})),
\end{aligned}
\]
which implies (\ref{eq:envelope_like_condition_sigma_tau}).
%
\hfill $\square$

\section{Proof of Theorem \ref{thm:sufficient_condition_without_MSO}}\label{app:thm:sufficient_condition_without_MSO}

%
Here, we need to show that if the task policy satisfies (\ref{eq:condition_RAIC_sigma_0}), then there exist $\rho$ and $\phi$ such that the mechanism is $S^{\mathtt{off}}$-DOIC.
By definition, agent $i$'s period-$t$ expected payoff to go function is given by

\[
\begin{aligned}
    &MG_{i,t}(s'_{i,t}|s_{i,t},h_{t}, \chi_{-i,t})\equiv \max\limits_{L\in\mathbb{T}_{t,T+1}} G_{i,t}(\sigma_{i,t}(s'_{i,t}, h_{t})|s_{i,t},h_{t}, L,\chi_{-i,t}),\\
    &V_{i,t}(s_{i,t}, h_{t}, \chi_{i,t})\equiv \max\limits_{s'_{i,t}\in S_{i,t}} MG_{i,t}(s'_{i,t}|s_{i,t},h_{t}, \chi_{-i,t}).
\end{aligned}
\]
%
%
By constructing $\rho$ that satisfies \ref{itm:C1}, we have
\[
\begin{aligned}
    MG_{i,t}(s_{i,t}, h_{t})&= \max\limits_{L\in \mathbb{T}_{t,T}}g_{i,t}(s_{i,t}, h_{t}, L|\theta_{i,t}) \\
    &+ \max\limits_{L'\in \mathbb{T}_{t,T}}\mathbb{E}^{\sigma}\Big[ \phi_{i,L'+1}(\tilde{h}_{L'+1}) - \max\limits_{L''\in \mathbb{T}_{L'+1,T}}g_{i,L'+1}(\tilde{s}_{i,L'+1}, \tilde{h}_{L'+1}, L''|\theta_{i,L'+1}) \Big|s_{i,t}, h_{t}\Big].
\end{aligned}
\]
%

%

For simplicity, we temporally drop $\chi=(\chi_{i,t})$ in the notations of the related functions.
Based on \ref{itm:C1} and the setting of $g_{i,T+1}(\cdot)=0$ and $\phi_{i,T+1}(\cdot)=0$, it holds that by law of total expectation $\max\limits_{L\in \mathbb{T}_{k,T}}g_{i,k}(s_{i,k}, \tilde{h}_{L'+1}, L|\theta_{i,k})=G_{i,k}(s_{i,k}, h_{k}, T)$ for all $k\in\mathbb{T}$.
Then, we can represent $MG_{i,t}$ as
\begin{equation}\label{eq:app_O_1}
    \begin{aligned}
    &MG_{i,t}(s_{i,t}, h_{t})=\max\limits_{L\in \mathbb{T}_{t,T}}g_{i,t}( s_{i,t}, h_{t}, L|\theta_{i,t})+ \max\limits_{L'\in \mathbb{T}_{t,T}}\mathbb{E}^{\sigma}\Big[ \phi_{i,L'+1}(\tilde{h}_{L'+1})- G_{i,L'+1}(s_{i,L'+1}, h_{L'+1}, T)\Big|s_{i,t}, h_{t}\Big].
\end{aligned}
\end{equation}
%
%
%
%
Let us focus on horizontal cutoff-switch functions (the analyses follow similarly when the principal is knowledgeable).
Consider that the cutoff-switch function is formulated by (\ref{eq:cutoff_horizontal_general}).
By setting, $S^{\mathtt{X}}[\sigma,\mathcal{G}]$ is non-empty (due to $\mathcal{S}^{\mathtt{MIX}}[\mathcal{G}]\neq \emptyset$).
Then, when agent $i$ is obedient in taking regular actions, the OFRs induced by the cutoff-switch function formulated by (\ref{eq:cutoff_horizontal_general}) coincide with $S^{\mathtt{off}}\in S^{\mathtt{X}}[\sigma,\mathcal{G}]$.
Since agent $i$ makes a unilateral one-shot deviation from obedience and each $S^{\mathtt{off}}_{i,k}$ is an indifference region of $S^{\mathtt{off}}_{i,k}$ for all $k\in\mathbb{T}$, it holds that
\[
\max\limits_{L'\in \mathbb{T}_{t,T}}\mathbb{E}^{\sigma}_{a'_{i,t}}\Big[ \phi_{i,L'+1}(\tilde{h}_{L'+1}) - G_{i,L'+1}(s_{i,L'+1}, h_{L'+1}, T)\Big|s_{i,t}, h_{t}\Big]=0,
\]
for any $a'_{i,t}\in A_{i,t}[\sigma]$.
Thus, we have 
\[
\begin{aligned}
    MG_{i,t}(s_{i,t}, h_{t}) &= \max\limits_{L\in \mathbb{T}_{t,T}}g_{i,t}(s_{i,t}, h_{t}, L|\theta_{i,t})=G_{i,t}(s_{i,t}, h_{t}, T),
\end{aligned}
\]
in which the left-hand side is independent of $\rho$ while the right-hand side depends on $\rho$.
From (\ref{eq:app_prove_envelop_2}), 
\[
\begin{aligned}
    \frac{\partial}{\partial v} G_{i,t}(v, h_{t}, T) \Big|_{v=s_{i,t}} &=\mathbb{E}^{\sigma}_{a'_{i,t}}\Big[ \sum\limits_{k=t}^{T}\frac{\partial}{\partial v}u_{i,k}(v, a_{k})\big|_{v=\tilde{s}_{i,k}}\mathtt{mp}_{i,k}(a_{i,t}, s_{i,t}, h_{t}, k)\Big|s_{i,t},h_{t}\Big]\\
    &=q_{i,t}(\sigma_{i,t}(s_{i,t},h_{t}), s_{i,t}, h_{t}, T).
\end{aligned}
\]
Let 
\[
\begin{aligned}
    \mathtt{Dq}_{i,t}(s_{i,t}, s'_{i,t}, h_{t}, T)\equiv q_{i,t}(\sigma_{i,t}(s_{i,t},h_{t}), s_{i,t}, h_{t}, T) - q_{i,t}(\sigma_{i,t}(s'_{i,t},h_{t}), s_{i,t}, h_{t}, T).
\end{aligned}
\]
Then, it is straightforward to obtain
\[
\begin{aligned}
    G_{i,t}(s_{i,t}, h_{t}, T)-G_{i,t}(\sigma_{i,t}(s'_{i,t})|s_{i,t}, h_{t}, T)=\int_{s'_{i,t}}^{s_{i,t}}\mathtt{Dq}_{i,t}(v, s'_{i,t}, h_{t}, T)dv.
\end{aligned}
\]
That is,
\begin{equation}\label{eq:app_O_2}
    \begin{aligned}
    &G_{i,t}(s_{i,t}, h_{t}, T)-G_{i,t}(\sigma_{i,t}(s'_{i,t})|s_{i,t}, h_{t}, T)
    =\int^{s'_{i,t}}_{s_{i,t}}q_{i,t}(\sigma_{i,t}(v), v, h_{t}, T)dv-\int^{s'_{i,t}}_{s_{i,t}}q_{i,t}(\sigma_{i,t}(s_{i,t}), v, h_{t}, T)dv\\
&=\max\limits_{L\in\mathbb{T}_{t,T}}\int^{s'_{i,t}}_{s_{i,t}}q_{i,t}(\sigma_{i,t}(v), v, h_{t}, L)dv-\max\limits_{L\in\mathbb{T}_{t,T}}\int^{s'_{i,t}}_{s_{i,t}}q_{i,t}(\sigma_{i,t}(s_{i,t}), v, h_{t}, L)dv\geq 0,
\end{aligned}
\end{equation}
where the last inequality is due to the fact that (\ref{eq:condition_RAIC_sigma_0}) is satisfied, which implies that the mechanism is RAIC.
%
 \hfill $\square$

\section{Proof of Proposition \ref{prop:phi_uniqueness} 
 }\label{app:prop:phi_uniqueness}


We denote $G_{i,t}(\cdot; \Theta)$ and $V_{i,t}(\cdot; \Theta)$ to highlight their dependence on the mechanism $\Theta$.
First, we obtain the following lemma.

\begin{lemma}\label{lemma:equivalence_Lambda}
    Fix a $S^{\mathtt{off}}$ and $\sigma$.
    Let $<\rho^{1},\phi^{1}>$ and $<\rho^{2},\phi^{2}>$ be two different profiles, where $\phi^{1}$ and $\phi^{2}$ are cutoff-switch functions with the same boundary profile $c$ that lead to an essential partition of each $S^{\mathtt{off}}_{i,t}$ for all $i\in\mathcal{N}$, $t\in\mathbb{T}$.
    Suppose that $<\sigma,\rho^{1},\phi^{1}>$ and $<\sigma,\rho^{2},\phi^{2}>$ are both $S^{\mathtt{off}}$-DOIC.
    Let $\tau=\{\tau_{i,t}\}$ denote the OM strategy profile associated with $S^{\mathtt{off}}$.
    Then, for all $i\in\mathcal{N}$, $t\in\mathbb{T}$, obedient history $h_{t}\in H_{t}$, $s_{i,t}\in S_{i,t}$,
    \begin{equation}\label{eq:equivalence_Lambda_1}
        \begin{aligned}
            G_{i,t}(s_{i,t},h_{t}, L^{*},\chi_{-i,t}; \Theta^{1}) - G_{i,t}(s_{i,t},h_{t}, L^{*},\chi_{-i,t}; \Theta^{2})=C_{i,t}(h_{t}),
        \end{aligned}
    \end{equation}
    where $C_{i,t}(h_{t})\in\mathbb{R}$ is a constant depending only on $h_t$, $L^{*}=\tau^{d}_{i,t}(s_{i,t}, h_{t})$, $\tau^{d}=(\tau^{d}_{i,t})$ and $\chi=(\chi_{i,t})$ satisfy (\ref{eq:x_coincides_tau}) and (\ref{eq:principal_desired_belief_sym}) given $S^{\mathtt{off}}$.
\end{lemma}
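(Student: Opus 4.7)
The plan is to leverage the envelope-like necessary condition in Proposition \ref{prop:necessary_DOIC_sigma_tau}, observing that the state-derivative of the value function $V_{i,t}(\cdot,h_t,\chi_{-i,t})$ is fully determined by the base game $\mathcal{G}$, the shared task policy profile $\sigma$, the obedient OM strategy $\tau^d$, and the conjecture $\chi$, all of which are common to the two mechanisms. Consequently, $V^1_{i,t}$ and $V^2_{i,t}$ as functions of $s_{i,t}$ differ by a constant that can depend only on $h_t$, and obedience allows one to identify $V_{i,t}(s_{i,t},h_t,\chi_{-i,t})$ with $G_{i,t}(s_{i,t},h_t,L^*,\chi_{-i,t})$ at $L^*=\tau^d_{i,t}(s_{i,t},h_t)$, which delivers (\ref{eq:equivalence_Lambda_1}).

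First I would verify that both $\Theta^1$ and $\Theta^2$ induce the same $\tau^d$ and the same $\chi$. Since $\phi^1_{i,t}(\cdot|c_{i,t})$ and $\phi^2_{i,t}(\cdot|c_{i,t})$ are cutoff-switch functions sharing the boundary profile $c$, Definition \ref{def:cutoff_switch_function} implies that they designate the same OFR profile $S^{\mathtt{off}}$. Equation (\ref{eq:def_switchability_tau}) then forces the same $\tau^d$ under both mechanisms, and (\ref{eq:principal_desired_belief_sym}) together with (\ref{eq:x_coincides_tau}) pin down the same $\chi$. Because both mechanisms are $S^{\mathtt{off}}$-DOIC and obedient under the common $\sigma$, the reachable public histories $h_t$ coincide as well.

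Second, invoking Proposition \ref{prop:necessary_DOIC_sigma_tau} on each mechanism, since the impulse-response factor $q_{i,t}(\cdot)$ defined by (\ref{eq:envelope_term})--(\ref{eq:partial_derivative_payoff}) depends solely on $\sigma$, $\tau^d$, $\chi$, and the primitives of $\mathcal{G}$---not on $\rho$ or $\phi$---the envelope equation (\ref{eq:envelope_like_condition_sigma_tau}) yields
\begin{equation*}
\frac{\partial}{\partial v}V^1_{i,t}(v,h_t,\chi_{-i,t})\Big|_{v=s_{i,t}} \;=\; \frac{\partial}{\partial v}V^2_{i,t}(v,h_t,\chi_{-i,t})\Big|_{v=s_{i,t}}
\end{equation*}
for every $s_{i,t}\in S_{i,t}$. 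Fixing any reference $\underline{s}_{i,t}\in S_{i,t}$ and integrating, I obtain $V^1_{i,t}(s_{i,t},h_t,\chi_{-i,t})-V^2_{i,t}(s_{i,t},h_t,\chi_{-i,t})\equiv C_{i,t}(h_t)$, independent of $s_{i,t}$. By obedience, the maximizer in the definition of $V_{i,t}$ is attained at $(s'_{i,t},L)=(s_{i,t},L^*)$ with $L^*=\tau^d_{i,t}(s_{i,t},h_t)$, so $V^j_{i,t}(s_{i,t},h_t,\chi_{-i,t})=G_{i,t}(s_{i,t},h_t,L^*,\chi_{-i,t};\Theta^j)$ for $j=1,2$, and substitution yields the claimed identity.

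The main obstacle is ensuring that the envelope condition (\ref{eq:envelope_like_condition_sigma_tau}) applies simultaneously to both mechanisms with the same right-hand side; this amounts to verifying that Conditions \ref{cond:differentiable_reward} and \ref{cond:bounded_dynamic} hold in both games $\mathcal{G}^{\Theta^1}$ and $\mathcal{G}^{\Theta^2}$, which is automatic since these constrain only the reward $u_{i,t}$ and the state-dynamic $\kappa_{i,t}$ of the base game and are entirely insensitive to $\rho$ and $\phi$. A secondary subtlety is the well-posedness of $L^*$ when $\arg\max_{L\in\mathbb{T}_{t,T}}G_{i,t}(s_{i,t},h_t,L,\chi_{-i,t};\Theta^j)$ is non-singleton; here one uses that both mechanisms share the same obedient OM strategy profile $\tau^d$ determined by the common boundary $c$ and the principal-favoring tie-breaking rule, so $L^*$ is unambiguously identical under $\Theta^1$ and $\Theta^2$.
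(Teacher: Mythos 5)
Your proposal is correct and follows essentially the same route as the paper's proof: both arguments invoke the envelope-like necessary condition of Proposition \ref{prop:necessary_DOIC_sigma_tau} to show that $\frac{\partial}{\partial v}V_{i,t}(v,h_t,\chi_{-i,t})$ equals the same $q_{i,t}$ term (depending only on $\sigma$, $\tau^{d}$, $\chi$, and the base game) under $\Theta^{1}$ and $\Theta^{2}$, and then use $S^{\mathtt{off}}$-DOIC to identify $V_{i,t}(s_{i,t},h_t;\Theta^{j})$ with $G_{i,t}(s_{i,t},h_t,L^{*},\chi_{-i,t};\Theta^{j})$, so the difference is constant in $s_{i,t}$. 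Your extra preliminary checks (that the shared boundary profile $c$ pins down the same $\tau^{d}$ and $\chi$, and that Conditions \ref{cond:differentiable_reward}--\ref{cond:bounded_dynamic} concern only the base game) are consistent with, and merely make explicit, what the paper's proof leaves implicit.
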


\proof{Proof}

    Denote $\Theta^{1}=<\sigma,\rho^{1},\phi^{1}>$ and $\Theta^{2}=<\sigma,\rho^{2},\phi^{2}>$.
    Since the mechanisms are $S^{\mathtt{off}}$-switchable, they are RAIC.
    In addition, since $\Theta^{1}$ and $\Theta^{2}$ share the same task policy profile, $\Theta^{1}$ and $\Theta^{2}$ induces the same continuing outcome $\gamma^{\sigma}[s_{i,t}, h_{t}]$ for any $s_{i,t}$ and $h_{t}$.
    Since the mechanisms are RAIC, it follows straightforwardly from Proposition \ref{prop:necessary_DOIC_sigma_tau} that
    \[
    \begin{aligned}
            \frac{\partial}{\partial v} V_{i,t}(v, h_{t};\Theta^{1})\Big|_{v=s_{i,t}}=\frac{\partial}{\partial v} V_{i,t}(v, h_{t};\Theta^{2})\Big|_{v=s_{i,t}}= q_{i,t}(\sigma_{i,t}(s_{i,t},h_{t}), s_{i,t}, h_{t}, L^{*}).
    \end{aligned}
    \]
    where $q_{i,t}$ is given by (\ref{eq:envelope_term}).
    Since in $S^{\mathtt{off}}$-DOIC mechanism, $V_{i,t}(s_{i,t}, h_{t}; \Theta^{v})= G_{i,t}(s_{i,t}, h_{t}, L^{*}, \chi_{-i,t}; \Theta^{v})$ for all $v\in\{1,2\}$, there exists a constant $C_{i,t}(h_{t})$ for each obedient history $h_{t}$ such that (\ref{eq:equivalence_Lambda_1}) holds.
    \hfill $\square$
\endproof

By Lemma \ref{lemma:equivalence_Lambda}, we have
\[
\begin{aligned}
    \mathbb{E}^{\sigma}\Big[\sum\limits_{k=t}^{L^{*}-1}\rho^{1}_{i,k}(\Tilde{a}_{k}, \Tilde{h}_{t}) + \phi^{1}_{i,L^{*}}(\Tilde{h}_{t}|c_{i,L^{*}}) \Big|s_{i,t}, h_{t}\Big] = \mathbb{E}^{\sigma}\Big[\sum\limits_{k=t}^{L^{*}-1}\rho^{2}_{i,k}(\Tilde{a}_{k}, \Tilde{h}_{t}) + \phi^{2}_{i,L^{*}}(\Tilde{h}_{t}|c_{i,L^{*}}) \Big|s_{i,t}, h_{t}\Big] + C_{i,t}(h_{t}),
\end{aligned}
\]
where $L^{*}=\tau_{i,t}(s_{i,t}, h_{t})$
Suppose that $\Theta^{1}$ and $\Theta^{2}$ have the same coupling function profile (i.e., $\rho^{1}=\rho^{2}$).
Since the mechanism is $S^{\mathtt{off}}$-switchable, we have
\[
\begin{cases}
    \mathbb{E}^{\sigma}\Big[ \phi^{1}_{i,L}(\Tilde{h}_{t}|c_{i,L}) \Big|s_{i,t}, h_{t}\Big] - \mathbb{E}^{\sigma}\Big[ \phi^{2}_{i,L}(\Tilde{h}_{t}|c_{i,L}) \Big|s_{i,t}, h_{t}\Big] = C_{i,t}(h_{t}), &\textup{ if } s_{i,t}\not\in S^{\mathtt{off}}_{i,t}\\
    \phi^{1}_{i,t}(h_{t}|c_{i,t}) - \phi^{2}_{i,t}(h_{t}|c_{i,t}) = C_{i,t}(h_{t}),& \textup{ if } s_{i,t}\in S^{\mathtt{off}}_{i,t}
\end{cases},
\]
for all $s_{i,t}\in S_{i,t}$, $h_{t}\in H_{t}$, where $L=\tau^{d}_{i,t}(s_{i,t}, h_{t})\in \mathbb{T}\cup\{T+1\}$.
Hence,
\[
\begin{aligned}
    \phi^{1}_{i,t}(h_{t}|c_{i,t}) - \phi^{2}_{i,t}(h_{t}|c_{i,t}) =  \mathbb{E}^{\sigma}\Big[ \phi^{1}_{i,L}(\Tilde{h}_{t}|c_{i,L}) \Big|s_{i,t}, h_{t}\Big] - \mathbb{E}^{\sigma}\Big[ \phi^{2}_{i,L}(\Tilde{h}_{t}|c_{i,L}) \Big|s_{i,t}, h_{t}\Big],
\end{aligned}
\]
in which the left-hand side is independent of $s_{i,t}$.
Due to Assumption \ref{assp:full_support}, we have that $\mathbb{P}(\tau^{d}_{i,t}(s_{i,t}, h_{t})=T+1)>0$.
Then, it also holds 
\[
\begin{aligned}
    \phi^{1}_{i,t}(h_{t}|c_{i,t}) - \phi^{2}_{i,t}(h_{t}|c_{i,t}) =  \mathbb{E}^{\sigma}\Big[ \phi^{1}_{i,T+1}(\Tilde{h}_{t}|c_{i,T+1}) \Big|s_{i,t}, h_{t}\Big] - \mathbb{E}^{\sigma}\Big[ \phi^{2}_{i,T+1}(\Tilde{h}_{t}|c_{i,T+1}) \Big|s_{i,t}, h_{t}\Big]=0.
\end{aligned}
\]
Therefore, $\phi^{1}_{i,t}(h_{t}|c_{i,t}) = \phi^{2}_{i,t}(h_{t}|c_{i,t})$.
 \hfill $\square$

\section{Proof of Proposition \ref{thm:sufficient_condition_positive_classic_mechanism} }\label{app:thm:sufficient_condition_positive_classic_mechanism}

We provide proof for part \textit{(i)} only. The proof for part \textit{(ii)} can be easily obtained analogously.
Since $\mathcal{S}^{\mathtt{OX}}[\sigma, \mathcal{G}]\neq\emptyset$, for any $S^{\mathtt{off}}\in \mathcal{S}^{\mathtt{OH}}[\sigma, \mathcal{G}]$, we have a mechanism $\Theta=<\sigma, \rho, \phi>$ that is $S^{\mathtt{off}}$-DOIC, in which $\rho\in\mathcal{P}[\sigma,\mathcal{G}]$.
Due to Theorem \ref{thm:MAO_switchability_conjecture_gen} and the definition of $\mathcal{S}^{\mathtt{OH}}[\sigma,\mathcal{G}]$, if each cutoff-switch is given by (\ref{eq:cutoff_horizontal_general}), then the mechanism $\Theta=<\sigma, \rho, \phi>$ is $S^{\mathtt{off}}$-DOIC for any $S^{\mathtt{off}}\in \mathcal{S}^{\mathtt{OH}}[\sigma, \mathcal{G}]$.

First, we show the \textit{only if} ($\Rightarrow$).
From Proposition \ref{prop:phi_uniqueness}, we know that in any $S^{\mathtt{off}}$-DOIC mechanism, $\phi$ is unique. 
Thus, $\mathcal{S}^{\mathtt{CH}}[\sigma, \mathcal{G}] = \mathcal{S}^{\mathtt{OH}}[\sigma, \mathcal{G}]$ implies (\ref{eq:cutoff_condition_zero_H}).

Next, we show the \textit{if} ($\Leftarrow$).
Given $\sigma$ and $\mathcal{G}$, suppose that there exists $S^{\mathtt{off}}\in \mathcal{S}^{\mathtt{CH}}[\sigma, \mathcal{G}]$ such that 
\begin{equation*}
            \begin{aligned}
            \widehat{\phi}_{i,t}(h_{t};\sigma)=\mathtt{Mg}_{i,t}(\overline{d}_{i,t}(b), h_{t}, \chi_{-i,t}) + \overline{\delta}_{i,t}(\overline{d}_{i,t}(b), h_{i,t}, \chi_{-i,t}) \neq 0.
            \end{aligned}
        \end{equation*}
Note that both $\mathtt{Mg}_{i,t}$ and $\overline{\delta}_{i,t}$ depend on $\sigma$.
Then, for any $\rho\in\mathcal{P}[\sigma,\mathcal{G}]$, if we construct each $\hat{\phi}_{i,t}(h_{t}) =\widehat{\phi}_{i,t}(h_{t};\sigma)$, then Theorem \ref{thm:MAO_switchability_conjecture_gen} implies that the mechanism $\Theta=<\sigma, \rho, \hat{\phi}>$ is $S^{\mathtt{off}}$-DOIC, which contradicts to that $\sigma$ satisfies (\ref{eq:cutoff_condition_zero_H}).
\hfill $\square$

\section{Proof of Corollary \ref{corollary:envelop_condition_indifference}}\label{app:corollary:envelop_condition_indifference}

Let $\hat{\tau}=(\hat{\tau}_{i,t})$, in which each $\hat{\tau}_{i,t}(s_{i,t}, h_{t})=T+1$ for all $s_{i,t}\in S^{\mathtt{off}}_{i,t}$, $i\in\mathcal{N}$, $t\in\mathbb{T}$.
Since the mechanism $\Theta$ is $S^{\mathtt{off}}$-DOIC with $S^{\mathtt{off}}\in \mathcal{S}^{\mathtt{IX}}[\sigma,\mathcal{G}]$, each $S^{\mathtt{off}}_{i,t}$ induces an on-rent of zero for all $i\in\mathcal{N}$, $t\in\mathbb{T}$.
Then, $\hat{\tau}=(\hat{\tau}_{i,t})$ is one the OM strategy profile that can be incentivized by the mechanism $\Theta$. 
Following
(\ref{eq:envelope_like_condition_sigma_tau}) of Proposition \ref{prop:necessary_DOIC_sigma_tau}, we obtain (\ref{eq:first_order_indifference}).
\hfill $\square$

\section{ Proof of Corollary \ref{corollary:envelope_V_MSO}}\label{app:corollary:envelope_V_MSO}

From the proof of Proposition \ref{prop:necessary_DOIC_sigma_tau} in Appendix \ref{app:necessary_DOIC_sigma_tau}, we can easily have that the following difference quotient is bounded for all $\varsigma\neq 0$:
\[
\begin{aligned}
    \Big|\max\limits_{L\in\mathbb{T}_{t,T}}G_{i,t}(a'_{i,t}|s_{i,t}+\varsigma, h_{t}, L) - \max\limits_{L\in\mathbb{T}_{t,T}}G_{i,t}(a'_{i,t}|s_{i,t}, h_{t}, L) \Big|\leq \sum_{k=t}^{T} c_{i,k} \big(\prod^{k}_{t'=1}\hat{c}_{i,t'}\big)\big|\varsigma\big|,
\end{aligned}
\]
for all $s_{i,t},s'_{i,t}$ with $a'_{i,t}=\sigma_{i,t}(s'_{i,t}, h_{t})$.
which implies that $MG_{i,t}$ is Lipschitz continuous in $s_{i,t}$.
RAIC implies that $V_{i,t}(s_{i,t}, h_{t}, \chi_{i,t})=MG_{i,t}(s_{i,t}|s_{i,t},h_{t}, \chi_{-i,t})$.
Thus, $V_{i,t}(s_{i,t}, h_{t}, \chi_{-i,t})$ is Lipschitz continuous in $s_{i,t}$ with constant $\big(\prod^{k}_{t'=1}\hat{c}_{i,t'}\big)$.
Let $L(s_{i,t})\in \mathbb{T}_{t,T}$ satisfy $MG_{i,t}(s_{i,t}|s_{i,t},h_{t}, \chi_{-i,t})= G_{i,t}(s_{i,t},h_{t}, L(s_{i,t}),\chi_{-i,t})$.
Then, the following exists for every $s_{i,t}\in S_{i,t}$,

%
\[
\begin{aligned}
    &\lim_{\varsigma \rightarrow 0} \frac{MG_{i,t}(s_{i,t}|s_{i,t}+\varsigma, h_{t})- MG_{i,t}(s_{i,t}|s_{i,t}, h_{t}) }{\varsigma}= \lim_{\varsigma \rightarrow 0}\frac{\max_{L\in\mathbb{T}_{t,T}} G_{i,t}(s_{i,t}+\varsigma, h_{t}, L) - \max_{L\in\mathbb{T}_{t,T}}G_{i,t}(s_{i,t}, h_{t}, L) }{\varsigma}.
\end{aligned}
\]
Thus, we have for all $s_{i,t}\in S_{i,t}$, $\frac{\partial }{\partial v} MG_{i,t}(s_{i,t}|v, h_{t})\Big|_{v=s_{i,t}} = \frac{\partial }{\partial v} \max\limits_{L\in\mathbb{T}_{t,T}} G_{i,t}(s_{i,t}|v, h_{t}, L)\Big|_{v=s_{i,t}}$.
Since the mechanism induces maximum-sensitive obedience, it holds that
\[
\begin{aligned}
    \frac{\partial }{\partial v} MG_{i,t}(s_{i,t}|v, h_{t})\Big|_{v=s_{i,t}} &=\max\limits_{L\in\mathbb{T}_{t,T}}\frac{\partial }{\partial v} G_{i,t}(s'_{i,t}|v, h_{t}, L)\Big|_{v=s_{i,t}}
    \\
    &=\max\limits_{L\in\mathbb{T}_{t,T}}\mathbb{E}^{\sigma}\Big[ \sum\limits_{k=t}^{L}\frac{\partial}{\partial v}u_{i,k}(v, a_{k})\big|_{v=\tilde{s}_{i,k}}\mathtt{mp}_{i,k}(a_{i,t}, s_{i,t}, h_{t}, k)\Big|s_{i,t},h_{t}\Big].
\end{aligned}
\]
Then, we have
$\frac{\partial }{\partial v} MG_{i,t}(s_{i,t}|v, h_{t}, \chi_{-i,t})\Big|_{v=s_{i,t}}\leq \max\limits_{L\in\mathbb{T}_{t,T}}\sum_{k=t}^{L} c_{i,k} \big(\prod^{k}_{t'=1}\hat{c}_{i,t'}\big)$.
    Hence, $MG_{i,t}(s'_{i,t}|s_{i,t}, h_{t}, \chi_{-i,t})$ is Lipschitz continuous in $s_{i,t}$ with constant $\max\limits_{L\in\mathbb{T}_{t,T}}\sum_{k=t}^{L} c_{i,k} \big(\prod^{k}_{t'=1}\hat{c}_{i,t'}\big)$.
Since the mechanism is RAIC, $V_{i,t}(s_{i,t}, h_{t}, \chi_{-i,t})= MG_{i,t}(s_{i,t}|s_{i,t},h_{t}, \chi_{-i,t})$. 
That is, $V_{i,t}(\cdot)$ is Lipschitz continuous in $s_{i,t}$ with constant $\max\limits_{L\in\mathbb{T}_{t,T}}\sum_{k=t}^{L} c_{i,k} \big(\prod^{k}_{t'=1}\hat{c}_{i,t'}\big)$.
Then, By Theorem 1 of \citet{milgrom2002envelope}, $\frac{\partial }{\partial v} V_{i,t}(v, h_{t}, \chi_{i,t})\big|_{v=s_{i,t}} = $  $ \frac{\partial }{\partial v} MG_{i,t}(v, h_{t}, \chi_{-i,t})\Big|_{v=s_{i,t}}$.
Thus, we obtain (\ref{eq:first_order_like_condition_envelope_MSO}).
%
 \hfill $\square$

\section{Proof of Theorem \ref{thm:necessary_RAIC_condition} }\label{app:thm:necessary_RAIC_condition}

Since $\mathcal{S}^{\mathtt{M}}[\sigma,  \mathcal{G}]\neq \emptyset$, for any $S^{\mathtt{off}}\in \mathcal{S}^{\mathtt{M}}[\sigma,  \mathcal{G}]$, we have a mechanism $\Theta=<\sigma, \rho, \phi>$ such that the mechanism is $S^{\mathtt{off}}$-DOIC.
Then, \textit{(i)} each agent $i$ is obedient in taking the OM actions and has conjecture $\chi_{-i,t}$ about others' OM strategies, which satisfies (\ref{eq:principal_desired_belief_sym}) given $S^{\mathtt{off}}$, and \textit{(i)} the mechanism is RAIC.
We drop the obedient conjecture $\chi_{-i,t}$ for simplicity.

Due to RAIC, it holds that 
\[
\begin{aligned}
    V_{i,t}(s_{i,t}, h_{t})=MG_{i,t}(s_{i,t}|s_{i,t},h_{t})\geq MG_{i,t}(s'_{i,t}|s_{i,t},h_{t}),
\end{aligned}
\]
for all $i\in\mathcal{N}$, $t\in\mathbb{T}$, $s_{i,t}, s'_{i,t}\in S_{i,t}$, $h_{t}\in H_{t}$.
Corollary \ref{corollary:envelope_V_MSO} implies 
\[
\frac{\partial}{\partial v} V_{i,t}(v, h_{t})\Big|_{v=s_{i,t}}= \max_{L\in\mathbb{T}_{t,T}} q_{i,t}(\sigma_{i,t}(s_{i,t},h_{t}), s_{i,t}, h_{t}, L).
\]
Hence, we can construct $V_{i,t}$ by
\begin{equation}\label{eq:app_prove_monotone_sigma_1}
    \begin{aligned}
    V_{i,t}(s_{i,t}, h_{t})= V_{i,t}(\mathring{s}_{i,t}, h_{t}) + \int_{\mathring{s}_{i,t}}^{s_{i,t}}\max\limits_{L\in \mathbb{T}_{t,T}} q_{i,t}(\sigma_{i,t}(v), v, h_{t}, L)d v, 
\end{aligned}
\end{equation}
for all $\mathring{s}_{i,t}\in S_{i,t}$.

Next, we want to obtain results analogous to (\ref{eq:first_order_like_condition_envelope_MSO}) for agent $i$ with true state $s_{i,t}$ when he makes a unilateral one-shot deviation from obedient in taking a regular action in period $t$.
Suppose that agent $i$ takes $\mathtt{OM}_{i,t}=0$ and an action $a'_{i,t}\in A_{i,t}[\sigma]$ such that $a'_{i,t}=\sigma_{i,t}(s'_{i,t}, h_{t})$ for some $s'_{i,t}\in S_{i,t}$.
Suppose in addition that agent $i$ is obedient for all $t'\neq t$.
Then, define
\[
\begin{aligned}
    V^{\dagger}_{i,t}(a'_{i,t}|s_{i,t}, h_{t},\chi_{-i,t})\equiv \mathbb{E}^{\sigma}_{a'_{i,t}}\Bigg[z_{i,t}(a'_{i,t}, \tilde{a}_{-i,t}, s_{i,t}) +   V_{i,t+1}(\tilde{s}_{i,t+1}, \tilde{h}_{t+1})\Bigg|s_{i,t}, h_{t}, \chi_{-i,t}\Bigg],
\end{aligned}
\]
where $V_{i,t+1}$ satisfies (\ref{eq:app_prove_monotone_sigma_1}).
%

\begin{lemma}\label{lemma:first_order_like_condition_deviation}
    Suppose that Conditions \ref{cond:differentiable_reward} and \ref{cond:bounded_dynamic} hold.
    Let $S^{\mathtt{off}}$ is desired by the principal, and let $\chi=(\chi_{i,t})$ satisfy (\ref{eq:principal_desired_belief_sym}) given $S^{\mathtt{off}}$.
    If the mechanism $\Theta$ is $S^{\mathtt{off}}$-DOIC, then for all $i\in\mathcal{N}$, $t\in\mathbb{T}$, $s_{i,t}\in S_{i,t}$, $h_{t}\in H_{t}$,
\begin{equation}\label{eq:first_order_like_condition_deviation}
    \begin{aligned}
        &\frac{\partial}{\partial v} V^{\dagger}_{i,t}(a'_{i,t}|v, h_{t}, \chi_{-i,t})\big|_{v=s_{i,t}} = \max\limits_{L\in\mathbb{T}_{t,T}}q_{i,t}(a'_{i,t}, s_{i,t}, h_{t}, L, \chi_{-i,t}).
    \end{aligned}
\end{equation}
    %
\end{lemma}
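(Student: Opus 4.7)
The plan is to compute $\partial V^{\dagger}_{i,t}(a'_{i,t}|v, h_{t}, \chi_{-i,t})/\partial v|_{v=s_{i,t}}$ by differentiating directly under the expectation, and then to match the resulting expression with $\max_{L} q_{i,t}(a'_{i,t}, s_{i,t}, h_{t}, L)$ via the same envelope-type manipulations that established Corollary \ref{corollary:envelope_V_MSO}. Specifically, I decompose
\[
V^{\dagger}_{i,t}(a'_{i,t}|v, h_{t}, \chi_{-i,t}) = \mathbb{E}^{\sigma}_{a'_{i,t}}\!\left[z_{i,t}(a'_{i,t}, \tilde{a}_{-i,t}, v)\,\big|\,v, h_{t}\right] + \mathbb{E}^{\sigma}_{a'_{i,t}}\!\left[V_{i,t+1}(\tilde{s}_{i,t+1}, \tilde{h}_{t+1})\,\big|\,v, h_{t}\right]
\]
and invoke Lebesgue dominated convergence, justified exactly as in the proof of Proposition \ref{prop:necessary_DOIC_sigma_tau} under Conditions \ref{cond:differentiable_reward}--\ref{cond:bounded_dynamic}, to interchange $\partial/\partial v$ with the outer expectation. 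Because $\rho_{i,t}(a'_{i,t}, \tilde{a}_{-i,t}, h_{t})$ and the distribution of $\tilde{a}_{-i,t}$ do not depend on $v$, the first term contributes $\mathbb{E}^{\sigma}_{a'_{i,t}}[\partial u_{i,t}(v, a'_{i,t}, \tilde{a}_{-i,t})/\partial v|_{v=s_{i,t}}\,|\,s_{i,t}, h_{t}]$.

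For the continuation term I would use the integral representation (\ref{eq:app_prove_monotone_sigma_1}) of $V_{i,t+1}$ already derived in the proof of Theorem \ref{thm:necessary_RAIC_condition}, namely
\[
V_{i,t+1}(s_{i,t+1}, h_{t+1}) = V_{i,t+1}(\mathring{s}_{i,t+1}, h_{t+1}) + \int_{\mathring{s}_{i,t+1}}^{s_{i,t+1}} \max_{L'\in\mathbb{T}_{t+1,T}} q_{i,t+1}\!\big(\sigma_{i,t+1}(v'), v', h_{t+1}, L'\big)\, dv',
\]
which itself rests on Corollary \ref{corollary:envelope_V_MSO} and hence on MSO. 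Since only $\tilde{s}_{i,t+1} = \kappa_{i,t+1}(v, \tilde{h}_{t+1}, \tilde{\omega}_{i,t+1})$ depends on $v$ (recall $\tilde{h}_{t+1}$ is formed from $a'_{i,t}$ and $\tilde{a}_{-i,t}$, both independent of $v$), the fundamental theorem of calculus combined with the chain rule and a second application of dominated convergence yields
\[
\frac{\partial}{\partial v}\mathbb{E}\!\left[V_{i,t+1}(\tilde{s}_{i,t+1}, \tilde{h}_{t+1}) \,\big|\, v, h_{t}\right]\Big|_{v=s_{i,t}} = \mathbb{E}^{\sigma}_{a'_{i,t}}\!\left[\max_{L'} q_{i,t+1}(\cdots, L') \cdot \frac{\partial \kappa_{i,t+1}}{\partial v}\Big|_{v=s_{i,t}} \,\Big|\, s_{i,t}, h_{t}\right].
\]

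The final step is to recognize the sum of these two contributions as $\max_{L} q_{i,t}(a'_{i,t}, s_{i,t}, h_{t}, L)$. Expanding $q_{i,t}$ via the recursive decomposition in (\ref{eq:envelope_term}) and (\ref{eq:partial_derivative_payoff}) and splitting the sum at $k=t$ yields
\[
\max_{L} q_{i,t}(a'_{i,t}, s_{i,t}, h_{t}, L) = \mathbb{E}\!\left[\frac{\partial u_{i,t}}{\partial v}\Big|_{v=s_{i,t}}\right] + \max_{L\geq t+1}\mathbb{E}\!\left[q_{i,t+1}(\sigma, \tilde{s}_{i,t+1}, \tilde{h}_{t+1}, L)\cdot \frac{\partial \kappa_{i,t+1}}{\partial v}\Big|_{v=s_{i,t}}\right],
\]
so the proof reduces to equating $\mathbb{E}[\max_{L'} q_{i,t+1}(\cdots, L') \cdot \partial \kappa_{i,t+1}/\partial v]$ with $\max_{L}\mathbb{E}[q_{i,t+1}(\cdots, L) \cdot \partial \kappa_{i,t+1}/\partial v]$. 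This interchange of max and expectation is the main obstacle, because in general only $\mathbb{E}[\max]\geq\max\mathbb{E}$ holds. The MSO hypothesis (\ref{eq:def_MSO}) is precisely what closes the gap: applied at period $t$ to the function $\max_{L} G_{i,t}(a'_{i,t}|v, h_{t}, L)$, together with the derivative identity (\ref{eq:app_prove_envelop_2}) from the proof of Proposition \ref{prop:necessary_DOIC_sigma_tau}, it forces $\partial/\partial v$ of the max to coincide with the max of $\partial/\partial v$, giving $\max_{L} q_{i,t}(a'_{i,t}, s_{i,t}, h_{t}, L)$ as required. A backward induction on $T-t$, using MSO at each period together with the base case $t=T$ (where both sides reduce to $\mathbb{E}[\partial u_{i,T}/\partial v|_{v=s_{i,T}}]$), delivers (\ref{eq:first_order_like_condition_deviation}).
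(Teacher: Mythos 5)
Your proposal follows essentially the same route as the paper's proof: decompose $V^{\dagger}_{i,t}$ into the immediate term and the continuation term, pass the derivative through the expectation via dominated convergence under Conditions \ref{cond:differentiable_reward}--\ref{cond:bounded_dynamic}, apply the chain rule to $V_{i,t+1}\bigl(\kappa_{i,t+1}(v,h_{t+1},\omega_{i,t+1}),h_{t+1}\bigr)$ using the integral representation (\ref{eq:app_prove_monotone_sigma_1}), and recombine the two contributions into $\max_{L}q_{i,t}(a'_{i,t},s_{i,t},h_{t},L,\chi_{-i,t})$. Where you differ is in making explicit the one delicate step that the paper's proof passes over silently: its final equality converts an expression of the form $\mathbb{E}\bigl[\max_{L}(\cdot)\bigr]$ into $\max_{L}\mathbb{E}\bigl[(\cdot)\bigr]$, and you correctly observe that only $\mathbb{E}[\max]\geq\max\mathbb{E}$ holds in general, so an additional hypothesis is needed to close the gap. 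Invoking maximum-sensitive obedience (\ref{eq:def_MSO}) together with a backward induction is a legitimate way to do this, and it is consistent with how the lemma is actually used: it appears only inside the proof of Theorem \ref{thm:necessary_RAIC_condition}, where $\mathcal{S}^{\mathtt{M}}[\sigma,\mathcal{G}]\neq\emptyset$ and the representation of $V_{i,t+1}$ via Corollary \ref{corollary:envelope_V_MSO} already presupposes MSO. The one caveat is that MSO is not listed among the hypotheses of the lemma as stated, so strictly speaking your argument (like the paper's own) establishes the conclusion only for mechanisms inducing MSO; you should either add that hypothesis explicitly or note that the lemma is only invoked in a context where it holds.
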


\begin{proof}
    Again, we temporally drop $\chi_{-i,t}$ for simplicity.
    Let $h_{t+1}=(h_{t}, a'_{i,t}, a_{-i,t})$. It is clear that $h_{t+1}$ is independent of $s_{i,t}$ since $a'_{i,t}=\sigma_{i,t}(s'_{i,t}, h_{t})$.
    Then, taking derivative of $V_{i,t+1}(s_{i,t+1}, h_{t+1})$ with respect to $s_{i,t}$ gives
    \[
    \begin{aligned}
        &\frac{d}{dv} V_{i,t+1}(\kappa_{i,t+1}(v, h_{t+1}, \omega_{i,t+1}), h_{t+1})\Big|_{v=s_{i,t}}=\frac{\partial}{\partial v} V_{i,t+1}(v', h_{t+1})\frac{\partial}{\partial v} \kappa_{i,t+1}(v, h_{t+1}, \omega_{i,t+1})\Big|^{v'=s_{i,t+1}}_{v=s_{i,t}}\\
        &=\max\limits_{L\in\mathbb{T}_{t+1,T}} q_{i,t+1}(\sigma_{i,t+1}(s_{i,t+1},h_{t+1}), s_{i,t+1}, h_{t+1}, L)\frac{\partial}{\partial v} \kappa_{i,t+1}(v, h_{t+1}, \omega_{i,t+1})\Big|_{v=s_{i,t}}.
    \end{aligned}
    \]
Hence, by Lebesgue dominated convergence theorem,
\[
\begin{aligned}
   &\frac{\partial}{\partial v} V^{\dagger}_{i,t}(a'_{i,t}|v, h_{t}\big|_{v=s_{i,t}}\ = \mathbb{E}^{\sigma}_{a'_{i,t}}\Big[ \frac{\partial}{\partial v}u_{i,t}(v,a'_{i,t}, \tilde{a}_{-i,t})\Big|_{v=s_{i,t}}   + \frac{\partial}{\partial v} \kappa_{i,t+1}(v, h_{t+1}, \omega_{i,t+1})\Big|_{v=s_{i,t}}\\
   &\times\max_{L\in \mathbb{T}_{t+1,T}}\sum\limits_{k=t+1}^{L}\frac{\partial}{\partial v}u_{i,k}(v, \tilde{a}_{k})\big|_{v=\tilde{s}_{i,k}}\mathtt{mp}_{i,k}(\tilde{a}_{i,t+1}, \tilde{s}_{i,t+1}, \tilde{h}_{t+1}, k) \Big|s_{i,t}, h_{t}\Big]\\
   & =\max\limits_{L\in\mathbb{T}_{t,T}}\mathbb{E}^{\sigma}_{a'_{t}}\Big[ \sum\limits_{k=t}^{L}\frac{\partial}{\partial v}u_{i,k}(v, \tilde{a}_{k})\big|_{v=\tilde{s}_{i,k}}\mathtt{mp}_{i,t}(a_{i,t},s_{i,t}, h_{t}, k, \chi_{-i,t}) \Big|s_{i,t}, h_{t}, \chi_{-i,t}
 \Big],
\end{aligned}
\]
which obtains (\ref{eq:first_order_like_condition_deviation}).
\end{proof}

From Lemma \ref{lemma:first_order_like_condition_deviation}, we can construct $V^{\dagger}_{i,t}$ similar to (\ref{eq:app_prove_monotone_sigma_1}), for all $\mathring{s}_{i,t}\in S_{i,t}$,
\begin{equation}\label{eq:app_prove_monotone_sigma_2}
    \begin{aligned}
    V^{\dagger}_{i,t}(a'_{i,t}|s_{i,t}, h_{t}) = V^{\dagger}_{i,t}(a'_{i,t}|\mathring{s}_{i,t}, h_{t}) + \int^{s_{i,t}}_{\mathring{s}_{i,t}}\max\limits_{L\in \mathbb{T}_{t,T}} q_{i,t}(\sigma_{i,t}(s'_{i,t}, h_{t}), v, h_{t}, L)dv.
\end{aligned}
\end{equation}

Since the mechanism is RAIC, for all $s_{i,t},s'_{i,t}\in S_{i,t}$, it holds that
\begin{equation}\label{eq:app_prove_monotone_sigma_3}
    \begin{aligned}
    V^{\dagger}_{i,t}(\sigma_{i,t}(s'_{i,t},h_{t})|s'_{i,t}, h_{t})\geq V^{\dagger}_{i,t}(\sigma_{i,t}(s'_{i,t},h_{t})|s_{i,t}, h_{t}).
\end{aligned}
\end{equation}
\begin{equation}\label{eq:app_prove_monotone_sigma_4}
    \begin{aligned}
        V_{i,t}(s_{i,t}, h_{t}) \geq V^{\dagger}_{i,t}(\sigma_{i,t}(s_{i,t},h_{t})|s'_{i,t}, h_{t})
    \end{aligned}
\end{equation}
\begin{equation}\label{eq:app_prove_monotone_sigma_5}
    \begin{aligned}
        V_{i,t}(s'_{i,t}, h_{t})\geq V^{\dagger}_{i,t}(\sigma_{i,t}(s'_{i,t},h_{t})|s_{i,t}, h_{t})
    \end{aligned}
\end{equation}
Thus, (\ref{eq:app_prove_monotone_sigma_4})- (\ref{eq:app_prove_monotone_sigma_5}) yields
\begin{equation}\label{eq:app_prove_monotone_sigma_6}
    \begin{aligned}
     V_{i,t}(s_{i,t}, h_{t}) - V_{i,t}(s'_{i,t}, h_{t}) &\geq V^{\dagger}_{i,t}(\sigma_{i,t}(s_{i,t},h_{t})|s'_{i,t}, h_{t}) - V^{\dagger}_{i,t}(\sigma_{i,t}(s'_{i,t},h_{t})|s_{i,t}, h_{t})\\
     &\geq V^{\dagger}_{i,t}(\sigma_{i,t}(s_{i,t},h_{t})|s'_{i,t}, h_{t}) - V^{\dagger}_{i,t}(\sigma_{i,t}(s'_{i,t},h_{t})|s'_{i,t}, h_{t}),
\end{aligned}
\end{equation}
where the second inequality is due to (\ref{eq:app_prove_monotone_sigma_4}).
Thus, by choosing the same $\mathring{s}_{i,t}\in S_{i,t}$ for (\ref{eq:app_prove_monotone_sigma_1}) and (\ref{eq:app_prove_monotone_sigma_2}), we obtain
\[
\begin{aligned}
    V_{i,t}(s_{i,t}, h_{t}) - V_{i,t}(s'_{i,t}, h_{t}) =  \int^{s_{i,t}}_{s'_{i,t}}\max\limits_{L\in \mathbb{T}_{t,T}} q_{i,t}(\sigma_{i,t}(v,h_{t}), v, h_{t}, L)dv,
\end{aligned}
\]
\[
\begin{aligned}
    V^{\dagger}_{i,t}(\sigma_{i,t}(s_{i,t},h_{t})|s'_{i,t}, h_{t})& - V^{\dagger}_{i,t}(\sigma_{i,t}(s'_{i,t},h_{t})|s'_{i,t}, h_{t})= \int^{s_{i,t}}_{s'_{i,t}} \max\limits_{L\in \mathbb{T}_{t,T}}q_{i,t}(\sigma_{i,t}(s'_{i,t},h_{t}), v,h_{t}, L)dv,
\end{aligned}
\]
where we cancel the constant terms $V_{i,t}(\mathring{s}_{i,t}, h_{t})$ and $V^{\dagger}_{i,t}(a'_{i,t}|\mathring{s}_{i,t}, h_{t})$ that depends on $\rho$ and $\phi$.
Finally, from (\ref{eq:app_prove_monotone_sigma_6}), we obtain a necessary condition which is independent of $\rho$ and $\phi$,
\begin{equation}\label{eq:app_int_max_q}
\int^{s_{i,t}}_{s'_{i,t}}\max\limits_{L\in \mathbb{T}_{t,T}} q_{i,t}(\sigma_{i,t}(v,h_{t}), v, h_{t}, L)dv \geq \int^{s_{i,t}}_{s'_{i,t}} \max\limits_{L\in \mathbb{T}_{t,T}} q_{i,t}(\sigma_{i,t}(s'_{i,t},h_{t}), v, h_{t}, L)dv.
\end{equation}

The next lemma shows that $\max\limits_{L\in \mathbb{T}_{t,T}}$ and $\int^{s_{i,t}}_{s'_{i,t}}$ can be switched.

\begin{lemma}\label{lemma:app_switch_int_max}
    Fix a base game $\mathcal{G}$ and a task policy profile $\sigma$.
    Suppose that Conditions \ref{cond:differentiable_reward} and \ref{cond:bounded_dynamic} hold.
    Then, (\ref{eq:app_int_max_q}) is equivalent to
    \begin{equation}\label{eq:app_max_int_q}
    \max\limits_{L\in \mathbb{T}_{t,T}} \int^{s_{i,t}}_{s'_{i,t}}q_{i,t}(\sigma_{i,t}(v,h_{t}), v, h_{t}, L)dv \geq \max\limits_{L\in \mathbb{T}_{t,T}} \int^{s_{i,t}}_{s'_{i,t}} q_{i,t}(\sigma_{i,t}(s'_{i,t},h_{t}), v, h_{t}, L)dv.
    \end{equation}
\end{lemma}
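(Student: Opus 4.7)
The plan is to show that under the maximum-sensitive obedience in force throughout the proof of Theorem~\ref{thm:necessary_RAIC_condition} together with Conditions~\ref{cond:differentiable_reward}--\ref{cond:bounded_dynamic}, each side of (\ref{eq:app_int_max_q}) literally coincides with the corresponding side of (\ref{eq:app_max_int_q}), so the two inequalities reduce to one and the same statement. The strategy is therefore to write both ``$\int\max$'' and ``$\max\int$'' forms as explicit value-function differences and then argue that $\max_{L}$ and $\int$ commute for both the obedient and the deviating integrands.

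First I would establish the ``$\int\max$'' side. Corollary~\ref{corollary:envelope_V_MSO} gives Lipschitz differentiability of $V_{i,t}(\cdot,h_{t})$ with derivative $\max_{L}q_{i,t}(\sigma_{i,t}(v,h_{t}),v,h_{t},L)$, so the fundamental theorem of calculus yields $\int_{s'_{i,t}}^{s_{i,t}}\max_{L}q_{i,t}(\sigma_{i,t}(v,h_{t}),v,h_{t},L)\,dv=V_{i,t}(s_{i,t},h_{t})-V_{i,t}(s'_{i,t},h_{t})$. Lemma~\ref{lemma:first_order_like_condition_deviation} does the analogous job for $V^{\dagger}_{i,t}(\sigma_{i,t}(s'_{i,t},h_{t})\,|\,\cdot,h_{t})$, giving the RHS of (\ref{eq:app_int_max_q}) as $V^{\dagger}_{i,t}(\sigma_{i,t}(s'_{i,t},h_{t})|s_{i,t},h_{t})-V^{\dagger}_{i,t}(\sigma_{i,t}(s'_{i,t},h_{t})|s'_{i,t},h_{t})$. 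For the ``$\max\int$'' side, I would appeal to an envelope-in-the-action argument: for the deviating integrand the first slot of $q_{i,t}$ is held fixed at $\sigma_{i,t}(s'_{i,t},h_{t})$, so $q_{i,t}$ is literally $\partial_{w}G_{i,t}(\sigma_{i,t}(s'_{i,t},h_{t})|w,h_{t},L)|_{w=v}$ and an application of the FTC under Conditions~\ref{cond:differentiable_reward}--\ref{cond:bounded_dynamic} gives, for each $L$, $\int_{s'_{i,t}}^{s_{i,t}}q_{i,t}(\sigma_{i,t}(s'_{i,t},h_{t}),v,h_{t},L)\,dv=G_{i,t}(\sigma_{i,t}(s'_{i,t},h_{t})|s_{i,t},h_{t},L)-G_{i,t}(\sigma_{i,t}(s'_{i,t},h_{t})|s'_{i,t},h_{t},L)$; for the obedient integrand, RAIC forces $\sigma_{i,t}(v,h_{t})$ to be action-optimal for $G_{i,t}(\cdot|v,h_{t},\tau_{i,t}(v,h_{t}))$, so by the Milgrom--Segal envelope theorem the total and partial derivatives in $v$ of $G_{i,t}(\sigma_{i,t}(v,h_{t})|v,h_{t},L)$ agree at $L=\tau_{i,t}(v,h_{t})$, yielding an analogous FTC representation.

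The core step is then to show that $\max_{L}$ and $\int$ commute. For the fixed-action case, set $\Psi(v,L)=G_{i,t}(\sigma_{i,t}(s'_{i,t},h_{t})|v,h_{t},L)$ and suppose, toward contradiction, that the argmax $L^{*}(v)$ switches from $L_{1}$ to $L_{2}$ at some $v_{0}\in(s'_{i,t},s_{i,t})$; then $\Psi(v,L_{1})-\Psi(v,L_{2})$ crosses zero at $v_{0}$ with strict sign change, forcing $\partial_{v}\Psi(v_{0},L_{1})<\partial_{v}\Psi(v_{0},L_{2})$, whereas MSO at $v_{0}$ requires both $L_{1}$ and $L_{2}$ (which tie as active) to simultaneously argmax $\partial_{v}\Psi(v_{0},\cdot)$, a contradiction. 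Hence $L^{*}$ is constant on $[s'_{i,t},s_{i,t}]$ and $\max_{L}\int=\int\max_{L}$. An analogous argument, combining MSO with the action-optimality envelope, handles the obedient integrand. With both commutation identities in hand, (\ref{eq:app_int_max_q}) and (\ref{eq:app_max_int_q}) collapse to the single inequality $V_{i,t}(s_{i,t},h_{t})-V_{i,t}(s'_{i,t},h_{t})\geq V^{\dagger}_{i,t}(\sigma_{i,t}(s'_{i,t},h_{t})|s_{i,t},h_{t})-V^{\dagger}_{i,t}(\sigma_{i,t}(s'_{i,t},h_{t})|s'_{i,t},h_{t})$, which proves the equivalence.

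The main obstacle is the constancy-of-argmax argument: MSO is a pointwise identity at a single $v$, whereas the commutation requires an assertion across the whole interval $[s'_{i,t},s_{i,t}]$. I would therefore need to invoke MSO at every $v$ in the interval, which is implicit in $\mathcal{S}^{\mathtt{M}}[\sigma,\mathcal{G}]\neq\emptyset$, and to handle the possibly finite set of switching points measure-theoretically using the discreteness of $\mathbb{T}_{t,T}$. The delicate bookkeeping in the obedient case, where the integrand combines a changing action with MSO on the total derivative, is where the proof has to be spelled out most carefully.
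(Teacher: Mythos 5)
Your core commutation step has a genuine gap: you are establishing constancy of the wrong argmax. The interchange the lemma needs is $\max_{L}\int_{s'_{i,t}}^{s_{i,t}}q_{i,t}(\cdot,v,h_{t},L)\,dv=\int_{s'_{i,t}}^{s_{i,t}}\max_{L}q_{i,t}(\cdot,v,h_{t},L)\,dv$, and since $q_{i,t}(\cdot,v,h_{t},L)$ is the $v$-derivative of $G_{i,t}(\cdot|v,h_{t},L)$, this equality holds exactly when a single $L$ attains the pointwise maximum of the \emph{derivative} $q_{i,t}(\cdot,v,h_{t},\cdot)$ for almost every $v$ in the interval. Your contradiction argument instead tries to show constancy of the argmax of the \emph{level} $\Psi(v,L)=G_{i,t}(\sigma_{i,t}(s'_{i,t},h_{t})|v,h_{t},L)$, which is neither necessary nor sufficient: $\Psi(\cdot,L_{1})$ may dominate $\Psi(\cdot,L_{2})$ on all of $[s'_{i,t},s_{i,t}]$ (level argmax never switches) while $q(\cdot,L_{2})>q(\cdot,L_{1})$ on a subinterval where the gap narrows, and then the two sides of the interchange differ. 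Moreover, even for the level argmax, MSO in Definition \ref{def:maximum_sensitive_obedience} is a diagonal identity: it is imposed only at $v=s'_{i,t}$, the state that generated the fixed action $\sigma_{i,t}(s'_{i,t},h_{t})$, so it gives you no information at an interior switching point $v_{0}\in(s'_{i,t},s_{i,t})$ with the action held fixed, and the contradiction you describe cannot be set up there. Finally, Lemma \ref{lemma:app_switch_int_max} is stated under Conditions \ref{cond:differentiable_reward} and \ref{cond:bounded_dynamic} alone, so importing MSO would at best prove a conditional variant of the statement rather than the lemma itself.

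For comparison, the paper's proof uses neither MSO nor any argmax-constancy argument. It shows from Conditions \ref{cond:differentiable_reward} and \ref{cond:bounded_dynamic} that $q_{i,t}(a'_{i,t},\cdot,h_{t},L)$ is uniformly bounded and Lipschitz in the state, constructs an integrable dominating function $\mathtt{Mq}_{i,t}$ with $q_{i,t}\leq \mathtt{Mq}_{i,t}$, and then invokes the interchange theorem (Theorem 2.1 of \citet{strugarek2006interchange}) to pass $\max_{L}$ through the integral for both the obedient and the fixed-action integrands, after which the equivalence of (\ref{eq:app_int_max_q}) and (\ref{eq:app_max_int_q}) is immediate. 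If you want to salvage your route, you would need to argue almost-everywhere constancy of the maximizer of $q_{i,t}$ itself over the integration interval, and you would need a hypothesis that controls $q_{i,t}$ off the diagonal, which MSO does not supply.
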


\begin{proof}
    Since $\frac{\partial}{\partial v} \Big|G_{i,t}(a'_{i,t}|v, h_{t}, L) \big|_{v=s_{i,t}}\Big|\leq \sum_{k=t}^{L} c_{i,k} \big(\prod^{k}_{t'=1}\hat{c}_{i,t'}\big)\leq\sum_{k=t}^{T} c_{i,k} \big(\prod^{k}_{t'=1}\hat{c}_{i,t'}\big)$, for all $L\in\mathbb{T}_{t,T}$.
Then, $q_{i,t}$ is uniformly bounded; i.e., $|q_{i,t}(a'_{i,t},s_{i,t}, h_{t}, L)|\leq \sum_{k=t}^{T} c_{i,k} \big(\prod^{k}_{t'=1}\hat{c}_{i,t'}\big)$.
In addition, for all $\varsigma\neq 0$,
\[
\begin{aligned}
    \left|q_{i,t}(a'_{i,t},s_{i,t}+\varsigma, h_{t}, L)- q_{i,t}(a'_{i,t},s_{i,t}, h_{t}, L) \right|&= \left| \frac{\partial}{\partial v} G_{i,t}(a'_{i,t}|v, h_{t}, L) \big|_{v=s_{i,t}+\varsigma}- \frac{\partial}{\partial v} G_{i,t}(a'_{i,t}|v, h_{t}, L) \big|_{v=s_{i,t}}\right|\\&\leq 2\sum_{k=t}^{T} c_{i,k} \big(\prod^{k}_{t'=1}\hat{c}_{i,t'}\big)\leq 2\sum_{k=t}^{T} c_{i,k} \big(\prod^{k}_{t'=1}\hat{c}_{i,t'}\big)\big|\varsigma \big|.
\end{aligned}
\]
Hence, $q_{i,t}(a'_{i,t}, s_{i,t}, h_{t}, L)$ is Lipschitz continuous in $s_{i,t}$.
Similarly, we can show that $q_{i,t}(\sigma_{i,t}(s_{i,t},h_{t}), s_{i,t}, h_{t}, L)$ is also Lipschitz continuous in $s_{i,t}$.
 Then, from Conditions \ref{cond:differentiable_reward} and \ref{cond:bounded_dynamic}, we have 
    \[
    \begin{aligned}
        &q_{i,t}(a'_{i,t},s_{i,t}, h_{t}, L) = \mathbb{E}^{\sigma}_{a'_{i,t}}\Big[ \sum\limits_{k=t}^{L}\frac{\partial}{\partial v}u_{i,k}(v, a_{k})\big|_{v=\tilde{s}_{i,k}}\mathtt{mp}_{i,k}(a_{i,t}, s_{i,t}, h_{t}, k)\Big|s_{i,t},h_{t}\Big]\\
        &\leq \mathbb{E}^{\sigma}_{a'_{i,t}}\Big[ \sum\limits_{k=t}^{L}\frac{\partial}{\partial v}u_{i,k}(v, \tilde{a}_{k})\big|_{v=\tilde{s}_{i,k}}\Big]L\max\{\hat{c}_{i,k}\}_{k\in\mathbb{T}_{t,L}} \leq L^{2}\max\{\hat{c}_{i,k}\}_{k\in\mathbb{T}_{t,L}}\max\{c_{i,k}\}_{t,L} \mathbb{E}^{\sigma}_{a'_{i,t}}\Big[\frac{\partial}{\partial v}u_{i,t}(v, a'_{i,t}, \tilde{a}_{-i,t}) \Big],
    \end{aligned}
    \]
     where the first inequality is due to Condition \ref{cond:bounded_dynamic} and the second inequality is due to Condition \ref{cond:differentiable_reward}.
     Furthermore, Condition \ref{cond:differentiable_reward} implies that $\Big[\frac{\partial}{\partial v}u_{i,t}(v, a'_{i,t}, \tilde{a}_{-i,t}) \Big]$ is bounded and integrable. 
     Construct $\mathtt{Mq}_{i,t}(s_{i,t}) = \widehat{C}\Big|\frac{\partial}{\partial v}u_{i,t}(v, a'_{i,t}, \tilde{a}_{-i,t}) \Big|$, where $\widehat{C}=L^{2}\max\{\hat{c}_{i,k}\}_{k\in\mathbb{T}_{t,L}}\max\{c_{i,k}\}_{t,L}$.
     Hence, we have $q_{i,t}(a'_{i,t},s_{i,t}, h_{t}, L)\leq \mathtt{Mq}_{i,t}(s_{i,t})$, where $\mathtt{Mq}_{i,t}(s_{i,t})$ is an integrable function of $s_{i,t}$ with $\int_{s\in S_{i,t}}|\mathtt{Mq}_{i,t}(v)|dv<\infty$.
     Then, from Theorem 2.1 of \citet{strugarek2006interchange}, we have
     \[
     \max\limits_{L\in \mathbb{T}_{t,T}} \int^{s_{i,t}}_{s'_{i,t}} q_{i,t}(\sigma_{i,t}(s'_{i,t},h_{t}), v, h_{t}, L)dv= \int^{s_{i,t}}_{s'_{i,t}} \max\limits_{L\in \mathbb{T}_{t,T}} q_{i,t}(\sigma_{i,t}(s'_{i,t},h_{t}), v, h_{t}, L)dv.
     \]
     We can also show the interchange of $\max\limits_{L\in \mathbb{T}_{t,T}}$ and $\int^{s_{i,t}}_{s'_{i,t}}$ on the left-hand side of (\ref{eq:app_max_int_q}).
     Hence, we can conclude that (\ref{eq:app_int_max_q}) is equivalent to (\ref{eq:app_max_int_q}).
\end{proof}

Lemma \ref{lemma:app_switch_int_max} concludes the proof.
%
 \hfill $\square$

\section{Proof of Proposition \ref{prop:nece_suffic_indifference}}\label{app:prop:nece_suffic_indifference}

For any $\sigma$, if $\mathcal{S}^{\mathtt{MIX}}[\sigma,\mathcal{G}]\neq \emptyset$, then $\mathcal{S}^{\mathtt{MIX}}[\sigma,\mathcal{G}]\subseteq \mathcal{S}^{\mathtt{MIX}}[\mathcal{G}]$.
We proceed with the proof by establishing a contradiction.
Suppose that there is a $\hat{\sigma}\not\in \Sigma^{\mathtt{MX}}[\mathcal{G}]$ such that there exist $\hat{\rho}$ and $\hat{\phi}$ such that the mechanism $\hat{\Theta}=<\hat{\sigma}, \hat{\rho}, \hat{\phi}>$ is $\hat{S}^{\mathtt{off}}$-DOIC, for a $\hat{S}^{\mathtt{off}}\in \mathcal{S}^{\mathtt{MIX}}[\mathcal{G}]$.
Since the mechanism $\hat{\Theta}$ induces MSO, $\mathcal{S}^{\mathtt{M}}[\sigma, \mathcal{G}]\neq \emptyset$.
By Theorem \ref{thm:necessary_RAIC_condition}, $\hat{\sigma}$ satisfies the constrained monotone condition (\ref{eq:condition_RAIC_sigma_0}), which contradicts $\hat{\sigma}\not\in \Sigma^{\mathtt{MX}}[\mathcal{G}]$.
\hfill $\square$

\section{Proof of Corollary \ref{corollary:iff_MSO_DCM}}\label{app:corollary:iff_MSO_DCM}

We provide proof for part \textit{(i)} only. The proof for part \textit{(ii)} can be easily obtained analogously.

\subsection{($\Leftarrow$)}

Choose $\rho\in \mathcal{P}[\sigma,\mathcal{G}]$.
Then, following a similar way to obtaining (\ref{eq:app_O_1}) in Appendix \ref{app:thm:sufficient_condition_without_MSO}, we have
\[
\begin{aligned}
    &MG_{i,t}(s_{i,t}, h_{t})=\max\limits_{L\in \mathbb{T}_{t,T}}g_{i,t}( s_{i,t}, h_{t}, L|\theta_{i,t}) + \max\limits_{L'\in \mathbb{T}_{t,T}}\mathbb{E}^{\sigma}\Big[-G_{i,L'+1}(s_{i,L'+1}, h_{L'+1}, T)\Big|s_{i,t}, h_{t}\Big].
\end{aligned}
\]
Since $u_{i,t}(\cdot)\geq 0$ for all $i\in\mathcal{N}$, $t\in\mathbb{T}$, 
\[
\begin{aligned}
    \max\limits_{L'\in \mathbb{T}_{t,T}}\mathbb{E}^{\sigma}\Big[-G_{i,L'+1}(s_{i,L'+1}, h_{L'+1}, T)\Big|s_{i,t}, h_{t}\Big]&=\mathbb{E}^{\sigma}\Big[-G_{i,T+1}(s_{i,T+1}, h_{T+1}, T)\Big|s_{i,t}, h_{t}\Big]=0.
\end{aligned}
\]
Thus, $MG_{i,t}(s_{i,t}, h_{t})=\max\limits_{L\in \mathbb{T}_{t,T}}g_{i,t}( s_{i,t}, h_{t}, L|\theta_{i,t})$. 
Due to \ref{itm:C1}, $\max\limits_{L\in \mathbb{T}_{t,T}}g_{i,t}( s_{i,t}, h_{t}, L|\theta_{i,t})=G_{i,t}(s_{i,t}, h_{t},T)$, in which the left-hand side is independent of $\rho$ while the right-hand side depends on $\rho$.

Hence, following the similar way to obtaining (\ref{eq:app_O_2}) in Appendix \ref{app:thm:sufficient_condition_without_MSO}, we can show that 
\[
\begin{aligned}
    G_{i,t}(s_{i,t}, h_{t}, T)-G_{i,t}(\sigma_{i,t}(s'_{i,t})|s_{i,t}, h_{t}, T)&=\int^{s'_{i,t}}_{s_{i,t}}q_{i,t}(\sigma_{i,t}(v), v, h_{t}, T)dv-\int^{s'_{i,t}}_{s_{i,t}}q_{i,t}(\sigma_{i,t}(s_{i,t}), v, h_{t}, T)dv\\
&=\max\limits_{L\in\mathbb{T}_{t,T}}\int^{s'_{i,t}}_{s_{i,t}}q_{i,t}(\sigma_{i,t}(v), v, h_{t}, L)dv-\max\limits_{L\in\mathbb{T}_{t,T}}\int^{s'_{i,t}}_{s_{i,t}}q_{i,t}(\sigma_{i,t}(s_{i,t}), v, h_{t}, L)dv\\
&\geq 0,
\end{aligned}
\]
where the last inequality is due to the fact that (\ref{eq:condition_RAIC_sigma_0}) is satisfied, which implies that the mechanism is RAIC.

\subsection{($\Rightarrow$)}

Since $S^{\mathtt{off}}\in \mathcal{S}^{\mathtt{CM}}[\sigma,\mathcal{G}]$, there is a dynamic canonical mechanism $\Theta=<\sigma, \rho>$ in which $\rho\in \mathcal{P}[\sigma,\mathcal{G}]$ such that $\Theta$ is $S^{\mathtt{off}}$-DOIC.
Since $S^{\mathtt{off}}\in \mathtt{S}^{H}[\sigma, \mathcal{G}]$, we can construct each cutoff-switch function according to (\ref{eq:cutoff_horizontal_general}) such that the mechanism $\widehat{\Theta}=<\sigma, \rho, \phi>$ is $S^{\mathtt{off}}$-DOIC.
Then, the $S^{\mathtt{off}}$-DOIC of $\Theta=<\sigma, \rho>$ implies (\ref{eq:cutoff_condition_zero_H}).
\hfill $\square$

\section{Table of Notations}

\begin{table}[htp]
	\centering
	\caption{Summary of Notations}
	\begin{tabularx}{\textwidth}{c>{\raggedright}X}
		\toprule  
		Symbol & Meaning\tabularnewline
		\toprule  
		$ \mathcal{E}=<\mathcal{N}, \mathbb{T}, S, A>$ & Event model = $<$set of agents, set of times, joint space of states, joint space of actions$>$.
		\tabularnewline
		\midrule 
		$\mathcal{P}=<\{\kappa_{i,t},F_{i,t}\}_{i\in \mathcal{N}, t\in \mathbb{T}}, \Omega, W>$ & State-dynamic model = $<$ \{state dynamic function, transition probability function\}, space of shocks, probability distribution of shocks $>$.
  \tabularnewline \midrule
      $\mathcal{G}=<\mathcal{E}, \mathcal{P}, \mathcal{Z}>$    & Base game model.
  \tabularnewline \midrule
		$\mathtt{OM}_{i,t}\in\{0,1\}$ & Off-menu (OM) action: $\mathtt{OM}_{i,t}=0$ continue; $\mathtt{OM}_{i,t}=1$ quit.
		\tabularnewline
		\midrule 
	    $\tau_{i,t}(s_{i,t}, h_{t})$, $\pi_{i,t}(s_{i,t}, h_{t})$ & Strategies for OM and regular actions.
     \tabularnewline \midrule
        $\Theta=<\sigma, \rho, \phi>$, $\mathcal{G}^{\Theta}$  & Mechanism $=<$ task policy profile, coupling policy profile, off-switch function profile$>$, a new game induced by $\Theta$.
		 \tabularnewline \midrule
		 $A_{i,t}[\sigma,h_{t}]=A_{i,t}[\sigma]\subseteq A_{i,t}$ & Action menu specified by $\sigma_{i,t}$
   \tabularnewline \midrule
		 $x_{i,t}(\cdot|h_{t})\in \Delta\Big(\big(\mathbb{T}_{t,T}\big)^{|\mathcal{N}_{t}|-1}\Big)$ & Agent $i$'s conjecture about other agents' OM decision, where $\mathcal{N}_{t}\subseteq \mathcal{N}$ is the set of agents who have not quitted at the beginning of $t$.
   \tabularnewline \midrule
   $\mathbb{E}^{\sigma}_{\pi}[\cdot]$ & Ex-ante expectation operator.
   \tabularnewline \midrule
		 $\mathbb{E}^{\sigma}_{\pi}[\cdot|s_{i,t}, h_{t},x_{i,t}]$ & period-$t$ interim expectation operator perceived by agent $i$ given $s_{i,t}, h_{t},x_{i,t}$.
   \tabularnewline \midrule
   $G_{i,t}(a_{i,t}|s_{i,t}, h_{t}, L, x_{i,t})$  & Prospect function
   \tabularnewline \midrule
    $\Lambda_{i,t}(\mathtt{OM}_{i,t}, a_{i,t}|s_{i,t}, h_{t}, x_{i,t})$  & Agent $i$'s period-$t$ interim expected payoff-to-go.
    \tabularnewline \midrule
   $g_{i,t}$ & Carrier function. 
   \tabularnewline
		\midrule
		$\mathtt{Mg}_{i,t}$ & Maximum carrier function.
   \tabularnewline
		\midrule 
		$\Gamma_{i,t}(h_{t}, x_{i,t}; \phi_{i})$, $\vec{\Gamma}^{b}(c_{i,t})$ &  Off-region (OFR) induced by $\mathcal{G}^{\Theta}$, principal-desired $b$-th sub-OFR (OFR${}^{d}$). 
   \tabularnewline
		\midrule     $S^{\mathtt{off}}_{i,t}=\cup_{b\in[B]}\vec{\Gamma}^{b}(c_{i,t)}$,  $\vec{\Gamma}(c_{i,t})=\{\vec{\Gamma}^{b}(c_{i,t})\}_{b\in[B]}$  & OFR${}^{d}$, collections of sub-OFR${}^{d}$.
		 \tabularnewline
		\bottomrule  
  \hline
	\end{tabularx}
 \label{table:notations}
\end{table}

Table \ref{table:notations} lists the main notations.

\end{document}